\renewcommand{\Im}{\text{Im}}
\newcommand{\ie}{\emph{i.e.}}
\newcommand{\wt}{\widetilde}
\newcommand{\wh}{\widehat}
\newcommand{\Ext}{\mathrm{Ext}}
\newcommand{\Hom}{\mathrm{Hom}}
\newcommand{\Mod}{\mathrm{-Mod}}
\newcommand{\trans}{\mathsf{T}}
\newcommand{\FF}{V_{bc}}
\newcommand{\lp}{\left(}
\newcommand{\rp}{\right)}
\newcommand{\Ind}{\mathrm{Ind}}
\newcommand{\lpp}{(\!(}
\newcommand{\rpp}{)\!)}
\newcommand{\grho}{\fg_\rho}
\newcommand{\pd}{{\partial}}
\newcommand{\lMod}{\Mod_{\mathrm{loc}}}
\newcommand{\FU}{\otimes_\CU}
\newcommand{\C}{\mathbb C}
\newcommand{\Z}{\mathbb Z}
\newcommand{\N}{\mathbb N}
\newcommand{\fgl}{\mathfrak{gl}}
\newcommand{\fg}{\mathfrak{g}}
\newcommand{\fh}{\mathfrak{h}}
\newcommand{\CA}{{\mathcal A}}
\newcommand{\CB}{{\mathcal B}}
\newcommand{\CC}{{\mathcal C}}
\newcommand{\CD}{{\mathcal D}}
\newcommand{\CE}{{\mathcal E}}
\newcommand{\CF}{{\mathcal F}}
\newcommand{\CL}{{\mathcal L}}
\newcommand{\CN}{{\mathcal N}}
\newcommand{\CS}{{\mathcal S}}
\newcommand{\CT}{{\mathcal T}}
\newcommand{\CU}{{\mathcal U}}
\newcommand{\CV}{{\mathcal V}}
\newcommand{\CW}{{\mathcal W}}  
\newcommand{\CY}{{\mathcal Y}}
\newcommand{\CZ}{{\mathcal Z}}
\newcommand{\norm}[1]{{{:\!{#1}\!:}}}
\newcommand{\be}{\begin{equation}}
\newcommand{\ee}{\end{equation}}
\newcommand{\btik}{\begin{tikzcd}}
\newcommand{\etik}{\end{tikzcd}}
\newcommand{\mf}{\mathfrak}
\newcommand{\grau}{gray!60}
\newenvironment{grform}{\begin{tikzpicture}[intext]}{\end{tikzpicture}}
\tikzset{
norm/.style = {ultra thick, color = #1},
norm/.default = black,
intext/.style = {baseline = (current bounding box.center)},
}
\newcommand{\vLine}[5]{
  \draw[ultra thick, color = #5, rounded corners] (#1 , #2) -- (#1 , #2 * 0.7 + #4 * 0.3 ) -- ( #3 , #2 * 0.3 + #4 * 0.7 ) -- (#3 ,#4);
}
\newcommand{\vLineO}[5]{
  \draw[line width = 6pt , color = white, rounded corners] (#1 , #2) -- (#1 , #2 * 0.7 + #4 * 0.3 ) -- ( #3 , #2 * 0.3 + #4 * 0.7 ) -- (#3 ,#4);
  \draw[ultra thick, color = #5, rounded corners] (#1 , #2) -- (#1 , #2 * 0.7 + #4 * 0.3 ) -- ( #3 , #2 * 0.3 + #4 * 0.7 ) -- (#3 ,#4);
}
\newcommand{\dMult}[5]{
\draw[ultra thick, color = #5] (#1 , #2) -- (#1 , #2 + #4 * 0.2) .. controls (#1
, #2 + #4 * 0.5) .. (#1 + #3 *0.25 , #2 + #4 * 0.5) 
-- (#1 + #3 *0.5 , #2 + #4 * 0.5) -- (#1 + #3 *0.75 , #2 + #4 * 0.5) .. controls
(#1 + #3 , #2 + #4 * 0.5) .. (#1 + #3 , #2 + #4 * 0.2) -- (#1 + #3 , #2);
\draw[ultra thick, color = #5] (#1 + #3 *0.5 , #2 + #4 * 0.5) -- (#1 + #3 *0.5 ,
#2 + #4);
}
\newcommand{\dMultO}[5]{
\draw[line width = 6pt , color = white] (#1 , #2) -- (#1 , #2 + #4 * 0.2) ..
controls (#1 , #2 + #4 * 0.5) .. (#1 + #3 *0.25 , #2 + #4 * 0.5) 
-- (#1 + #3 *0.5 , #2 + #4 * 0.5) -- (#1 + #3 *0.75 , #2 + #4 * 0.5) .. controls
(#1 + #3 , #2 + #4 * 0.5) .. (#1 + #3 , #2 + #4 * 0.2) -- (#1 + #3 , #2);
\draw[ultra thick, color = #5] (#1 , #2) -- (#1 , #2 + #4 * 0.2) .. controls (#1
, #2 + #4 * 0.5) .. (#1 + #3 *0.25 , #2 + #4 * 0.5) 
-- (#1 + #3 *0.5 , #2 + #4 * 0.5) -- (#1 + #3 *0.75 , #2 + #4 * 0.5) .. controls
(#1 + #3 , #2 + #4 * 0.5) .. (#1 + #3 , #2 + #4 * 0.2) -- (#1 + #3 , #2);
\draw[ultra thick, color = #5] (#1 + #3 *0.5 , #2 + #4 * 0.5) -- (#1 + #3 *0.5 ,
#2 + #4);
}
\newcommand{\dAction}[6]{
\draw[ultra thick, color = #5] (#1 , #2) -- (#1 , #2 + #4 * 0.2) .. controls (#1
, #2 + #4 * 0.5) .. (#1 + #3 * 0.8 , #2 + #4 * 0.5) -- (#1 + #3 , #2 + #4 *
0.5);
\draw[ultra thick, color = #6] (#1 + #3 , #2) -- (#1 + #3 , #2 + #4);
}
\renewcommand*\env@matrix[1][*\c@MaxMatrixCols c]{%
  \hskip -\arraycolsep
  \let\@ifnextchar\new@ifnextchar
  \array{#1}}
\begin{document}

\title{Kazhdan-Lusztig Correspondence for Vertex Operator Superalgebras from Abelian Gauge Theories}

\author[1]{Thomas Creutzig}
\author[2]{Wenjun Niu}

\affil[1]{Department of Mathematics, FAU Erlangen-N\"urnberg, Universitätsstraße 40, 91054, Erlangen,
Germany}
\affil[2]{Perimeter Institute for Theoretical Physics, 31 Caroline St N, Waterloo, ON, Canada, N2L 2Y5}

\newtheorem{Def}{Definition}[section]
\newtheorem{Thm}[Def]{Theorem}
\newtheorem{Prop}[Def]{Proposition}
\newtheorem{Cor}[Def]{Corollary}
\newtheorem{Lem}[Def]{Lemma}
\newtheorem{Rem}[Def]{Remark}
\newtheorem{Asp}[Def]{Assumption}
\newtheorem{Exp}[Def]{Example}
\newtheorem{Conj}[Def]{Conjecture}

\numberwithin{equation}{section}

\maketitle

\begin{abstract}
 We prove the Kazhdan-Lusztig correspondence for a class of vertex operator superalgebras which, via the work of Costello-Gaiotto, arise as boundary VOAs of the topological B twist of 3d $\mathcal{N}=4$ abelian gauge theories. This means that we show equivalences of braided tensor categories of modules of certain affine vertex superalgebras and corresponding quantum supergroups.

    We build on the work of Creutzig-Lentner-Rupert for this large class of VOAs and extend it since, in our case, the categories do not have projective objects, and objects can have arbitrary Jordan-Hölder length.

    Our correspondence significantly improves the understanding of the braided tensor category of line defects associated with this class of TQFT by realizing line defects as modules of a Hopf algebra. In the process, we prove a special case of the conjecture of Semikhatov-Tipunin, relating logarithmic CFTs to Nichols algebras of screening operators.
    
\end{abstract}

\noindent\textbf{Mathematics Subject Classification:} 17B69, 81R10, 20G42, 18M15, 57R56. 

\tableofcontents

\newpage

\section{Introduction}

The objective of this paper is the proof of Kazhdan-Lusztig correspondences for a class of affine Lie superalgebras that arise as (perturbative) boundary vertex algebras of the topological B-twist of 3d $\CN=4$ abelian gauge theories. More precisely, let $G=(\C^\times)^r$ be an abelian group and $V=\C^n$ a representation of $G$ defined by a charge matrix $\rho: \Z^r\to \Z^n$. Throughout this paper, we assume that $\rho$ defines a faithful action, in that it fits into a short exact sequence (see equation \eqref{SES}). Associated with this data is a solvable Lie superalgebra $\grho$ and its affine current VOA $V(\grho)$, which first appeared in \cite{CG} as the perturbative boundary VOA for a holomorphic boundary condition of the aforementioned TQFT. In \cite{BCDN}, this affine current VOA was studied and the Kazhdan-Lusztig category $KL_\rho$ of modules was introduced, which has the structure of a rigid braided tensor category. The Lie superalgebra $\grho$ is not reductive but solvable, and the category $KL_\rho$ is not semisimple; in fact, it doesn't have nonzero projective objects.  

In this paper, we show that we can associate to $\rho$ a Hopf algebra, to be denoted by $U_\rho^E$, which is a deformation of the universal enveloping algebra of $\grho$. Properties of this algebra will be given in Section \ref{sec:QG}. Most importantly, this algebra admits an $R$-matrix in its algebraic completion, which can be evaluated on finite-dimensional modules. This, in particular, endows the category $U_\rho^E\Mod$, the category of finite-dimensional modules of $U_\rho^E$, with the structure of a braided tensor category. Our main result is the following statement.

\begin{Thm}\label{KLrho}

There is an equivalence of braided tensor categories:
\be
KL_\rho\simeq U_\rho^E\Mod.
\ee
\end{Thm}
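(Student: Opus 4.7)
The plan is to follow the strategy of Creutzig--Lentner--Rupert adapted to the abelian setting: realize $V(\grho)$ via a free field realization with explicit screening operators, identify the Nichols algebra generated by these screenings with the positive Borel of $U_\rho^E$, and use this matching to construct the braided tensor equivalence.

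I would first fix a free field realization $V(\grho) \hookrightarrow V_0$, where $V_0$ is a tensor product of Heisenberg, $\beta\gamma$ and symplectic fermion VOAs naturally attached to the charge matrix $\rho$, so that $V(\grho)$ is the joint kernel of a family of screening currents $Q_i$ indexed by the simple screenings associated to $\rho$. The Nichols algebra generated by these screenings acting on twisted Fock modules of $V_0$ should be identified, via the Semikhatov--Tipunin picture alluded to in the abstract, with the positive part of $U_\rho^E$. One then defines a functor $F \colon KL_\rho \to U_\rho^E\Mod$ by sending a module $M$ to its underlying weight-graded vector space, with the Cartan action coming from the Heisenberg zero modes, the positive nilpotent action coming from the screenings $Q_i$, and the negative nilpotent action coming from dual contour operators (or, more intrinsically, from the action described in Section \ref{sec:QG}).

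The critical step is to verify that $F$ is an equivalence of abelian categories. Because $KL_\rho$ has no projectives and contains objects of arbitrary Jordan--H\"older length, the standard comparison of projective generators does not apply. My plan is a two-stage d\'evissage: first match simples on both sides using the explicit weight-theoretic description of blocks in $KL_\rho$ from \cite{BCDN}, and then compare $\Ext^1$ groups, computed via a free field Koszul/BGG-type resolution on the VOA side and via the bar construction on the $U_\rho^E$ side. To circumvent the absence of projectives, I would work in the ind-completion and use typical (generic weight) modules as a substitute for projective covers, on which $F$ is manifestly fully faithful and essentially surjective; atypical blocks would then be captured by a continuity/deformation argument from generic parameters to the specific point of interest.

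Finally, the braided tensor structure is transported through $F$ by matching both categories to the free field side. The tensor product of modules corresponds to the standard coproduct of $V_0$-modules, with the screening operators acting as twisted primitives with respect to the Cartan grading, which reproduces precisely the coproduct of $U_\rho^E$; rigidity is already known from \cite{BCDN}. The $R$-matrix is then realized as the monodromy of the screening currents, which converges as a formal series on finite dimensional weight modules and lands in the algebraic completion of $U_\rho^E \otimes U_\rho^E$ discussed in Section \ref{sec:QG}. I expect the hardest part to be the d\'evissage step: controlling extension classes of arbitrary length without projectives requires either a very explicit description of Loewy structures in $KL_\rho$, or a careful combination of the rigidity of $KL_\rho$ with a deformation through generic parameters, in order to pin down non-generic fusion coefficients and extension data that are invisible to semisimple specializations.
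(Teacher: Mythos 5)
Your overall scaffolding (free field realization, screenings generating a Nichols algebra identified with the positive part of $U_\rho^E$, transport of the braiding through the free-field side) is the right one, but two of your steps conceal the actual difficulties and, as sketched, would not go through. First, your d\'evissage for the abelian equivalence leans on ``a continuity/deformation argument from generic parameters'' to handle atypical blocks. This is precisely the step that cannot be waved through: at atypical $\nu$ the induction functor from $\grho\Mod_\nu$ fails to be an equivalence, the Loewy structures jump, and a specialization from generic parameters does not determine extension data or fusion coefficients at the special point (this is the entire content of Kazhdan--Lusztig-type statements). The paper sidesteps this by never comparing $\Ext$ groups directly: the abelian equivalence $U_\rho^E\Mod\simeq KL_\rho$ is obtained block-by-block from an explicit change of generators $\Psi_-^i=g(\sum\rho_{ia}E^a)\psi_-^i$ with $g(x)=x/(e^{2\pi ix}-1)$ at generic $\nu$, combined with spectral flow to reach all blocks; the tensor equivalence then comes for free from the fully faithful functor $\CS\colon KL_\rho\to\CZ_{A\lMod}(A\Mod)$ of Creutzig--Lentner--Rupert together with the finiteness of $U_\rho^E$ over its Cartan. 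Second, your plan to read off the coproduct and $R$-matrix ``by matching both categories to the free field side'' hides the central technical problem: since arbitrary Jordan blocks are allowed, $A\lMod$ is non-semisimple and there is no obvious tensor functor $A\Mod\to A\lMod$ splitting the inclusion. The paper's main new construction is exactly such a split tensor functor, built as the associated graded of a canonical filtration by generalized $\psi_0$-eigenvalues, and this is what forces the bialgebra structure on the algebra of screenings to be the Nichols one (via the sufficiently-unrolled criterion). Your ``dual contour operators'' for the negative nilpotent part would likewise need to be packaged as a coaction satisfying the Yetter--Drinfeld compatibility; without the relative-center formalism it is not clear this assignment is even functorial.

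A further omission: the sufficiently-unrolled condition that uniquely pins down the Hopf structure on the Nichols algebra fails for general $\rho$, so the strategy you describe can only be run for $\rho=1$ (i.e.\ $\fgl(1|1)$). The paper then deduces the general case by an un-gauging argument, realizing $V(\grho)\otimes\CW_\rho$ as a simple current extension of $V(\fgl(1|1))^{\otimes n}$ and $U_\rho^E$ as the matching sub-quotient of $(U_1^E)^{\otimes n}$, with de-equivariantization on both sides. Your proposal treats all $\rho$ uniformly and would run into this obstruction.
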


\subsection{Proof strategy: Schauenburg's functor and Nichols algebras}\label{subsec:proofstrategy}

The strategy of this proof is based on recent work \cite{CLR23}, and can be naturally divided into two parts. The idea behind the first part is that, given a VOA $V$ together with a rigid braided tensor category $\CU$ of $V$-modules, as well as a free field realization $V\to W$ such that the VOA $W$ defines a commutative algebra object $A$ in $\CU$, under a set of technical assumptions, there is a fully-faithful braided tensor functor, known as the Schauenburg's functor
\be
\CS: \CU\to \CZ_{A\lMod}\lp A\Mod\rp,
\ee
where $A\Mod$ is the category of $A$-modules in $\CU$, $A\lMod$ the category of local $A$-modules and $\CZ_{A\lMod}\lp A\Mod\rp$ is the relative Drinfeld center. This relates the complicated category $\CU$ to two potentially easier categories $A\Mod$ and $A\lMod$. In practice, $A\lMod$ is simply the category of modules of $W$, and if $W$ is a Heisenberg VOA, then $A\lMod$ is very well understood. The category $A\Mod$ is still more complicated to deal with. Moreover, fully-faithfulness of $\CS$ requires a good understanding of the category $A\Mod$. This brings us to the second part of the strategy, which is the most challenging part. 

In the second part of the strategy, we relate $A\Mod$ to modules of a Hopf algebra object in $A\lMod$. In general, one certainly doesn't expect such a relation to exist. In the case at hand, the VOA $V$ is the joint kernel of a set of screening operators $S_i: W\to W_i$ for simple current modules $W_i$, and it was first proposed and studied in \cite{semikhatov2012nichols} that screening charges obey the relations of a Nichols algebra, which is a Hopf algebra in $A\lMod$ generated by $\bigoplus_i W_i$. In the case of diagonal braiding, this was rigorously proven by \cite{lentner2021quantum}. Although nonrigorous, the idea of relating Nichols algebra to $A\Mod$ was already present in \cite{semikhatov2012nichols, semikhatov2013logarithmic}, where they dealt with the case of $V=M(p)$ the singlet VOA.  In modern language, see \cite{CLR23}, the proposal of Semikhatov and Tipunin can be rephrased as follows. 
\begin{Conj} \label{Conj:Nicholsintro}
     Let $V\to W$ be an embedding whose image is the joint kernel of a set of screening operators $S_i: W\to W_i$. Let $A$ be the algebra object defined by $W$ in the category of $V$ modules, and let $\CB(M)$ be the Nichols algebra generated by $M=\bigoplus_i W_i$ in $A\lMod$. There is an equivalence of tensor categories:
    \be
\CB (M)\Mod (A\lMod)\simeq A\Mod
    \ee
intertwining the action of $A\lMod$. 

\end{Conj}

For $V= M(p)$ the singlet and $W=H_\phi$ a single Heisenberg, this was proven in  \cite{CLR23} for the category of weighted $V$-modules. The most crucial step of the proof is to produce a fiber functor $\epsilon: A\Mod\to A\lMod$, as well as an abelian equivalence $\CE: A\Mod\to \CB (M)\Mod$ obeying certain properties. In \textit{op.cit.}, the fiber functor was constructed in an ad.hoc manner, using the complete knowledge of the abelian braided tensor category $V\Mod$ established in \cite{Creutzig:singlet1, Creutzig:singlet2}. 

In this paper, we apply this strategy to $V=V(\grho)$. It turns out that our category $KL_\rho$ does not have any non-trivial projective object, therefore one cannot apply the ad hoc construction of  \cite{CLR23}. One of the main technical discoveries of this paper (see Section \ref{sec:KL1proof}) is that one can build this fiber functor using a filtration defined by the monodromy action on $A$-modules. This allows us to prove a simple case of Conjecture \ref{Conj:Nicholsintro} in an abstract manner. More precisely, let $V$ be a VOA and $W$ a VOA containing $V$, such that $V$ is defined by the kernel of a screening operator $W\to \sigma W$, where $\sigma$ is spectral flow automorphism defined by some Heisenberg field $\phi\in W$ of level $1$. 

\begin{Thm}[Theorem \ref{Thm:AmodNmod}]\label{Thm:AmodNmodintro}
    Under a list of technical conditions (contained in Section \ref{sec:AModNMod}), there is an equivalence of tensor categories:
    \be
A\Mod\simeq \CB (\sigma W)\Mod. 
    \ee
\end{Thm}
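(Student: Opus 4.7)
The plan is to construct inverse tensor functors between $A\Mod$ and $\CB(\sigma W)\Mod$ using the screening operator, following and adapting the strategy of \cite{CLR23}.

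First, I would compute $\CB(\sigma W)$ explicitly. Since the self-braiding of $\sigma W$ is multiplication by $-1$, the braided symmetrizer $\tfrac{1}{2}(\id + c_{\sigma W,\sigma W})$ on $\sigma W \otimes_A \sigma W$ vanishes identically, and hence $\CB^n(\sigma W) = 0$ for all $n \geq 2$. Therefore
\be
\CB(\sigma W) = A \oplus \sigma W
\ee
as an algebra in $A\lMod$, with trivial multiplication on the top piece. A $\CB(\sigma W)$-module is then a pair $(Y, s_Y)$ with $Y \in A\lMod$ and $s_Y : \sigma W \otimes_A Y \to Y$ satisfying $s_Y \circ (\id_{\sigma W} \otimes_A s_Y) = 0$. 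Using the spectral-flow identification $\sigma W \otimes_A Y \cong \sigma Y$, this is the same data as a square-zero odd-degree endomorphism of $Y$ shifting spectral flow by one unit.

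Next, I would produce the functor $F : A\Mod \to \CB(\sigma W)\Mod$. The screening $S : W \to \sigma W$ is a morphism in $V\Mod$ (and crucially not in $A\lMod$), so for every $A$-module $X$ it yields a canonical $V$-linear map $\tilde S_X : X \to \sigma W \otimes_A X$ obtained as $X \cong A \otimes_A X \xrightarrow{S \otimes \id} \sigma W \otimes_A X$. The key step is to show that under the technical conditions of Section \ref{sec:AModNMod} there is a canonical filtration on $X$ whose associated graded is a local $A$-module, and on which $\tilde S_X$ descends to a well-defined $\CB(\sigma W)$-action $s$; the relation $s^2 = 0$ is automatic because the braided symmetrizer on $\sigma W \otimes_A \sigma W$ already vanishes. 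I would set $F(X)$ equal to this associated graded with its induced $s$. In the reverse direction, the functor $G$ sends $(Y, s_Y)$ to the non-local $A$-module assembled from $Y$ and $\sigma W \otimes_A Y \cong \sigma Y$ by using $s_Y$ as the off-diagonal gluing datum, and one checks $F \circ G \simeq \id$ and $G \circ F \simeq \id$ directly from the fact that the associated graded of $G(Y)$ reconstitutes $Y$ together with $s_Y$.

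Finally, monoidality is verified by checking that $F$ intertwines $\otimes_A$ on $A\Mod$ with the corresponding product on $\CB(\sigma W)\Mod$, which reduces to $S$ being a derivation for the vertex product — a standard property of screening charges realized as contour integrals of vertex operators. The main obstacle lies in constructing the canonical filtration used in the previous paragraph: since $A\Mod$ does not have projective objects (as emphasized in the introduction), one cannot simply define $F$ on projectives and extend by exactness; instead the screening filtration must be produced canonically on every object and shown to be functorial and tensor-compatible. The diagonal braiding being exactly $-1$ is what truncates $\CB(\sigma W)$ to two graded pieces, so the filtration has only two steps and remains tractable; this is also the structural reason the theorem is restricted to the self-braiding $-1$ case.
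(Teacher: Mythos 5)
Your computation of $\CB(\sigma W)\cong A\oplus\sigma W$ and the description of its modules as pairs $(Y,s_Y)$ with $s_Y^2=0$ agree with the paper, and you have correctly located the crux of the argument in the construction of a canonical, functorial, tensor-compatible filtration. But the proposal leaves that crux unresolved, and the way the paper resolves it uses an ingredient you never invoke: the second Heisenberg field $\psi$ of Assumption 5, dual to $\varphi$. The paper decomposes every $M\in A\Mod$ into generalized $\psi_0$-eigenspaces, proves an $\Ext$-vanishing statement (an extension of local modules whose $\psi_0$-eigenvalues do not differ by exactly one is forced to be local), and uses it to show that the truncations $M_{[\geq\alpha]}$ are genuine $A$-submodules; the associated graded $\epsilon(M)$ is then local, and exactness of $\otimes_A$ (itself a consequence of every object being a quotient of a rigid induced module) makes $\epsilon$ a split \emph{tensor} functor. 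Without $\psi_0$ there is no evident candidate for the filtration, and the $\mf N$-action on the associated graded is not obtained by ``descending the screening $S$'' but from the failure of locality: the kernel of the action map $a:A\otimes_\CU M\to M$ is identified with $\sigma_\varphi M$, and $ac^2$ restricted to this kernel induces the degree-one map on $\epsilon(M)$.

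Two further steps in your sketch would not go through as stated. First, the inverse functor $G$ built by ``gluing $Y$ and $\sigma Y$ along $s_Y$'' requires producing a logarithmic action of $A$ (an intertwining operator with $\log z$ terms) from a square-zero endomorphism and verifying associativity; the paper avoids this by instead proving the functor $\CE$ is full --- which needs Assumption 6, the exhaustion of $\CU$ by subcategories with enough projectives, to run a resolution argument by induced modules $A\otimes_\CU P$ --- and essentially surjective, by exhibiting the single object $A\otimes_\CU\sigma_\varphi A'$ whose image is $\mf N$ and using that every $\mf N$-module is a quotient of some $\mf N\otimes_A M$. Second, monoidality does not reduce to $S$ being a derivation: the issue is to endow $\mf N$ with a bialgebra structure for which the abelian equivalence becomes monoidal, and this is done by transport of structure through the split tensor functor (Proposition \ref{Prop:Split}) followed by the uniqueness of the bialgebra structure on a sufficiently unrolled Nichols algebra (Lemma 5.6 of \cite{CLR23}), which forces it to be the Nichols coproduct. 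These are the pieces your proposal would still need to supply.
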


It turns out that the proof of even this simple case is quite complicated. We leave it to a future project to derive a proof for the general case. Theorem \ref{Thm:AmodNmodintro} allows us to apply this strategy to the case $\rho=1$, in which case $\grho$ is simply $\fgl (1|1)$. We use the Wakimoto free field realization of $\fgl (1|1)$ to be $W$ and show that the relative Drinfeld center is precisely $U_1^E\Mod$. For general $\rho$, we use the ungauging relation from \cite{BCDN} between $V(\grho)$ and $V(\fgl (1|1))$ to derive $U_\rho^E$ as a subquotient of many copies of $U_1^E$. 

\subsection{Physical Motivation}

The physical motivation for this work is the study of a class of non-semisimple TQFTs in 3 dimensions. More precisely, it concerns the topological twists of 3d $\CN=4$ supersymmetric abelian gauge theories. 

In a three-dimensional topological quantum field theory, the set of line operators is expected to carry the structure of a braided tensor category. A classic example of this is Chern-Simons theory with a compact gauge group $G$ and integer level $k$, whose line operators coincide with integrable level-$k$ representations of the WZW VOA $V_k(\mathfrak g)$. Another set of examples comes from subsectors of supersymmetric QFTs that behave topologically, \emph{a.k.a.} topological twists. 

One general class of quantum field theories that admits topological twists is that of 3d $\CN=4$ gauge theories. They in fact have \emph{two} distinct topological subsectors, often called A and B twists. The 3d B twist, introduced by \cite{BlauThompson}, leads to a gauge-theory analog of Rozansky-Witten TQFT \cite{RW,KRS}. The 3d A twist is a dimensional reduction of Witten's archetypical twist of 4d $\CN=2$ Yang-Mills that defines Donaldson-Witten TQFT.

Categorical aspects of line operators in the A and B twists of 3d $\mathcal N=4$ gauge theories were developed in a series of different works, including \cite{BDGH,BF-notes,Webster-tilting,OR-Chern,lineops,CCG,CDGG,hilburn2021tate,BN22,BCDN,HG-Betti,GarnerNiu,GeerYoung, garner2024btwisted}. (See especially the introductions to \cite{hilburn2021tate,BCDN} for further background and references.) The physical constructions of these categories took much inspiration from an early classification of BPS line operators by \cite{AsselGomis}, the classic, general analysis of BPS defects in gauge theory by \cite{GukovWitten}, and earlier work on the categorical structure of Rozansky-Witten theory \cite{KRS,RobertsWillerton}. In addition, we note that another series of works including \cite{Dedushenko:2018bp,3dmodularity,Gang:2021hrd,Cui:2021lyi} extracted data of braided tensor (in fact, modular tensor) categories from 3d supersymmetric gauge theories indirectly, by analyzing partition functions and half-indices, or using string/M-theory dualities; these were recently connected with direct field-theory analysis in \cite{Dedushenko:2023cvd,Ferrari:2023fez}. A notable difference between the 3d $\CN=4$ theories considered in \cite{Dedushenko:2018bp,Dedushenko:2023cvd,Gang:2021hrd,Ferrari:2023fez} and general 3d $\CN=4$ gauge theories is that the former are ``rank zero'' --- they have no moduli space of SUSY vacua , which allows their categories of line operators to be semisimple. In general 3d gauge theories, there is a nontrivial moduli space of vacua, and (correspondingly) the categories of line operators are not semisimple. In a topological twist, they are nontrivial derived categories. 

Despite what is now quite a large body of work on categories of line operators in 3d $\CN=4$ gauge theories, the current state of the art (in the general non-semisimple/derived case) is still remarkably complicated. In this paper and a companion \cite{cdntoappear} we seek to significantly improve this state of affairs by introducing \emph{quantum groups} --- and more generally, quasi-triangular Hopf algebras --- that control line operators. These algebras and their weighted category of modules are recently used in \cite{garner2024btwisted} to construct non-semisimple TQFT via an application of the methods developed in  \cite{costantino2014quantum}. 

The 3d $\CN=4$ abelian gauge theories relevant for this paper are defined by an abelian group $G$ and representation $V$, whose charge matrix $\rho$ is assumed to define a faithful action. This class of theories is closed under 3d mirror symmetry, which swaps A and B twists, so without loss of generality we may also focus on a single twist; we choose the B twist. In \cite{BN22,BCDN} it was explained that the category of line operators in such a B twisted gauge theory may be represented as modules for a certain boundary VOA $\CV_\rho^B$, where $\CV_\rho^B$ is a $\Z^r$ simple current extension of the affine current superalgebra $V(\grho)$. In \cite{BCDN}, the category of modules of $\CV_\rho^B$ is defined using this simple current extension as a de-equivariantization of $KL_\rho$ by the $\Z^r$ lattice; and the full category of line operators is the derived category thereof. Theorem \ref{KLrho} gives a quantum group description to the category of line operators, with explicit tensor product and $R$-matrix, making it more amenable to algebraic studies. 

\begin{Rem}
    Just as the boundary VOA relevant for the physical theory is not $V(\grho)$ but the simple current extension, the quantum group relevant to the physics setup is a subquotient of $U_\rho^E$. We will derive this subquotient in Section \ref{subsec:QGphys}, using de-equivariantization. 
\end{Rem}

\subsection{Conventions and Notations}

Since we are extensively dealing with superalgebras and supercategories, we fix our convention here. A vertex operator superalgebra $V$ is a $\Z$-graded vertex operator superalgebra in the sense of \cite{creutzig2017tensor}. The category $V\Mod$ is the abelian category of super vector spaces with an action of $V$, or in other words, it is a $\Z_2$-graded module of the $\Z_2$-graded algebra $U(V)$ (the universal enveloping algebra of $V$). This is called the ``underlying category" of the supercategory in the sense of  \textit{op.cit.} Similarly, given a superalgebra $U$, the category $U\Mod$ is the category of super vector spaces with an action of $U$. In most of the texts, we will drop the word ``super" unless possible confusion may occur. Statements of \textit{op.cit.} remain true in this setting. 

As has been stated, we denote by $G=(\C^\times)^r$ a product of the abelian group $\C^\times$, and $V=\C^n$ a representation of $G$ defined by a charge matrix $\rho$, which we view as a map:
\be
\rho: \Z^r\to \Z^n.
\ee
The matrix element $\rho^{ia}$ (for $1\leq i\leq n, 1\leq a\leq r$) denotes the weight of the $a$-th copy of $\C^\times$ on the $i$-th basis element of $V$. We will use superscripts to label matrix elements in what follows, and use subscripts for the coefficients of fields of a VOA, so that $Y(A, z)=\sum_n A_n z^n$. As in \cite{BCDN}, we assume that the action of $G$ on $V$ is \textit{faithful}, in the sense that the charge lattice $\rho$ fits into a short exact sequence:
\be 
\btik
0\rar & \Z^r\rar{\rho} & \Z^n \rar{\tau} & \Z^{n-r}\rar & 0
\etik \label{SES} \ee
In this case, one can choose a splitting of this exact sequence $\wt{\rho}: \Z^{n-r}\to \Z^n$, namely $\tau\circ \wt{\rho}=\mathrm{Id}_{n-r}$. Similarly, one can choose a co-splitting $\wt{\tau}: \Z^n\to \Z^r$ such that $\wt{\tau}\circ \wt{\rho}=0$ and $\wt{\tau}\circ\rho=\mathrm{Id}_r$. This is depicted in the following:
\be\label{eqsplitexactsequence}
\begin{tikzcd}
    0\rar & \mathbb{Z}^r\arrow[r, shift right, swap, "\rho"]& \arrow[l, shift right, swap, "\wt{\tau}"] \mathbb{Z}^n \arrow[r, shift right, swap, "\tau"] & \arrow[l, shift right, swap, "\wt{\rho}"]\mathbb{Z}^{n-r}\rar & 0
    \end{tikzcd}
\ee
They satisfy the following equations:
\be
\tau\circ\rho=\wt{\tau}\circ\wt{\rho}=0, \qquad \tau\circ \wt{\rho}=\mathrm{Id}_{n-r}, \qquad \wt{\tau}\circ \rho=\mathrm{Id}_r, \qquad \rho\circ \wt{\tau}+\wt{\rho}\circ \tau=\mathrm{Id}_n.
\ee
We will use this splitting to show that $U_\rho^E$ is a subquotient of many copies of $U_1^E$. 

In the following, given an algebra $A$ in a braided tensor category $\CC$, we will denote by $A\Mod (\CC)$ the category of finite-length $A$ modules in $\CC$. In particular, if $\CC$ is $\mathrm{Vect}$, then $A\Mod$ means the category of finite-dimensional modules of $A$. 

Tensor products over $\C$ will be denoted by $\otimes$. Let $V$ be a VOA and $\CU$ a category of modules admitting a braided tensor structure. Then, the tensor product in $\CU$ will be denoted by $\otimes_\CU$. All tensor products are ordinary tensor products, not derived tensor products. We use Koszul sign rule implicitly in all tensor products, so that whenever two tensor factors are swapped, a sign $(-1)^{F\otimes F}$ should be added, where $F$ is the parity operator.

\subsection{Organization}

\begin{itemize}

\item In Section \ref{sec:VOA}, we review properties of the vertex algebra $V(\grho)$ and its Kazhdan-Lusztig category $KL_\rho$, following \cite{BCDN}.

\item  In Section \ref{sec:QG}, we introduce the quantum group $U_\rho^E$ and derive useful properties, especially its relation to $(U_1^E)^{\otimes n}$, which is parallel to the relationship between $V(\grho)$ and $V(\fgl (1|1))$.

\item  In Section \ref{sec:FF}, we review the first half of the proof strategy, that of the free-field realizations and commutative algebra objects, following \cite{CLR23}. We also derive some useful new results, especially the relationship between local modules and non-local modules of a commutative algebra object. 

\item In Section \ref{sec:YDQG}, we review the second half of the proof strategy, that of connecting $A\Mod$ to a Nichols algebra, again following \cite{CLR23}. We also show that $U_\rho^E$ is the relative double of a Nichols algebra.

\item In Section \ref{sec:KL1proof}, we apply the results of previous sections to prove Theorem \ref{Thm:AmodNmodintro}. As a consequence, we prove Theorem \ref{KLrho} for $\rho=1$. We also extend this to $\rho=\mathrm{Id}_n$. 

\item In Section \ref{sec:ungauge}, we use the ungauging relation to derive Theorem \ref{KLrho} for general $\rho$ from $\rho=\mathrm{Id}_n$. We also use the same strategy to derive the physically-relevant quantum group as a subquotient of $U_\rho^E$. 

\end{itemize}

\subsection*{Acknowledgements}

We would like to thank Tudor Dimofte for collaboration that motivated this work, which will also appear in further companion papers. TC is very grateful to Simon Lentner for collaborations on related problems.
WN is grateful to Niklas Garner and Matthew Young for helpful discussions related to this research. We would like to thank Matthew Rupert and Harshi Yadav for many helpful comments on the paper. WN's research is supported by Perimeter Institute for Theoretical Physics. Research at Perimeter Institute is supported in part by the Government of Canada through the
Department of Innovation, Science and Economic Development Canada and by the Province
of Ontario through the Ministry of Colleges and Universities.

\section{The Vertex Algebra $V(\grho)$ and the Kazhdan-Lusztig Category}\label{sec:VOA}

\subsection{The Vertex Algebra $V(\grho)$}

In \cite{BCDN}, we studied an affine Lie superalgebra $V(\fg_\rho)$ and its simple current extension $\CV_\rho^B$. These VOAs are inspired by the study of 3d $\CN=4$ abelian gauge theories, especially their boundary VOA and category of line operators \cite{CG, CCG}. When $\rho=1$, the VOA $V(\grho)$ is simply the affine vertex superalgebra $V(\fgl(1|1))$, whose representation theory was studied in \cite{Creutzig:2011cu,creutzig2020tensor, BN22}. In this section, we recall the definition and properties of the VOA $V(\grho)$. Consider the Lie superalgebra $\fg_\rho$ over $\mathbb C$ given by
\begin{align} \label{def-grho} \grho &:= T^*(\mathfrak g_\C\oplus \Pi V) \;=\; \mathfrak g_\C \oplus \Pi V \oplus \Pi V^* \oplus \mathfrak g_\C^* \\
& \hspace{.8in} \text{with basis}\;\; (N^a,\;\;\; \psi_{+}^i,\;\;\; \psi_{-}^i,\;\;\; E^a)\,, \notag
\end{align}
for $a=1,...,r$ and $i=1,...,n$. This is a $\Z$-graded Lie superalgebra. Its even/bosonic part is a copy of $\mathfrak g_\C=\C^r$ and a copy of $\mathfrak g_\C^*=\C^r$ (with basis elements $N^a,E^a$). Its odd/fermionic part, indicated by the parity shift `$\Pi$', is $V\oplus V^* = \C^{2n}$ (with basis $\psi_{\pm}^i$). The nonvanishing Lie brackets are given by
\be \label{g-brackets}
[N^a,\psi_{\pm}^i]=\pm \rho^{ia} \psi^i_{\pm}\,,\qquad 
\{\psi_{+}^i,\psi_{-}^j\}=\delta^{ij} \sum_{a=1}^r \rho^{ia}E^a.
\ee
To define the Kac-Moody Lie superalgebra, we need to choose an invariant, even, nondegenerate bilinear pairing $\kappa:\grho\times \grho\to\C$. There are many choices of pairings on $\grho$. The one relevant to the physical setup is derived in \cite{CDG-chiral, garner2022vertex}; \footnote{In fact, one can use a field redefinition to shift this to a more standard bilinear form such that $\kappa_0 (N^a, N^b)=0$ and the other pairings are the same as $\kappa$.} it is given by
\begin{equation}\label{eqbilinearBper}
   \kappa(N^a,N^b)=(\rho^T\rho)^{ab}=\sum\limits_{i=1}^n \rho^{ia}\rho^{ib}\,,\qquad \kappa(N^a,E^b)=\delta^{ab}\,,\qquad\kappa(\psi^{i}_+,\psi^{j}_-)=\delta^{ij}\,.
\end{equation}
This defines a Kac-Moody Lie superalgebra, which we denote by $\wh{\grho}$, which is a central extension of $\grho\otimes \C[t,t^{-1}]$ defined by $\kappa$ in the usual way:
\be
\btik
0\rar & \C K\rar & \wh{\grho}\rar & \grho\otimes \C[t,t^{-1}]\rar & 0
\etik
\ee
where the commutation relation is defined on $a\otimes f(t)$ and $b\otimes g(t)$ by:
\be
[a\otimes f(t), b\otimes g(t)]_{\wh\grho}=[a,b]_{\grho}\otimes f(t)g(t)-\kappa(a,b)K\mathrm{Res}_{t=0} f(t)\pd_t g(t). 
\ee
More specifically, on the generators of $\grho$, the commutation relation reads:
\be
\begin{aligned}
&[N_{n}^a, N_{m}^b]=(\rho^T\rho)^{ab}n\delta_{m+n, 0} K, \qquad [N_{n}^a, E_{m}^b]=n\delta^{ab} \delta_{m+n, 0} K\\  &[N_{n}^a, \psi^{i}_{\pm, k}]=\pm \rho^{ia} \psi^{i}_{\pm, k+n}, \qquad \{\psi^{i}_{+,m}, \psi^{j}_{-,n}\}=m\delta_{m+n, 0} \delta^{ij} K+\delta^{ij}\sum_{a=1}^r\rho^{ia} E^a_{m+n}. 
\end{aligned}
\ee
Denote by $\widehat{\grho}_{\geq 0}=\grho\otimes \C[t]\oplus \C K$, and $\widehat{\grho}_{<0}=\grho\otimes t^{-1}\C[t^{-1}]$, then it is easy to see that $\wh{\grho}$ is a crossed product of $\wh{\grho}_{\geq 0}$ and $\wh{\grho}_{<0}$. In particular, there is a decomposition:
\be
U(\wh{\grho}_{<0})\otimes U(\wh{\grho}_{\geq 0})\to  U(\wh{\grho}).
\ee
For each finite-dimensional module $M$ of $\grho$ and each $k \in \mathbb C$, we define a Verma module of $\wh \grho$ of level $k$ by:
\be
\mathrm{Ind}(M):=U(\wh{\grho})\otimes_{U(\wh{\grho}_{\geq 0})} \lp M\otimes \C_k\rp,
\ee
where $M\otimes \C_k$ is the one-dimensional representation of $\wh{\grho}_{\geq 0}$ where $\grho\otimes t\C[t]$ acts trivially, the zero-mode subalgebra $\grho$ acts on $M$ naturally and $K$ acts by multiplication with $k$. The module $V_k(\grho):=\mathrm{Ind}(\C)$ (where $\C$ is the trivial $\grho$ module) has the structure of a super-VOA, whose OPEs on generators are given by:
\begin{equation}\label{eqperturbVOABgenOPE}
\begin{aligned}
& N^a(z)E^b(w)\sim \frac{k\delta^{ab}}{(z-w)^2}\,,\qquad\;\; \quad N^a(z)N^b(w)\sim \frac{k(\rho^T\rho)^{ab}}{(z-w)^2}\,,\\ &N^a(z)\psi^{i}_+(w)\sim \frac{ \rho^{ia}\psi^{i}_+(w)}{(z-w)}\,,\qquad N^a(z)\psi^{i}_-(w)\sim \frac{- \rho^{ia}\psi^{i}_-(w)}{(z-w)}\,,\\ &\hspace{.3in}\psi^{i}_+(z)\psi^{j}_-(w)\sim \delta^{ij}\left(\frac{k}{(z-w)^2}+\frac{\sum\limits_{a=1}^r \rho^{ia}E^a(w)}{z-w}\right)\,.
\end{aligned}
\end{equation}
Its stress-energy tensor is given by:
\be \label{L-B}
    L(z)\,=\,\frac{1}{2k}\Big(\sum_{1\leq a\leq r}\norm{N^a E^a}+\norm{E^a N^a}-\sum_{1\leq i\leq n} \norm{\psi^{i}_+\psi^{i}_-}+\sum_{1\leq i\leq n} \norm{\psi^{i}_-\psi^{i}_+}\Big)\,.
\ee
It is easy to see that in this case, for different non-zero $k$, the VOAs $V_k(\grho)$ are all isomorphic (albeit with a change of conformal element, which does not matter for the braided tensor category considered in this paper), and therefore we will once and for all fix $k=1$, and denote the VOA by $V(\grho)$. The following statement is true:
\begin{Prop}[ \cite{BCDN} Lemma 8.22]
The VOA $V(\grho)$ is simple. 

\end{Prop}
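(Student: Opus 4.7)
I would show that $V(\grho) = \mathrm{Ind}(\C)$ has no proper non-trivial submodule, which, for this universal vacuum module, amounts to ruling out singular vectors of positive conformal weight. By PBW applied to the decomposition $\wh{\grho} = \wh{\grho}_{<0}\oplus \wh{\grho}_{\geq 0}$, a basis of $V(\grho)$ consists of ordered monomials in the negative modes $N_{a,-n}$, $E^a_{-n}$, $\psi^{i}_{\pm,-\ell}$ (for $n,\ell \geq 1$) applied to the vacuum vector, graded by total mode number. A preliminary observation is that $E^a_0$ commutes with every negative mode -- the only potentially non-trivial bracket is $[E^a_0, N_{b,-n}]$, which vanishes for $n>0$ from the central extension in \eqref{g-brackets} -- so $E^a_0$ acts as zero on every state in $V(\grho)$.

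Now suppose $v$ is a singular vector of minimal conformal weight $h > 0$, so $x_n v = 0$ for every $x \in \grho$ and $n > 0$. The key identities at $k = 1$ are
\begin{equation*}
\{\psi^{i}_{+, 1}, \psi^{j}_{-, -1}\} = \delta^{ij}, \qquad [N_{a, 1}, E^b_{-1}] = \delta^b_a, \qquad [N_{a, 1}, N_{b, -1}] = (\rho^T\rho)_{ab},
\end{equation*}
with all other brackets of total weight zero vanishing after the $E^a_0$ observation. These identities say that on PBW monomials, the positive modes $\psi^{i}_{\pm, 1}$, $N_{a, 1}$, $E^a_1$ act, to leading order, as the super-derivations dual to $\psi^{i}_{\mp, -1}$, $E^a_{-1}$, $N_{a, -1}$ respectively. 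A careful inductive argument, peeling off the lowest-weight negative modes from $v$ using the equations $x_n v = 0$ for successive $n > 0$, then forces $v$ to have trivial negative-mode content, contradicting $h > 0$.

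The faithfulness assumption \eqref{SES} enters in ensuring that $\rho^T \rho$ is positive-definite (equivalently, non-degenerate), so the bosonic bilinear form $\kappa$ restricted to the Heisenberg sub-VOA generated by $\{N_a, E^a\}$ is non-degenerate at $k = 1$; this is what allows the Shapovalov-type contravariant form on $V(\grho)$ to be non-degenerate on each conformal weight space. The main obstacle in making the singular-vector argument rigorous is the cross-contribution $\{\psi^{i}_{+, 1}, \psi^{i}_{-, -n}\} = \sum_a \rho^i{}_a E^a_{1-n}$ for $n \geq 2$, which, when $\psi^{i}_{+, 1}$ is brought past a lower $\psi^{i}_{-, -n}$, produces a bosonic mode rather than a scalar and thereby couples the fermionic and bosonic sectors. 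I would handle this by organizing the PBW basis by decreasing fermion number and applying the fermionic positive modes first, reducing the analysis to a purely bosonic would-be singular vector in the Heisenberg sub-VOA, which must then vanish by the non-degeneracy just described.
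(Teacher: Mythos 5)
The paper does not actually prove this statement here: it is imported verbatim as \cite{BCDN}~Lemma~8.22, so there is no internal proof to compare against. Your overall skeleton (reduce simplicity of the vacuum module to the absence of positive-weight vectors killed by all positive modes, using that $E^a_0$ is central in $\wh{\grho}$ and annihilates $V(\grho)$, and that the level-one pairings are non-degenerate) is the standard one and your preliminary observations are correct. But the proof has a genuine gap at exactly the point you flag: the ``careful inductive argument'' is asserted rather than carried out, and the mechanism you propose for it does not work as described. Organizing the PBW basis by total fermion number and ``applying the fermionic positive modes first'' does not reduce the problem to the bosonic sector, because the operators you apply do not respect that filtration: $[\psi^{i}_{+,m},N_{a,-n}]=-\rho^i{}_a\,\psi^{i}_{+,m-n}$ \emph{creates} a fermionic mode whenever $m<n$, so $\psi^{i}_{+,m}$ acting on a PBW monomial mixes fermion-number strata in both directions, in addition to the $\{\psi^{i}_{+,m},\psi^{i}_{-,-n}\}=\sum_a\rho^i{}_aE^a_{m-n}$ coupling you already noted. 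Without a grading or filtration that is genuinely preserved (the natural candidate is the $\Z^n$-grading by $\#\psi^i_+-\#\psi^i_-$ together with conformal weight, not total fermion number), the induction has no base to stand on. A secondary inaccuracy: faithfulness of $\rho$ is not what makes the bosonic form non-degenerate --- the block $\kappa(N_a,E^b)=\delta_a{}^b$ already forces $\det\left(\begin{smallmatrix}\rho^\trans\rho & I\\ I & 0\end{smallmatrix}\right)=\pm1$ regardless of $\rho^\trans\rho$, so that appeal to \eqref{SES} is a red herring.

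If you want a complete argument, the machinery already assembled in this paper gives a much shorter route than a direct singular-vector analysis. By Proposition~\ref{Prop:ungaugeVOA}, $V(\grho)\otimes\CW_\rho\cong\CA_\rho$ is a simple-current extension of $V(\fgl(1|1))^{\otimes n}$ by pairwise inequivalent spectral flows of the vacuum; any nonzero ideal of such an extension is a $V(\fgl(1|1))^{\otimes n}$-submodule, hence a sum of some of these simple currents, and invariance under the fusion $\sigma_{\lambda,\mu}V\otimes\sigma_{\lambda',\mu'}V\cong\sigma_{\lambda+\lambda',\mu+\mu'}V$ forces it to contain the vacuum sector. Simplicity of $V(\grho)\otimes\CW_\rho$ then descends to the tensor factor $V(\grho)$ since $\CW_\rho$ is simple. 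This reduces everything to the classical simplicity of $V_k(\fgl(1|1))$ for $k\neq0$ (itself most easily seen from the free-field realization as the kernel of a single screening, Proposition~\ref{PropScreen}), which is presumably close to what \cite{BCDN} does.
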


The affine Lie superalgebra $\wh\grho$ has the following $\Z^{r}\times \C^r$ lattice of automorphisms:
\be\label{eq:spectralflow}
\sigma_{\lambda, \mu}(N^a(z))=N^a(z)-\frac{\mu^a}{z},\qquad \sigma_{\lambda, \mu}(E^a(z))=E^a(z)-\frac{\lambda^a}{z},\qquad \sigma_{\lambda,\mu}\psi^{i}_\pm(z)=z^{\mp \sum_{a}\rho^{ia}\lambda^a}\psi^{i}_\pm(z).
\ee
Here $\lambda\in \C^r$ and $\mu\in \C^r$ such that $\rho(\lambda)\in \Z^n$. We call these morphisms spectral-flow morphisms. Starting from a module $M$ of $\wh\grho$, we can obtain a spectral-flowed module $\sigma_{\lambda, \mu}(M)$, whose underlying vector space is $M$ and whose module structure is given by
\be
v* m=\sigma_{\lambda, \mu}^{-1}(v)m, \qquad v\in \wh\grho, ~m\in M. 
\ee
This functorial assignment maps $V(\grho)$ modules to $V(\grho)$ modules.  This is useful in constructing VOA extensions of $V(\grho)$. For example, the VOA $\CV_\rho^B$ is defined as a module of $V(\grho)$ using a $\Z^r$ lattice of spectral-flowed vacuum modules
\be
\CV_\rho^B:=\bigoplus_{\lambda\in\Z^r} \sigma_{\lambda, \rho^\trans\rho\lambda} V(\grho).
\ee

\subsection{The Kazhdan-Lusztig Category}

Given a VOA $V$ and a category $\CU$ of $V$ modules, the extensive work of Huang-Lepowsky-Zhang \cite{huang2014logarithmic1, huang2010logarithmic2, huang2010logarithmic3, huang2010logarithmic4, huang2010logarithmic5, huang2010logarithmic6, huang2011logarithmic7, huang2011logarithmic8} shows that under certain conditions on $V$ and $\CU$, there is a vertex-tensor category structure on $\CU$: the structure of a braided tensor category with twist, where the tensor product of two modules is defined as the universal object of logarithmic intertwining operators.

Specifically, given three modules $M, N$ and $P$, a logarithmic intertwining operator is a map
\be \label{Y-int}
Y:  M\otimes  N\to P\{z\}[\log z]
\ee
that satisfies a similar set of properties the state-operator map of a VOA needs to satisfy. 
The curly bracket means a formal power series with any possible complex  exponents, \ie\ sums of the type
\[
\sum_{\substack{y \in \mathbb C \\ n \in \mathbb N}} P_{y, n}\, z^y\, (\log(z))^n, \qquad  P_{y, n} \in P.
\]
For a full definition, see \cite[Definition 3.10]{huang2010logarithmic2}. Let $\mathrm{Int}( M\otimes N, P)$ be the set of $ M,  N,  P$ intertwining operators \eqref{Y-int}. For fixed $ M$ and $ N$, 
 a universal object for the space of intertwining operators is a module denoted $ M\otimes_{\CU}  N$, together with an intertwining operator $Y: M\otimes N\to  M\otimes_{\CU} N \{z \}[\log z]$ that induces an isomorphism
\be
\mathrm{Int}( M\otimes N, P)\cong \mathrm{Hom}( M\otimes_{\CU} N,  P),
\ee
for any other module $ P$.

It is difficult in general to verify the existence of such a universal object and, in particular, the existence of tensor category. Usually one can show this existence if the category $\mathcal U$ satisfies certain finiteness conditions \cite[Theorem 3.3.4]{creutzig2021tensor}. 
The best known examples are first the Kazhdan-Lusztig category for an affine Lie algebra associated with a semisimple Lie algebra, which  at most levels satisfies the necessary finiteness conditions and has the structure of a braided tensor category; and second the Virasoro algebra at any central charge has a category of modules satisfying all the necessary finiteness conditions and hence has the structure of a braided tensor category \cite{Creutzig:2020zvv}. In general it is quite a difficult problem to establish  the structure of a braided tensor category on a category of modules of a VOA, e.g. only very recently this result has been established for the category of weight modules of the affine VOA of $\mathfrak{sl}_2$ at admissible levels \cite{Creutzig:2023rlw}.

In \cite{BCDN}, we studied the Kazhdan-Lusztig category for the affine Lie superalgebra $V(\grho)$, and showed that this category satisfies the finiteness conditions and therefore has the structure of a braided tensor category. Let us recall the definition here. A generalized module of $V(\grho)$ (or simply a module of the affine Lie superalgebra $\wh\grho$ where $K = \text{Id}$) is called \emph{finite-length} if it has a finite composition series whose composition factors are simple. It is called \emph{grading-restricted} if it is graded by generalized conformal weights (the generalized eigenvalues of $L_0$), the weight spaces are all finite-dimensional, and the generalized conformal weights are bounded from below. 

\begin{Def}
We define $KL_\rho$, the Kazhdan-Lusztig category of $V(\grho)$, to be the category of finite-length, grading-restricted modules of $V(\grho)$.

\end{Def}
 
The following was proven in \cite[Theorem 8.19]{BCDN}:

\begin{Thm}

The category $KL_\rho$ is a rigid braided tensor category, such that the braided tensor structure is defined by logarithmic intertwining operators and the dual is defined by the contragredient dual. 

\end{Thm}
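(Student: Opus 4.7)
The plan is to first verify the sufficient conditions from Huang--Lepowsky--Zhang theory for the existence of a vertex (hence braided) tensor structure on $KL_\rho$, and then to separately establish rigidity together with the identification of the rigid dual with the contragredient.

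I would begin by giving a concrete description of objects. Because $\grho$ is solvable with abelian even part, every simple object of $KL_\rho$ arises as the irreducible quotient of an induced module $\mathrm{Ind}(M)$ for a simple finite-dimensional $\grho$-module $M$; grading-restrictedness forces $E^a$ to act by complex scalars on $M$, while $N_a$ may act by Jordan blocks. A general object of $KL_\rho$ is then a finite iterated extension of such simples and, because $U(\wh\grho_{<0})$ is a finitely generated superalgebra, carries a PBW-compatible good filtration whose associated graded is finitely generated.

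Next I would verify the hypotheses of \cite[Theorem 3.3.4]{creutzig2021tensor}. Closure of $KL_\rho$ under contragredient duals follows from the explicit description of simples: the contragredient exchanges the roles of $\psi^i_+$ and $\psi^i_-$ and preserves both finite length and grading-restrictedness. $C_1$-cofiniteness of each object follows from finite generation over $U(\wh\grho_{<0})$. Convergence and associativity of $4$-point functions of logarithmic intertwining operators can then be checked either by direct estimates on the formal series, using the bounded weight spaces guaranteed by grading-restrictedness, or after passing to a Wakimoto-type free-field realization in which intertwiners become products of Heisenberg vertex operators and convergence reduces to standard Gaussian estimates.

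The main obstacle is rigidity, because $KL_\rho$ has no projective objects and its objects can have arbitrarily large Jordan blocks for the $N_a$ action, so the classical approach via projective covers is unavailable. I would first handle the case $\rho=1$, where $\grho=\fgl(1|1)$ and rigidity of simple typical modules is established by an explicit construction of evaluation and coevaluation maps using the zero modes of $\psi_\pm$; atypical simples (those with vanishing $E$-eigenvalue) fit into finite families whose duals can be built by hand, recovering the results of \cite{creutzig2020tensor, BN22}. For general $\rho$, I would then exploit the fact that $V(\grho)$ is related to several copies of $V(\fgl(1|1))$ via the ungauging construction used throughout \cite{BCDN}, so that rigidity can be transferred by restriction and cohomological descent. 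Rigidity of general finite-length objects then follows from rigidity of simples by the standard argument that extensions of rigid objects in a tensor category closed under duals remain rigid.

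Finally, once rigid duals exist, identification with the contragredient is essentially formal: the contragredient already sits in $KL_\rho$ and carries a natural non-degenerate $V(\grho)$-bilinear pairing coming from the conformal grading with stress tensor \eqref{L-B}; by universality this pairing factors through the rigid dual, yielding the claimed canonical isomorphism.
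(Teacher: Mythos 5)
First, a point of reference: this paper does not prove the theorem at all — it is quoted verbatim from \cite[Theorem~8.19]{BCDN}, and the only argument in the present text bearing on it is the proof of Proposition \ref{Prop:aspKL}, where rigidity of $KL_\rho$ is obtained from rigidity of $KL_1$ (\cite[Theorem~4.2.3]{creutzig2020tensor}) together with the fact that $KL_\rho$ is a de-equivariantization of $KL_1^{\boxtimes n}$. Your overall architecture — verify the Creutzig--Yang/HLZ existence criteria, prove rigidity for $\fgl(1|1)$ first, transfer to general $\rho$ via ungauging, then identify the rigid dual with the contragredient — is the correct one and matches the actual route.

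Two steps, however, have genuine gaps. (1) Your reduction of rigidity of finite-length objects to rigidity of simples invokes ``the standard argument that extensions of rigid objects remain rigid.'' This is not a formal fact in an arbitrary tensor category closed under duals: one needs the tensor product to be suitably exact, and in this setting exactness is usually \emph{deduced from} rigidity (exactly as in Corollary \ref{cor:exact} of this paper, via \cite[Lemma~A.6]{kazhdan1994tensor}), so the argument as stated is circular. It can be repaired — right-exactness of $\otimes_\CU$ holds automatically in HLZ theory, exactness of $-\otimes M$ for the middle term of a short exact sequence follows from exactness for the rigid outer terms, and one then builds the dual of $M$ as the corresponding extension of the duals and checks the zig-zag identities by a diagram chase — but this is a real lemma that must be stated and proved (or cited), not waved at. (2) Your transfer to general $\rho$ by ``restriction and cohomological descent'' is not the mechanism that works. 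The correct mechanism goes the other way: $V(\grho)\otimes\CW_\rho$ is a simple current \emph{extension} of $V(\fgl(1|1))^{\otimes n}$ (Proposition \ref{Prop:ungaugeVOA}), the induction/lifting functor $M\mapsto \CA_\rho\otimes M$ is a monoidal functor from the trivial-monodromy subcategory $KL_1^{\boxtimes n,\rho,[0]}$ onto the local $\CA_\rho$-modules, monoidal functors preserve rigidity, and surjectivity of this functor onto $KL_\rho\boxtimes \CW_\rho\Mod$ then yields rigidity of every object of $KL_\rho$. Smaller imprecisions: $U(\wh\grho_{<0})$ is not finitely generated as an algebra (the relevant input for $C_1$-cofiniteness is that objects of $KL_\rho$ are finitely generated modules and the VOA is strongly generated in conformal weight one); and simples at atypical central character are reached from the generic ones by spectral flow, i.e.\ by tensoring with invertible objects, which is the clean way to see they are rigid rather than ``building duals by hand.''
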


Since $V(\grho)$ has central elements $E^a_{0}$, the category $KL_\rho$ admits the following decomposition:
\be
KL_{\rho}=\bigoplus_{\nu\in \C^r} KL_{\rho, \nu},
\ee
where $KL_{\rho, \nu}$ is the sub-category of objects in $KL_\rho$ where the generalized eigenvalue of $E^a_{0}$ is $\nu_a$, or in other words, $(E^a_{0}-\nu_a)^n=0$ for some sufficiently large $n$. We proved in \cite{BCDN} that for generic $\nu$, the category $KL_{\rho, \nu}$ is equivalent to the corresponding category of finite-dimensional modules of $\grho$. Namely, let $\grho\Mod$ be the category of finite-dimensional modules of $\grho$, which also admits a decomposition:
\be
\grho\Mod=\bigoplus_{\nu\in \C^r} \grho\Mod_{\nu},
\ee
where $\grho\Mod_{\nu}$ is the sub-category where $E^a$ acts with generalized eigenvalue $\nu_a$. We have used the following induction functor to define $V(\grho)$:
\be
\mathrm{Ind}(M):=U(\wh \grho )\otimes_{U(\wh\fg_{\rho, \geq 0})}\lp M\otimes \C_1\rp,
\ee
where $M$ is a module of $\grho$, viewed as a module of $\wh\fg_{\rho, \geq 0}$ via the morphism $\wh\fg_{\rho, \geq 0}\to \fg_\rho$, and $\C_1$ is a module of the center $\C K$ on which $K$ acts as $1$.  It is clear that $\mathrm{Ind}$ sends an object in $\grho\Mod_\nu$ to a module of $V(\grho)$ where the generalized eigenvalues of $E^a_0$ are $\nu_a$. It would be in $KL_{\rho, \nu}$ if it is finite-length. We proved in \cite[Proposition 8.24]{BCDN} that this is the case. In fact, we have the following statement:

\begin{Prop}\label{Propindgeneric}
    The image of the induction functor $\mathrm{Ind}$ lies in $KL_\rho$. When $\sum_a \rho^{ia}\nu^a\notin \Z$ or $\sum_a \rho^{ia}\nu^a=0$ for all $i$, the induction functor induces an equivalence of abelian categories:
    \be
\mathrm{Ind}: \grho\Mod_\nu\simeq  KL_{\rho, \nu}.
    \ee
\end{Prop}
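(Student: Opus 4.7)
The plan is to first verify $\mathrm{Ind}(M)\in KL_\rho$ for every finite-dimensional $M\in \grho\Mod$, and then, under the generic hypothesis on $\nu$, exhibit a quasi-inverse given by ``extraction of the bottom'' and check that the unit and counit of the resulting adjunction are isomorphisms.

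For $\mathrm{Ind}(M)\in KL_\rho$, I would use the PBW decomposition $U(\wh\grho)\cong U(\wh\grho_{<0})\otimes U(\wh\grho_{\geq 0})$, giving a vector-space identification $\mathrm{Ind}(M)\cong U(\wh\grho_{<0})\otimes M$. The $L_0$-grading has finite-dimensional graded pieces (there are only finitely many monomials in the negative modes of fixed weight, and $M$ is finite-dimensional) and is bounded below, so $\mathrm{Ind}(M)$ is grading-restricted. For finite length one reduces via exactness of $\mathrm{Ind}$ to the case of simple $M$ and controls the composition length by a Shapovalov-type analysis of vectors annihilated by $\wh\grho_{>0}$; at generic $\nu$ no such vectors appear above the bottom, so $\mathrm{Ind}(M)$ is itself simple.

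Fully faithfulness will follow from Frobenius reciprocity for the pair $(\wh\grho,\wh\grho_{\geq 0})$,
$$\mathrm{Hom}_{V(\grho)}(\mathrm{Ind}(M),\mathrm{Ind}(N))\cong \mathrm{Hom}_\grho\bigl(M,\mathrm{Ind}(N)^{\wh\grho_{>0}}\bigr),$$
combined with the identification $\mathrm{Ind}(N)^{\wh\grho_{>0}}=N$ obtained from the same Shapovalov analysis. For essential surjectivity I would define $\mathrm{Res}\colon KL_{\rho,\nu}\to \grho\Mod_\nu$ by sending $X$ to its minimal conformal weight subspace $X_{\mathrm{bot}}$, which is finite-dimensional by grading-restrictedness and inherits a natural $\grho$-module structure. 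The canonical counit $\mathrm{Ind}(X_{\mathrm{bot}})\to X$ will then be an isomorphism under the generic condition: injectivity holds because generically no singular vectors occur above the bottom of the Verma module, while surjectivity follows by induction on the composition length of $X$, matching $L_0$-characters on both sides.

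The hard step is the Shapovalov-type computation controlling $\mathrm{Ind}(N)^{\wh\grho_{>0}}$: one must show that the positive modes $\psi^i_{\pm,m}$ and $N_{a,m}$ with $m>0$ have no nontrivial kernel on $\mathrm{Ind}(N)$ above the bottom. Using the brackets \eqref{g-brackets}, the relevant ``norm'' factors coming from compositions $\psi^i_{+,m}\psi^i_{-,-m}$ take the form $m+\sum_a\rho_{ia}\nu^a$, which the generic hypothesis $\sum_a\rho_{ia}\nu^a\notin\Z$ (or $=0$) forces to be nonzero for all $m\geq 1$. This is precisely where the stated condition on $\nu$ enters; once non-degeneracy of the Shapovalov form is established, both the finite-length statement and the identifications needed for the equivalence reduce to formal consequences of PBW and the $L_0$-grading.
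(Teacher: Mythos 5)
The paper does not actually prove this proposition here: it is quoted from the companion work \cite{BCDN} (Proposition 8.24 and the surrounding results), so there is no in-text argument to compare against line by line. Your strategy — PBW to get grading-restriction, nondegeneracy of the contravariant (Shapovalov) form on $\mathrm{Ind}(M)$ at generic $\nu$, Frobenius reciprocity for full faithfulness, and the counit $\mathrm{Ind}(X_{\mathrm{bot}})\to X$ for essential surjectivity — is the standard and correct skeleton, and you correctly locate where the hypothesis enters: since $E^a_0$ is central and $\{\psi^i_{+,m},\psi^i_{-,-m}\}=mK+\sum_a\rho_{ia}E^a_0$ acts with generalized eigenvalue $m+\sum_a\rho_{ia}\nu^a$, the condition ``$\sum_a\rho_{ia}\nu^a\notin\Z$ or $=0$'' makes these operators (and the bosonic Heisenberg pairings) invertible for all $m\neq 0$, so $\mathrm{Ind}(M)$ is free over $U(\wh\grho_{<0})$ in the Fock-module sense.

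Two places need more than you wrote. First, the opening claim that $\mathrm{Ind}(M)\in KL_\rho$ is asserted for \emph{all} finite-dimensional $M$, not only those with generic $E$-eigenvalue; your finite-length argument (simplicity of $\mathrm{Ind}$ of simples) only covers the generic stratum, and for $\sum_a\rho_{ia}\nu^a$ a nonzero integer one must still bound the composition length, e.g.\ via spectral flow to a generic $\nu$ or via the free-field resolution — this is precisely the content delegated to \cite{BCDN}. Second, for surjectivity of the counit, ``no singular vectors above the bottom'' is not by itself enough: one must exclude non-split extensions $0\to\mathrm{Ind}(S_2)\to X\to\mathrm{Ind}(S_1)\to 0$ in $KL_{\rho,\nu}$ whose lowest conformal weights differ by a positive integer (such integer differences do occur even for generic $\nu$). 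The clean fix is the stronger consequence of your nondegeneracy computation, namely that $\mathrm{H}^{>0}(\wh\grho_{>0},\mathrm{Ind}(S_2))=0$, so the long exact sequence for singular vectors shows $S_1$ lifts to genuine singular vectors of $X$ and the extension is induced from one in $\grho\Mod_\nu$. With that lemma made explicit, your ``matching $L_0$-characters'' step becomes honest and the argument closes.
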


Namely, for a generic $\nu$, we can understand the abelian category $KL_{\rho, \nu}$ very well by studying $\grho\Mod_\nu$. How do we study $KL_{\rho, \nu}$ when $\nu$ doesn't satisfy the requirement of Proposition \ref{Propindgeneric}? The answer is through spectral flow. Recall the automorphisms in equation \eqref{eq:spectralflow}. For each $(\lambda,\mu)$, the spectral flow induces an equivalence:
\be
\sigma_{\lambda,\mu}: KL_{\rho, \nu}\simeq KL_{\rho, \nu+\lambda}.
\ee
One can show that for each $\nu$, there exists some $\lambda$ such that $\rho (\lambda)\in \Z^r$ and that $\nu+\lambda$ satisfies the requirement of Proposition \ref{Propindgeneric}. Consequently, we have equivalences of abelian categories as follows:
\be
\btik
KL_{\rho,\nu}\rar{\sigma_{\lambda,\nu}} & KL_{\rho, \nu+\lambda} & \lar{\mathrm{Ind}} \grho\Mod_{\nu+\lambda}
\etik
\ee

Notice that the spectral flow automorphism $\sigma_{\lambda,\mu}$ is generated by the action of the  zero mode of the Heisenberg current $\sum_a \lambda^aN^{a}(z)+(\mu^a-\rho^\trans\rho\lambda^a) E^a(z)$, via Li's delta operator \cite{Lidelta}. Consequently, the spectral flows of the vacuum, $\sigma_{\lambda,\mu}V(\grho)$, are simple current modules, and satisfy simple fusion rules:
\be
\sigma_{\lambda,\mu}V(\grho)\otimes_{KL_\rho} \sigma_{\lambda',\mu'}V(\grho)\cong \sigma_{\lambda+\lambda',\mu+\mu'}V(\grho).
\ee
The fusion rule above gives the module $\CV_\rho^B$ the VOA structure. Moreover, given a module $M$, the spectral flow $\sigma_{\lambda,\mu} M$ is isomorphic to $\sigma_{\lambda,\mu}V(\grho)\otimes_{KL_\rho} M$, via the intertwining operator defined by Li's delta operator.

\subsection{Free-field Realization of $V(\grho)$}

One important way to study a VOA is by embedding it into simpler VOAs. In \cite{BCDN}, we constructed an embedding of $V(\grho)$ into  lattice vertex algebras, namely an extension of Heisenberg VOA by Fock modules. This generalized the well-studied special case of $\grho = \mathfrak{gl}(1|1)$ and extensions thereof  \cite{schomerus2006gl, creutzig2013w,creutzig2021duality}.
We recall the construction here. Let $H_{\{X^a,Y^a,Z^i\}}$ denote the Heisenberg VOA associated with the $2r+n$ dimensional vector space with basis $\{X^a,Y^a,Z^i\}$ for $a=1,...,r$ and $i=1,...,n$, and symmetric bilinear form
\begin{equation}
    B( X^a,Y^b)=\delta^{ab}\,,\qquad B( Z^i,Z^j)=\delta^{ij}\,.
\end{equation}
Namely, the VOA $H_{\{X^a,Y^a,Z^i\}}$ is strongly generated by $\pd X^a, \pd Y^a$ and $\pd Z^i$ with the following OPE:
\be
\pd X^a(z)\pd Y^b(w)=\frac{\delta^{ab}}{(z-w)^2},\qquad \pd Z^i(z)\pd Z^j(w)=\frac{\delta^{ij}}{(z-w)^2}.
\ee
Here and below, $\pd$ denotes derivative in the variable $z$. We choose the following conformal element $L$:
\begin{equation} \label{L-B-ff}
\begin{split} 
L =\ &\frac 1 2 \Big(\norm{ \sum_{1\leq a\leq r} (\partial X^a\partial Y^a + \partial Y^a \partial X^a) +\sum_{1\leq i\leq n}(\partial Z^i)(\partial Z^i)}\Big) +\\
& \frac 1 2 \sum_{1\leq i\leq n} \Big(\sum_{1\leq a\leq r}\rho^{ia}\partial^2Y^a-\partial^2Z^i\Big)\,.
\end{split}
\end{equation}
Let
\be \CV_Z^\rho = \bigoplus_{\lambda\in \Z^n} \CF_{\lambda\cdot Z}^{X,Y,Z} \ee
be the lattice VOA obtained by extending $H_{\{X^a,Y^a,Z^i\}}$ by Fock modules generated by $\norm{e^{\lambda Z}}$ for all $\lambda\in \Z^n$, where $\lambda Z=\sum_{i=1}^n \lambda^i Z^i$. It is easy to see that the VOA $\CV_Z^\rho$ is isomorphic to a tensor product:
\be
\CV_Z^\rho\cong H_{\{X^a,Y^a\}}\otimes \FF^{\otimes n},
\ee
where $H_{\{X^a,Y^a\}}$ is the Heisenberg VOA generated by $\pd X^a, \pd Y^a$ and $\FF^{\otimes n}$ is a set of free fermion VOAs, generated by $\norm{e^{\pm Z^i}}$. The assignments:
\begin{equation}\label{ffRealization}
    \begin{aligned}
     &N^a\mapsto \partial X^a +\sum_{1\leq i\leq n} \rho^{ia}\partial Z^i\,,\\
     &E^a\mapsto  \partial Y^a\,,\\
     &\psi^{i}_+(z)\mapsto \norm{e^{Z^i}}\,,\\
     &\psi^{i}_-\mapsto \norm{\sum_{1\leq a\leq r} \rho^{ia}\partial Y^a e^{-Z^i}}+\pd\norm{e^{-Z^i}}
    \end{aligned}
\end{equation}
defines an embedding of the Kac-Moody VOA $V(\grho)$ into the lattice VOA $\CV_Z^\rho$. It is in fact possible to identify the image of this embedding with the kernel of a certain set of screening operators. Recall that the lattice VOA $\CV_Z^\rho$  has Fock modules
\be \CV_{Z,\mu X+\nu Y}^\rho = \bigoplus_{\lambda\in \Z^n} \CF_{\mu \cdot X+\nu\cdot Y+\lambda\cdot Z}^{X,Y,Z} \ee
labelled by $\mu,\nu\in \C^r$, which are lifts of the Fock modules of the Heisenberg VOA $H_{\{X^a,Y^a, Z^i\}}$ to modules of its simple current extension $\CV_Z^\rho$ (see Section \ref{sec:ungauge} for details about simple current extensions and lifting functors). Here $\mu\cdot X=\sum_{a=1}^r \mu^a X^a$. When $\rho(\mu)\in \Z^n$, define intertwiners $S^i(z): \CV_{Z,\mu X+\nu Y}^\rho \to \CV_{Z,\mu X+\nu Y-(\rho Y)^i}^\rho(\!(z)\!)$ given by
\be S^i(z)=\norm{e^{Z^i(z)-\sum_a \rho^{ia}Y^a(z)}}\,, \label{def-B-screen} \ee
with corresponding screening charges $S^i_0 = \frac{1}{2\pi i}\oint S^i(z)\,dz$. Note that this is only well-defined when $\rho(\mu)\in \Z^n$, since this ensures that $S^i$ is integer moded. We have the following proposition, proven in \cite[Proposition 4.2]{BCDN}:

\begin{Prop}\label{PropScreen}
The image of the embedding $V(\grho)\hookrightarrow \CV_Z^\rho$ is the kernel of the screening operators:
\begin{equation}
   V(\grho)\,\cong\, \bigcap_{i=1}^n \mathrm{ker}\, S^i_0\big|_{\CV_Z^\rho}\,.
    \end{equation}
\end{Prop}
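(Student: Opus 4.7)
The statement asserts an equality of two subVOAs of the lattice algebra $\CV_Z^\rho$: the image of $V(\grho)$ under the free-field map \eqref{ffRealization}, and the joint kernel of the screening charges $S^i_0$. The plan is to prove the two inclusions separately. The first, $V(\grho) \subseteq \bigcap_i \ker S^i_0$, is a direct OPE computation: since each $S^i_0$ is a derivation on $\CV_Z^\rho$, it suffices to verify that each strong generator has vanishing simple-pole OPE with every $S^i(z)$. Using the Heisenberg contractions $\pd X_a \pd Y^b \sim \delta^b{}_a/(z-w)^2$ and $\pd Z^i \pd Z^j \sim \delta^{ij}/(z-w)^2$: (i) $E^a = \pd Y^a$ has no contraction with $S^i$, which involves only $Y$ and $Z$ fields, so the OPE is regular; (ii) the $\pd X_a$ piece of $N_a$ pairs with $-\sum_b\rho^i{}_b Y^b$ to give simple-pole coefficient $-\rho^i{}_a\, S^i(w)$, while $\sum_j\rho_{ja}\pd Z^j$ pairs with $Z^i$ to give $+\rho^i{}_a\, S^i(w)$, and these cancel precisely by the design of \eqref{ffRealization}; (iii) $\psi^j_+ = \norm{e^{Z^j}}$ has OPE with $S^i$ scaling as $(z-w)^{\delta^{ij}}$, regular in all cases; (iv) for $\psi^j_-$ with $j\neq i$ the OPE is regular, while for $j=i$ the two summands $\sum_a\rho^i{}_a\norm{\pd Y^a e^{-Z^i}}$ and $\norm{\pd e^{-Z^i}}$ each contribute to the simple pole, but writing $\norm{\pd e^{-Z^i}} = \pd_w\norm{e^{-Z^i(w)}}$ and differentiating the $(z-w)^{-1}$ leading term produced by the exponential contraction, the second contribution becomes $-\sum_b\rho^i{}_b\norm{\pd Y^b e^{-\sum_c\rho^i{}_c Y^c}}$, which exactly cancels the first.

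For the reverse inclusion $\bigcap_i \ker S^i_0 \subseteq V(\grho)$, my strategy is a Hilbert-series / graded-character comparison. Both sides are $L_0$-graded subVOAs of $\CV_Z^\rho$, and the first step supplies the inclusion, so it suffices to show the graded characters agree. The PBW theorem for $\wh\grho$ at level $k=1$ gives the character of $V(\grho)$ explicitly. For the kernel side, I would set up a Felder-type complex: each $S^i_0$ intertwines Fock modules of the Heisenberg VOA underlying $\CV_Z^\rho$, and the kernel on each Heisenberg weight component can be computed from standard Feigin--Frenkel-type arguments. The faithfulness assumption \eqref{SES} on $\rho$ guarantees that the $n$ screenings are sufficiently independent that the iterated intersection matches the product of single-screening kernels at the level of graded characters, and one then checks this matches the PBW character of $V(\grho)$ component by component in the $L_0$ and Heisenberg gradings.

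The principal obstacle is the character calculation for the joint kernel of several screenings: one must ensure that the associated Felder-type complex has no higher cohomology, so that the intersection is genuinely $H^0$ and not some extension involving $H^{\geq 1}$. A cleaner alternative that avoids this issue is to invoke the simplicity of $V(\grho)$, already cited above: any strict containment $V(\grho) \subsetneq \bigcap_i \ker S^i_0$ would force the existence of additional conformal primary fields at low $L_0$-weight in $\CV_Z^\rho$, and a direct enumeration of $S^i_0$-invariant Fock-module vectors at small $L_0$-eigenvalue rules these out, forcing equality.
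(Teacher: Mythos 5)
The paper does not prove this proposition in-house; it defers entirely to \cite{BCDN}, Proposition 4.2, so there is no internal argument to compare against line by line. Your forward inclusion $V(\grho)\subseteq\bigcap_i\ker S^i_0$ is correct and is the standard computation: the cancellations you exhibit for $N_a$ (between the $\pd X_a$ contraction with $-\sum_b\rho^i{}_bY^b$ and the $\sum_j\rho_{ja}\pd Z^j$ contraction with $Z^i$) and for $\psi^i_-$ (between the two summands, using $S^i_0\,\norm{\pd e^{-Z^i}}=\pd_w\lp S^i_0\,\norm{e^{-Z^i}}\rp$) check out with the bilinear form $B(X_a,Y^b)=\delta^b{}_a$, $B(Z^i,Z^j)=\delta^{ij}$, and the reduction to strong generators via the Borcherds/Leibniz property of the zero mode of a weight-one intertwiner is legitimate.

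The reverse inclusion is where your proposal has a genuine gap, and you have essentially flagged it yourself without closing it. The character comparison is the right idea, but the entire content of the proposition lives in the step you defer: showing that the joint kernel of the $n$ screenings is no larger than the PBW character of $V(\grho)$, equivalently that the Felder-type complex built from the $S^i_0$ is exact away from degree zero (or at least that its degree-zero cohomology has the expected character). Saying that faithfulness of $\rho$ makes the screenings ``sufficiently independent'' is not an argument --- one must actually compute, e.g.\ by decomposing $\CV_Z^\rho\cong H_{\{X_a,Y_a\}}\otimes\FF^{\otimes n}$ into Fock modules and free-fermion charge sectors and analyzing each $S^i_0$ there, which is what the cited proof in \cite{BCDN} does. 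Your ``cleaner alternative'' does not repair this: simplicity of $V(\grho)$ only says $V(\grho)$ has no proper ideals, and it places no obstruction on the existence of a strictly larger vertex subalgebra of $\CV_Z^\rho$ killed by all $S^i_0$ --- indeed $\CV_Z^\rho$ itself properly contains $V(\grho)$. Moreover, a hypothetical extra kernel vector generating a new $V(\grho)$-submodule of $\CV_Z^\rho/V(\grho)$ could sit at arbitrarily large conformal weight, so a ``direct enumeration at small $L_0$-eigenvalue'' cannot rule it out unless you already have the very character bound the first strategy was supposed to supply. As written, the proposal establishes one inclusion and restates, rather than proves, the other.
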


One can restrict the modules $\CV_{Z, \lambda X+\mu Y}^\rho$ of $\CV_Z^\rho$ to $V(\grho)$. When $\rho(\lambda)\in \Z^r$, one can identify:
\be
\sigma_{\lambda, \mu+\rho^\trans\rho \lambda}V(\grho)\cong \bigcap_{i=1}^n \mathrm{ker}\, S^i_0\big|_{\CV^\rho_{\lambda X+\mu Y, Z}}\,.
\ee
In fact, for each $\mu,\nu$, if $\rho(\mu)$ satisfies the generic condition of Proposition \ref{Propindgeneric}, then one can show that $\CV_{Z, \mu X+\nu Y}^\rho$ can be identified with $\mathrm{Ind} (V_{\nu-\rho, \mu})$ where $V_{\nu, \mu}$ is the Verma module of $\grho$ generated by a single vector $v$ such that $N^av=\nu^a v, E^av=\mu^a v$ and $\psi_-^i v=0$. Here $\nu-\rho$ is the vector $(\nu^a-\sum_i \rho^{ia})$. For example, the free field algebra $\CV_Z^\rho$ is identified with the module $\mathrm{Ind}(V_{-\rho, 0})$, and is the Wakimoto free-field realization of the Lie superalgebra $\grho$. 

Having such a free-field realization significantly simplifies the study of modules of $V(\grho)$, since one can now construct intertwining operators of modules of $V(\grho)$ using intertwining operators of $\CV_Z^\rho$ modules. For example, we have the following statement, whose proof is identical to the proof of \cite[Appendix B1]{GarnerNiu}.

\begin{Prop}
    Every object in $KL_\rho$ is a subquotient (namely a quotient of a sub-module) of an object in $\CV_Z^\rho\Mod$, restricted to $KL_\rho$. 
    
\end{Prop}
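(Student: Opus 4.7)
The plan is to reduce the claim to an analogous statement for the Lie superalgebra $\grho$ using Proposition \ref{Propindgeneric} and spectral flow, and then identify Verma modules of $\grho$ with the free-field modules $\CV_{Z,\mu X+\nu Y}^{\rho}$.

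First, I would decompose $M=\bigoplus_\nu M_\nu$ according to the generalized $E_{a,0}$-eigenvalues. Since $M$ has finite length only finitely many summands are nonzero, and sub-quotients are stable under finite direct sums of $\CV_Z^\rho$-modules, it suffices to treat each $M_\nu$ separately. For each such $\nu$ choose $(\lambda,\mu)$ with $\rho(\lambda)\in\Z^n$ such that $\nu+\lambda$ satisfies the genericity condition of Proposition \ref{Propindgeneric}. The spectral flow $\sigma_{\lambda,\mu}$ of equation \eqref{eq:spectralflow} is implemented by the zero mode of a Heisenberg current already contained in $\CV_Z^\rho$, and hence lifts to an autoequivalence of $\CV_Z^\rho\Mod$ (it sends the free-field module $\CV_{Z,\mu' X+\nu' Y}^\rho$ to a free-field module of the same shape with shifted labels). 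Being a sub-quotient of an object of $\CV_Z^\rho\Mod$ is therefore preserved, and we may assume $\nu$ itself is generic.

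Next, Proposition \ref{Propindgeneric} gives $M_\nu\cong \mathrm{Ind}(N)$ for a finite-dimensional $N\in \grho\Mod_\nu$. I would show that $N$ is the quotient of a finite direct sum of Verma modules of $\grho$: pick a composition series $0=N_0\subset N_1\subset\cdots\subset N_k=N$ with simple factors $S_i=N_i/N_{i-1}$. By the solvability of $\grho$ each $S_i$ is cyclic over the odd nilpotent subalgebra and hence is a quotient of some Verma $V_{\nu_i,\mu_i}$; lifting these surjections inductively, one at a time, produces a surjection $\bigoplus_i V_{\nu_i,\mu_i}\twoheadrightarrow N$. Since the functor $\mathrm{Ind}$ is exact on $\grho\Mod$, the induced map $\bigoplus_i \mathrm{Ind}(V_{\nu_i,\mu_i})\twoheadrightarrow M_\nu$ is still surjective, and by the identification recorded just before the statement $\mathrm{Ind}(V_{\nu_i,\mu_i})\cong \CV^\rho_{Z,\mu_i X+(\nu_i+\rho)Y}$ as $V(\grho)$-modules. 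The source is then an object of $\CV_Z^\rho\Mod$, exhibiting $M_\nu$, and hence $M$, as a quotient of a $\CV_Z^\rho$-module.

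The step that deserves the most care is the Verma cover of $N$ in $\grho\Mod_\nu$: in the generic regime the matrices $\sum_a \rho^i{}_a E^a$ act invertibly, so the $\psi^i_\pm$ endow modules with a Clifford-like structure for which simple modules are highest weight with respect to the $\psi^i_-$, and the inductive lifting of surjections then proceeds exactly as for $\fgl(1|1)$ in \cite[Appendix B1]{GarnerNiu}. Everything else is bookkeeping about spectral flow and the identification of induced modules with free-field modules.
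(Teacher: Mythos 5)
Your reduction via the $E_{a,0}$-decomposition, spectral flow, and Proposition \ref{Propindgeneric} matches the intended strategy (the paper itself simply defers to the argument of \cite[Appendix B1]{GarnerNiu}), but the covering objects you use are too small, and this is a genuine gap rather than a technicality. Every module your construction produces --- a finite direct sum of Verma modules $V_{\nu_i,\mu_i}$, hence of Fock-type modules $\CV^\rho_{Z,\mu_i X+\nu_i Y}$ --- carries a semisimple action of the central zero modes $E_{a,0}$ (and of $N_{a,0}$): on $V_{\nu,\mu}=\bigwedge(\psi_+^i)\,v$ the central element $E_a$ acts by the scalar $\mu_a$, and on a lifted Fock module the image $Y^a_0$ of $E^a_0$ acts by a scalar on each Fock summand. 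Semisimplicity of an operator passes to submodules and to quotients, so no object of $KL_\rho$ on which $E_{a,0}$ has a nontrivial Jordan block can be a sub-quotient, let alone a quotient, of such a direct sum. But the whole point of $KL_\rho$, as stressed in the introduction, is that it contains modules with arbitrarily large Jordan blocks for these zero modes. To cover those you must enlarge the class of free-field modules to lifts of \emph{generalized} Fock modules, i.e.\ $\CV_Z^\rho$-modules induced from finite-dimensional $\C[X_0,Y_0]$-modules with nilpotent parts; on the $\grho$ side these correspond to generalized Vermas $U(\grho)\otimes_{U(\mathfrak{b})}W$ with $W$ a finite-dimensional module of the abelian subalgebra spanned by $N_a,E_a$.

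The same defect surfaces in your inductive lifting step: Verma modules are not projective in $\grho\Mod_\nu$, so a surjection $V_{\nu_k,\mu_k}\twoheadrightarrow S_k$ need not lift along $N\twoheadrightarrow S_k$; the obstruction lives in $\Ext^1(V_{\nu_k,\mu_k},N_{k-1})$, which is nonzero precisely because of the Jordan-block self-extensions just described (a lift requires an honest simultaneous eigenvector of the $N_a,E_a$ in $N$ killed by all $\psi_-^i$, which need not exist). Even after replacing Vermas by generalized Vermas, in the atypical sector $\rho(\nu)=0$ the pairing $\{\psi_+^i,\psi_-^j\}=\delta^{ij}\sum_a\rho^i{}_aE^a$ degenerates, modules need no longer be generated by their $\psi_-$-socle, and one genuinely obtains only sub-quotients --- consistent with the fact that $V(\grho)$ itself sits inside $\CV_Z^\rho$ as the \emph{kernel} of the screenings, i.e.\ as a submodule rather than a quotient. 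So you should be proving exactly the stated ``quotient of a submodule'', with the larger class of covering modules; as written, your stronger ``quotient'' claim is false already for the vacuum module and for any module with a logarithmic $E_{a,0}$-action.
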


One can use such a statement to compute, for example, monodromy as follows:

\begin{Cor}\label{Cor:monsimple}
    Let $M$ be an object in $KL_\rho$, then for any $\lambda,\mu\in \C^r$ such that $\rho (\lambda)\in \Z^r$, the monodromy:
    \be
    \btik
\sigma_{\lambda, \mu} V(\grho) \otimes_{V(\grho)} M \rar{c} & M \otimes_{V(\grho)} \sigma_{\lambda, \mu} V(\grho)\rar{c} & \sigma_{\lambda, \mu} V(\grho) \otimes_{V(\grho)} M
\etik
    \ee
    is given by $\mathrm{Id}\otimes \exp (2\pi i \sum_a \lambda^a N^a_0+(\mu-\rho^\trans\rho\lambda)^a E^a_0)$. 
\end{Cor}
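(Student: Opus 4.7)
The strategy is to reduce the monodromy computation to Li's delta operator. As noted just before the corollary, the spectral flow $\sigma_{\lambda,\mu}$ is generated via Li's delta operator by the zero mode of the Heisenberg field
\[
h \;:=\; \sum_a \lambda_a N_a + (\mu - \rho^\trans\rho\lambda)_a E^a \,\in\, V(\grho),
\]
and the fusion isomorphism $\sigma_{\lambda,\mu}V(\grho) \otimes_{V(\grho)} M \cong \sigma_{\lambda,\mu}M$ is realized by a logarithmic intertwining operator that sends the canonical generator $v_\sigma \in \sigma_{\lambda,\mu}V(\grho)$ to
\[
\Delta(h, z) \;=\; z^{h_0}\exp\!\Big(\sum_{n\geq 1} \tfrac{h_n}{-n}(-z)^{-n}\Big),
\]
viewed as an operator $M \to \sigma_{\lambda,\mu}M\{z\}[\log z]$. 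The first step I would carry out is to spell out this identification of the fusion intertwiner with $\Delta(h,z)$ in the HLZ framework, which is the content of the remark preceding the corollary.

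Next, I would compute the double braiding by analytic continuation of this intertwining operator around $z = 0$. The exponential factor in $\Delta(h,z)$ involves only integer powers of $z$ and is hence single-valued; the only multi-valued contribution is the prefactor $z^{h_0}$, which under $z \mapsto e^{2\pi i}z$ picks up exactly the phase $\exp(2\pi i\, h_0)$. Unpacking $h_0$ yields
\[
\exp\!\Big(2\pi i \sum_a \big(\lambda_a N^a_0 + (\mu - \rho^\trans\rho\lambda)_a E^a_0\big)\Big),
\]
acting on $M$ with the identity on the $\sigma_{\lambda,\mu}V(\grho)$ tensor factor, which is precisely the desired formula.

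To justify that $\Delta(h,z)$ truly computes the HLZ braiding on $KL_\rho$, I would invoke the previous proposition to realize every $M \in KL_\rho$ as a sub-quotient of an object coming from $\CV_Z^\rho\Mod$. Under the free field embedding \eqref{ffRealization} the Heisenberg field $h$ lies in the Heisenberg sub-VOA $H_{\{X_a, Y^a\}}$, and the intertwiner implementing fusion with the simple current $\sigma_{\lambda,\mu}V(\grho)$ restricts on Fock modules to the vertex operator $\norm{e^{\lambda \cdot X + (\mu - \rho^\trans\rho\lambda)\cdot Y}(z)}$. A routine contour computation of vertex operator braiding on lattice modules reproduces the same phase $\exp(2\pi i\, h_0)$ as above, and since both the monodromy and the operator $\exp(2\pi i\, h_0)$ are natural with respect to sub-modules and quotients, the identification extends from $\CV_Z^\rho\Mod$ down to all of $KL_\rho$.

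The main obstacle is essentially bookkeeping rather than anything conceptual: one must carefully translate $h$ through the free field embedding \eqref{ffRealization}, verify that the bilinear form \eqref{eqbilinearBper} reproduces the correct pairing in the vertex operator contour computation, and handle the $\log z$ terms on modules where $h_0 = \sum_a \lambda_a N^a_0 + (\mu - \rho^\trans\rho\lambda)_a E^a_0$ acts non-semisimply, so that the intertwiner genuinely takes values in $M\{z\}[\log z]$ rather than in $M\{z\}$. The conceptual content is simply that the non-trivial monodromy of $\Delta(h,z)$ comes entirely from the $z^{h_0}$ prefactor, which makes the final phase manifest.
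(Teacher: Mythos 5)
Your proposal follows essentially the same route the paper takes: it identifies the fusion with the simple current $\sigma_{\lambda,\mu}V(\grho)$ via Li's delta operator $\Delta(h,z)$, reads off the monodromy from the $z^{h_0}$ prefactor, and uses the sub-quotient statement to propagate the computation from free-field modules to all of $KL_\rho$. The paper itself simply defers the precise computation to Appendix B of \cite{GarnerNiu} for $\rho=1$ and notes the general case is identical, so your write-up is a faithful (and somewhat more explicit) account of that argument.
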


The precise computation of this statement for $\rho=1$ is contained in the Appendix B of \cite{GarnerNiu}, and the statement for general $\rho$ follows in exactly the same way. 

\begin{Exp}
    Let $\rho=1$, in which case $V(\grho)=V(\fgl (1|1))$, and this free field realization is well-known in the literature; see for instance \cite{schomerus2006gl,creutzig2021duality}. This free field realization embeds $V(\fgl (1|1)$ into $\CV_Z^1=H_{X,Y}\otimes V_{bc}$ and the image is the kernel of the screening operator $S=\oint\mathrm{d}z b(z)e^{-Y(z)}$. 
\end{Exp}

\subsection{Simple-current Extension and its Application to $V(\grho)$}\label{sec:ungaugeVOA}

Another important consequence of the free field realization is that one can realize $V(\grho)$ as a simple current extension of many copies of $V(\fgl (1|1))$. Let $V:=V(\fgl (1|1))^{\otimes n}$, which has $\C^n\times \Z^n$ lattice of spectral flow automorphisms $\sigma_{\lambda,\mu}$ where $\lambda\in \Z^n$ and $\mu\in \C^n$. These are defined as in equation \eqref{eq:spectralflow}, for $\rho=\mathrm{Id}_n$ the $n\times n$ identity matrix. Recall the splitting of the exact sequence \eqref{eqsplitexactsequence}. The following direct sum is an algebra object (in a suitable completion of $KL_{1^n}$, the Kazhdan-Lusztig category of $V$):
\be
\CA_\rho:=\bigoplus_{\lambda, \mu\in \Z^{n-r}} \sigma_{\wt{\rho} (\lambda), 
\tau^\trans (\mu)+\wt{\rho} (\lambda)}V
\ee
Namely, this direct sum defines a simple current VOA extension of $V$. By using the free-field realization of $V$ and field redefinition, it was shown in \cite[Theorem 5.6.]{BCDN} that the following is true:

\begin{Prop}\label{Prop:ungaugeVOA}
    There exists a lattice VOA $\CW_\rho$ associated with a self-dual lattice such that there is a VOA isomorphism:
\be
V(\grho)\otimes\CW_\rho\cong \CA_\rho.
\ee
Consequently $V(\grho)\otimes \CW_\rho$ is a simple current extension of $V=V(\fgl(1|1))^{\otimes n}$. This simple current extension is compatible with the free field realization in the sense that the following diagram is commutative:
\be
\btik
V(\fgl (1|1))^{\otimes n}\rar\dar & (\CV_Z^1)^{\otimes n}\dar\\
V(\grho)\otimes \CW_\rho\rar & \CV_Z^\rho\otimes \CW_\rho
\etik
\ee
Here $\CV_Z^1=H_{X,Y}\otimes V_{bc}$ is the free field realization of $V(\fgl (1|1))$. 

\end{Prop}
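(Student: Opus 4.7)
The plan is to work via the free-field realization of $V(\fgl(1|1))^{\otimes n}$ and implement a linear change of basis on its Heisenberg generators, dictated by the splitting \eqref{eqsplitexactsequence}, that separates the bosonic degrees of freedom into a $V(\grho)$-side and a $\CW_\rho$-side. Write the $2n$ Heisenberg generators of $(\CV_Z^1)^{\otimes n}$ as $\pd X_i,\pd Y^i$ ($i=1,\dots,n$) with pairing $B(X_i,Y^j)=\delta_i{}^j$, and introduce the new pairs
\begin{align*}
\pd X^{(\grho)}_a &:= \sum_i \rho^i{}_a\,\pd X_i, & \pd Y_{(\grho)}^a &:= \sum_i \wt\tau^a{}_i\,\pd Y^i, \quad a=1,\dots,r,\\
\pd X^{(\CW)}_\beta &:= \sum_i \wt\rho^i{}_\beta\,\pd X_i, & \pd Y_{(\CW)}^\alpha &:= \sum_i \tau^\alpha{}_i\,\pd Y^i, \quad \alpha,\beta=1,\dots,n-r.
\end{align*}
Using $\wt\tau\rho=\mathrm{Id}_r$, $\tau\wt\rho=\mathrm{Id}_{n-r}$, $\tau\rho=0$ and $\wt\tau\wt\rho=0$, a direct computation of $B$ shows that these form two mutually decoupled canonical Heisenberg subsystems, yielding a VOA isomorphism of Heisenberg factors $H_{\{X_i,Y^i\}_{i=1}^n}\cong H^{(\grho)}\otimes H^{(\CW)}$.

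Next, I would describe $\CA_\rho$ in this new basis. The spectral flow summand $\sigma_{\wt\rho(\lambda),\tau^\trans(\mu)+\wt\rho(\lambda)}V$ is, via the free-field realization, the lift of the Heisenberg Fock module $\CF_{\tau^\trans(\mu)\cdot X+\wt\rho(\lambda)\cdot Y}$. Projecting its charges onto the new basis yields an $H^{(\grho)}$-charge of $(\rho^\trans\tau^\trans\mu,\wt\tau\wt\rho\lambda)=(0,0)$ and an $H^{(\CW)}$-charge of $(\wt\rho^\trans\tau^\trans\mu,\tau\wt\rho\lambda)=(\mu,\lambda)$, so each summand sits in the vacuum Fock factor of $H^{(\grho)}$ tensored with the Fock module of $H^{(\CW)}$ of charge $(\mu,\lambda)$. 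Summing over $\lambda,\mu\in\Z^{n-r}$ identifies the bosonic content of $\CA_\rho$ with $H^{(\grho)}\otimes\CW_\rho$, where $\CW_\rho$ is the lattice VOA associated to $L:=\Z^{n-r}\oplus\Z^{n-r}$ inside the charge space of $H^{(\CW)}$. The Gram form $\langle(\mu,\lambda),(\mu',\lambda')\rangle=\mu\cdot\lambda'+\mu'\cdot\lambda$ is manifestly even and unimodular, so $L$ is self-dual and $\CW_\rho$ is indeed a lattice VOA attached to a self-dual lattice.

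It remains to upgrade this Heisenberg decomposition to a VOA isomorphism compatible with the free-field realization. Using $Y^i=\sum_a\rho^i{}_a Y_{(\grho)}^a+\sum_\alpha\wt\rho^i{}_\alpha Y_{(\CW)}^\alpha$ and the decoupling above, the screening operators $S^i(z)=\norm{e^{Z^i-Y^i}}$ of the $n$ copies of $V(\fgl(1|1))$ factor as $S^i(z)=\wt S^i(z)\cdot\norm{e^{-\sum_\alpha\wt\rho^i{}_\alpha Y_{(\CW)}^\alpha(z)}}$, where $\wt S^i$ is precisely the $V(\grho)$-screening of Proposition \ref{PropScreen}. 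The second factor is a lattice vertex operator in $\CW_\rho$ with charge $(0,-\wt\rho^i)\in L$ and vanishing conformal weight, whose zero mode is a charge-shift automorphism of $\CW_\rho$. Consequently the joint kernel of the $S^i_0$ inside the lattice-extended free-field algebra equals $\Bigl(\bigcap_i\ker\wt S^i_0\Big|_{H^{(\grho)}\otimes\FF^{\otimes n}}\Bigr)\otimes\CW_\rho=V(\grho)\otimes\CW_\rho$. The commutativity of the diagram is then immediate: on each copy of $V(\fgl(1|1))$ the free-field inclusion uses only $X_i,Y^i,Z^i$, which after the change of basis reassemble into the generators \eqref{ffRealization} of $V(\grho)$ inside $\CV_Z^\rho$, tensored with the lattice VOA $\CW_\rho$ on the $H^{(\CW)}$-side.

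The principal technical obstacle is the last step: rigorously justifying the factorization of the screening charges, tracking cocycle and sign data from the lattice VOA structure on $\CW_\rho$, and checking that the charge-shift automorphism induced by the $\CW_\rho$-part of each $S^i$ commutes with every $\wt S^j_0$ and preserves the kernel description on each charge sector of $\CW_\rho$. This ultimately boils down to the orthogonality between the $H^{(\grho)}$-generators and the lattice $L$ dictated by the split exact sequence, which is what makes the two screening systems decouple both formally and operationally.
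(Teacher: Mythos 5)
Your overall strategy (free-field realization plus a linear change of basis dictated by the splitting of \eqref{eqsplitexactsequence}) is the same one the paper invokes, via \cite[Theorem 5.6]{BCDN}, which is described there precisely as ``free-field realization and field re-definition.'' However, there is a genuine gap: the naive rotation of the $X,Y$ Heisenberg system alone does \emph{not} decouple the $V(\grho)$ and $\CW_\rho$ degrees of freedom. Concretely, $\pd Y^i=\sum_a\rho^i{}_a\,\pd Y^a_{(\grho)}+\sum_\alpha\wt\rho^i{}_\alpha\,\pd Y^\alpha_{(\CW)}$, so the odd generator $\psi^i_-\mapsto \norm{\pd Y^ie^{-Z^i}}+\norm{\pd e^{-Z^i}}$ of the $i$-th $\fgl(1|1)$ factor decomposes as $\psi^i_{-,\grho}+\sum_\alpha\wt\rho^i{}_\alpha\,\norm{\pd Y^\alpha_{(\CW)}e^{-Z^i}}$; since $e^{-Z^i}\notin V(\grho)$ (it is not killed by $S^i_0$), the cross term does not lie in $V(\grho)\otimes\CW_\rho$ under your embeddings, so the charge-zero summand of $\CA_\rho$ is already not contained in your candidate tensor factorization. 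The same phenomenon invalidates the screening factorization as you state it: $e^{-\sum_\alpha\wt\rho^i{}_\alpha Y^\alpha_{(\CW)}}(z)$ is a full vertex operator whose oscillator modes act nontrivially on $\CW_\rho$ (they pair with the $X^{(\CW)}$ oscillators), not a ``charge-shift automorphism,'' so $\ker S^i_0$ does not split off as $\bigl(\bigcap_i\ker\wt S^i_0\bigr)\otimes\CW_\rho$ without further work.

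The repair --- and the actual content of the ``field re-definition'' --- is that the complementary Heisenberg basis must mix in the bosons $Z^i$ underlying the fermions (e.g.\ generators of the schematic form $\sum_i\tau^\alpha{}_i(Y^i+Z^i)$), chosen so that both the screening currents $\norm{e^{Z^i-Y^i}}$ and the generators \eqref{ffRealization} genuinely live in one tensor factor. This is also why the paper's Remark records the Gram matrix $\left(\begin{smallmatrix}0&1\\1&\tau^\trans\tau\end{smallmatrix}\right)$ rather than the purely hyperbolic form $\left(\begin{smallmatrix}0&1\\1&0\end{smallmatrix}\right)$ you obtain: the extra block $\tau^\trans\tau$ comes from the $B(Z^i,Z^j)=\delta^{ij}$ pairing entering the corrected $\CW_\rho$-lattice. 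Your lattice is still self-dual, so the headline conclusion survives, but the discrepancy is a symptom of having chosen the wrong complement. Your self-diagnosed obstacle (cocycles and signs) is not the essential difficulty; the essential difficulty is that the two subsystems do not decouple until the $Z$-fields are reshuffled.
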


\begin{Rem}
    The lattice defining $\CW_\rho$ is the lattice $(\Z^{n-r})^2$ with bilinear form:
    \be
\left(\begin{array}{cc} 0 & 1 \\ 1 & \tau^\trans\tau \end{array}\right)
    \ee
\end{Rem}

What this allows us to do is to relate the category of representations of $V(\grho)$ with that of $V(\fgl(1|1))^{\otimes n}$. We will turn to this point in Section \ref{sec:ungauge} when we discuss un-gauging operation for quantum groups.

\section{The Quantum Group $U_\rho^E$}\label{sec:QG}

In this section, we introduce the quantum group $U^E_\rho$. We will show that this algebra has the structure of a Hopf algebra, and that its category of finite-dimensional modules admits an $R$-matrix, which can be expressed in terms of an element in the algebraic closure of $U^E_\rho\otimes U^E_\rho$. When $\rho=1$, the quantum group $U^E_\rho$ is the unrolled quantum supergroup associated with $\mathfrak{gl}(1|1)$. In fact, we will show that $U^E_\rho$ is a subquotient of many copies of $U^E_1$, the unrolled quantum group of $\fgl(1|1)$, such that the Hopf algebra structure and the $R$-matrix of $U^E_\rho$ can be induced from $U_1^E$.

\begin{Rem}

We remind the readers of the Koszul sign convention. One could include the parity operator $(-1)^F$ in what follows, but we choose not to do so for simplicity.

\end{Rem}

\subsection{Definition and Structures}

We fix the charge matrix $\rho$, assuming it is faithful. The Lie superalgebra $\fg_\rho$ is generated by $N^a, E^a, \psi^{i}_\pm$ with relations:
\be
[N^a, \psi^{i}_\pm]=\pm \rho^{ia} \psi^{i}_\pm,\qquad \{\psi^{i}_+, \psi^{i}_-\}=\sum \rho^{ia}E^a. 
\ee
The quantum supergroup $U_\rho^E$ deforms the above commutation relation into the following:
\be
 \{\psi^{i}_+, \psi^{i}_-\}=\sum \rho^{ia}E^a\qquad \rightsquigarrow\qquad  \{\psi^{i}_+, \psi^{i}_-\}=e^{2\pi i(\sum \rho^{ia}E^a)}-1
\ee
In this formula, we could replace $e^{2\pi i(\sum \rho^{ia}E^a)}$ by $e^{\frac{2\pi i}{k}(\sum \rho^{ia}E^a)}$ for any non-zero $k$. It does not change the underlying algebra since we can always rescale $E^a$. The exponential should be understood as a power series in $E$, and only makes sense on finite-dimensional representations. This leads to the following definition:

\begin{Def}
We define $U_\rho^E$ as the superalgebra generated in even degrees by $N^a, E^a, K_{E^a}^{\pm}$, $K_{N^a}^\pm$, and in odd degrees by $\psi^{i}_\pm$, such that the non-trivial commutation relations are:
\be
[N^a, \psi^{i}_\pm]=\pm \rho^{ia}\psi^{i}_\pm,\qquad  \{\psi^{i}_+, \psi^{i}_-\}=\prod_a K_{E^a}^{\rho^{ia}}-1,\qquad K_{a}^+K_{a}^{-}=K_{a}^{-}K_a^+=1.
\ee
Here $K_a$ can be either $K_{N^a}$ or $K_{E^a}$. We also impose the condition $K_{N^a}=e^{2\pi iN^a}$ and $K_{E^a}=e^{2\pi i E^a}$, which is well-defined on finite-dimensional modules of $U_\rho^E$. 
\end{Def}

\begin{Rem}\label{Rem:findim}

The relation $K_{N^a}=e^{2\pi iN^a}$ is not well-defined on the algebra $U_\rho^E$. It is well-defined on finite-dimensional modules. We impose this condition whenever we consider finite-dimensional modules of $U_\rho^E$, and we \textbf{only} consider finite-dimensional modules in the following. 

\end{Rem}

The motivation behind considering $U_\rho^E$ is the observation made in \cite{niu2023representation} that $U_\rho^E\Mod$ is equivalent to $KL_\rho$ as an abelian category, and that it is a Hopf algebra. We will prove the abelian equivalence in the next section. We first show the following result. 

\begin{Thm}\label{Thm:quasitriUq}
The algebra $U_\rho^E$ has the structure of a Hopf algebra, and moreover has an $R$-matrix in the algebraic closure of $(U_\rho^E)^{\otimes 2}$ which has a well-defined evaluation on finite-dimensional modules. 

\end{Thm}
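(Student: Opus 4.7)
The plan is two-stage: first establish the Hopf structure and $R$-matrix for the base case $\rho=1$, then bootstrap to general $\rho$ by realizing $U_\rho^E$ as a sub-quotient of $(U_1^E)^{\otimes n}$. This mirrors the ungauging relation on the VOA side (Proposition \ref{Prop:ungaugeVOA}), where $V(\grho)\otimes\CW_\rho$ is obtained as a simple-current extension of $V(\fgl(1|1))^{\otimes n}$.

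For $\rho=1$, $U_1^E$ is essentially the unrolled quantum supergroup of $\fgl(1|1)$, for which the Hopf structure is standard. I would record it explicitly: $N,E$ are primitive, $K_N,K_E$ are group-like, and $\psi_\pm$ carry the standard triangular coproducts twisted by $K_E^{\pm1}$. Compatibility with the anticommutator $\{\psi_+,\psi_-\}=K_E-1$, together with uniqueness of counit and antipode, is a direct check on generators. The $R$-matrix factors as $R=R_0 R_1$ with Cartan exponential $R_0=\exp\bigl(2\pi i(N\otimes E+E\otimes N)\bigr)$ and a polynomial fermionic correction $R_1$ built from $\psi_\pm$ and group-likes. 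Although $R_0$ lives only in a completion, it evaluates to a finite-rank operator on any finite-dimensional module, since $N,E$ act by finite-length generalized-eigenspace decompositions. Quasitriangularity and the hexagon identities reduce to generator-level computations.

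For general $\rho$, label the generators of the $i$-th tensor factor in $(U_1^E)^{\otimes n}$ by $\mathcal N_i,\mathcal E_i,\psi^i_\pm$, together with their exponentials. Consider the subalgebra $U'\subset (U_1^E)^{\otimes n}$ generated by the fermions, the folded Cartan elements $N_a:=\sum_i\rho_{ia}\mathcal N_i$, and the central exponentials $K_{\mathcal E_i}$; then quotient by the Hopf ideal generated by $\sum_i\tau_{\alpha i}\mathcal E_i$ for $\alpha=1,\ldots,n-r$, which kills the $\wt\rho$-direction of the Cartan. Using the splitting identity $\rho\wt\tau+\wt\rho\tau=\mathrm{Id}_n$ from \eqref{eqsplitexactsequence}, one has in the quotient $\mathcal E_i=\sum_a\rho_{ia}E_a$ with $E_a:=\sum_i\wt\tau_{ai}\mathcal E_i$, hence $K_{\mathcal E_i}=\prod_a K_{E_a}^{\rho_{ia}}$. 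All defining relations of $U_\rho^E$ are then reproduced: $[N_a,\psi^i_\pm]=\pm\rho_{ia}\psi^i_\pm$ by linearity, and the anticommutator becomes $\{\psi^i_+,\psi^i_-\}=\prod_a K_{E_a}^{\rho_{ia}}-1$. Since the ideal is generated by primitive central elements it is a Hopf ideal, so the Hopf structure of $(U_1^E)^{\otimes n}$ restricts to $U'$ and descends to $U_\rho^E$.

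The main remaining task, and the step I expect to require the most care, is showing that the $R$-matrix of $(U_1^E)^{\otimes n}$ descends cleanly to the algebraic closure of $(U_\rho^E)^{\otimes 2}$. The Cartan piece $\exp\bigl(2\pi i\sum_i(\mathcal N_i\otimes\mathcal E_i+\mathcal E_i\otimes\mathcal N_i)\bigr)$ simplifies under the quotient, via $\mathcal E_i=\sum_a\rho_{ia}E_a$ and $N_a=\sum_i\rho_{ia}\mathcal N_i$, to $\exp\bigl(2\pi i\sum_a(N_a\otimes E_a+E_a\otimes N_a)\bigr)$, which manifestly sits in the completion of $(U_\rho^E)^{\otimes 2}$; the fermionic factor $\prod_i R_1^{(i)}$ descends term-by-term because each factor is built from $\psi^i_\pm$ and $K_{\mathcal E_i}$. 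Finite-dimensional evaluability is inherited, since a finite-dimensional $U_\rho^E$-module pulls back to a finite-dimensional $U'$-module on which $R^{(n)}$ already evaluates to a finite-rank operator. The Hopf and $R$-matrix axioms for the descended structure then follow by transport through the sub-quotient. The delicate point is verifying that no $\wt\rho$-component of the Cartan $R$-matrix survives the quotient; the splitting identity makes this a bookkeeping exercise, but one has to track each term carefully to ensure well-definedness in the completed sub-quotient.
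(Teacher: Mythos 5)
Your two-stage architecture (do $\rho=1$, then realize $U_\rho^E$ as a sub-quotient of $(U_1^E)^{\otimes n}$) is essentially the paper's Proposition \ref{Prop:qgungauge}, which is stated and proved immediately after the theorem. The paper's own proof of the theorem itself instead writes down the structure maps explicitly and defers all axiom verification to the relative Drinfeld double construction of Section \ref{subsec:grho}: $U_\rho^E$ is exhibited as the double of the Nichols algebra $\mf N$ of the screenings over the Cartan Hopf algebra $C_\rho$, so quasitriangularity is automatic rather than checked on generators. Your route buys an independent derivation of the general-$\rho$ structure from the well-documented $\fgl(1|1)$ case; the paper's route buys a conceptual explanation of where the coproduct and $R$ matrix come from and avoids generator-level verification entirely. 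Both are viable, and the paper in fact needs both (the sub-quotient realization is used later for the ungauging argument in Section \ref{sec:ungauge}).

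Two concrete issues. First, your Cartan factor $R_0=\exp\bigl(2\pi i(N\otimes E+E\otimes N)\bigr)$ is inconsistent with the coproduct conventions in force here. With $\Delta(\psi_+)=\psi_+\otimes 1+1\otimes\psi_+$ and $\Delta(\psi_-)=\psi_-\otimes K_E+1\otimes\psi_-$, the correct Cartan part is the one-sided exponential $e^{2\pi i\,N\otimes E}$ (this is what the Yetter--Drinfeld braiding $R=\sum_k R_C(\overline g_k^{-1}v_k^*\otimes v_k)$ produces, and what the paper states); the extra factor $e^{2\pi i\,E\otimes N}$ conjugates $1\otimes\psi_\pm$ nontrivially and breaks $R\Delta(x)R^{-1}=\Delta^{\mathrm{op}}(x)$ unless you also twist the coproduct of $\psi_+$ by a group-like, which you have not done. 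The symmetrized exponential is not of the form $e^{2\pi i\sum\omega^{ab}E_a\otimes E_b}$ covered by the paper's remark on inequivalent $R$ matrices, so this is not a harmless change of convention. Second, in the sub-quotient step you assert that the defining relations of $U_\rho^E$ are reproduced, but you do not rule out that the quotient imposes \emph{additional} relations on the subalgebra; the paper closes this by a PBW comparison, showing that both $U_\rho^E$ and the constructed subalgebra are free of the same rank over the common Cartan subalgebra $C_\rho$. (Your quotient by the primitive central elements $\sum_i\tau_{\alpha i}\mathcal E_i$ is stronger than the paper's group-like relations $\prod_iK_{E_i}^{\tau_{i\alpha}}=1$, $\prod_iK_{N_i}^{\wt\rho_{\alpha i}}=1$; it still works for the $R$-matrix descent since it kills $E_\alpha$ outright, but the PBW/injectivity check is then all the more necessary.)
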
 

\begin{proof}
Let us give the explicit Hopf algebra structure. We define coproduct by:
\be
\begin{aligned}
\Delta (N^a)=& N^a\otimes 1+1\otimes N^a, \qquad \Delta (E^a)=E^a\otimes 1+1\otimes E^a, \qquad \Delta (K_a^\pm)=K_a^\pm\otimes K_a^\pm,\\
&\Delta (\psi_+^i)=\psi_+^i\otimes 1+ 1\otimes \psi_+^i, \qquad \Delta (\psi_-^i)=\psi_-^i\otimes \prod_{1\leq a\leq r} K_{E^a}^{\rho^{ia}}+1\otimes \psi_-^i,
\end{aligned}
\ee    
counit by:
\be
\epsilon(E^a)=\epsilon(N^a)=\epsilon(\psi_\pm^i)=0,\qquad \epsilon(K_{E^a}^{\pm})=\epsilon(K_{N^a}^\pm)=1
\ee
antipode:
\be
\begin{aligned}
&S(N^a)=-N^a,\qquad S(E^a)=-E^a,\qquad S(K_a^\pm)=K_a^\mp,\\
& S(\psi_+^i)=-\psi_+^i,\qquad S(\psi_-^i)=-\prod_{1\leq a\leq r} K_{E^a}^{-\rho^{ia}}\psi_-^i,
\end{aligned}
\ee
as well as the following $R$-matrix:
\be
R=e^{2\pi i \sum_a N^a\otimes E^a}\prod_{1\leq i\leq n}\lp 1-\psi_+^i\otimes \lp \prod_{1\leq a\leq r} K_{E^a}^{-\rho^{ia}}\psi_-^i\rp \rp.
\ee
The fact that these satisfy Hopf algebra axioms as well as the relation with the $R$-matrix will follow from the fact that $U_\rho^E$ can be constructed as a relative Drinfeld double, see Section \ref{subsubsec:Nicholsrho}. This $R$ can be evaluated on any finite dimensional modules, since on such the exponential $e^{2\pi i \sum_a N^a\otimes E^a}$ is well-defined. 

\end{proof}

\begin{Rem}
  There is in fact more than one $R$-matrix. One can always add to $R$ a central element of the form $e^{2\pi i \sum_{a, b} \omega^{ab} E^a\otimes E_b}$ for some symmetric tensor $\omega$. These different choices give equivalent BTC via a Drinfeld twist defined by $e^{\pi i \sum \omega^{ab} E^aE_b}$. 
\end{Rem}

We in fact have the following:

\begin{Prop}\label{Prop:qgungauge}
    The algebra $U_\rho^E$ is a subquotient of $(U_1^E)^{\otimes n}$, such that the above Hopf algebra structure and $R$-matrix can be induced from $(U_1^E)^{\otimes n}$. 
    
\end{Prop}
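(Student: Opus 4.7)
The strategy is to realize $U_\rho^E$ as a sub-Hopf-algebra of a Hopf algebra quotient of $(U_1^E)^{\otimes n}$, in analogy with Proposition \ref{Prop:ungaugeVOA} on the VOA side. Label the generators of the $i$-th tensor factor of $(U_1^E)^{\otimes n}$ by $N_i, E_i, K_{N_i}^\pm, K_{E_i}^\pm, \psi_\pm^i$. Using the integer splitting $\mathrm{Id}_n = \rho\widetilde{\tau} + \widetilde{\rho}\tau$ from \eqref{eqsplitexactsequence}, I would define an algebra map $\phi\colon U_\rho^E\to (U_1^E)^{\otimes n}$ on generators by
\[
\psi_\pm^i\mapsto \psi_\pm^i,\qquad N_a\mapsto \sum_i\rho_{ia}N_i,\qquad E_a\mapsto \sum_i\widetilde{\tau}^a{}_iE_i,
\]
together with $K_{N_a}^\pm\mapsto\prod_iK_{N_i}^{\pm\rho_{ia}}$ and $K_{E_a}^\pm\mapsto\prod_iK_{E_i}^{\pm\widetilde{\tau}^a{}_i}$.

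The first step is to identify the correct Hopf ideal. Let $J\subset (U_1^E)^{\otimes n}$ be the two-sided ideal generated by the $n-r$ elements $(\tau E)_\alpha := \sum_i\tau^\alpha{}_iE_i$ for $\alpha = 1,\ldots,n-r$. These elements are central (they commute with every $N_j$, $E_j$, and, because $E_i$ is central in $U_1^E$, with every $\psi_\pm^j$) and primitive, so $J$ is a Hopf ideal and the quotient $(U_1^E)^{\otimes n}/J$ inherits a Hopf algebra structure together with the tensor-product $R$-matrix. Next I would verify that $\phi$ descends to a well-defined map $\bar\phi\colon U_\rho^E\to (U_1^E)^{\otimes n}/J$; the only nontrivial relation to check is the fermionic anticommutator, which modulo $J$ reads
\[
\{\psi_+^i,\psi_-^i\} = K_{E_i} - 1 \equiv K_{(\rho\widetilde{\tau}E)_i}-1 = \prod_aK_{E_a}^{\rho_{ia}}-1 \pmod{J},
\]
using $E_i = (\rho\widetilde{\tau}E)_i + (\widetilde{\rho}\tau E)_i$ and $(\widetilde{\rho}\tau E)_i\in J$ (since $\widetilde{\rho}$ has integer entries). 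The bracket $[\sum_i\rho_{ia}N_i,\psi_\pm^j] = \pm\rho_{ja}\psi_\pm^j$ is immediate from the relations in $(U_1^E)^{\otimes n}$, and compatibility with coproduct, counit, and antipode follows by direct inspection of the formulas in Theorem \ref{Thm:quasitriUq} alongside their tensor-product counterparts.

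For the $R$-matrix, the tensor-product $R$-matrix on $(U_1^E)^{\otimes n}$ is
\[
R_{(U_1^E)^{\otimes n}} = \prod_{i=1}^n e^{2\pi iN_i\otimes E_i}\bigl(1-\psi_+^i\otimes K_{E_i}^{-1}\psi_-^i\bigr).
\]
Modulo $J$ I substitute $E_i\equiv (\rho\widetilde{\tau}E)_i$, obtaining $\sum_iN_i\otimes E_i \equiv \sum_a(\sum_j\rho_{ja}N_j)\otimes (\widetilde{\tau}E)_a = (\phi\otimes\phi)\bigl(\sum_aN_a\otimes E_a\bigr)$ and $K_{E_i}^{-1}\equiv\phi\bigl(\prod_aK_{E_a}^{-\rho_{ia}}\bigr)$, so that $R_{(U_1^E)^{\otimes n}}$ reduces exactly to the image of the $R$-matrix of Theorem \ref{Thm:quasitriUq} under $\phi\otimes\phi$.

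The hardest step is proving injectivity of $\bar\phi$ onto a subalgebra of the correct size in $(U_1^E)^{\otimes n}/J$. Surjectivity onto the image is built into the choice of generators, but showing that $\bar\phi$ has trivial kernel requires a PBW-style argument: both $U_\rho^E$ and the image of $\bar\phi$ admit triangular decompositions into bosonic, Cartan, and fermionic parts, and matching graded dimensions in each weight component identifies them. All remaining verifications are symbolic manipulations using the three splitting identities $\tau\rho=0$, $\widetilde{\tau}\widetilde{\rho}=0$, and $\rho\widetilde{\tau}+\widetilde{\rho}\tau=\mathrm{Id}_n$.
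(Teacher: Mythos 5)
Your overall architecture (pass to a Hopf-algebra quotient of $(U_1^E)^{\otimes n}$, then cut down to the subalgebra generated by $\sum_i\rho_{ia}N_i$, $\sum_i\wt{\tau}_{ai}E_i$ and the $\psi_\pm^i$, check the fermionic anticommutator via $\rho\wt{\tau}+\wt{\rho}\tau=\mathrm{Id}_n$, prove injectivity by PBW, and reduce the tensor-product $R$-matrix) is the same as the paper's. The substantive deviation is your choice of ideal, and it introduces a genuine gap. You take $J$ to be generated by the primitive central elements $(\tau E)_\alpha=\sum_i\tau^\alpha{}_iE_i$. But in $U_1^E$ the elements $K_{E_i}^\pm$ are \emph{independent group-like generators}; the relation $K_{E_i}=e^{2\pi iE_i}$ is, as stated in the definition, only imposed on finite-dimensional modules and is not a relation of the abstract algebra. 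Consequently $E_\alpha\in J$ does \emph{not} give $K_{E_\alpha}:=\prod_iK_{E_i}^{\wt{\rho}_{i\alpha}}\equiv 1$ in $(U_1^E)^{\otimes n}/J$. Your key substitution $K_{E_i}-1\equiv\prod_aK_{E_a}^{\rho_{ia}}-1\pmod J$ — which is exactly what makes $\bar\phi$ well defined on the anticommutator relation, and which you also use to rewrite $K_{E_i}^{-1}$ in the $R$-matrix — therefore does not hold in your quotient: one has $K_{E_i}=\prod_aK_{E_a}^{\rho_{ia}}\cdot\prod_\alpha K_{E_\alpha}^{\wt{\rho}_{i\alpha}}$, and the second factor survives. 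The subalgebra you obtain is then strictly larger than $U_\rho^E$ and the induced relation on $\{\psi_+^i,\psi_-^i\}$ is not the one of $U_\rho^E$.

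The fix is to quotient by the group-like relations instead: the paper's $\overline{U}$ is $(U_1^E)^{\otimes n}$ modulo $\prod_iK_{N_i}^{\wt{\rho}_{i\alpha}}=1$ and $\prod_iK_{E_i}^{\tau_{i\alpha}}=1$, which is a central Hopf-algebra quotient and makes your anticommutator computation literally correct. Note also that with the correct ideal your shortcut for the $R$-matrix (replacing $E_i$ by $(\rho\wt{\tau}E)_i$ inside the exponential) is no longer available, since $E_\alpha$ itself is not killed; this is why the paper needs the relation $\prod_iK_{N_i}^{\wt{\rho}_{i\alpha}}=1$ on the $N$-side as well — it forces $N_\alpha$ to act semisimply with integer eigenvalues, so that $e^{2\pi iN_\alpha\otimes E_\alpha}=(1\otimes K_{E_\alpha})^{n_\alpha}=1$ in the quotient, after which $\sum_iN_i\otimes E_i$ contributes only $\sum_aN_a\otimes E_a$. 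With these two adjustments your argument matches the paper's proof.
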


\begin{proof}
    Let us write the generators of $(U_1^E)^{\otimes n}$ by $N^i, \psi_\pm^i, E^i$ and $K_{N^i}, K_{E^i}$ with $1\leq i\leq n$. Recall the splitting of the short exact sequence in equation \eqref{eqsplitexactsequence}
\be
\begin{tikzcd}
    0\rar & \mathbb{Z}^r\arrow[r, shift right, swap, "\rho"]& \arrow[l, shift right, swap, "\wt{\tau}"] \mathbb{Z}^n \arrow[r, shift right, swap, "\tau"] & \arrow[l, shift right, swap, "\wt{\rho}"]\mathbb{Z}^{n-r}\rar & 0
    \end{tikzcd}
\ee
Let $\overline{U}$ be the quotient of $(U_1^E)^{\otimes n}$ by the following central relations:
    \be
\prod_i K_{N^i}^{\wt{\rho}^{i\alpha}}=1, \qquad \prod_i K_{E^i}^{\tau^{\alpha i}}=1.
\ee
It is clearly a Hopf algebra quotient of $(U_1^E)^{\otimes n}$. Denote by $U$ the subalgebra of $\overline{U}$ generated by the following elements:
    \be
    N^a:=\sum_i \rho^{ia}N^i, \qquad K_{N^a}^{\pm}:=\prod K_{N^i}^{\pm\rho^{ia}},\qquad \psi_\pm^i, \qquad E^a=\sum_i \wt{\tau}^{ai}E^i, \qquad K_{E^i}^\pm 
    \ee
This is a Hopf subalgebra of $\overline{U}$.   We first claim that $U\cong U_\rho^E$ as an algebra. Indeed, the algebra $U$ is generated by $N^a, \psi^{i,\pm}$ and $E^a$, such that $[N^a, \psi^{i,\pm}]=\pm \rho^{ia}\psi^{i,\pm}$. Moreover, the commutation relation of $\psi^{i,\pm}$ reads:
\be
\{\psi^{i, +}, \psi^{i, -}\}=K_{E^i}-1,
\ee
We can rewrite the element $K_{E^i}$ in the following way:
\be\label{eq:KErelation}
K_{E^i}=\prod_j K_{E^j}^{\sum \rho^{ia}\wt{\tau}^{aj}+\sum \wt{\rho}^{i \alpha}\tau^{\alpha j}}=\prod_j K_{E^j}^{\sum \rho^{ia}\wt{\tau}^{aj}}.
\ee
Here the first equality follows from the identity $\rho\wt{\tau}+\wt{\rho}\tau=\mathrm{Id}_n$ and the second equality follows from the fact that $\prod_i K_{E^i}^{\tau^{\alpha i}}=1$ in $U$. Therefore, if we define $E^a=\sum \wt{\tau}^{ai}E^i$ and $K_{E^a}:=\prod_jK_{E_j}^{\wt{\tau}^{aj}}$, then the above becomes:
\be
\{\psi^{i, +}, \psi^{i, -}\}=\prod_a K_{E^a}^{\rho^{ia}}-1,
\ee
exactly the relation of $U_\rho^E$. What we have constructed is an algebra homomorphism $\varphi_\rho: U_\rho^E\to U$. To show that this is an isomorphism, we only need to show that this is an isomorphism of vector spaces. Let $C_\rho$ be the subalgebra generated by $N^a, E^a$ and $K_{N^a}^\pm, K_{E^a}^\pm$. Clearly the restriction of $\varphi_\rho$ to $C_\rho$ is an isomorphism. We can now give a filtration to $U_\rho^E$ and $U$ over $C_\rho$ by
\be
F^nU:=\{u\in U \big \vert u \text{ involves at most } n \text{ numbers of } \psi_\pm^i\}, ~\text{ similarly for } U_\rho^E.
\ee
It is clear that $\varphi_\rho$ is a map of filtered algebras. It is easy to see that the associated graded of $U_\rho^E$ and $U$ are isomorphic, as an algebra over $C_\rho$, to the tensor product $C_\rho\otimes \bigwedge\lp V\oplus V^*\rp$, and $\varphi_\rho$ is simply the identity map of this algebra. 

It is clear that most of the Hopf structure can be induced from $U_1^E$ to $U_\rho^E$. The only non-trivial part is the R matrix. Note that the R matrix in $(U_1^E)^{\otimes n}$ is given by:
\be
R=\prod_i e^{2\pi iN^i\otimes E^i} \prod_i (1-\psi_+^i\otimes K_{E^i}^{-1}\psi_-^i).
\ee
Using equation \eqref{eq:KErelation} and the definition of $K_{E^a}$, the second term becomes:
\be
\prod_i (1-\psi_+^i\otimes K_{E^i}^{-1}\psi_-^i)=\prod_i\lp 1-\psi_+^i\otimes \lp \prod_a K_{E^a}^{-\rho^{ia}}\psi_-^i\rp \rp 
\ee
For the first term, note that the summation $\sum N^i\otimes E^i$ is a summation of dual vectors, and therefore does not change under a change of basis. Consequently, we have an equality:
\be
\sum_i N^i\otimes E^i=\sum_a N^a\otimes E^a+\sum_\alpha N^\alpha\otimes E^\alpha.
\ee
Here we define $E^\alpha=\sum \tau^{\alpha i}E^i$ and $N^\alpha=\sum \wt{\rho}^{i\alpha}N^i$. This is because $\{N^a, N^\alpha\}$ above form a basis of $\fg$ whose dual basis is $\{E^a, E^\alpha\}$. Now when we consider the category of finite-dimensional modules, we impose the condition that $K_{N^\alpha}=e^{2\pi i N^\alpha}$, and therefore the relation $K_{N^\alpha}=1$ implies that $N^\alpha$ acts semi-simply with integer eigenvalues. Consequently, we have $e^{2\pi i N^\alpha\otimes E^\alpha}=1\otimes K_{E^\alpha}^{n^\alpha}=1$ since $K_{E^\alpha}=1$. We find, in the subquotient, the following R matrix:
\be
\overline{R}:=e^{2\pi i\sum_a N^a\otimes E^a}\prod_i\lp 1-\psi_+^i\otimes \lp \prod_a K_{E^a}^{-\rho^{ia}}\psi_-^i\rp \rp.
\ee
This is precisely the $R$-maitrx we give in Theorem \ref{Thm:quasitriUq}. 

\end{proof}

\begin{Rem}
    This proposition should be compared to Proposition \ref{Prop:ungaugeVOA}, both of which we will call the un-gauging, following physics motivation \cite{BCDN}. In Section \ref{sec:ungauge}, we will use these to extend the equivalence of Theorem \ref{KLrho} from $\rho=1$ to general $\rho$. 
    
\end{Rem}

\subsection{An Abelian Equivalence Between $U_\rho^E\Mod$ and $KL_\rho$}

In this section we prove the following result. 

\begin{Prop}\label{Prop:equivab}
    There is an equivalence of abelian categories:
    \be
U_\rho^E\Mod\simeq KL_\rho.
    \ee
\end{Prop}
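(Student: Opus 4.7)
The plan is to use the decomposition $KL_\rho = \bigoplus_{\nu \in \C^r} KL_{\rho,\nu}$ by generalized eigenvalues of the central zero modes $E_{a,0}$, together with the analogous decomposition $U_\rho^E\Mod = \bigoplus_{\nu \in \C^r} U_\rho^E\Mod_\nu$ by generalized eigenvalues of the central generators $E_a$, and to build the equivalence one block at a time.

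First I would reduce to the generic $\nu$ of Proposition \ref{Propindgeneric} using spectral flow. On the VOA side, the automorphism $\sigma_{\lambda,\mu}$ gives equivalences $KL_{\rho,\nu} \simeq KL_{\rho,\nu+\lambda}$ whenever $\rho(\lambda) \in \Z^n$. On the quantum group side, the substitution $E_a \mapsto E_a + \lambda_a$ (with the $K_{E_a}$ unchanged, since $K_{E_a} = e^{2\pi i E_a}$ only remembers $E_a$ mod $\Z$ in the combinations $\prod_a K_{E_a}^{\rho_{ia}}$ appearing in the defining relations) lifts to an equivalence $U_\rho^E\Mod_\nu \simeq U_\rho^E\Mod_{\nu+\lambda}$ under the same condition $\rho(\lambda) \in \Z^n$. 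Since every $\nu$ admits such a shift into the generic locus, it suffices to establish the block-wise equivalence for generic $\nu$.

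For generic $\nu$, I would construct mutually inverse functors between $U_\rho^E\Mod_\nu$ and $\grho\Mod_\nu$ by rescaling the odd generators. On a generalized $\nu$-eigenspace, the element $\sum_a \rho_{ia} E^a$ acts as $\sum_a \rho_{ia}\nu^a + M_i$ with $M_i$ nilpotent, and one has
\[
\prod_a K_{E_a}^{\rho_{ia}} - 1 \;=\; e^{2\pi i \sum_a \rho_{ia} E^a} - 1 \;=\; h_i \cdot \sum_a \rho_{ia} E^a,
\]
where $h_i$ is an invertible operator on the block, commuting with all of $U_\rho^E$: in the case $\sum_a \rho_{ia}\nu^a \notin \Z$ its denominator has an invertible constant term, and in the case $\sum_a \rho_{ia}\nu^a = 0$ the ratio simplifies to $2\pi i(1 + \pi i M_i + \cdots)$, again a unit. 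The functor $\grho\Mod_\nu \to U_\rho^E\Mod_\nu$ is then defined by keeping $N_a, E_a$ intact, defining $K_{N_a}, K_{E_a}$ by exponentiation (well-defined on finite-dimensional modules), and sending $\psi_-^i \mapsto h_i \psi_-^i$; the inverse functor uses $h_i^{-1}$. Composition with the equivalence $\mathrm{Ind}: \grho\Mod_\nu \simeq KL_{\rho,\nu}$ of Proposition \ref{Propindgeneric} finishes the generic case, and the reduction from the previous paragraph then assembles these into the global equivalence.

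The main obstacle is checking that these block-wise equivalences glue consistently, i.e.\ that the spectral-flow equivalence on the quantum group side (the shift $E_a \mapsto E_a + \lambda_a$) intertwines with the spectral-flow equivalence $\sigma_{\lambda,\mu}$ on $KL_\rho$ under the matching of generic blocks. Corollary \ref{Cor:monsimple}, combined with the description of $\sigma_{\lambda,\mu}$ via Li's delta operator, identifies the action of spectral flow on central eigenvalues with the expected shift, so the issue is essentially bookkeeping: one must verify that the rescalings $h_i$ behave consistently under the different choices of $\lambda$ used to push distinct $\nu$ into the generic locus. This is a concrete but tedious calculation, made manageable by the fact that all the rescalings are functions of the central elements $E_a$, so they are transported transparently by the spectral-flow automorphisms.
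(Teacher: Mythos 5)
Your proposal is correct and follows essentially the same route as the paper: decompose both categories by the generalized eigenvalues of the central $E$'s, pass between $U_\rho^E\Mod_\nu$ and $\grho\Mod_\nu$ on generic blocks by rescaling $\psi_-^i$ by an invertible function of $\sum_a\rho_{ia}E^a$ (the paper uses the inverse $g$ of $f(x)=\frac{e^{2\pi ix}-1}{x}$, which is your $h_i^{-1}$), invoke Proposition \ref{Propindgeneric}, and handle non-generic blocks via spectral flow matched with the quantum-group automorphism $E_a\mapsto E_a-\lambda_a$. The only remark is that your worry about gluing the block-wise equivalences is unnecessary for a mere abelian equivalence of a direct-sum decomposition — any choice of shift $\lambda$ per block suffices — and the paper accordingly does not address it.
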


\begin{proof}
    First, let $\nu$ be such that $\sum\rho^{ia}\nu^a=0$ or $\sum \rho^{ia}\nu^a\notin \Z$ for any $i$. Then we see that $KL_{\rho, \nu}\simeq \grho\Mod_\nu$. We now show that there is an equivalence:
    \be
U_\rho^E\Mod_\nu\simeq \grho\Mod_\nu.
    \ee
Here $U_\rho^E\Mod_\nu$ is the sub-category where the action of $E^a$ has generalized eigenvalue $\nu^a$. The idea is that the analytic function $f(x)=\frac{e^{2\pi i x}-1}{x}$ has an inverse away from any non-zero integer $x$, which we call $g$. Namely $g(x)f(x)=1$ for any $x\notin \Z$ or $x=0$. We now consider the following assignments
\be
\Psi_+^i:=\psi_+^i,\qquad \Psi_-^i:=g(\sum \rho^{ia}E^a)\psi_-^i.
\ee
Here to define $g(\sum \rho^{ia}E^a)$ we use a Taylor expansion of $g$ around $\sum \rho^{ia}\nu^a$ which is well-defined by the assumption on $\nu$ and that $\sum \rho^{ia} (E^a-\nu^a)$ is nilpotent. The new generators satisfy the relation:
\be
\{\Psi_+^i, \Psi_-^i\}=\sum \rho^{ia}E^a
\ee
and therefore together with $N^a, E^a$ generate an action of $\grho$. This induces an equivalence between $U_\rho^E\Mod_\nu$ and $\grho\Mod_\nu$. 

To finish the proof, we recall that for any $\lambda$ with $\rho(\lambda)\in \Z^n$, there is an equivalence $KL_\nu\simeq KL_{\nu+\lambda}$. We are done once we show the same is true for $U_\rho^E\Mod$. This is clear because of the following automorphism of $U_\rho^E$:
\be
N^a\mapsto N^a-\mu^a, \qquad E^a\mapsto E^a-\lambda^a,\qquad \psi_\pm^i\mapsto \psi_\pm^i.
\ee
This is an automorphism thanks to the fact that $e^{2\pi i \sum_a \rho^{ia}\lambda^a}=1$, so that the relation $K_{E^a}=e^{2\pi i E^a}$ is preserved on finite-dimensional modules. 

\end{proof}

In the remainder of this paper, we will upgrade this to an equivalence of braided tensor categories, as stated in Theorem \ref{KLrho}. As we have discussed in Section \ref{subsec:proofstrategy}, the proof strategy is naturally divided into two parts. We will review them in the following two sections, and prove necessary technical results along the way. 

\section{Free-field Realizations and Relative Drinfeld Centers}\label{sec:FF}

Following \cite{Creutzig:2021cpu,CLR23}, one important ingredient in relating modules of VOAs to modules of quantum groups is free-field realizations. In this section, we review the relevant parts of this strategy. We also develop some new technical results that will be applied in Section \ref{sec:KL1proof}.

\subsection{Tensor Categories Arising from Commutative Algebra Objects}

Let $V$ be a (super)VOA and $\CU$ a rigid braided tensor category of generalized modules of $V$. Let $W$ be a VOA containing $V$, such that as a module of $V$, $W$ belongs to $\CU$. Following \cite{Huang:2014ixa} and \cite[Theorem 3.13]{creutzig2020simple}, we can think of $W$ as defining a commutative (super)algebra object $A$ in $\CU$, such that the vertex operator of $W$ gives the multiplication map $m:A\otimes_\CU A\to A$, locality of the vertex operator gives rise to the commutativity $mc=m$ where $c:A\otimes_\CU A\to A\otimes_\CU A$ is the braiding, and Jacobi identity gives rise to the associativity of $m$. 

In this section, we will work under the general setting of $\CU$ being a braided tensor supercategory and $A$ a commutative (super)algebra object. We assume that $\CU$ is locally finite and rigid. We will comment on the parallel to VOA theory when needed. Given such an algebra object internal to $\CU$, one can consider $A\Mod(\CU)$, the category of $A$-modules in $\CU$, and the full-subcategory $A\lMod(\CU)$, the category of local $A$-modules. We give the definitions here, following \cite[Definition 2.29]{creutzig2017tensor}.  

\begin{Def}
   An object in $A\Mod(\CU)$ consists of an object $M\in \CU$, together with a morphism $a: A\otimes_\CU M\to M$ such that the following diagram commutes (in this diagram $a_{A,A,M}$ is the associativity): 
    \be\label{eq:assocA}
    \btik
A\otimes_\CU (A\otimes_\CU M)\dar{a_{A,A,M}}\rar{1\otimes a} & A\otimes_\CU M\arrow[dd, "a"]\\
(A\otimes_\CU A)\otimes_\CU M \dar{m\otimes 1} & \\
A\otimes_\CU M\rar{a} & M
\etik
    \ee
This commutative diagram translates to the graphical calculus as
\be
	\begin{matrix}
		\begin{tikzpicture}[scale=.75, out=up, in=down, line width=0.5pt]
			\node at (0, 4.3) {$\scriptstyle{A}$};
			\node at (1, 4.3) {$\scriptstyle{A}$};
			\node at (2, 4.3) {$\scriptstyle{M}$};
			\node at (1, -0.3) {$\scriptstyle{M}$};
			\node (m) at (1.5, 2.75) [draw,minimum width=25pt,minimum height=10pt,thick, fill=white] {$\scriptstyle{a}$};
			\node (e) at (1 , 1.25) [draw,minimum width=25pt,minimum height=10pt,fill=white] {$\scriptstyle{a}$};
             \draw [blue][line width=1pt] (1 , 3) to (1, 4);
             \draw [line width=1pt] (2 , 3) to (2, 4);
             \draw [red][line width=1pt] (0.75 , 1.5) to (0, 4);
             \draw [line width=1pt] (1.25 , 1.5) to (1.5, 2.5);
             \draw [line width=1pt] (1 , 0) to (1, 1);
		\end{tikzpicture}
	\end{matrix}
=
	\begin{matrix}
		\begin{tikzpicture}[scale=.75, out=up, in=down, line width=0.5pt]
			\node at (0, 4.3) {$\scriptstyle{A}$};
			\node at (1, 4.3) {$\scriptstyle{A}$};
			\node at (2, 4.3) {$\scriptstyle{M}$};
			\node at (1, -0.3) {$\scriptstyle{M}$};
			\node (m) at (0.5, 2.75) [draw,minimum width=25pt,minimum height=10pt,thick, fill=white] {$\scriptstyle{m}$};
			\node (e) at (1 , 1.25) [draw,minimum width=25pt,minimum height=10pt,fill=white] {$\scriptstyle{a}$};
             \draw [blue][line width=1pt] (1 ,3) to (1, 4);
             \draw [red][line width=1pt] (0 ,3) to (0, 4);
             \draw [line width=1pt] (1.25 , 1.5) to (2, 4);
             \draw [violet][line width=1pt] (0.75 , 1.5) to (0.5, 2.5);
             \draw [line width=1pt] (1 , 0) to (1, 1);
		\end{tikzpicture}
	\end{matrix}
\ee
    A morphism between $A$-modules $M$ and $N$ is a morphism $f:M\to N$ in $\CU$ making the following diagram commute:
    \be
\btik
A\otimes_\CU M\rar{a_M}\dar{1\otimes f} & M\dar{f}\\
A\otimes_\CU N\rar{a_N} & N  
\etik
    \ee
 that is graphically
 \be\label{eq:af=fagraph}
 \begin{matrix}
		\begin{tikzpicture}[scale=.75, out=up, in=down, line width=0.5pt]
			\node at (0, 4.3) {$\scriptstyle{A}$};
			\node at (1, 4.3) {$\scriptstyle{M}$};
			\node at (0.5, -0.3) {$\scriptstyle{N}$};
			\node (m) at (0.5, 2.75) [draw,minimum width=25pt,minimum height=10pt,thick, fill=white] {$\scriptstyle{1 \otimes f}$};
			\node (e) at (0.5 , 1.25) [draw,minimum width=25pt,minimum height=10pt,fill=white] {$\scriptstyle{a_N}$};
             \draw [line width=1pt] (1 , 3.1) to (1, 4);
             \draw [red][line width=1pt] (0 , 3.1) to (0, 4);
             \draw [green][line width=1pt] (1 , 1.55) to (1, 2.4);
             \draw [red][line width=1pt] (0 , 1.55) to (0, 2.4);
             \draw [green][line width=1pt] (0.5 , 0) to (0.5, 0.95);
		\end{tikzpicture}
	\end{matrix}
=
	\begin{matrix}
		\begin{tikzpicture}[scale=.75, out=up, in=down, line width=0.5pt]
			\node at (0, 4.3) {$\scriptstyle{A}$};
			\node at (1, 4.3) {$\scriptstyle{M}$};
			\node at (0.5, -0.3) {$\scriptstyle{N}$};
			\node (m) at (0.5, 2.75) [draw,minimum width=25pt,minimum height=10pt,thick, fill=white] {$\scriptstyle{a_M}$};
			\node (e) at (0.5 , 1.25) [draw,minimum width=25pt,minimum height=10pt,fill=white] {$\scriptstyle{f}$};
             \draw [line width=1pt] (1 , 3.05) to (1, 4);
             \draw [red][line width=1pt] (0 , 3.05) to (0, 4);
             \draw [line width=1pt] (0.5 , 1.6) to (0.5, 2.45);
             \draw [green][line width=1pt] (0.5 , 0) to (0.5, 0.9);
		\end{tikzpicture}
	\end{matrix}
 \ee 
The category $A\lMod (\CU)$ is the full subcategory of $A\Mod(\CU)$ whose objects $M$ are such that the following composition is equal to $a$.
\be
\btik
A\otimes_\CU M\rar{c_{A,M}} & M\otimes_\CU A\rar{c_{M,A}} & A\otimes_{\CU} M\rar{a} & M
\etik
\ee
  Namely $ac^2=a$. Graphically 
  \be
  \begin{matrix}
		\begin{tikzpicture}[scale=.75, out=up, in=down, line width=0.5pt]
			\node at (0, 6.3) {$\scriptstyle{A}$};
			\node at (1, 6.3) {$\scriptstyle{M}$};
			\node at (0.5, 2.2) {$\scriptstyle{M}$};
			\node (m) at (0.5, 3.75) [draw,minimum width=25pt,minimum height=10pt,thick, fill=white] {$\scriptstyle{a_M}$};
             \draw [blue][line width=1pt] (0 ,4.05) to (0, 6);
             \draw [line width=1pt] (1 ,4.05) to (1, 6);
             \draw [line width=1pt] (0.5 , 2.5) to (0.5, 3.45);
		\end{tikzpicture}
	\end{matrix}
	=	
	\begin{matrix}
		\begin{tikzpicture}[scale=.75, out=up, in=down, line width=0.5pt]
			\node at (0, 6.3) {$\scriptstyle{A}$};
			\node at (1, 6.3) {$\scriptstyle{M}$};
			\node at (0.5, 2.2) {$\scriptstyle{M}$};
			\node (m) at (0.5, 3.75) [draw,minimum width=25pt,minimum height=10pt,thick, fill=white] {$\scriptstyle{a_M}$};
			\draw [blue][line width=1pt] (1, 5) to (0, 6);
             \draw [white, line width=5pt] (0.3, 5.3) to (0.7, 5.7);
             \draw [line width=1pt] (0, 5) to (1, 6);
             \draw [line width=1pt] (1, 4.05) to (0, 5);
             \draw [white, line width=5pt] (0.3, 4.3) to (0.7, 4.7);
             \draw [blue][line width=1pt] (0, 4.05) to (1, 5);
             \draw [line width=1pt] (0.5 , 2.5) to (0.5, 3.45);
		\end{tikzpicture}
	\end{matrix}
 \ee
\end{Def}

\begin{Rem}
    In the following, when no confusion could occur, we will write $A\Mod$ and $A\lMod$, and dropping the $(\CU)$. 
\end{Rem}

Specializing to the case when $\CU$ is the category of modules of a vertex algebra $V$ such that $A$ defines a VOA extension $W$, then if one traces through the definition, an object in $A\Mod(\CU)$ is a module $M$ of $V$, together with an intertwining operator $\CY_M: W\otimes M\to M\{z\}[\log z]$ such that $\CY_M$ satisfies the following:\footnote{Here the equality is understood as an equality after appropriate re-expansion, and the same holds true for the remainder of the paper when considering equalities of intertwining operators. }
\be
\CY_M(w_1, z_1)\CY_M(w_2, z_2)m=\CY_M(Y_W(w_1, z_1-z_2)w_2,z_2)m. 
\ee
This $\CY_M$ corresponds to the module map $A\otimes_{\CU}M\to M$ and the above equation is the associativity of the action. The category $A\lMod(\CU)$ consists of those objects where $\CY_M$ is a map valued in $M\lpp z\rpp$ (since this requires that $\CY_M(w, e^{2\pi i}z)=\CY_M(w, z)$, so that $\CY_M$ is a series in $z^\pm$). Put simply, objects in the category $A\lMod(\CU)$ are generalized modules of $W$. In fact, we have the following statement (see \cite[Theorem 3.14]{creutzig2020simple}).

\begin{Prop}\label{Prop:AlModWMod}

The category $A\lMod(\CU)$ is the full subcategory of all generalized $W$-modules which, viewed as modules of $V$, lie in $\CU$. 
 
\end{Prop}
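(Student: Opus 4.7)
The plan is to translate the categorical locality condition $a \circ c_{M,A}\circ c_{A,M}=a$ into the vertex-algebraic condition that the module intertwining operator associated to $M$ is integer-moded and single-valued, which is exactly what distinguishes a (generalized) $W$-module from a $V$-module carrying merely a logarithmic action of $W$. I would begin by invoking the Huang--Kong--Runkel / Huang--Lepowsky--Zhang correspondence already used tacitly in the excerpt: an $A$-module structure $a\colon A\otimes_\CU M\to M$ on a $V$-module $M\in\CU$ is equivalent, via the universal property of $\otimes_\CU$ as the representing object for logarithmic intertwining operators, to a $V$-module intertwining operator $\CY_M\colon W\otimes M\to M\{z\}[\log z]$; the associativity diagram \eqref{eq:assocA} translates into the iterate-equals-product identity
\be
\CY_M(w_1,z_1)\CY_M(w_2,z_2)m \;=\; \CY_M\bigl(Y_W(w_1,z_1-z_2)w_2,\,z_2\bigr)m,
\ee
i.e.\ the Jacobi identity for a module action of $W$ on $M$.

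Next I would analyze what the locality condition $ac_{M,A}c_{A,M}=a$ imposes on $\CY_M$. In the HLZ framework the braiding isomorphism is defined by analytic continuation along a half-loop in $z$, so the double braiding $c_{M,A}c_{A,M}$ corresponds to the full monodromy $z\mapsto e^{2\pi i}z$. Concretely, if one expands $\CY_M(w,z)m=\sum_{y\in\C,\,n\in\N}(\CY_M)_{y,n}(w)m\cdot z^y(\log z)^n$, then the composition $a\circ c^2$ replaces each term by $e^{2\pi i y}z^y(\log z+2\pi i)^n(\CY_M)_{y,n}(w)m$. The equality with $a$ then forces $e^{2\pi i y}=1$ and $n=0$, so $\CY_M$ is integer-moded and free of logarithms, i.e.\ $\CY_M$ takes values in $M\lpp z\rpp$. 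Together with the associativity already established, and with the unit axiom of $A$ giving the creation property, this is precisely the definition of a generalized $W$-module structure on $M$.

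The converse direction is essentially tautological: any generalized $W$-module whose restriction to $V$ lies in $\CU$ produces a single-valued integer-moded intertwiner $Y_W(\cdot,z)$ from $W$ to itself acting on $M$, which via the same universal property furnishes a map $a\colon A\otimes_\CU M\to M$; associativity of $Y_W$ gives the $A$-module axiom, and single-valuedness immediately yields $ac^2=a$. Morphisms of generalized $W$-modules in $\CU$ are exactly $V$-module maps intertwining $Y_W$, which matches \eqref{eq:af=fagraph}, so the functor is fully faithful and essentially surjective onto $A\lMod(\CU)$.

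The main obstacle I anticipate is being careful about the analytic underpinnings of the second step: the identification of $c^2$ with literal monodromy of $z$ requires the convergence and associativity results of HLZ, and in the super setting the parity sign $(-1)^F$ appearing in the commutativity of $m$ must be tracked through $c^2$ so that ``local'' in the super sense matches the super-vertex-algebra notion of a $W$-module. Rather than reprove these analytic inputs I would appeal directly to \cite{creutzig2020simple} and \cite{Huang:2014ixa}, where the dictionary between commutative (super)algebra objects in HLZ-type categories and VOA extensions is spelled out, and reduce the proof to the two bookkeeping verifications above.
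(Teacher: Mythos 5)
Your proposal is correct and follows essentially the same route as the paper: the paper sketches exactly this dictionary (the $A$-action corresponds to an intertwining operator $\CY_M$, associativity to the iterate-equals-product identity, and locality $ac^2=a$ to single-valuedness $\CY_M(w,e^{2\pi i}z)=\CY_M(w,z)$, i.e.\ $\CY_M$ valued in $M\lpp z\rpp$) and then defers the full verification to \cite[Theorem 3.14]{creutzig2020simple}, just as you do for the analytic inputs.
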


The following is a useful lemma that will become important later on. 

\begin{Lem}
    Let $M$ be in $A\Mod$. Then the following action of $A$:
    \be
\btik
A\FU M\rar{c_{A,M}^{-1}} & M\FU A\rar{c_{M,A}^{-1}} & A\FU M\rar{a} & M
\etik
    \ee
    which we denote by $ac^{-2}$, defines another $A$-module structure on $M$. We call the resulting module $M_c$. The assignment $M\mapsto M_c$ defines an endo-functor on $A\Mod$. 
\end{Lem}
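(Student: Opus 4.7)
The plan is to verify two things: (i) $a' := a \circ c_{M, A} \circ c_{A, M}$ satisfies the unit and associativity axioms of an $A$-action, so that $M_c := (M, a')$ is an object of $A\Mod$; and (ii) any morphism $f : M \to N$ in $A\Mod$ remains an $A$-module morphism with respect to the twisted actions, so the assignment $M \mapsto M_c$ is functorial.

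Unitality is immediate, since braiding with the monoidal unit is trivial: $c_{A, M} \circ (\eta \otimes 1_M) = 1_M \otimes \eta$ and $c_{M, A} \circ (1_M \otimes \eta) = \eta \otimes 1_M$, hence $a' \circ (\eta \otimes 1_M) = a \circ (\eta \otimes 1_M) = 1_M$. For associativity I would start from $a' \circ (m \otimes 1_M)$ and use naturality of the braiding with respect to $m : A \otimes_\CU A \to A$ to obtain
\[
a' \circ (m \otimes 1_M) \;=\; a \circ (m \otimes 1_M) \circ c_{M, A \otimes A} \circ c_{A \otimes A, M} \;=\; a \circ (1_A \otimes a) \circ c_{M, A \otimes A} \circ c_{A \otimes A, M},
\]
using associativity of the original action $a$ in the last step. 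For the other side, applying naturality with respect to $a : A \otimes_\CU M \to M$ twice gives
\[
a' \circ (1_A \otimes a') \;=\; a \circ (1_A \otimes a) \circ c_{A \otimes M, A} \circ c_{A, A \otimes M} \circ (1_A \otimes (c_{M, A} \circ c_{A, M})).
\]
Using the hexagon axioms to rewrite both $c_{M, A \otimes A} \circ c_{A \otimes A, M}$ and $c_{A \otimes M, A} \circ c_{A, A \otimes M}$ as compositions of two-strand braidings, the two sides become comparable. The step where the remaining discrepancy collapses is the commutativity of $A$: since $m \circ c_{A, A} = m$, one has $a \circ (1_A \otimes a) \circ (c_{A, A} \otimes 1_M) = a \circ (m \otimes 1_M) \circ (c_{A,A} \otimes 1_M) = a \circ (1_A \otimes a)$, which absorbs the extra $A$-braidings introduced by the hexagon expansion and matches the two expressions.

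For functoriality, a morphism $f : M \to N$ in $A\Mod$ satisfies $f \circ a_M = a_N \circ (1_A \otimes f)$. By naturality of the braiding with respect to $f$, $(1_A \otimes f) \circ c_{M, A} = c_{N, A} \circ (f \otimes 1_A)$ and $(f \otimes 1_A) \circ c_{A, M} = c_{A, N} \circ (1_A \otimes f)$, which chain together to give $f \circ a'_M = a'_N \circ (1_A \otimes f)$, showing $f$ is again a morphism $M_c \to N_c$. Since composition and identities are preserved on the nose, this verifies the endofunctor structure.

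The main obstacle is the associativity verification: the string-diagrammatic manipulation is transparent (and is perhaps best carried out in that form), but requires careful bookkeeping of the hexagon axioms and the precise point at which commutativity $m \circ c_{A, A} = m$ is invoked. Once this is done, functoriality is a short consequence of naturality of the braiding, and the superalgebra sign in the commutativity axiom $m c = (-1)^F m$ does not affect the argument since the signs on either side of the associativity diagram balance.
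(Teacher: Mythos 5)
Your proposal is correct and follows essentially the same route as the paper's proof: the paper verifies associativity of $ac^2$ by a string-diagram chase whose only inputs are naturality of the braiding, associativity of the original action $a$, and (super)commutativity $mc=m$ of $A$ — exactly the three ingredients you isolate, with your hexagon decomposition of $c_{M,A\otimes A}\circ c_{A\otimes A,M}$ and $c_{A\otimes M,A}\circ c_{A,A\otimes M}$ being the algebraic transcription of the paper's diagram manipulations. The functoriality argument via naturality of the braiding applied to $f$ is likewise identical to the paper's.
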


\begin{proof}

This proposition follows from the following diagrammatic argument.

\be
\begin{aligned}
	\begin{matrix}
		\begin{tikzpicture}[scale=.75, out=up, in=down, line width=0.5pt]
			\node at (0, 6.3) {$\scriptstyle{A}$};
			\node at (1, 6.3) {$\scriptstyle{A}$};
			\node at (2, 6.3) {$\scriptstyle{M}$};
			\node at (1, -0.3) {$\scriptstyle{M}$};
			\node (m) at (1.5, 3.75) [draw,minimum width=25pt,minimum height=10pt,thick, fill=white] {$\scriptstyle{ac^{-2}}$};
			\node (e) at (1 , 1.25) [draw,minimum width=25pt,minimum height=10pt,fill=white] {$\scriptstyle{ac^{-2}}$};
             \draw [blue][line width=1pt] (1 ,4.1) to (1, 6);
             \draw [line width=1pt] (2 ,4.1) to (2, 6);
             \draw [red][line width=1pt] (0.75 , 1.55) to (0, 6);
             \draw [line width=1pt] (1.25 , 1.55) to (1.5, 3.4);
             \draw [line width=1pt] (1 , 0) to (1, 0.95);
		\end{tikzpicture}
	\end{matrix}
	&=	
	\begin{matrix}
		\begin{tikzpicture}[scale=.75, out=up, in=down, line width=0.5pt]
			\node at (0, 6.3) {$\scriptstyle{A}$};
			\node at (1, 6.3) {$\scriptstyle{A}$};
			\node at (2, 6.3) {$\scriptstyle{M}$};
			\node at (1, -0.3) {$\scriptstyle{M}$};
			\node (m) at (1.5, 3.75) [draw,minimum width=25pt,minimum height=10pt,thick, fill=white] {$\scriptstyle{a}$};
			\node (e) at (1 , 1.25) [draw,minimum width=25pt,minimum height=10pt,fill=white] {$\scriptstyle{a}$};
			\draw [red][line width=1pt] (0,3.5) to (0, 6);
			\draw [line width=1pt] (1, 5) to (2, 6);
             \draw [white, line width=5pt] (1.7, 5.3) to (1.3, 5.7);
             \draw [blue][line width=1pt] (2, 5) to (1, 6);
             \draw [blue][line width=1pt] (1, 4) to (2, 5);
             \draw [white, line width=5pt] (1.7, 4.3) to (1.3, 4.7);
             \draw [line width=1pt] (2, 4) to (1, 5);
             \draw [line width=1pt] (0.5 , 2.5) to (1.5, 3.5);
             \draw [white, line width=5pt] (1.2, 2.7) to (0.8, 3.1);
             \draw [red][line width=1pt] (1.5, 2.5) to (0, 3.5);
             \draw [red][line width=1pt] (0.5 , 1.5) to (1.5, 2.5);
             \draw [white, line width=5pt] (1.2, 1.8) to (0.8, 2.2);
             \draw [line width=1pt] (1.5, 1.5) to (0.5, 2.5);
             \draw [line width=1pt] (1 , 0) to (1, 1);
		\end{tikzpicture}
	\end{matrix}
	&=	
	\begin{matrix}
		\begin{tikzpicture}[scale=.75, out=up, in=down, line width=0.5pt]
			\node at (0, 6.3) {$\scriptstyle{A}$};
			\node at (1, 6.3) {$\scriptstyle{A}$};
			\node at (2, 6.3) {$\scriptstyle{M}$};
			\node at (0.5, -0.3) {$\scriptstyle{M}$};
			\node (m) at (1.5, 1.75) [draw,minimum width=25pt,minimum height=10pt,thick, fill=white] {$\scriptstyle{a}$};
			\node (e) at (0.5 , 0.75) [draw,minimum width=25pt,minimum height=10pt,fill=white] {$\scriptstyle{a}$};
			\draw [red][line width=1pt] (0, 4) to (0, 6);
			\draw [line width=1pt] (1, 5) to (2, 6);
            \draw [white, line width=5pt] (1.7, 5.3) to (1.3, 5.7);
            \draw [blue][line width=1pt] (2, 5) to (1, 6);
            \draw [blue][line width=1pt] (0, 3) to (2, 5);
              \draw [white, line width=5pt] (1.8, 4.2) to (1.4, 4.7);
             \draw [line width=1pt] (2, 4) to (1, 5);
             \draw [line width=1pt] (1, 3) to (2, 4);
             \draw [white, line width=5pt] (0.4, 3.4) to (0, 3.9);
             \draw [white, line width=5pt] (1.7, 3.2) to (1.3, 3.6);
             \draw [red][line width=1pt] (2, 3) to (0, 4);
             \draw [red][line width=1pt] (0 , 2) to (2, 3);
             \draw [white, line width=5pt] (0.9, 2.2) to (0.4, 2.6);
             \draw [white, line width=5pt] (1.6, 2.35) to (1.2, 2.75);
             \draw [blue][line width=1pt] (1, 2) to (0, 3);
             \draw [line width=1pt] (2, 2) to (1, 3);
             \draw [red][line width=1pt] (0 , 1) to (0, 2);
             \draw [line width=1pt] (0.5 , 1) to (1.5, 1.5);
             \draw [line width=1pt] (0.5 , 0) to (0.5, 0.5);
		\end{tikzpicture}
	\end{matrix}
 &=	
	\begin{matrix}
		\begin{tikzpicture}[scale=.75, out=up, in=down, line width=0.5pt]
			\node at (0, 6.3) {$\scriptstyle{A}$};
			\node at (1, 6.3) {$\scriptstyle{A}$};
			\node at (2, 6.3) {$\scriptstyle{M}$};
			\node at (1.5, -0.3) {$\scriptstyle{M}$};
			\node (m) at (0.5, 1.75) [draw,minimum width=25pt,minimum height=10pt,thick, fill=white] {$\scriptstyle{m}$};
			\node (e) at (1.5 , 0.75) [draw,minimum width=25pt,minimum height=10pt,fill=white] {$\scriptstyle{a}$};
			\draw [red][line width=1pt] (0, 4) to (0, 6);
            \draw [line width=1pt] (1, 5) to (2, 6);
            \draw [white, line width=5pt] (1.7, 5.3) to (1.3, 5.7);
			\draw [blue][line width=1pt] (2, 5) to (1, 6);
            \draw [blue][line width=1pt] (0, 3) to (2, 5);
             \draw [white, line width=5pt] (1.8, 4.2) to (1.4, 4.7);
             \draw [line width=1pt] (2, 4) to (1, 5);
             \draw [line width=1pt] (1, 3) to (2, 4);
             \draw [white, line width=5pt] (0.4, 3.4) to (0, 3.9);
             \draw [white, line width=5pt] (1.7, 3.2) to (1.3, 3.6);
             \draw [red][line width=1pt] (2, 3) to (0, 4);
            \draw [red][line width=1pt] (0 , 2) to (2, 3);
             \draw [white, line width=5pt] (0.9, 2.2) to (0.4, 2.6);
             \draw [white, line width=5pt] (1.6, 2.35) to (1.2, 2.75);
             \draw [blue][line width=1pt] (1, 2) to (0, 3);
             \draw [line width=1pt] (2, 2) to (1, 3);
             \draw [violet][line width=1pt] (1 , 1) to (0.5, 1.5);
             \draw [line width=1pt] (1.5 , 1) to (2, 2);
             \draw [line width=1pt] (1.5 , 0) to (1.5, 0.5);
		\end{tikzpicture}
	\end{matrix}
 &=	
	\begin{matrix}
		\begin{tikzpicture}[scale=.75, out=up, in=down, line width=0.5pt]
			\node at (0, 6.3) {$\scriptstyle{A}$};
			\node at (1, 6.3) {$\scriptstyle{A}$};
			\node at (2, 6.3) {$\scriptstyle{M}$};
			\node at (1.5, -0.3) {$\scriptstyle{M}$};
			\node (m) at (0.5, 1.75) [draw,minimum width=25pt,minimum height=10pt,thick, fill=white] {$\scriptstyle{m}$};
			\node (e) at (1.5 , 0.75) [draw,minimum width=25pt,minimum height=10pt,fill=white] {$\scriptstyle{a}$};
			\draw [red][line width=1pt] (0, 4) to (0, 6);
            \draw [line width=1pt] (1, 5) to (2, 6);
            \draw [white, line width=5pt] (1.7, 5.3) to (1.3, 5.7);
			\draw [blue][line width=1pt] (2, 5) to (1, 6);

             \draw [blue][line width=1pt] (0, 3) to (2, 5);
             \draw [white, line width=5pt] (1.8, 4.2) to (1.4, 4.7);
             \draw [line width=1pt] (2, 4) to (1, 5);
             \draw [line width=1pt] (1, 3) to (2, 4);
             \draw [white, line width=5pt] (0.4, 3.4) to (0, 3.9);
             \draw [white, line width=5pt] (1.7, 3.2) to (1.3, 3.6);
             \draw [red][line width=1pt] (2, 3) to (0, 4);

             \draw [blue][line width=1pt] (1, 2) to (0, 3);
             \draw [white, line width=5pt] (0.4, 2.3) to (0.9, 2.7);
             \draw [red][line width=1pt] (0 , 2) to (2, 3);
             \draw [white, line width=5pt] (1.6, 2.35) to (1.2, 2.75);
             \draw [line width=1pt] (2, 2) to (1, 3);

             \draw [violet][line width=1pt] (1 , 1) to (0.5, 1.5);
             \draw [line width=1pt] (1.5 , 1) to (2, 2);
             \draw [line width=1pt] (1.5 , 0) to (1.5, 0.5);
		\end{tikzpicture}
	\end{matrix}.
\end{aligned}
\ee

The first equality is the definition, the second one naturality of braiding, the third one that $M$ is an $A$-module, the fourth one that $A$ is commutative. This agrees with the composition of the multiplication of $A$ and $ac^{-2}$:

\be
\begin{aligned}
	\begin{matrix}
		\begin{tikzpicture}[scale=.75, out=up, in=down, line width=0.5pt]
			\node at (0, 6.3) {$\scriptstyle{A}$};
			\node at (1, 6.3) {$\scriptstyle{A}$};
			\node at (2, 6.3) {$\scriptstyle{M}$};
			\node at (1, -0.3) {$\scriptstyle{M}$};
			\node (m) at (0.5, 3.75) [draw,minimum width=25pt,minimum height=10pt,thick, fill=white] {$\scriptstyle{m}$};
			\node (e) at (1 , 1.25) [draw,minimum width=25pt,minimum height=10pt,fill=white] {$\scriptstyle{ac^{-2}}$};
             \draw [blue][line width=1pt] (1 ,4) to (1, 6);
             \draw [red][line width=1pt] (0 ,4) to (0, 6);
             \draw [line width=1pt] (1.25 , 1.55) to (2, 6);
             \draw [violet][line width=1pt] (0.75 , 1.55) to (0.5, 3.5);
             \draw [line width=1pt] (1 , 0) to (1, 0.95);
		\end{tikzpicture}
	\end{matrix}
	&=	 \begin{matrix}
		\begin{tikzpicture}[scale=.75, out=up, in=down, line width=0.5pt]
			\node at (0, 6.3) {$\scriptstyle{A}$};
			\node at (1, 6.3) {$\scriptstyle{A}$};
			\node at (2, 6.3) {$\scriptstyle{M}$};
			\node at (1.5, -0.3) {$\scriptstyle{M}$};
			\node (m) at (0.5, 4) [draw,minimum width=25pt,minimum height=10pt,thick, fill=white] {$\scriptstyle{m}$};
			\node (e) at (1 , 1.25) [draw,minimum width=25pt,minimum height=10pt,fill=white] {$\scriptstyle{a}$};
			\draw [red][line width=1pt] (0,4.25) to (0, 6);
			\draw [line width=1pt] (1.5, 3.5) to (2, 6);
             \draw [blue][line width=1pt] (1, 4.25) to (1, 6);
             \draw [line width=1pt] (0.5 , 2.5) to (1.5, 3.5);
             \draw [white, line width=5pt] (1.2, 2.9) to (0.8, 3.3);
             \draw [violet][line width=1pt] (1.5, 2.5) to (0.5, 3.75);
             \draw [violet][line width=1pt] (0.5 , 1.5) to (1.5, 2.5);
             \draw [white, line width=5pt] (1.2, 1.8) to (0.8, 2.2);
             \draw [line width=1pt] (1.5, 1.5) to (0.5, 2.5);
             \draw [line width=1pt] (1 , 0) to (1, 1);
		\end{tikzpicture}
	\end{matrix}
 &=	
	\begin{matrix}
		\begin{tikzpicture}[scale=.75, out=up, in=down, line width=0.5pt]
			\node at (0, 6.3) {$\scriptstyle{A}$};
			\node at (1, 6.3) {$\scriptstyle{A}$};
			\node at (2, 6.3) {$\scriptstyle{M}$};
			\node at (1.5, -0.3) {$\scriptstyle{M}$};
			\node (m) at (1.5, 2.25) [draw,minimum width=25pt,minimum height=10pt,thick, fill=white] {$\scriptstyle{m}$};
			\node (e) at (1.5 , 0.75) [draw,minimum width=25pt,minimum height=10pt,fill=white] {$\scriptstyle{a}$};
			\draw [line width=1pt] (0, 2) to (2, 6);
            \draw [white, line width=5pt] (1.4, 4.25) to (1.5, 4.65);
             \draw [white, line width=5pt] (0.5, 3.6) to (0.8, 3.8);
             \draw [blue][line width=1pt] (2, 2.5) to (1, 6);
             \draw [red][line width=1pt] (1, 2.5) to (0, 6);
             \draw [violet][line width=1pt] (1, 1) to (1.5, 2);
             \draw [white, line width=5pt] (1.1, 1.4) to (1.5, 1.6);
             \draw [line width=1pt] (2, 1) to (0, 2);
             \draw [line width=1pt] (1.5 , 0) to (1.5, 0.5);
		\end{tikzpicture}
	\end{matrix}
	&=	
	\begin{matrix}
		\begin{tikzpicture}[scale=.75, out=up, in=down, line width=0.5pt]
			\node at (0, 6.3) {$\scriptstyle{A}$};
			\node at (1, 6.3) {$\scriptstyle{A}$};
			\node at (2, 6.3) {$\scriptstyle{M}$};
			\node at (1.5, -0.3) {$\scriptstyle{M}$};
			\node (m) at (0.5, 1.75) [draw,minimum width=25pt,minimum height=10pt,thick, fill=white] {$\scriptstyle{m}$};
			\node (e) at (1.5 , 0.75) [draw,minimum width=25pt,minimum height=10pt,fill=white] {$\scriptstyle{a}$};
			\draw [red][line width=1pt] (0, 4) to (0, 6);
			\draw [line width=1pt] (1, 5) to (2, 6);
            \draw [white, line width=5pt] (1.7, 5.3) to (1.3, 5.7);
            \draw [blue][line width=1pt] (2, 5) to (1, 6);
            \draw [line width=1pt] (1, 3) to (1, 5);
             \draw [white, line width=5pt] (1.3, 3.2) to (0.9, 3.6);
             \draw [red][line width=1pt] (1.5, 3) to (0, 4);
             \draw [blue][line width=1pt] (2, 2.7) to (2, 5);
            \draw [red][line width=1pt] (0 , 2) to (1.5, 3);
             \draw [blue][line width=1pt] (1, 2) to (2, 2.7);
             \draw [white, line width=5pt] (1.4, 2.35) to (1, 2.75);
             \draw [white, line width=5pt] (1.3, 2.2) to (1.5, 2.5);
             \draw [line width=1pt] (1.5, 2) to (1, 3);
             \draw [violet][line width=1pt] (1 , 1) to (0.5, 1.5);
             \draw [line width=1pt] (1.5 , 1) to (1.5, 2);
             \draw [line width=1pt] (1.5 , 0) to (1.5, 0.5);
		\end{tikzpicture}
	\end{matrix}
 &=	
	\begin{matrix}
		\begin{tikzpicture}[scale=.75, out=up, in=down, line width=0.5pt]
			\node at (0, 6.3) {$\scriptstyle{A}$};
			\node at (1, 6.3) {$\scriptstyle{A}$};
			\node at (2, 6.3) {$\scriptstyle{M}$};
			\node at (1.5, -0.3) {$\scriptstyle{M}$};
			\node (m) at (0.5, 1.75) [draw,minimum width=25pt,minimum height=10pt,thick, fill=white] {$\scriptstyle{m}$};
			\node (e) at (1.5 , 0.75) [draw,minimum width=25pt,minimum height=10pt,fill=white] {$\scriptstyle{a}$};
			\draw [red][line width=1pt] (0, 4) to (0, 6);
			\draw [line width=1pt] (1, 5) to (2, 6);
            \draw [white, line width=5pt] (1.7, 5.3) to (1.3, 5.7);
            \draw [blue][line width=1pt] (2, 5) to (1, 6);
            \draw [blue][line width=1pt] (0, 3) to (2, 5);
             \draw [white, line width=5pt] (1.8, 4.2) to (1.4, 4.7);
             \draw [line width=1pt] (2, 4) to (1, 5);
             \draw [line width=1pt] (1, 3) to (2, 4);
             \draw [white, line width=5pt] (0.4, 3.4) to (0, 3.9);
             \draw [white, line width=5pt] (1.7, 3.2) to (1.3, 3.6);
             \draw [red][line width=1pt] (2, 3) to (0, 4);
             \draw [blue][line width=1pt] (1, 2) to (0, 3);
             \draw [white, line width=5pt] (0.4, 2.3) to (0.9, 2.7);
             \draw [red][line width=1pt] (0 , 2) to (2, 3);
             \draw [white, line width=5pt] (1.6, 2.35) to (1.2, 2.75);
             \draw [line width=1pt] (2, 2) to (1, 3);
             \draw [violet][line width=1pt] (1 , 1) to (0.5, 1.5);
             \draw [line width=1pt] (1.5 , 1) to (2, 2);
             \draw [line width=1pt] (1.5 , 0) to (1.5, 0.5);
		\end{tikzpicture}
	\end{matrix}.
\end{aligned}
\ee

Here the first equality is the definition, while all others are using only naturality of braiding and the hexagon identity. In all these diagrams, associativity isomorphisms are omitted since they don't present obstruction to the equalities of the diagrams. 

To show that $M\to M_c$ is an endofunctor, we need to show that if $f: M\to N$ is a morphism, then $f: M_c\to N_c$ is a morphism. We have:

\be
	\begin{matrix}
		\begin{tikzpicture}[scale=.75, out=up, in=down, line width=0.5pt]
			\node at (0, 6.3) {$\scriptstyle{A}$};
			\node at (1, 6.3) {$\scriptstyle{M}$};
		   \node at (0.5, 1.3) {$\scriptstyle{N}$};
			\node (m) at (0.5, 3.75) [draw,minimum width=25pt,minimum height=10pt,thick, fill=white] {$\scriptstyle{a_M}$};
			\draw [line width=1pt] (0, 5) to (1, 6);
             \draw [white, line width=5pt] (0.7, 5.3) to (0.3, 5.7);
             \draw [blue][line width=1pt] (1, 5) to (0, 6);
              \draw [blue][line width=1pt] (0, 4.05) to (1, 5);
             \draw [white, line width=5pt] (0.7, 4.3) to (0.3, 4.7);
              \draw [line width=1pt] (1, 4.05) to (0, 5);
              \draw [line width=1pt] (0.5 , 2.5) to (0.5, 3.45);
             \node (e) at (0.5 , 2.5) [draw,minimum width=25pt,minimum height=10pt,fill=white] {$\scriptstyle{f}$};
             \draw [line width=1pt] (0.5 , 1.5) to (0.5, 2.15);
		\end{tikzpicture}
	\end{matrix}
 =
 \begin{matrix}
		\begin{tikzpicture}[scale=.75, out=up, in=down, line width=0.5pt]
			\node at (0, 6.3) {$\scriptstyle{A}$};
			\node at (1, 6.3) {$\scriptstyle{M}$};
		   \node at (0.5, 1.3) {$\scriptstyle{N}$};
			\node (m) at (0.5, 3.75) [draw,minimum width=25pt,minimum height=10pt,thick, fill=white] {$\scriptstyle{1\otimes f}$};
			\draw [line width=1pt] (0, 5) to (1, 6);
             \draw [white, line width=5pt] (0.7, 5.3) to (0.3, 5.7);
             \draw [blue][line width=1pt] (1, 5) to (0, 6);
              \draw [blue][line width=1pt] (0, 4.05) to (1, 5);
             \draw [white, line width=5pt] (0.7, 4.3) to (0.3, 4.7);
             \draw [line width=1pt] (1, 4.05) to (0, 5);
             \draw [line width=1pt] (0 , 2.5) to (0, 3.45);
             \draw [line width=1pt] (1 , 2.5) to (1, 3.45);
             \node (e) at (0.5 , 2.5) [draw,minimum width=25pt,minimum height=10pt,fill=white] {$\scriptstyle{a_M}$};
             \draw [line width=1pt] (0.5 , 1.5) to (0.5, 2.15);
		\end{tikzpicture}
	\end{matrix}
 =
 \begin{matrix}
		\begin{tikzpicture}[scale=.75, out=up, in=down, line width=0.5pt]
			\node at (0, 6.3) {$\scriptstyle{A}$};
			\node at (1, 6.3) {$\scriptstyle{M}$};
		   \node at (0.5, 1.3) {$\scriptstyle{N}$};
     \draw [line width=1pt] (0 , 5.5) to (0, 6.1);
             \draw [line width=1pt] (1 , 5.5) to (1, 6.1);
     \node (m) at (0.5, 5.3) [draw,minimum width=25pt,minimum height=10pt,thick, fill=white] {$\scriptstyle{1\otimes f}$};
			\draw [line width=1pt] (0, 4) to (1, 5);
             \draw [white, line width=5pt] (0.7, 4.3) to (0.3, 4.7);
             \draw [blue][line width=1pt] (1, 4) to (0, 5);
               \draw [blue][line width=1pt] (0, 3.05) to (1, 4);
             \draw [white, line width=5pt] (0.7, 3.3) to (0.3, 3.7);
             \draw [line width=1pt] (1, 3.05) to (0, 4);
               \node (e) at (0.5 , 2.8) [draw,minimum width=25pt,minimum height=10pt,fill=white] {$\scriptstyle{a_M}$};
             \draw [line width=1pt] (0.5 , 1.5) to (0.5, 2.5);
		\end{tikzpicture}
	\end{matrix}
 \ee

 Here the first equality follows from the fact that $f$ is a morphism of $A$-modules (equation \eqref{eq:af=fagraph}), and the second follows from naturality of braiding. 
 
\end{proof}

%\begin{Rem}
%If the category involved is the category of modules of a vertex algebra with tensor structure given by logarithmic intertwining operators, then an easy proof can be given as follows. 

%If $\CY_M(-,z)$ denotes the logarithmic intertwining operator defining $a$, then by definition, the logarithmic intertwining operator defining $ac^2$ is $\CY_M(-, e^{2\pi i} z)$.  By Jacobi-identity of $\CY_M$, we have
%    \be
%    \begin{aligned}
%\CY_M(w_1, e^{2\pi i}z_1)\CY_M(w_2, e^{2\pi i}z_2)m& =\CY_M(Y_W(w_1, e^{2\pi i}z_1-e^{2\pi i}z_2)w_2,e^{2\pi i} z_2)m\\ &=\CY_M(Y_W(w_1, z_1-z_2)w_2,e^{2\pi i} z_2)m,
%\end{aligned}
  %  \ee
   % where the last equality follows from the locality of $Y_W$. This shows that $ac^2$ satisfies associativity, namely diagram \eqref{eq:assocA}. 
    
%\end{Rem}

Many useful results about $A\Mod$ and $A\lMod$ are proven in \cite{creutzig2017tensor} and \cite{CLR23}, which we will recall here. See \cite[Theorem 2.53, Theorem 2.55]{creutzig2017tensor}. 

\begin{Prop}\label{Prop:FSAMod}
    The following statements are true.

    \begin{enumerate}
        \item The category $A\Mod$ is a tensor category, such that $M\otimes_A N$ is the co-equalizer of the following map:
        \be
\btik
(A\otimes_\CU M)\otimes_\CU N\arrow[rrr, shift left, "a_M\otimes 1"] \arrow[rrr, shift right, swap, "(1\otimes a_N)\circ a^{-1}_{M,A,N}\circ  c_{A,M}"]& & & M\otimes_\CU N
\etik
        \ee

        \item The category $A\lMod$ is a braided tensor category, where the braiding $M\otimes_A N\to N\otimes_A M$ is given by the descent of $c_{M,N}: M\otimes_\CU N\to N\otimes_\CU M$ to the quotient. 

        \item The above braiding makes sense even when, say, $M$ is not local, as long as $N$ is local.  
    \end{enumerate}
\end{Prop}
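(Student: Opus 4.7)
The plan is to establish each of the three assertions in turn using the standard coequalizer technology for algebras in braided tensor categories, as developed in \cite{creutzig2017tensor}. Throughout, the graphical calculus already set up above will do most of the work.

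For part (1), I would first show that the coequalizer defining $M \otimes_A N$ exists in $\CU$. Provided $\CU$ has reflexive coequalizers (a mild assumption that holds in the VOA settings of interest), this presents $M \otimes_A N$ as a quotient of $M \otimes_\CU N$. Next, I would equip $M \otimes_A N$ with an $A$-action, induced from the left action $a_M \otimes 1$ on $M \otimes_\CU N$, and check that this action factors through the coequalizer — this uses associativity of $a_M$ and of the ambient $\otimes_\CU$, together with the naturality of the braiding $c_{A,A}$. I would then exhibit the unit isomorphism $A \otimes_A M \xrightarrow{\sim} M$ via $a_M$, and induce associativity constraints for $\otimes_A$ from those of $\otimes_\CU$. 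Pentagon and triangle axioms pass to the quotient because the relevant diagrams commute already in $\CU$.

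For part (2), the key step is to show that $c_{M,N}: M \otimes_\CU N \to N \otimes_\CU M$ descends to a map $M \otimes_A N \to N \otimes_A M$ when $M, N \in A\lMod$. Concretely, one must verify that $c_{M,N}$ coequalizes the two arrows defining $M \otimes_A N$ post-composed with the projection $N \otimes_\CU M \twoheadrightarrow N \otimes_A M$. A short graphical computation, using naturality of the braiding to slide the $A$-strand from the right of $N$ to the left of $M$, together with locality of either $M$ or $N$ (to collapse one double braiding $c^2$ against an action), does the job. Once the descent is in place, the hexagon axioms and naturality for the induced braiding on $A\lMod$ follow immediately from those in $\CU$ by passing to quotients.

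For part (3), I would re-examine the descent argument of (2) and isolate exactly which strand's locality is invoked. The relation to be equalized is obtained by moving an $A$-strand past a single $N$-strand and landing it on $M$; in the braided rearrangement this $A$-strand crosses $N$ twice, so one application of $ac^2 = a$ on $N$ alone is enough, and $M$ never sees a double braiding. Hence the map $c_{M,N}$ still descends, giving a half-braiding $M \otimes_A N \to N \otimes_A M$ for every $M \in A\Mod$ and $N \in A\lMod$. The main technical point, which I expect to be the principal obstacle, is to be careful that the resulting descended map is still a morphism of $A$-modules with respect to the appropriate actions (in particular, that twisting $M$ to $M_c$ is not needed here, in contrast to related constructions where it would be). This is a clean graphical check, but the bookkeeping of which locality is used where is what makes the statement delicate.
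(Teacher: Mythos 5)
The paper does not actually prove this proposition---it is recalled verbatim from \cite[Theorems 2.53 and 2.55]{creutzig2017tensor}---and your sketch is exactly the standard coequalizer argument given there: existence of $\otimes_A$ with the induced $A$-action and constraints, descent of $c_{M,N}$ via naturality and the hexagon, and for part (3) the correct observation that the only double braiding to be collapsed lands on the local factor $N$ (one caveat: depending on which naturality move you apply first, the monodromy can appear to sit on the $(A,M)$ pair, and you must first invoke the defining relation of the coequalizer $N\otimes_A N$-side to transfer the $A$-action onto $N$ before cancelling it against $a_N\circ c^2=a_N$; this is precisely the bookkeeping you flag as delicate). No gap beyond that; your plan matches the cited proof.
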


The third point deserves a little more attention. We first recall the definition of Drinfeld center. See the book \cite[Section 7.13]{egno} for more details. 

\begin{Def}
Let $\CB$ be a tensor category. The \textbf{Drinfeld center} $\CZ(\CB)$ is the category whose objects are pairs $(M, c_{*,M})$ where $M\in\CB$ and $c_{*, M}$ is a natural isomorphism:
\be
c_{N, M}: N\otimes_\CB M\to M\otimes_\CB N
\ee
satisfying the hexagon axiom, which is easiest depicted using the following diagram:
\begin{center}
\begin{tikzpicture}
\pic[braid/strand 1/.style={red}, braid/strand 2/.style={yellow}, thick, name prefix=braid1] at (0,0) {braid={s_2^{-1}s_1^{-1}}};
\node[at=(braid1-1-s), anchor=south] {$X$};
\node[at=(braid1-2-s), anchor=south] {$Y$};
\node[at=(braid1-3-s), anchor=south] {$M$};
\node[at=(braid1-1-e), anchor=north] {$X$};
\node[at=(braid1-2-e), anchor=north] {$Y$};
\node[at=(braid1-3-e), anchor=north] {$M$};
\draw (3,-1.5) node[anchor=west, font=\Huge]{$=$};
\pic[braid/strand 1/.style={green}, thick, name prefix=braid2] at (4,0) {braid={1 s_1^{-1}}};
\node[at=(braid2-1-s), anchor=south] {$X\otimes Y$};
\node[at=(braid2-2-s), anchor=south] {$M$};
\node[at=(braid2-1-e), anchor=north] {$X\otimes Y$};
\node[at=(braid2-2-e), anchor=north] {$M$};
\end{tikzpicture}
\end{center}

Morphisms between $(M, c_{*, M})$ and $(N, c_{*, N})$ are morphisms $f: M\to N$ such that the following diagram commutes for all $P\in \CB$:
\be
\btik
P\otimes_\CB M\rar{c_{P,M}} \dar{1\otimes f}& M\otimes_\CB P\dar{f\otimes 1}\\
P\otimes_\CB N\rar{c_{P,N}} & N\otimes_\CB P
\etik
\ee

\end{Def}

The category $\CZ(\CB)$ is always a braided tensor category. 

\begin{Def}
Let $\CB$ be a tensor category and $\CC$ a braided tensor category. We say that $\CB$ is $\CC$-central if there is a faithful braided tensor functor $\CF: \overline{\CC}\to \CZ(\CB)$, where $\overline{\CC}$ is the category $\CC$ with opposite braiding (namely $c^{-1}$ instead of $c$). 

\end{Def}

\begin{Rem}
    The data of $\CF$ consist of a monoidal functor $\iota: \CC\to \CB$ and a half-braiding $c': M\otimes \iota N\to \iota N\otimes M$ for all $M\in \CB$ and $N\in \CC$, such that the restriction of $c'$ to $\iota N_1\otimes \iota N_2$ is $\iota c^{-1}$, where $c$ is the braiding on $\CC$. 
\end{Rem}

The third point of Proposition \ref{Prop:FSAMod} could be rephrased as follows: 

\begin{Prop}
    The tensor category $A\Mod$ is $A\lMod$-central. 
    
\end{Prop}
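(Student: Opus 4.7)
The plan is to construct an explicit faithful braided tensor functor $\CF\colon \overline{A\lMod}\to \CZ(A\Mod)$ whose underlying assignment on objects and morphisms is the inclusion of $A\lMod$ into $A\Mod$. For each local $A$-module $N$, I equip $N$ with a half-braiding $\gamma^N_M\colon M\otimes_A N\to N\otimes_A M$ for $M\in A\Mod$, defined by descending an appropriate braiding morphism of $\CU$ to the $A$-relative tensor product. Proposition \ref{Prop:FSAMod}(3) is precisely the statement that legalises this descent when only the second argument is local; a convention check, using the inverse braiding $c^{-1}_{N,M}$ rather than $c_{M,N}$, places the functor on $\overline{A\lMod}$ rather than on $A\lMod$.

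The first technical step is to check that $\gamma^N_{-}$ is a half-braiding in the sense of the Drinfeld center, \ie\ natural in $M$ and satisfying the two hexagon identities. Naturality in $M$ and functoriality in $N$ follow immediately from naturality of the braiding in $\CU$ combined with the universal property of the coequalizer definition of $\otimes_A$ recalled in Proposition \ref{Prop:FSAMod}(1). The two hexagon identities, together with the unit compatibility $\gamma^N_A=\mathrm{id}_N$, descend from the corresponding identities for $c^{-1}$ in $\CU$ through the same coequalizer, since none of the morphisms involved obstruct the passage to the quotient. A morphism $f\colon N\to N'$ in $A\lMod$ intertwines $\gamma^N_{-}$ with $\gamma^{N'}_{-}$ by naturality of $c$ in $\CU$, so $\CF$ is a well-defined functor $\overline{A\lMod}\to \CZ(A\Mod)$.

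Next I would promote $\CF$ to a braided tensor functor. Since $\CF$ acts as the identity on underlying $A$-modules and the tensor product in both $A\lMod$ and $\CZ(A\Mod)$ is the relative tensor product $\otimes_A$, there is a canonical tensor structure on $\CF$. For the braided condition, the braiding on $\overline{A\lMod}$ between local modules $M, N$ is the inverse of the braiding on $A\lMod$, which by Proposition \ref{Prop:FSAMod}(2) is the descent of $c_{M,N}$; this agrees with $\gamma^N_M$, the braiding of $\CZ(A\Mod)$ between $\CF(M)$ and $\CF(N)$, by the convention choice in the first step.

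Faithfulness is automatic: the forgetful functor $\CZ(A\Mod)\to A\Mod$ is faithful, and $\CF$ composed with it is the inclusion $A\lMod\hookrightarrow A\Mod$, which is fully faithful because $A\lMod$ is a full subcategory of $A\Mod$. The main obstacle in the whole plan is packed into the first step, namely well-definedness of the descent. Once Proposition \ref{Prop:FSAMod}(3) is invoked to guarantee that $c^{-1}_{N,M}$ descends to a morphism $M\otimes_A N\to N\otimes_A M$ in $A\Mod$ for every $A$-module $M$ and every local $A$-module $N$, all remaining naturality, hexagon, tensoriality and braided checks reduce to routine diagram chases that descend from the corresponding properties already holding in $\CU$.
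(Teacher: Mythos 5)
Your proposal is correct and is exactly the argument the paper intends: the paper gives no proof at all, presenting the proposition as a mere rephrasing of Proposition \ref{Prop:FSAMod}(3), and your write-up simply fleshes out that rephrasing (descent of the braiding with a local second argument gives the half-braiding, hence a faithful braided tensor functor $\overline{A\lMod}\to\CZ(A\Mod)$). The only point worth flagging is that Proposition \ref{Prop:FSAMod}(3) literally asserts the descent of $c_{M,N}$, while your $\overline{A\lMod}$ convention needs the descent of $c^{-1}_{N,M}$; this follows by the analogous coequalizer check using locality of $N$, as you indicate, so the gap is purely notational.
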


\subsubsection{Induction Functor and Properties}\label{subsubsec:Ind}

In this section, we review the relation between $\CU$ and $A\Mod$. The relation is through a pair of adjunctions. 
These functors are also covered in \cite[Section 2.7]{creutzig2017tensor} and in particular there it is shown that they respect the vertex algebra structure. 
\begin{Def}
    The forgetful functor $\mathrm{forget}_A: A\Mod\to \CU$ has a left adjoint, given by $\mathrm{Ind}_A$, which is defined on objects by:
    \be
\mathrm{Ind}_A(M):=A\FU M
    \ee
    such that the action morphism is given by the action of $A$ on itself. It is in fact a tensor functor (\cite[Theorem 1.6]{kirillov2002q}).
\end{Def}

A nice consequence of having such an adjoint pair is that every object $M\in A\Mod$ is the surjective image of $\mathrm{Ind}_A(\mathrm{forget}_A(M))$, where the morphism $\mathrm{Ind}_A(\mathrm{forget}_A(M))\to M$ is given by multiplication map:
\be
\mathrm{Ind}_A(\mathrm{forget}_A(M))\cong A\otimes_\CU \mathrm{forget}_A(M)\to M.
\ee
To simplify notations, we will mostly omit $\mathrm{forget}_A$. One way we will use this is to understand how much a module $M\in A\Mod$ fails to be local. Recall the module $M_c$; it is clear that $M$ is a local module if and only if the identity map $M_c\to M$ in $\CU$ is a morphism of $A\Mod$. Since $\mathrm{forget}_A(M_c)=\mathrm{forget}_A(M)$, we have the following lemma. 

\begin{Lem}
    There is an $A$ module homomorphism:
    \be
\btik
\Ind_A(M)\rar{a\oplus ac^{-2}} & M\oplus M_c
\etik
    \ee
    Moreover, $M$ is local if and only if $\mathrm{Im}(a\oplus ac^{-2})\subseteq \Delta (M)$, where $\Delta: M\to M\oplus M_c$ is the diagonal morphism (which is a morphism in $\CU$). 
\end{Lem}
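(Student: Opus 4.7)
My plan has two independent pieces: construct the two components of the map, and then translate the locality condition into the containment statement.

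First I would check the $A$-linearity of each of the two maps $a\colon \Ind_A(M)\to M$ and $ac^2\colon \Ind_A(M)\to M_c$. Both appear as instances of the same general fact: for any $A$-module $(N,a_N)$, the counit of the adjunction $(\Ind_A,\mathrm{forget}_A)$, namely $a_N\colon A\FU N\to N$, is by construction an $A$-module homomorphism (this is essentially the associativity diagram \eqref{eq:assocA} with one $A$-factor on the left unfolded). The first map $a\colon A\FU M\to M$ is precisely this counit for $(M,a)$. The second map $ac^2$ is, by definition of the $A$-module $M_c$ constructed in the previous lemma, the counit for $(M_c,ac^2)$. Hence $a\oplus ac^2\colon \Ind_A(M)\to M\oplus M_c$ is an $A$-module homomorphism, and this gives the first assertion.

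Next I would reduce the second assertion to the definition of locality. Recall that $M$ is local in $A\Mod$ exactly when the two morphisms
\[
a,\; ac^2\colon A\FU M \longrightarrow M
\]
agree as morphisms in $\CU$. Since $M$ and $M_c$ have the same underlying object of $\CU$, the diagonal subobject $\Delta(M)\subseteq M\oplus M_c$ is precisely the equalizer of the two projections $p_1,p_2\colon M\oplus M_c\to M$. Now $p_1\circ(a\oplus ac^2)=a$ and $p_2\circ(a\oplus ac^2)=ac^2$, so the image of $a\oplus ac^2$ is contained in $\Delta(M)$ if and only if $a=ac^2$ in $\CU$, which is exactly the locality condition on $M$.

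The only thing that could cause a hiccup is the bookkeeping of whether ``image contained in $\Delta(M)$'' should be interpreted at the level of $\CU$ or at the level of $A\Mod$. Since $\Delta\colon M\to M\oplus M_c$ and both projections $p_1,p_2$ are morphisms in $\CU$, and since equalizers in $A\Mod$ are computed in $\CU$, the two interpretations coincide; in particular, once we know $a=ac^2$ in $\CU$, the map $a\oplus ac^2$ factors through the sub-$A$-module $\Delta(M)\cong M$ of $M\oplus M_c$. Consequently the proof is essentially a two-line application of the adjunction (for the first half) and of the universal property of the equalizer (for the second half), with no genuine obstacle once the definitions of $M_c$ and of locality are put side by side.
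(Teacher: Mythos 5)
Your proof is correct and follows essentially the same route as the paper: $A$-linearity of $a$ and $ac^2$ comes from viewing each as the action (counit) map for the $A$-modules $(M,a)$ and $(M_c,ac^2)$ respectively, and the containment in the diagonal is just a restatement of the locality condition $a=ac^2$. The paper's own proof is a two-sentence version of exactly this argument.
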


\begin{proof}
    Since $\mathrm{forget}_A(M)\cong \mathrm{forget}_A(M_c)$, both $a$ and $ac^{-2}$ define $A$-module homomorphisms from $\Ind_A (M)$ to $M$ and $M_c$ respectively. $M$ is local precisely when $a=ac^{-2}$ and that is true when the image of $a\oplus ac^{-2}$ is contained in the diagonal. 
\end{proof}

\begin{Rem}
    Note that since $\mathbbm{1}$ embeds into $A$ and since we assumed that $\CU$ is rigid, as an object in $\CU$, $\Ind_A(M)$ contains $\mathbbm{1}\FU M$. Moreover, the following diagram commutes (as a diagram in $\CU$):
    \be
\btik 
\Ind_A(M)\rar{a\oplus ac^{-2}} & M\oplus M_c\\
\mathbbm{1}\FU M\uar\arrow[ur, swap, "\Delta"]
\etik
    \ee
    Therefore $\mathrm{Im}(a\oplus ac^{-2})$ as an object in $\CU$ always contains $\Delta (M)$. In fact one can show that $\mathrm{Im}(a\oplus ac^{-2})$ is the $A$-submodule of $M\oplus M_c$ generated by $\Delta (M)$.  
\end{Rem}

This leads to the following corollary:

\begin{Cor}
    The quotient $(M\oplus M_c)/(\mathrm{Im}(a\oplus ac^{-2}))$ is local. 
\end{Cor}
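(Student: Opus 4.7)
The plan is to verify directly that the induced action on $Q := (M\oplus M_c)/I$, where $I := \mathrm{Im}(a\oplus ac^2)$, satisfies the locality criterion $a_Q = a_Q\circ c^2_{A,Q}$ supplied by the previous lemma. Let $p:M\oplus M_c\twoheadrightarrow Q$ be the projection. Because $p$ is a surjective $A$-module map and $A\FU -$ is right exact, the map $\mathrm{id}_A\otimes p$ is surjective; hence locality of $Q$ is equivalent to the assertion that the morphism
\[
a_{M\oplus M_c} \,-\, a_{M\oplus M_c}\circ c^2_{A, M\oplus M_c}\ :\ A\FU(M\oplus M_c)\longrightarrow M\oplus M_c
\]
has image contained in $I$.

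Next I would split everything along the direct sum. The action on $M\oplus M_c$ is $a\oplus ac^2$, and the double braiding distributes over the decomposition, so evaluating the above difference on $x\otimes(m_1,m_2)$ gives the pair whose first coordinate is $a(x\otimes m_1) - ac^2(x\otimes m_1)$ and whose second is $ac^2(x\otimes m_2) - ac^4(x\otimes m_2)$. It then suffices to place each coordinate in the appropriate summand and show it lies in $I$. For this I would use two particularly useful inputs to $a\oplus ac^2$: the general input $x\otimes m$, which outputs $\bigl(a(x\otimes m), ac^2(x\otimes m)\bigr)$, and the unit input $\unit\otimes n$, which outputs $(n,n)$ because $c^2_{A,M}(\unit\otimes n) = \unit\otimes n$. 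Comparing these with $n := ac^2(x\otimes m_1)$ isolates the first coordinate as $\bigl(a(x\otimes m_1) - ac^2(x\otimes m_1), 0\bigr) \in I$; for the second coordinate the mirror argument works after replacing the input by $\xi := c^2_{A,M}(x\otimes m_2) \in A\FU M$ and using $ac^4(x\otimes m_2) = ac^2(\xi)$, so that $\bigl(0, ac^2(x\otimes m_2) - ac^4(x\otimes m_2)\bigr)$ is the difference $(a\oplus ac^2)(\xi) - (a\oplus ac^2)(\unit\otimes a(\xi))$.

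The main subtlety is performing this argument cleanly at the level of morphisms in $\CU$ rather than of elements, since we work in an abstract braided tensor category. Following the graphical calculus used in the preceding lemma, the above manipulations translate into a short diagrammatic computation built from naturality of the braiding, the definition of $M_c$, and the unit axioms of $A$. Once the two coordinate calculations are diagrammatically verified, both pieces factor through $I$, so their sum does as well, and locality of $Q$ follows from the reduction made in the first paragraph.
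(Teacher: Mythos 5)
Your argument is correct, but it is a genuinely different proof from the one in the paper. You verify locality of the quotient directly: writing $I=\mathrm{Im}(a\oplus ac^2)$, you show the ``locality defect'' $a_{M\oplus M_c}-a_{M\oplus M_c}c^2=(a-ac^2)\oplus(ac^2-ac^4)$ factors through $I$, by expressing each block as a difference of two morphisms that already land in $I$ --- one being $a\oplus ac^2$ itself (possibly precomposed with the isomorphism $c^2_{A,M}$, which does not change the image) and the other being $\Delta$ postcomposed onto $ac^2$, which lands in $\Delta(M)\subseteq I$ by the Remark preceding the corollary. All of this upgrades cleanly from elements to morphisms: you only need naturality of the braiding, triviality of the double braiding with $\mathbbm{1}$, right-exactness of $A\FU-$, and the fact that a difference of two morphisms factoring through a subobject again factors through it. (There is a harmless sign slip in your second block: $(a\oplus ac^2)(\xi)-(a\oplus ac^2)(\unit\otimes a(\xi))$ equals $(0,ac^4-ac^2)$ rather than its negative, but membership in the submodule $I$ is unaffected.) The paper instead sets $M_1=(M\oplus 0)\cap I$, identifies the quotient with $M/M_1$, and runs a three-row diagram of short exact sequences through the long exact sequence of cokernels to show $\overline{M}\cong\overline{M/M_1}$, concluding that the image of $a\oplus ac^2$ for $M/M_1$ is exactly the diagonal. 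Your route is more elementary and self-contained --- no snake lemma, no exactness of the rows under $A\FU-$ beyond right-exactness --- while the paper's route has the side benefit of exhibiting the canonical submodule $M_1\subseteq M$ with local quotient, which is what is actually used afterwards to build the filtration in Section 5.5; note, however, that the identification $(M\oplus M_c)/I\cong M/M_1$ already follows from $\Delta(M)\subseteq I$ alone, so your proof combined with that observation recovers everything the paper needs.
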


\begin{proof}
 Let $M_1 = (M\oplus 0) \cap (\mathrm{Im}(a\oplus ac^{-2}))$, then since $\mathrm{Im}(a\oplus ac^{-2})$ contains the diagonal, there is an isomorphism $M/M_1\cong (M \oplus M_c) / (\mathrm{Im}(a\oplus ac^{-2}))$. We have the following commutative diagram of short exact sequences:
    \be
\btik
0\rar & A\otimes_\CU M_1\rar\dar{a\oplus ac^{-2}} & A\otimes_\CU M\rar \dar{a\oplus ac^{-2}}&A\otimes_\CU M/M_1 \dar{a\oplus ac^{-2}}\rar & 0\\
0\rar & M_1\oplus M_{1, c}\rar \dar& M\oplus M_c\rar \dar & (M/M_1)\oplus (M/M_1)_c\rar \dar& 0\\
 & \overline M_1 & \overline M & \overline{M/M_1}
\etik
    \ee
    Here the overlined modules are the cokernels of the vertical arrows. The short exact sequences above induce a long exact sequence on the kernels and cokernels, and in particular we have an exact sequence:
    \be
    \btik
\overline{M}_1\rar & \overline{M}\rar & \overline{M/M_1}\rar & 0
    \etik
    \ee
  Now since $M_1\oplus M_{1, c}$ is contained in the image of $A\otimes_\CU M$ under $a\oplus ac^{-2}$ (since this image contains $M_1\oplus 0$ and $\Delta (M)$), the morphism $\overline M_1\to \overline M$ is zero, and therefore the morphism $\overline M\to \overline{M/M_1}$ is an isomorphism. Note that $\overline{M}=(M\oplus M_c)/\mathrm{Im}(a\oplus ac^{-2})=M/M_1$, this means that $M/M_1\cong \overline{M/M_1}$. Since the image of $a\oplus ac^{-2}$ contains at least the diagonal, and since objects in $\CU$ are of finite-length, this implies that the image of $A\otimes_\CU M/M_1$ under $a\oplus ac^{-2}$ is equal to the diagonal and $M/M_1$ is local. 
\end{proof}

The above shows that any object in $A\Mod$ has a canonical sub-object (the module $M_1$) such that $M/M_1$ is local. If one can show that $M_1$ is always strictly smaller than $M$, then one can construct a filtration for any object such that the associated graded is a local module. We will use this in Section \ref{sec:AModNMod}. 

Another consequence of the induction functor is the following. 

\begin{Prop}[\cite{davydov2013witt} Section 3.4]
    The tensor functor $\Ind_A: \CU\to A\Mod$ extends to a braided tensor functor:
    \be
\CS: \CU\longrightarrow \CZ\lp A\Mod\rp.  
    \ee
\end{Prop}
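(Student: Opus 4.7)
The plan is to promote the induction functor $\Ind_A:\CU\to A\Mod$ to a functor $\CS:\CU\to\CZ(A\Mod)$ by equipping each $\Ind_A(M)=A\FU M$ with a canonical half-braiding, and then to verify that the resulting functor is braided monoidal. For any $N\in A\Mod$, the standard computation of relative tensor products yields canonical $A$-module isomorphisms $N\otimes_A\Ind_A(M)\cong N\FU M$ and $\Ind_A(M)\otimes_A N\cong M\FU N$, where on both sides the $A$-module structure is inherited from the action on $N$ (using commutativity of $A$ on the second identification). Under these identifications, I would define the half-braiding
\[
c_{N,\Ind_A(M)}:N\otimes_A\Ind_A(M)\longrightarrow\Ind_A(M)\otimes_A N
\]
to be the underlying braiding $c^\CU_{N,M}:N\FU M\to M\FU N$ in $\CU$.

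First I would verify that $c_{N,\Ind_A(M)}$ is in fact a morphism in $A\Mod$. The $A$-action on the source is $a_N\otimes 1$, while the $A$-action on the target is $(1\otimes a_N)\circ (c^\CU_{A,M}\otimes 1)$; intertwining these by $c^\CU_{N,M}$ is exactly a naturality square for the braiding in $\CU$, essentially the same diagrammatic manipulation used earlier for the endofunctor $M\mapsto M_c$. Second, I would check that the collection $\{c_{N,\Ind_A(M)}\}_{N\in A\Mod}$ satisfies the hexagon axiom defining an object of $\CZ(A\Mod)$: using the coequalizer presentation of $\otimes_A$ from Proposition \ref{Prop:FSAMod}, this reduces to the hexagon axiom for $c^\CU$ applied to $N_1\FU N_2\FU M$, and the descent through the coequalizer is guaranteed by commutativity of $A$ and naturality of $c^\CU$. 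At this point $\CS(M):=(\Ind_A(M),c_{-,\Ind_A(M)})$ is a well-defined object of $\CZ(A\Mod)$.

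Finally I would verify the braided tensor functor structure. The tensor structure on $\Ind_A$ exists because $\Ind_A$ is already a tensor functor from $\CU$ to $A\Mod$ (as noted in Section \ref{subsubsec:Ind}), and one must check that the lax structure morphisms $\Ind_A(M)\otimes_A\Ind_A(M')\to\Ind_A(M\FU M')$ intertwine the half-braidings constructed above; this is another direct consequence of the hexagon axiom for $c^\CU$. The braided property amounts to the identity $c^{\CZ}_{\CS(M'),\CS(M)}=\CS(c^\CU_{M',M})$ in $\CZ(A\Mod)$, which, after unravelling the definition of the half-braiding on $\Ind_A(M)$ and the identifications $\Ind_A(M')\otimes_A\Ind_A(M)\cong M'\FU M\FU A$, becomes tautological.

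The main obstacle is the careful diagrammatic bookkeeping required to check that the candidate half-braiding descends through the coequalizer defining $\otimes_A$, and that the descent still satisfies the hexagon axiom. This is exactly where commutativity of $A$ is used, and the argument is most transparent in string-diagram calculus, parallel to the manipulations carried out earlier in this section; it is essentially a categorification of the classical fact that induction from a commutative algebra in a braided category lands in the Drinfeld center.
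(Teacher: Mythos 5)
Your construction is essentially the one the paper sketches (following \cite{CLR23}): the half-braiding $b_{N,M}$ displayed in the paper, namely the descent of $(\mathrm{Id}_A\otimes c_{N,M})\circ(c^{-1}_{A,N}\otimes \mathrm{id}_M)$ through the quotient maps $\eta$, coincides with your transported $c^{\CU}_{N,M}$ under the canonical identifications $N\otimes_A\Ind_A(M)\cong N\FU M$ and $\Ind_A(M)\otimes_A N\cong M\FU N$, since braiding with the unit of $A$ is trivial. The remaining verifications you list (that the candidate is an $A$-module map, descends through the coequalizer, and satisfies the hexagon and monoidal compatibilities) are exactly the diagrammatic checks the paper defers to \cite{CLR23}, so your proposal is correct and follows the same route.
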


We will not go into detail, but there are natural transformations $b_{*, M}$ fitting into the following diagram:
\be
\btik
N\otimes_\CU (A\FU M) \rar{\eta} \dar & N\otimes_A \Ind_A (M)\arrow[ddddd, "b_{*, M}"]\\
(N\FU A)\FU M \dar{c^{-1}_{A,N}\otimes \mathrm{id}}& \\
(A\FU N)\FU M\dar & \\
A\FU (N\FU M)\dar{\mathrm{Id}\otimes c_{N,M}} & \\
A\FU (M\FU N)\dar & \\
(A\FU M)\FU N\rar{\eta} & \Ind_A(M)\otimes_A N
\etik
\ee
In the above diagram, the arrows without labels represent associativity, the morphism $\eta$ is the natural quotient map. The functor $\Ind_A$ together with the natural transformation $b_{*, M}$ defines the functor from $\CU$ to $\CZ(A\Mod)$. It turns out that the functor $\CS$ not only maps $\CU$ to the center of $A\Mod$, but in fact to the commutant of $A\lMod$ in $\CZ(A\Mod)$. We turn to this in the following section. 

\subsection{Relative Drinfeld Center from Commutative Algebra Object}

We start with the following definition. 

\begin{Def}
Let $\CB$ be $\CC$-central tensor category. The \textbf{$\CC$-relative Drinfeld center} of $\CB$ is defined as the centralizer of $\CC$ in $\CZ(\CB)$:
\be
\CZ_{\CC}(\CB):=\{X\in \CZ(\CB)\Big\vert c_{\CF(M), X}\circ c_{X, \CF(M)}=\mathrm{Id},~\forall M\in \CC\}. 
\ee

\end{Def}

The proof of Theorem 3.7 of \cite{CLR23} can be stated as follows. 

\begin{Prop}\label{Prop:SrelativeC}
    The functor $\CS:\CU \to \CZ (A\Mod)$ maps to the relative center
    \be
\CS: \CU\to \CZ_{A\lMod}(A\Mod).
    \ee
\end{Prop}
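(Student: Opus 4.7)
The plan is to show that the half-braiding $b_{*,M}$ on $\CS(M) = \Ind_A(M)$ satisfies the defining property of the relative center with respect to $A\lMod$: namely, for every $N \in A\lMod$, the double braiding of $\CS(M)$ and $\CF(N)$ in $\CZ(A\Mod)$ is the identity on $\Ind_A(M) \otimes_A N$.

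The first step is to write the double braiding explicitly as
\[
\Ind_A(M) \otimes_A N \xrightarrow{\; c^A_{\Ind_A(M),N}\;} N \otimes_A \Ind_A(M) \xrightarrow{\; b_{N,M}\;} \Ind_A(M) \otimes_A N,
\]
where $c^A_{\Ind_A(M),N}$ is the braiding from Proposition \ref{Prop:FSAMod}(3), available because $N$ is local, and $b_{N,M}$ is the half-braiding of $\CS(M)$. Both factors lift to morphisms in $\CU$: $b_{N,M}$ lifts, by its construction, to $(1_A \otimes c_{N,M}) \circ (c^{-1}_{A,N} \otimes 1_M)$, while $c^A_{\Ind_A(M),N}$ lifts via the hexagon axiom to $(c_{A,N} \otimes 1_M) \circ (1_A \otimes c_{M,N})$. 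Composing these lifts, the inner $A$--$N$ braidings $c^{-1}_{A,N} \circ c_{A,N}$ cancel, so the lifted double braiding collapses to $1_A \otimes (c_{N,M} \circ c_{M,N})$ on $(A \FU M) \otimes_\CU N$.

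The final step is to descend this residual monodromy to the coequalizer $\Ind_A(M) \otimes_A N$. The decisive input is the locality condition $a_N \circ c_{N,A} \circ c_{A,N} = a_N$ on $N$: combined with the free $A$-module structure of $\Ind_A(M)$ and the coequalizer relation defining $\otimes_A$, it allows the monodromy of $M$ and $N$ to be absorbed into the $A$-action, producing the identity on $\Ind_A(M) \otimes_A N$. The argument is diagrammatic and parallels the graphical computation in \cite[Theorem 3.7]{CLR23}, which transports directly to our setting.

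The main obstacle is precisely this last descent step. The cancellations in $\CU$ are routine algebra, but verifying that the residual monodromy trivializes after passing to $\otimes_A$ requires a careful string-diagram manipulation that intertwines the coequalizer relation with the locality of $N$; this is where the design of $b_{N,M}$ (using $c^{-1}_{A,N}$ rather than $c_{N,A}$) is essential, as it is precisely this choice that allows the locality of $N$ to trivialize the offending monodromy in the quotient.
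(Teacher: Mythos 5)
The paper does not actually prove this proposition --- it is quoted as a restatement of \cite[Theorem 3.7]{CLR23} --- so I am judging your argument on its own terms. Your overall strategy (lift the double braiding to $\CU$, cancel, descend to the relative tensor product) is the right one, but the computation contains an error that is fatal as written. You take the half-braiding of $\CF(N)$ evaluated on $\Ind_A(M)$ to be the descent of the \emph{positive} braiding $c_{A\FU M,N}=(c_{A,N}\otimes 1_M)\circ(1_A\otimes c_{M,N})$. With that choice the $A$--$N$ braidings do cancel against the $c^{-1}_{A,N}$ in $b_{N,M}$, and the lift of the double braiding is $1_A\otimes(c_{N,M}\circ c_{M,N})$, exactly as you say. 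But this does \emph{not} descend to the identity. Under the projection-formula isomorphism $\Ind_A(M)\otimes_A N\cong M\FU N$ (with inverse induced by the unit $\unit\to A$), the endomorphism $1_A\otimes(c_{N,M}c_{M,N})$ descends precisely to the monodromy $c_{N,M}\circ c_{M,N}$ of $M$ and $N$ in $\CU$, which is nontrivial in general --- e.g.\ for $M$ a simple current and $N$ a Fock module it is a nontrivial scalar by Corollary \ref{Cor:monsimple}. Locality of $N$ cannot rescue this: the condition $a_N\circ c_{N,A}\circ c_{A,N}=a_N$ controls only the monodromy of $A$ with $N$ and has no bearing on the monodromy of $M$ with $N$, which is exactly the quantity the relative-center condition is designed to detect.

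The fix lies in the convention for the central structure of $A\lMod$ in $A\Mod$: per the paper's definition of $\CC$-centrality, the central functor is from $\overline{A\lMod}$, so the half-braiding of $\CF(N)$ on $\Ind_A(M)$ is the descent of the \emph{inverse} braiding $c^{-1}_{N,A\FU M}=(c^{-1}_{N,A}\otimes 1_M)\circ(1_A\otimes c^{-1}_{N,M})$. With this choice the $M$--$N$ braidings cancel (the $c_{N,M}$ inside $b_{N,M}$ against the $c^{-1}_{N,M}$ here), and the residual term is the conjugate, by $1_A\otimes c_{N,M}$, of the inverse $A$--$N$ monodromy $(c_{N,A}c_{A,N})^{-1}\otimes 1_M$. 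It is \emph{this} term that trivializes on $\Ind_A(M)\otimes_A N$: precomposing with the unit section $\unit\FU N\to A\FU N$ kills the $A$--$N$ monodromy, and locality of $N$ is what makes the half-braiding well defined on the coequalizer in the first place. So the skeleton of your argument is salvageable, but the residual you need to kill is the $A$--$N$ monodromy, not the $M$--$N$ monodromy; as written, your final descent step asserts something false.
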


Moreover, under certain conditions, the above functor is fully-faithful. The most favorable conditions applicable to us are contained in a recent work \cite{creutzig2024commutative}, which we recall below. 

\begin{Asp}\label{AspCat}

${}$

\begin{enumerate} 

\item The categories $\CU$, $A\Mod$ and $A\lMod$ are all rigid. 

\item The identity element is simple in $\CU$ such that $\mathbbm{1}\to A$ is an embedding. 

\item For any subobject $S\subseteq A$ in $\CU$ not contained in $\mathbbm{1}$, there exists an object $Z\in \CU$ such that $c_{S, Z}\circ c_{Z, S}\ne 1$. 

\end{enumerate}

\end{Asp}

\begin{Rem}

The assumptions of \textit{op.cit.} are weaker than the above. Particularly, there one does not assume that $\CU$ is rigid, but only a braided $r$-category. However, one then needs a further assumption relating the Grothendieck-Verdier dual of an object to the rigid dual of its induction to $A\Mod$. This is trivially satisfied when $\CU$ is rigid. 

\end{Rem}

As stated, the following is true.

\begin{Thm}[\cite{creutzig2024commutative} Theorem 3.20]
    Under the assumptions in Assumption \ref{AspCat}, the functor $\CS$ in Proposition \ref{Prop:SrelativeC} is fully-faithful. 
\end{Thm}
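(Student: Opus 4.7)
The plan is to establish full-faithfulness by combining the induction-restriction adjunction with the hexagon axiom and the centralizer condition, following the strategy of \cite{CLR23}.

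First, I would use the adjunction $(\Ind_A, \mathrm{forget}_A)$ to identify
\be
\Hom_{A\Mod}(\Ind_A(M), \Ind_A(N)) \cong \Hom_\CU(M, A \FU N),
\ee
and note that the unit $\iota: \mathbbm{1} \hookrightarrow A$ is a split monomorphism, since $A$ is haploid and $\CU$ is rigid (so $\Hom(A, \mathbbm{1}) = \Hom(\mathbbm{1}, A^\vee) \ne 0$). Under the adjunction, the subspace $\Hom_\CU(M, N) \hookrightarrow \Hom_\CU(M, A \FU N)$ induced by $\iota$ corresponds exactly to the morphisms of the form $\Ind_A(f)$. Faithfulness of $\CS$ is then immediate: composing $\Ind_A(f) = \id_A \FU f$ with $\iota \FU \id_M$ recovers $f$, so $\CS(f) = 0$ forces $f = 0$.

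For fullness, given $F: \CS(M) \to \CS(N)$ in $\CZ_{A\lMod}(A\Mod)$, I would analyse its adjoint $\tilde F: M \to A \FU N$. The constraints on $F$ are of two kinds: the hexagon axiom
\be
b_{K, N} \circ (F \otimes_A \id_K) = (\id_K \otimes_A F) \circ b_{K, M}
\ee
for every $K \in A\Mod$, and the centralizer condition $c_{L, \CS(N)} \circ c_{\CS(N), L} = \id$ for all $L \in A\lMod$. Because every object of $A\Mod$ is a quotient of an induced object (by the counit $A \FU X \to X$), naturality reduces the hexagon test to $K = \Ind_A(\tilde K)$ for $\tilde K \in \CU$; and because $A$ is commutative, $A$ itself is a local $A$-module and so $L = A$ is a natural test object for the centralizer condition.

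The crucial step, and the principal obstacle, is converting these two conditions into a factorisation $\tilde F = (\iota \FU \id_N) \circ f$ for a unique $f \in \Hom_\CU(M, N)$. Using the explicit formula for $b_{*, -}$ from Section~\ref{subsubsec:Ind}, the hexagon applied to $K = \Ind_A(\tilde K)$ rewrites $\tilde F$ in terms of braidings that move the $A$-leg of $A \FU N$ across $\tilde K$; the centralizer condition at $L = A$ then trivialises the resulting double braiding on that $A$-leg. Combined with haploidness of $A$, this forces $\tilde F$ to land in the image of $\iota \FU \id_N$, producing the desired $f$. Rigidity of $\CU$, $A\Mod$ and $A\lMod$ is used throughout to pass freely to duals during this manipulation, and the assumption that $\CU\boxtimes\CU$ is generated by external products justifies testing naturality only on induced objects. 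Once the factorisation is in hand, the identity $\CS(f) = F$ follows by direct comparison with the construction of $b_{*, M}$, completing the proof.
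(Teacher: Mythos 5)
Your faithfulness argument is sound in substance: $\iota\FU\id_N$ is monic, which already follows from the unit axiom of $A$ together with exactness of $\FU$ in a rigid category (the stronger claim that $\iota$ is \emph{split} monic does not follow from haploidness alone, since $\Hom(\mathbbm{1},A^\vee)\neq 0$ is not automatic; but you only need monicity). The problem is fullness, where you explicitly defer ``the crucial step'' and the mechanism you sketch for it would not work. First, the relative centralizer condition defining $\CZ_{A\lMod}(A\Mod)$ is a condition on \emph{objects}: the relative center is a full subcategory of $\CZ(A\Mod)$, so ``testing the centralizer condition at $L=A$'' imposes no constraint on the morphism $F$ beyond the Drinfeld-center condition you already have. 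Second, haploidness of $A$ cannot by itself force $\wt{F}\colon M\to A\FU N$ to factor through $\mathbbm{1}\FU N$: the object $A$ has many simple constituents besides $\mathbbm{1}$ (in the examples of this paper, the simple currents $\sigma_{\lambda,\mu}V(\grho)$), and $\Hom_\CU(M,A'\FU N)$ is generically nonzero for such constituents $A'$.

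The missing ingredient is the trivial M\"uger center of $\CU$, which appears in Assumption \ref{AspCat} precisely for this step and which your proposal never invokes. The correct mechanism is to test the center condition against $K=\Ind_A(X)$ for $X\in\CU$ and compare it with the \emph{naturality} of the genuine braiding $c$ of $\CU$ applied to $\wt{F}$: the half-braiding $b_{*,M}$ braids the $A$-leg of $\Ind_A(M)=A\FU M$ via $c^{-1}_{A,X}$, while naturality of $c$ braids it via $c_{X,A}$. Cancelling the invertible factors acting on the $M$- and $N$-legs yields
\be
\lp (c_{A,X}\circ c_{X,A})\FU\id_N\rp\circ\lp\id_X\FU\wt{F}\rp=\id_X\FU\wt{F}
\ee
for every $X\in\CU$, so the smallest subobject $A'\subseteq A$ with $\wt{F}$ factoring through $A'\FU N$ has trivial monodromy with every object of $\CU$. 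Triviality of the M\"uger center then forces $A'$ to be a direct sum of copies of $\mathbbm{1}$, and haploidness gives $A'\subseteq\iota(\mathbbm{1})$, producing the factorization $\wt{F}=(\iota\FU\id_N)\circ f$. Without the M\"uger-center input the statement is simply false (take $\CU$ symmetric), so any proof that never uses it, as yours does not, cannot be complete.
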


It was also shown in \cite{CLR23} that when the category $\CU$ has enough projectives and is finite, then $\CS$ is an equivalence of abelian categories. Unfortunately the categories we consider in this paper (namely $KL_\rho$) do not have projective objects nor are finite. Nevertheless, this functor allows us to relate the original category $\CU$, which is the category of modules of $V$, to a relative Drinfeld center. The next step in seeking quantum group corresponding to $V$ is to look for quantum groups corresponding to $A\lMod$ and $A\Mod$. We will see in next section that in the special case at hand, there are natural (quasi-triangular) Hopf algebra candidates. Before we move on, we will show that the category $KL_\rho$ satisfies Assumption \ref{AspCat}, except for the rigidity of $A\Mod$, which will follow from the discussions in the next section. We specify $\CU=KL_\rho$, and denote by $A$ the algebra object defined by the free-field VOA $\CV_Z$. 

\begin{Prop}\label{Prop:aspKL}
    The category $KL_\rho$ is rigid and locally finite. Moreover, the algebra $A$ satisfies the third point of Assumption \ref{AspCat}. Finally, $A\lMod$ is rigid. 
\end{Prop}

\begin{proof}
  
  Proposition \ref{Prop:equivab} shows that $KL_\rho$ is locally finite. By \cite[Theorem 4.2.3]{creutzig2020tensor}, we see that $KL_1$ is rigid. We claim that this implies that $KL_{1^n}$ is rigid, where $KL_{1^n}$ is the Kazhdan-Lusztig category for $V(\fgl(1|1))^{\otimes n}$. Indeed, from Proposition  \ref{Prop:equivab} and \cite[Theorem 3.10.2]{etingof2011introduction}, one can show that simple objects of $KL_{1^n}$ are tensor products of $n$ simple objects in $KL_1$. Since each individual factor is rigid, we just need to show that fusion products in $KL_{1^n}$ distribute over tensor factors, namely
  \be
(\bigotimes_i M_i)\bigotimes_{KL_{1^n}}(\bigotimes_i N_i)=\bigotimes_i (M_i\bigotimes_{KL_1}N_i).
  \ee
  This can be proven, either by repeating the calculations of \cite[Section 3.2]{creutzig2020tensor} (for simple objects), or by repeating the arguments of \cite[Section 4.1]{mcrae2023deligne} for affine Lie superalgebras (in greater generality). As a consequence, simple objects in $KL_{1^n}$ are rigid. Now from \cite[Thm 4.4.1]{Creutzig:singlet2}, we see that $KL_{1^n}$ is rigid. To prove the rigidity of $KL_\rho$, we use \cite[Theorem 8.25]{BCDN}, which shows that $KL_\rho$ is a de-equivariantization of an appropriate subcategory of $KL_{1^n}$, and therefore is rigid.\footnote{The reference \cite{BCDN} was slightly incorrect in using the Deligne product $KL^{\boxtimes n}$, since this category is not the same as $KL_{1^n}$. However, the arguments were correct after using the correct category.} Clearly, $A\lMod$ is rigid. 
    
We are left to check the third point of Assumption \ref{AspCat}. By Corollary \ref{Cor:monsimple}, the monodromy:
\be
\sigma_{\lambda,\mu} V(\grho)\otimes_{KL_\rho} S\longrightarrow \sigma_{\lambda,\mu} V(\grho)\otimes_{KL_\rho} S
\ee
is given by $\mathrm{Id}\otimes \exp (2\pi i (\sum \lambda_aN_0^a+(\mu-\rho^\trans \rho \lambda)_aE_0^a))$. Therefore for $S$ to have trivial monodromy with $\sigma_{\lambda,\mu} V(\grho)$ for arbitrary $\lambda, \mu$, any simple composition factor of $S$ has to be $V(\grho)$ or $\Pi V(\grho)$, its parity shift. Since any extension between $V(\grho)$ with itself will have a non-semisimple action of $N_0^a$ or $E_0^a$, if such $S$ exists, it has to be a direct sum of $V(\grho)$ or $\Pi V(\grho)$. However, by the proof of \cite[Theorem 8.25]{BCDN}, the socle of $A$ is $V(\grho)$. This proves that $S$ can't have non-trivial monodromy with arbitrary $Z$ unless $S\subseteq V(\grho)$. The proof is now complete. 
    
\end{proof}

In the next section, we introduce the Hopf algebra candidates corresponding to $A\lMod$ and $A\Mod$, which lead to a quasi-triangular Hopf algebra candidate for $\CZ_{A\lMod}(A\Mod)$.

\section{Yetter-Drinfeld Modules and Quantum Groups}\label{sec:YDQG}

In this section, we present relevant technical details behind the second part of the proof strategy, relating $A\Mod$ to modules of a Nichols algebra.

\subsection{Modules of Nichols Algebra}\label{subsec:NicholsMod}

\subsubsection{Hopf algebra objects in braided tensor categories}

Let $\CC$ be a braided tensor category. A bialgebra $\mf N$ in $\CC$ is a unital algebra in $\CC$ with multiplication $m: \mf N\otimes_\CC \mf N\to \mf N$ and a counital coalgebra with comultiplication $\Delta:  \mf N\to  \mf N\otimes_\CC  \mf N$, satisfying the same diagrammatic constraints as an ordinary bialgebra. The only difference is the compatibility between multiplication and comultiplication
\be
\btik
 \mf N\otimes_\CC  \mf N\rar{m}\dar{\Delta\otimes \Delta} &  \mf N\dar{\Delta}\\
 \mf N\otimes_\CC  \mf N\otimes_\CC  \mf N\otimes_\CC  \mf N~~\rar{(m\otimes m)\circ c_{23}}& ~~  \mf N\otimes_\CC  \mf N
\etik
\ee
where a braiding $c_{23}$ is the used to commute the second and third factors of $\mf N$. A Hopf algebra object is a bialgebra object with an antipode $S:  \mf N\to  \mf N$ satisfying the same diagram as an ordinary Hopf algebra. See \cite{majid2023algebras} for details. 

Given a Hopf algebra object $ \mf N$, the category $ \mf N\Mod (\CC)$ has the structure of a monoidal category, where the tensor product of two objects is their tensor product in $\CC$, and the $ \mf N$-module structure is provided by $\Delta$. Note that this is different from the monoidal structure on the category of modules of a commutative algebra object. The monoidal category $\mf N\Mod$ (we also omit $\CC$ if no confusion may occur) is $\CC$-central, where the map $\CC\to \mf N\Mod$ is given by the counit $\epsilon: \mf N\to \mathbbm{1}$ and the central structure is provided by the inverse braiding of $\CC$. The category $\CZ_\CC (\mf N\Mod)$ is known to be the category of Yetter-Drinfeld modules, which we recall now. 

\begin{Def}[\cite{egno} Section 7.15]
An $\mf N-\mf N$-\textit{Yetter-Drinfeld} module $X$ in $\CC$ is an object in $\CC$ that is both a module of $\mf N$ and a co-module of $\mf N$, namely multiplication map $m_X: \mf N\otimes X\to X$ and co-multiplication map $\delta_X: X\to \mf N\otimes X$, such that the following compatibility conditions hold:

\begin{center}
                \begin{grform}
                        \begin{scope}[scale = 0.5]
                                \dMult{0.5}{1.5}{1}{-1.5}{\grau}
                                \dMult{0.5}{2.5}{1}{1.5}{\grau}
                                \dAction{2.5}{1.5}{0.5}{-1.5}{\grau}{black}
                                \dAction{2.5}{2.5}{0.5}{1.5}{\grau}{black}
                                \vLine{0.5}{1.5}{0.5}{2.5}{\grau}
                                \vLine{2.5}{1.5}{1.5}{2.5}{\grau}
                                \vLineO{1.5}{1.5}{2.5}{2.5}{\grau}
                                \vLine{3}{1.5}{3}{2.5}{black}
                        \end{scope}
                        \draw (0.5 , -0.3) node {$\mf N$};
                        \draw (0.5 , 2.3) node {$\mf N$};
                        \draw (1.5 , -0.3) node {$X$};
                        \draw (1.5 , 2.3) node {$X$};
                \end{grform}
                %%%%%%%%%
                =
                %%%%%%%%%
                \begin{grform}
                        \begin{scope}[scale = 0.5]
                                \vLine{2}{0}{1}{1}{black}
                                \dMultO{0}{1}{1.5}{-1}{\grau}
                                \dAction{0}{1}{1}{1}{\grau}{black}
                                \dAction{0}{3}{1}{-1}{\grau}{black}
                                \dMult{0}{3}{1.5}{1}{\grau}
                                \vLineO{2}{4}{1}{3}{black}
                                \vLine{1.5}{1}{1.5}{3}{\grau}
                        \end{scope}
                        \draw (0.45 , -0.3) node {$\mf N$};
                        \draw (0.45 , 2.3) node {$\mf N$};
                        \draw (1 , -0.3) node {$X$};
                        \draw (1 , 2.3) node {$X$};
                \end{grform}
\end{center}

\end{Def}

All $\mf N-\mf N$-Yetter-Drinfeld modules form a category which we denote by ${}_{\mf N}^{\mf N}\CY\CD\lp\CC\rp$, whose morphisms between objects are given by $\mf N$ module and co-module morphisms. It is a tensor category with tensor product given by the usual module and co-module tensor product. This category admits a braiding given by:
\be\label{eq:braidYD}
c_{(X,m_X, \delta_X),(Y, m_Y,\delta_Y)}=\lp m_Y\otimes \mathrm{id}_X\rp\circ \lp \mathrm{id}_{\mf N}\otimes c_{X,Y}\rp\circ\lp \delta_X\otimes \mathrm{id}_Y\rp
\ee
between objects $(X,m_X, \delta_X)$ and $(Y, m_Y,\delta_Y)$. The braiding is invertible if $\mf N$ has a bijective antipode. If $\CC$ is rigid, then ${}_{\mf N}^{\mf N}\CY\CD\lp\CC\rp$ is rigid, where the dual of an object is its dual in $\CC$ with action and co-action induced from the antipode of $\mf N$. This is summarized in the following statement, proven in \cite{bespalov1995crossed}:

\begin{Thm}
The category ${}_{\mf N}^{\mf N}\CY\CD\lp\CC\rp$ is a braided monoidal category. If $\CC$ is rigid, then so is ${}_{\mf N}^{\mf N}\CY\CD\lp\CC\rp$. 

\end{Thm}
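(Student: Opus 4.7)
The plan is to verify the monoidal, braided, and (when applicable) rigid structures directly from the Hopf algebra axioms of $\mf N$ and the braided tensor structure of $\CC$, following the strategy of Bespalov but formulated entirely via graphical calculus. I would first spell out the candidate structures: on a tensor product $X\otimes Y$ of Yetter–Drinfeld modules, the $\mf N$-action is built from $\Delta_{\mf N}$ together with the braiding $c_{\mf N,X}$ to pass the second tensor factor of $\Delta$ past $X$, giving
\[
m_{X\otimes Y}=(m_X\otimes m_Y)\circ(\mathrm{id}_{\mf N}\otimes c_{\mf N,X}\otimes\mathrm{id}_Y)\circ(\Delta\otimes\mathrm{id}_{X\otimes Y}),
\]
and dually the coaction is
\[
\delta_{X\otimes Y}=(m_{\mf N}\otimes\mathrm{id}_{X\otimes Y})\circ(\mathrm{id}_{\mf N}\otimes c_{X,\mf N}\otimes\mathrm{id}_Y)\circ(\delta_X\otimes\delta_Y).
\]
The unit is $\mathbbm{1}$ with trivial action (via $\epsilon$) and trivial coaction (via the unit of $\mf N$). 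The associators and unitors are those of $\CC$; that these are morphisms of modules and comodules is immediate from naturality.

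Next I would verify the Yetter–Drinfeld compatibility for $X\otimes Y$. This is the main bookkeeping step and is best done in graphical calculus: one writes both sides of the YD pentagon for $X\otimes Y$ as diagrams, and uses (i) coassociativity and the fact that $\Delta$ is an algebra map to split the two copies of $\Delta$, (ii) the YD compatibilities for $X$ and $Y$ individually to ``move'' the multiplications past the coactions, and (iii) naturality and the hexagon axioms of $c$ in $\CC$ to reroute the strands. The same type of argument shows that the formula \eqref{eq:braidYD} for $c_{X,Y}$ on YD modules is a morphism of YD modules (commutes with the combined action and coaction on both sides) and that the hexagon axioms for this braiding reduce to a combination of the hexagon axioms for $c$ in $\CC$ plus the module/comodule axioms for $X,Y$. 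Invertibility of $c$ on ${}_{\mf N}^{\mf N}\CY\CD(\CC)$ uses the antipode $S$ of $\mf N$ to produce an explicit inverse
\[
c^{-1}_{X,Y}=(\mathrm{id}_Y\otimes m_X)\circ(c^{-1}_{X,Y\otimes\mf N}\otimes\mathrm{id}_X)\circ(\mathrm{id}_X\otimes\mathrm{id}_Y\otimes S)\circ(\mathrm{id}_X\otimes\delta_Y)^{-1}\text{-type formula},
\]
which one checks against the defining formula using the antipode axiom $m\circ(S\otimes\mathrm{id})\circ\Delta=u\circ\epsilon$.

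For rigidity, assume $\CC$ is rigid and let $(X,m_X,\delta_X)$ be a YD module. I would equip the dual $X^*$ in $\CC$ with the action and coaction induced by dualising $m_X$ and $\delta_X$ and precomposing/postcomposing with $S$, in the standard way: the $\mf N$-action on $X^*$ is the transpose of the action of $S(\,\cdot\,)$ on $X$, and the coaction is the transpose of $\delta_X$ followed by $S\otimes\mathrm{id}$ on the $\mf N$-factor. The evaluation and coevaluation of $X$ in $\CC$ are then morphisms of YD modules; verifying this is again an antipode computation combined with naturality of the braiding, and the snake identities in $\CC$ immediately give the snake identities in ${}_{\mf N}^{\mf N}\CY\CD(\CC)$.

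The main obstacle is purely combinatorial: keeping the over- and under-crossings of the braiding $c$ in $\CC$ straight when verifying the YD compatibility on tensor products and the hexagon for $c_{X,Y}$ on ${}_{\mf N}^{\mf N}\CY\CD(\CC)$, since in a genuinely braided (not symmetric) $\CC$ one cannot freely swap strands. I would therefore carry out these verifications diagrammatically, and invoke naturality and the Hopf axioms as rewriting rules, exactly in the style already adopted in the preceding subsection. No genuinely new idea beyond Bespalov's original argument is needed; the statement is stated here for the reader's convenience and to fix the formula \eqref{eq:braidYD} for later use.
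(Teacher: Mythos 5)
The paper does not prove this statement itself; it simply cites Bespalov (\cite{bespalov1995crossed}), and your outline is a faithful reconstruction of that standard argument: the same module/comodule structures on tensor products, the same braiding formula \eqref{eq:braidYD}, and duals twisted by the antipode. The one point worth tightening is invertibility of the braiding: as the paper notes just before the theorem, this requires the antipode of $\mf N$ to be bijective (automatic when $\CC$ is rigid, an extra hypothesis otherwise), and your ``$c^{-1}$-type formula'' should be written out carefully with $S^{-1}$ where needed rather than left schematic; otherwise the proposal is correct and matches the cited approach.
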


The relation between Yetter-Drinfeld modules and the relative center is given by the following:

\begin{Prop}[\cite{laugwitz2020relative} Proposition 3.36]\label{Prop:centerYD}
Let $\CB:=\mf N\Mod\lp\CC\rp$, then there is an equivalence of braided tensor categories:
\be
\CZ_{\CC}\lp\CB\rp\simeq {}_{\mf N}^{\mf N}\CY\CD\lp\CC\rp.
\ee

\end{Prop}

\subsubsection{Nichols algebras}\label{subsubsec:Nicholsrho}

We now review the construction of Nichols algebras, which is a universal way of obtaining Hopf algebras in braided tensor categories. Let $\CC$ be a braided tensor category, and let $M$ be an object in $\CC$. Denote by $\CT(M)$ the tensor algebra generated by $M$, namely:
\be
\CT(M):=\bigoplus_{n\geq 0} M^{\otimes n},
\ee
with the understanding that $M^{\otimes 0}=\mathbbm{1}$. This has the structure of an $\N$-graded bi-algebra object in $\CC$ whose coproduct is given on $M$ by the diagonal map $M\to \mathbbm{1}\otimes M\oplus M\otimes \mathbbm{1}$, the unit given by the embedding of $\mathbbm{1}$ to $\CT(M)$ and the co-unit given by zero on $M^{\otimes n}$ for $n>0$. 

The \textbf{Nichols algebra} is defined as the universal $\N$-graded bi-algebra quotient of $\CT(M)$. 

\begin{Def}
    The Nichols algebra $\CB(M)$ is the minimal quotient of $\CT(M)$ by graded ideals contained in $\CT (M)_{\geq 2}$. Namely it is defined as a quotient bi-algebra of $\CT(M)$ such that for any other $\N$-graded surjective bi-algebra morphism $\CT(M)\to \CA$, whose composition with $M\to \CT(M)$ gives an injective map $M\to \CA$, there exists a bi-algebra morphism $\CA\to \CB(M)$ making the following diagram commutative:
\be
\btik
\CT(M)\rar \arrow[dr] & \CA\dar\\
 & \CB(M)
\etik
\ee
\end{Def}

Alternatively, one can define the Nichols algebra as the quotient of $\CT(M)$ by the kernel of the \textit{quantum symmetrizer} map, which is a lift of the total symmetrizer from the symmetric group to the braid group. Moreover, it turns out that it is automatically a Hopf algebra quotient of $\CT(M)$. It is difficult in general to understand the structure of Nichols algebra, but things become simpler when we assume that braiding on $M$ has very restricted form. We mention that in the case $\CC=\mathrm{Vect}_\Gamma$ where $\Gamma$ is a vector space with a non-degenerate bilinear form, finite-dimensional Nichols algebra is classified in \cite{heckenberger2009classification}. 

\begin{Exp}\label{Exp:p}
   Let $J$ be a simple current in $\CC$ with inverse $J^{-1}$. Assume that the braiding acting on  $J\otimes J=J^2$ is given by $(-1)$; then $\CB (J)\cong \mathrm{1}\oplus J$ where the product is trivial on $J^{\otimes 2}$.  More generally, if $M$ is an object in $\mathrm{Vect}_\Gamma$ whose braiding is $-1$, then $\CB (M)$ is the exterior algebra generated by $M$ (\cite[Proposition 7]{andruskiewitsch2017introduction}). This simple example is what is relevant to this paper.
   
   If the braiding on $J\otimes J$ is given by $e^{\pi i/p}$ for some positive integer $p$, then $\CB (J)=\oplus_{0\leq n<p}J^{\otimes n}$ such that $J^p=0$. This example is similar to the singlet algebra $V=M(p)$. 
    
\end{Exp}

\begin{Exp}\label{Exp: diagonal}[\cite{CLR23}, Example 5.2]
   Let $\fg$ be a semisimple Lie algebra, with Cartan matrix $\fh$ and Killing form $(-,-)$. The category $\mathrm{Vect}_\Gamma$ where $\Gamma=\fh^*$ and braiding $q^{(\lambda,\mu)}$ define a braided tensor category for $q\in \C^\times$. Let $\alpha_1,\ldots, \alpha_n$ be a choice of simple roots and $X=\bigoplus_i \C_{\alpha_i}$. Then $\CB (X)$ is the quantum Borel part of the small quantum gorup $u_q(\fg)$. This is called ``diagonal braiding". 
\end{Exp}

Recall Proposition \ref{Prop:centerYD} shows that for a Hopf algebra object $\mf N$ in $\CC$, the category $\CZ_\CC (\mf N\Mod)$ is equivalent to the category of Yetter-Drinfeld modules of $\mf N$. When $\CC$ is the category of modules of a quasi-triangular Hopf algebra $C$, and assume $\mf N$ is finite-dimensional as a module of $C$, then the category of Yetter-Drinfeld modules is the same as the category of modules of the relative Drinfeld double $U$ of $\mf N$ over $C$ \cite{majid1999double}. It turns out that $U_\rho^E$ can be realized as the relative Drinfeld double of a Nichols algebra object. 

To this end, let $C$ be a commutative and cocommutative Hopf algebra, whose weight lattice is $\Gamma$. Let $M=\oplus_{1\leq i\leq n} \C_{\gamma_i}$ for some $\gamma_i\in \Gamma$, and $\CB (M)$ the corresponding Nichols algebra. Assume that it is finite-dimensional as a module of $C$. In this case, the work of \cite{CLR23} gives the following explicit realization of the relative Drinfeld double $U$ of $\CB(M)$ over $C$. 

\begin{Lem}[\cite{CLR23} Lemma 6.8]\label{Lem:HopfReal}
Let $C$ be a commutative cocommutative Hopf algebra, whose weight lattice is an abelian group $\Gamma$, such that simple modules of $C$ are objects in $\mathrm{Vect}_\Gamma$. Let $\CB(M)$ be the Nichols algebra of an object $M=\bigoplus_{1\leq i\leq n}\C_{\gamma_i}$ for $\gamma_i\in \Gamma$, whose generators are denoted by $\{x_i\}$. Assume that there are group-like elements in $C$, denoted $g_i, \overline{g}_i$ such that $c_{\gamma_i, N}=g_i\vert_N$ and $c_{N, \gamma_i}=\overline{g}_i\vert_N$. 

Define an algebra $U$ generated by $C$ and $x_i, x_i^*$ with the following commutation relations:
\be
gx_i=\gamma_i(g^{(1)})x_ig^{(2)},\qquad gx_i^*=\gamma_i(Sg^{(1)})x_i^*g^{(2)},\qquad x_ix_j^*-\gamma_j(g_i)x_j^*x_i=\delta_{ij}(1-\overline{g}_ig_i).
\ee
Here $g^{(1)}$ and $g^{(2)}$ are such that $\Delta (g)=g^{(1)}\otimes g^{(2)}$, and $S$ is the antipode in $C$. Extending the coproduct on $x_i,x_i^*$ by:
\be
\Delta(x_i)=x_i\otimes 1+g_i\otimes x_i,\qquad \Delta (x_i^*)=\overline{g}_i\otimes x_i^*+x_i^*\otimes 1,
\ee
and antipode by:
\be
S(x_i)=-g_i^{-1}x_i,\qquad S(x_i^*)=-\overline{g}_i^{-1}x_i^*.
\ee
Then the above makes $U$ into a Hopf algebra, and there exists an $R$-matrix defined in some algebraic closure of $U$ making the category of $U$-modules a braided tensor category. Moreover, we have an equivalence of braided tensor categories:
\be
U\Mod\simeq \CZ_{\CC}(\CD),
\ee
 where $\CC=C\Mod$ and $\CD=\CB(M)\Mod (\CC)$. 

\end{Lem}

\begin{Rem}

If $C$ does not admit group-like elements $g_i$ but only Lie-algebra-like elements $N_i$ such that $c_{\gamma_i, -}=e^{2\pi i N_i}$, then one can always formally add the group-like elements $g_i$ and require that $g_i=\exp (2\pi i N_i)$ on finite-dimensional modules. The $R$-matrix can be read off from \eqref{eq:braidYD}. Let $\{v_k\}$ be a basis for $\CB(M)$ and $\{v_k^*\}$ be a dual basis. Each of $v_k$ has a weight under $C$ given by given by combinations of $\gamma_i$, and there are elements $g_k,\overline{g}_k$ such that $c_{v_k, M}=g_k\vert_M$ and $c_{M, v_k}=\overline{g}_k\vert_M$. Tracing through the definition, the $R$-matrix is given by the following expression:
\be
R=\sum_k R_C\lp R_C^{-1}\lp 1\otimes v_k\rp R_C\rp\lp v_k^*\otimes 1\rp=\sum_k R_C \lp \overline{g}_k^{-1}v_k^*\otimes v_k \rp
\ee
where $R_C$ is the $R$-matrix for the algebra $C$. Here we used the fact that $R_C^{-1}(1\otimes v_k)R_C=\overline{g}_k^{-1}\otimes v_k$. 

\end{Rem}

We now fix $\rho$, and let $C_\rho$ be the commutative and cocommutative Hopf algebra $\C[X^a, Y^a, (-1)^F]$, where $F$ is the parity operator. The $R$-matrix is given by $R=\exp (2\pi i (\sum_a X^a\otimes Y^a))(-1)^{F\otimes F}$.  Let $\C_i$ be the module of $C_\rho$ where $X^a$ acts as $-\rho^{ia}$ and $Y^a$ as zero. Let $M=\bigoplus_i \Pi \C_i$, where $\Pi$ is parity shift. Since braiding acting on $\Pi \C_i\otimes \Pi \C_j$ is equal to $-1$, the Nichols algebra $\CB (M)$ is simply an exterior algebra generated by $M$.  Similarly, the dual of the Nichols algebra $\CB (M)^*$, which is isomorphic to $\CB(M^*)$ (\cite[Section 3.4.2]{andruskiewitsch2017introduction}), is also an exterior algebra for the same reason. 

According to Lemma \ref{Lem:HopfReal}, we choose elements $g_i$ and $\overline{g}_i$ so that the braiding
\be
c_{\C_i, N}: \Pi \C_i\otimes N\to N\otimes \Pi \C_i, \qquad c_{N,\C_i}: N\otimes \Pi \C_i\to \Pi \C_i\otimes N
\ee
is given by $c_{\C_i,N}=\tau\circ g_i\vert_N$ and $c_{N,\C_i}=\tau\circ \overline{g}_i\vert_N$. Using the explicit $R$-matrix, the braiding $c_{\C_i, N}$ is given by $e^{-2\pi i \sum_a \rho^{ia}Y^a}(-1)^F$ and therefore we choose $g_i=e^{-2\pi i \sum_a \rho^{ia}Y^a}(-1)^F$, whereas the braiding $c_{N,\C_i}=(-1)^F$ and therefore $\overline{g}_i=(-1)^F$.

The algebra $U$ will be generated over $C_\rho$ by odd elements $x_\pm^i$, such that:
\be
[X^a,x_\pm^i]=\pm \rho^{ia}x_\pm^i, \qquad [Y^a, x_\pm^i]=0,\qquad \{x_+^i,x_-^i\}=1-e^{-2\pi i \sum \rho^{ia} Y_a}.
\ee
This is the algebra relation defining $U_\rho^E$, via the identification:
\be
X^a\mapsto N^a,\qquad Y^a\mapsto E^a, \qquad x_+^i\mapsto \psi_+^i,\qquad (-1)^Fg_i^{-1}x_-^i\mapsto \psi_-^i.
\ee
 The coproduct is given by:
\be
\Delta (x_-^i)=x_-^i\otimes 1+g_i\otimes x_-^i, \qquad \Delta (x_+^i)=x_+^i\otimes 1+(-1)^F\otimes x_+^i.
\ee
The antipode is $S(x_-^i)=-g^{-1}_ix_-^i$ and $S(x_+^i)=-(-1)^Fx_+^i$. The $R$-matrix is given by:
\be
\begin{aligned}
    R&=e^{2\pi i \sum_a X^a\otimes Y^a}(-1)^{F\otimes F}\prod_i (1+(-1)^Fx_+^i\otimes x_-^i)\\&=e^{2\pi i\sum_a N^a\otimes E^a}(-1)^{F\otimes F} \prod_i(1-\psi_+^i(-1)^F\otimes (-1)^Fg_i\psi_-^i)\\&=e^{2\pi i\sum_a N^a\otimes E^a}(-1)^{F\otimes F} \prod_i(1-\psi_+^i(-1)^F\otimes \prod_a K_{E^a}^{\rho^{ia}}\psi_-^i),
\end{aligned}
\ee
The quasi-triangular Hopf algebra structure coincides with the one defined in Theorem \ref{Thm:quasitriUq}, under the above identification. 

\begin{Rem}

    In this identification, one needs to take into the account the Koszul sign convention of Section \ref{sec:QG}. For example, the action of $1\otimes \psi$ in Section \ref{sec:QG} is given by $(-1)^F\otimes \psi$. If one takes into account this Koszul sign rule, then the identification of the Hopf structure is clear. The reason we include $(-1)^F$ in this section is that results of \cite[Lemma 6.8]{CLR23} are stated for the underlying abelian category. 
    
\end{Rem}

\begin{Cor}\label{Cor:rhocenter}
There is an equivalence of braided tensor categories
\be
U_\rho^E\Mod\simeq \CZ_{C_\rho\Mod}(\CB (M)\Mod). 
\ee

\end{Cor}

\subsection{From screening operators to Nichols algebras}

Recall from last section that we have found a functor $\CS: KL_\rho\to \CZ_{A\lMod}(A\Mod)$. Proposition \ref{Prop:AlModWMod} implies that the category $A\lMod$ is simply the category of modules of the VOA $W$ (which in the case we are interested in, is just the free-field algebra $\CV_Z$) that viewed as a module of $V$ lies in $\CU$ ($KL_\rho$ in our examples). The category $\CV_Z\Mod$ is easily seen to be equivalent to $C_\rho\Mod$, where the $R$-matrix is given by $\exp (\pi i (\sum_a X^a\otimes Y^a+Y^a\otimes X^a))(-1)^{F\otimes F}$. We can perform a twist by $\CF=\exp (\pi i \sum_a Y^a\otimes X^a)$, so that the new $R$-matrix becomes $\exp (2\pi i (\sum_a X^a\otimes Y^a))(-1)^{F\otimes F}$, matching the convention from Section \ref{subsubsec:Nicholsrho}. In conclusion, we have the following statement. 

\begin{Prop}
  There is an equivalence of braided tensor categories:
\be
A\lMod\simeq C_\rho\Mod.
\ee  
\end{Prop}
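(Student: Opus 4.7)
The plan is to exploit the decomposition $\CV_Z\cong H_{\{X_a,Y_a\}}\otimes \FF^{\otimes n}$ to reduce the equivalence to a statement about Heisenberg modules. By Proposition~\ref{Prop:AlModWMod} the category $A\lMod$ is the category of generalized $\CV_Z$-modules whose restriction to $V(\grho)$ lies in $KL_\rho$. The free-fermion factor $\FF^{\otimes n}$ is a rational vertex superalgebra whose relevant category of modules (the one compatible with $KL_\rho$ under the embedding) reduces to $\mathrm{sVect}$: up to parity shift there is a unique simple module and the induced braiding is the symmetric super braiding. Since our categories are already super, this factor contributes nothing beyond signs, and the problem reduces to identifying the corresponding category of generalized Heisenberg modules for $H_{\{X_a,Y_a\}}$ with $C_\rho\Mod$ as braided tensor categories.

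Next, I would establish the abelian equivalence. Because the bilinear form pairs $X_a$ only with $Y_a$, the zero modes $\{X_{a,0},Y_{a,0}\}$ commute pairwise on any module, and the functor $M\mapsto M^{\mathrm{top}}$ sending a grading-restricted generalized module to its lowest graded piece yields a module for the polynomial algebra $C_\rho=\C[X_a,Y_a]$. A quasi-inverse is provided by induction from the non-negative modes, and the grading-restriction and finite-length conditions inherited from $KL_\rho$ translate precisely into finite dimensionality of $M^{\mathrm{top}}$ over $C_\rho$.

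For the monoidal and braided structures, fusion of Heisenberg Fock modules adds weights, so on top graded pieces the zero modes act via the primitive coproduct $\Delta(X_a)=X_a\otimes 1+1\otimes X_a$ and $\Delta(Y_a)=Y_a\otimes 1+1\otimes Y_a$ of $C_\rho$. The braiding is determined by analytic continuation of logarithmic intertwining operators; adapting the computation that proves Corollary~\ref{Cor:monsimple} to the abelian Heisenberg setting yields a monodromy $\exp\bigl(2\pi i\sum_a(X_{a,0}\otimes Y_{a,0}+Y_{a,0}\otimes X_{a,0})\bigr)$ on any $M\otimes_W N$, which matches $R_{21}R$ for the stated $R=\exp(2\pi i\sum_a X_a\otimes Y_a)$.

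The main difficulty is to pin down the R-matrix itself rather than merely its monodromy: two R-matrices with the same monodromy differ by a Drinfeld twist of the cocommutative Hopf algebra $C_\rho$, and such twists preserve the braided tensor category up to equivalence. To identify the specific asymmetric form $R=\exp(2\pi i\sum_a X_a\otimes Y_a)$ (rather than, say, the symmetric half-monodromy $\exp(\pi i\sum_a(X_a\otimes Y_a+Y_a\otimes X_a))$), one must carefully track the chosen branch of the logarithm in the free-field intertwining operator and verify compatibility with the polarization $X\leftrightarrow Y$ of the pairing. This can be done by a direct calculation using Li's delta operator, analogous to the one in Appendix~B of \cite{GarnerNiu}, and it is this choice of R-matrix which will feed consistently into the $\Ind_A$-induced braided structure used in subsequent sections.
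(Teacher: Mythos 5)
The paper offers no argument for this proposition at all --- it is introduced with ``The following is evident'' --- so there is no proof of record to compare yours against; what you have written is a correct and essentially complete justification of the kind the authors are implicitly relying on. Each reduction is sound: $A\lMod$ is the category of generalized $\CV_Z^\rho$-modules lying over $KL_\rho$ by Proposition \ref{Prop:AlModWMod}; the factor $\FF^{\otimes n}$ is holomorphic as a super VOA and contributes only the ambient super structure; generalized modules of the non-degenerate Heisenberg $H_{\{X_a,Y_a\}}$ are freely induced from their vacuum spaces, identifying the abelian category with finite-dimensional $\C[X_a,Y_a]$-modules (your ``lowest graded piece'' is better phrased as the subspace annihilated by the strictly positive modes, since the linear shift in \eqref{L-B-ff} places different Fock summands at different conformal weights, but this is cosmetic); fusion of Fock modules is additive in weights, giving the primitive coproduct; and the monodromy $e^{2\pi i(\lambda,\mu)}$ of Fock intertwiners matches $R_{21}R$ for $R=\exp(2\pi i\sum_a X_a\otimes Y_a)$.

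Your final paragraph, however, overstates the residual difficulty. The proposition asserts only an equivalence of braided tensor categories, and --- as you yourself observe --- the asymmetric $R=\exp(2\pi i\sum_a X_a\otimes Y_a)$ and the symmetric half-monodromy $\exp\lp\pi i\sum_a(X_a\otimes Y_a+Y_a\otimes X_a)\rp$ that the intertwining operators literally produce are related by the Drinfeld twist $J=\exp(\pi i\sum_a X_a\otimes Y_a)$. Because $C_\rho$ is commutative and cocommutative and the exponent of $J$ lies in $C_\rho\otimes C_\rho$, the twist leaves the coproduct unchanged, and the identity functor equipped with the tensor constraint $J$ is a braided equivalence between the two quasi-triangular structures; both moreover induce the same quadratic form $e^{2\pi i\,\alpha_a\beta_a}$ on $\CF_{\alpha X+\beta Y}$. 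This is precisely the content of the paper's own remark following Theorem \ref{Thm:quasitriUq} on the non-uniqueness of $R$. So no careful tracking of logarithm branches is needed: the particular form of $R$ is a gauge convention compatible with everything downstream, not data that the VOA pins down on the nose.
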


Comparing this with Corollary \ref{Cor:rhocenter}, we would like to establish a monoidal equivalence $A\Mod\simeq \CB (M)\Mod$ that is compatible with the central action of $A\lMod$. In fact, it is more than a technical speculation that $\CB (M)$ should show up in the study of $A\Mod$. In the case at hand, $V(\grho)\to \CV_Z$ is defined by the kernel of a set of screening operators $S_i=\oint \exp (Z_i-\sum \rho^{ia}Y^a)$. These satisfy relations:
\be
\{S_i, S_j\}=0.
\ee
This is precisely the relation of the Nichols algebra $\CB (M)$. 

Such a relation between Nichols algebras and screening operators was first observed in \cite{semikhatov2012nichols} for $V=M(p)$ the singlet VOA. In this case, the free field realization $V\to H_\phi$ is given by a Heisenberg VOA at level $1$. The screening operator defining $V$ satisfies Nichols algebra relation, which in this case is $S^p=0$ and $[\phi_0,S]=S$ (see example \ref{Exp:p}). The algebra generated by $\phi_0$ and $S$ is precisely the Borel subalgebra $B_q$ of the unrolled restricted quantum group $\overline{U}_q^H(\mathfrak{sl}(2))$ at $q=e^{\pi i/p}$. It is expected that the quantum group corresponding to $M(p)$ is this quantum group, and therefore the authors of \textit{loc.cit} conjectured that there is an equivalence of tensor categories:
    \be
B_q\Mod\simeq A\Mod
    \ee
compatible with the central action of $A\lMod\simeq C_q\Mod$ on both sides. Here $A$ is the algebra object defined by $H_\phi$ in the category of $M(p)$ modules, and $C_q$ is the Cartan subalgebra of $\overline{U}_q^H(\mathfrak{sl}(2))$. In general, let $W$ be a VOA containing Heisenberg fields $\alpha_i$, and let $S_i=\mathrm{Res}\lp e^{\alpha_i}\rp$, then \cite{lentner2021quantum} shows that $S_i$ satisfies Nichols algebra relation of $\oplus_i \C_{\alpha_i}$. Moreover, in \cite{flandoli2022algebras} the authors classified all finite-dimensional Nichols algebras coming from diagonal braiding (namely examples from Example \ref{Exp: diagonal}). Assume now that $W$ decomposes into integer eigenvalues under the zero-modes of $\alpha_i$, and define $M=\bigoplus_i \sigma_{\alpha_i}W$, where $\sigma_{\alpha_i}$ is the spectral flow generated by $\alpha_i$. Assume that $\sigma_{\alpha_i}W$ has diagonal braiding and defines a finite-dimensional Nichols algebra. It is expected that the following conjecture is true

\begin{Conj}\label{Conj:Nichols}
    Let $V=\cap_i\mathrm{Ker}(S_i)$. Let $A$ be the algebra object defined by $W$ in the category of $V$ modules, and let $\CB (M)$ be the Nichols algebra of $M$ in $A\lMod$. There is an equivalence of tensor categories:
    \be
\CB (M)\Mod (A\lMod)\simeq A\Mod
    \ee
compatible with the central structure of $A\lMod$ on both sides. 

\end{Conj}

The work \cite{semikhatov2013logarithmic} provides 
evidence for this conjecture by presenting an explicit logarithmic realization of $\CB (M)$ stemming from singlet \cite{Creutzig:singlet1, Creutzig:singlet2} and triplet algebras \cite{Tsuchiya:2012ru}. For singlets $M(p)$, if one restricts to the category of weighted modules, this conjecture is proven in \cite[Theorem 8.9]{CLR23}. In this case, $M=\CF_{-\frac{2}{p}\phi}$, and the screening operator satisfies that $S^p=0$. 

\begin{Thm}[\cite{CLR23} Theorem 8.9]\label{Thm:MpUq}
    Conjecture \ref{Conj:Nichols} is true for $M(p)\to H_\phi$ and the category of weighted modules of $M(p)$. 
\end{Thm}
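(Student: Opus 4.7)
My plan for proving this theorem is to construct an explicit equivalence of categories realizing both sides in terms of Fock modules of $H_\phi$ equipped with the action of the screening current $S=\oint e^{-\sqrt{2/p}\phi(z)}\,dz$. The main observation is that, by Assumption \ref{AspCat} and the preceding free-field discussion, any weighted module of $M(p)$ can be resolved by Fock modules, so both $A\Mod$ and $\CB(M)\Mod(A\lMod)$ are built out of the same pieces in $A\lMod\simeq H_\phi\Mod$, and the structure that distinguishes a non-local $A$-module from a local one is precisely the data of an action of the screening charge.

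First I would pin down the Nichols algebra: since $M=\CF_{-\sqrt{2/p}\phi}$ has diagonal self-braiding $e^{2\pi i/p}$, one computes that $\CB(M)=\bigoplus_{k=0}^{p-1}M^{\otimes_{A\lMod}k}$ with single relation $M^{\otimes p}=0$ (this matches Example \ref{Exp: p} in its higher-$p$ version). In parallel, one computes that the screening current $S$ satisfies $S^p=0$ as an intertwiner on Fock modules, and that the $e^{2\pi i \phi_0}$-weight grading on any $A$-module matches the grading of the Nichols algebra. Then I would define the functor $\Phi:A\Mod\to\CB(M)\Mod(A\lMod)$ by restricting an $A$-module $N$ to its underlying object in $\CU$, decomposing by generalized eigenvalues of the screening-charge monodromy to produce an $A\lMod$-object $\Phi(N)$, and interpreting the action of $M$ on $\Phi(N)$ via the non-local part of the $A$-action (which is precisely what $S$ detects).

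Conversely, define $\Psi:\CB(M)\Mod(A\lMod)\to A\Mod$ by assembling a $\CB(M)$-module $K$ into an $A$-module whose $A\lMod$-part is $K$ and whose remaining structure is dictated by the $\CB(M)$-action, using the universal property of $\CB(M)$ as a quotient of $\CT(M)$ to extend the partial intertwining operator from $H_\phi$ to $M(p)$. Natural isomorphisms $\Psi\Phi\cong\mathrm{id}$ and $\Phi\Psi\cong\mathrm{id}$ follow by checking on generators: use the fact, proved in \cite{Creutzig:singlet1, Creutzig:singlet2}, that the simple objects and first-order extensions in $A\Mod$ are known explicitly, and match them against the simple $\CB(M)$-modules in $A\lMod$. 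Compatibility with the $A\lMod$-action on both sides follows because both structures are induced from the same tensor action of $H_\phi$-modules.

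The main obstacle, in my view, is verifying that the map sending the screening charge to the Nichols generator is really an isomorphism of algebra actions on \emph{every} object of $A\Mod$, not just on Fock modules. In particular, one must show that $S$ satisfies no relation beyond $S^p=0$ when acting on a general weighted $M(p)$-module, which amounts to a non-vanishing statement for iterated screening integrals; conversely, one needs that any $\CB(M)$-action extends uniquely to a full $M(p)$-action, which uses the rigidity of $A\Mod$ and $A\lMod$ guaranteed by Assumption \ref{AspCat}. I expect this step to rely crucially on the resolution of $M(p)$ by Fock modules and on the classification of logarithmic intertwining operators for the singlet from \cite{Creutzig:singlet1, Creutzig:singlet2}, together with a Frobenius-reciprocity argument to reduce the general case to the Fock case.
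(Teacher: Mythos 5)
First, note that the paper does not prove this statement itself: it is imported verbatim as Theorem 8.9 of \cite{CLR23}, and the surrounding text only recapitulates the strategy of that proof (an abelian equivalence $A\Mod\simeq \CB(M)\Mod(A\lMod)$, promoted to a tensor equivalence via Proposition \ref{Prop:Split} and Proposition \ref{Prop:sufunroll}). Your outline of the abelian part is consistent with that strategy: in the weighted category the Heisenberg zero mode acts semisimply, the eigenvalue decomposition splits any $A$-module into objects of $A\lMod$, and the non-local part of the $A$-action is exactly the screening charge, which generates the Nichols algebra $\bigoplus_{k=0}^{p-1}M^{\otimes k}$ with $S^p=0$. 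Your identification of where the essential-surjectivity work lies (no relations on $S$ beyond $S^p=0$; every $\CB(M)$-action integrates to an $A$-action) is also on target.

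The genuine gap is in the step where you upgrade the abelian equivalence to a \emph{tensor} equivalence. You assert that compatibility ``follows because both structures are induced from the same tensor action of $H_\phi$-modules,'' but this only addresses the $A\lMod$-module-category structure on each side, not the monoidal products themselves. On the left, $\otimes_A$ is a coequalizer of logarithmic intertwining operators; on the right, the tensor product of $\CB(M)$-modules is defined through the \emph{coproduct} of the Nichols algebra. Matching these directly would require computing fusion of non-local $A$-modules and showing it is governed by $\Delta(S)=S\otimes 1+g\otimes S$, which is exactly the computation the \cite{CLR23} machinery is built to avoid. The missing idea is the split tensor functor $\epsilon:A\Mod\to A\lMod$ (easy to produce here precisely because weighted $A\lMod$ is semisimple, so the eigenvalue decomposition is automatically exact and monoidal), which by Proposition \ref{Prop:Split} forces \emph{some} bialgebra structure on $\mf N$ making your $\Phi$ monoidal, followed by the sufficient-unrolledness argument of Proposition \ref{Prop:sufunroll}, which pins that bialgebra structure down to be the Nichols one. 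Without these two steps your functors $\Phi,\Psi$ are only established as an equivalence of abelian (or at best $A\lMod$-module) categories, which is strictly weaker than the conjecture's assertion.
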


Although this work only proved the conjecture for this specific example, we will review the strategy here since many of the steps are applicable to our situation.

\subsubsection{Proof strategy: from abelian equivalence to tensor equivalence}\label{subsubsec:ABtoTen}

The idea of \cite{CLR23} is that if one knows the abelian category of $A\Mod$ as well as $\CB(M)\Mod (A\lMod)$ well enough, then one can uniquely fix a tensor equivalence between them. 

Let $\CC$ be a braided tensor category, and $\mf N$ an algebra object in $\CC$. Let $\CD$ be a $\CC$-central tensor category. If one can understand $\mf N\Mod$ and $\CD$ as abelian categories, and obtain an equivalence of abelian categories $\CD\simeq \mf N\Mod$, then it is tempting to say that $\mf N$ is a Hopf algebra so that the above is an equivalence of tensor categories. This is not always the case, but it is true when there is a \textit{split tensor functor}. 

\begin{Def}
    Let $\CC$ be a braided tensor category and $\CD$ a $\CC$-central category. A split tensor functor is a functor $\epsilon: \CD\to \CC$ that is a tensor functor and the composition $\CC\to \CD\to \CC$ is identity functor.
    
\end{Def}

We have the following statement. 

\begin{Prop}[\cite{CLR23} Lemma 4.12]\label{Prop:Split}
    Let $(\CC, c)$ be a braided tensor category and $\CD$ a $\CC$-central category with $(\iota, c'): \overline{\CC}\to \CZ(\CD)$. Assume that there is a split tensor functor $\epsilon: \CD\to \CC$ such that $\epsilon(c')=c^{-1}$. 

    Suppose $\mf N$ is an algebra object in $\CC$. The category $\mf N\Mod$ is an abelian category with a forgetful functor $\epsilon': \mf N\Mod\to \CC$. There is also a right $\CC$-module category structure on $\mf N\Mod$ such that $M\otimes_\CC N$ is the $\mf N$ module via the action on the left on $M$.   Under this action, $\epsilon'$ is naturally a functor of right $\CC$-module categories. 

    Suppose there is an abelian equivalence $F: \mf N\Mod\to \CD$, together with a natural isomorphism of functors $\eta: \epsilon\circ F\to \epsilon'$, such that $F$ preserves the right $\CC$ module structure in the sense that there are natural isomorphisms:
    \be
   F^T_{M,N}: F(M\otimes_\CC N)\cong F(M)\otimes_\CD \iota (N)
    \ee
    for objects $M\in \mf N\Mod$ and $N\in \CC$. Assume also that $F$ induces trivial right module structure on $\CC$ in the sense that the following diagram commutes:
\be
\btik
\epsilon \lp F(M\otimes_\CC N)\rp \rar{\epsilon (F^T_{M,N})}\dar{\eta} & \epsilon  \lp F(M)\otimes_\CD \iota (N)\rp\dar{\epsilon_{F(M), \iota (N)}}\\
\epsilon'(M\otimes_\CC N)\dar{\epsilon'_{M,N}} & \epsilon (F(M))\otimes_\CC \epsilon (\iota (N))\dar{\eta\otimes \mathrm{Id}}\\
\epsilon'(M)\otimes_\CC N\rar{=} & \epsilon'(M)\otimes_\CC N
\etik
\ee
In this, $M$ is an object in $\mf N\Mod$ and $N\in \CC$, $\epsilon_{F(M),\iota(N)}$ is the natural tensor isomorphism of the tensor functor $\epsilon$, and $\epsilon'_{M, N}=Id$ is the natural isomorphism of the right $\CC$-module functor $\epsilon'$. 

Under these conditions, the algebra $\mf N$ has a bi-algebra structure in $\CC$, such that $F$ is an equivalence of tensor categories. If both $\CC$ and $\CD$ are rigid, then $\mf N$ has the structure of a Hopf algebra such that $F$ is an equivalence of rigid tensor categories. 

\end{Prop}

Namely, if one could produce a tensor functor $\CD\to \CC$ that behaves like a forgetful functor, such that its composition with the abelian equivalence $F:\mf N\Mod\to \CD$ is the forgetful functor, then $\mf N$ has a bi-algebra structure. Of course, since the coproduct is defined by the transportation of structure from $\CD$, it is not easy to really understand the coproduct explicitly. However, in the case when $\mf N$ is isomorphic to a Nichols algebra with diagonal braiding, then it is possible to uniquely fix the coproduct. To make it more precise, let us introduce the notions involved here. Let $\Gamma$ be an abelian group with a quadratic form $Q$, determining a braided tensor category $\CC=\mathrm{Vect}_\Gamma$. Choose $\overline{\alpha_i}\in \Gamma$ and let $M=\C_{\overline{\alpha}_1}\oplus \cdots \C_{\overline{\alpha}_n}$. Let $\N^n$ be the linear $\N$-span of $\overline{\alpha}_i$, whose generators we now denote by $\alpha_i$, so that $\overline{\alpha}_i$ are the images of $\alpha_i$ under the map $\N^n\to \Gamma$. Let $\CB(M)$ be the Nichols algebra of $M$. This algebra is $\N^n$-graded as well as $\Gamma$-graded.  We denote by $\mathrm{Supp}(\CB (M))\subseteq \N^n$ all the degrees appearing in $\CB (M)$. 

\begin{Def}

We say that $\CB(M)$ is sufficiently unrolled if $\overline{\alpha}+\overline{\beta}=\overline{\gamma}$ implies $\alpha+\beta=\gamma$ for any $\alpha, \beta, \gamma\in \mathrm{Supp}(\CB (M))$. 

\end{Def}

When we know that the algebra object $\CB(M)$ is sufficiently unrolled, then the grading fixes the Hopf algebra structure uniquely. 

\begin{Prop}[\cite{CLR23} Lemma 5.6]\label{Prop:sufunroll}
Let $\mf N$ be a bi-algebra object in $\mathrm{Vect}_\Gamma$ that is isomorphic to $\CB(M)$ as an object in $\mathrm{Vect}_\Gamma$. If $\CB(M)$ is sufficiently unrolled, then there is an isomorphism of bi-algebras $\CB(M)\to \mf N$. 

\end{Prop}

\begin{proof}

We can view $M$ as a sub-object of $\mf N$. The $\Gamma$ grading on the algebra and co-algebra structure on $\mf N$ lifts to $\N^n$ grading, thanks to the sufficiently unrolled condition. This fixes the coproduct on $\mf N$ on the generators $M$ to be given by:
\be
\Delta(m)=m\otimes 1+1\otimes m, \qquad m\in M,
\ee
as it is the only part in $\mf N\otimes \mf N$ of the same $\N^n$ degree. Now the embedding $M\to \mf N$ lifts to a map of algebras $\varphi: \CT(M)\to \mf N$, and by the above formula, it is automatically a map of $\N$-graded bi-algebras. Therefore, there is a surjective map from the image of $\varphi$ to $\CB(M)$, due to the universal property. Now because $\CB(M)$ is isomorphic to $\mf N$ as objects, by grading constraints, the image of $\varphi$ must be the entire $\mf N$ and therefore $\CB(M)$ and $\mf N$ are isomorphic as bi-algebras.

\end{proof}

\begin{Rem}
The above also implies that $\mf N$ will have an antipode, since $\CB(M)$ does. Therefore the above is an isomorphism of Hopf algebras. 

\end{Rem}

\begin{Rem}
    Although the above is stated for objects in $\mathrm{Vect}_\Gamma$, it applies to a general braided tensor category $\CC$, as long as $M$ is a direct sum of simple modules whose fusion product generates a category equivalent to $\mathrm{Vect}_\Gamma$. 
\end{Rem}

The significance of Proposition \ref{Prop:sufunroll} for us is as follows. According to Proposition \ref{Prop:Split}, if one can obtain an abelian equivalence $A\Mod\simeq \CB (M)\Mod (A\lMod)$, together with a split tensor functor satisfying some assumptions, one can give $\CB (M)$ an abstract bialgebra structure. This might not be convenient, since it is defined by transporting the monoidal structure. However, Proposition \ref{Prop:sufunroll} shows us that if $\CB (M)$ is sufficiently unrolled, then the grading fixes this bialgebra structure uniquely (and identifies it with the Nichols algebra structure). Consequently, the abelian equivalence we start with can be upgraded to a tensor equivalence. This was used in \cite{CLR23} to prove Theorem \ref{Thm:MpUq}. The proof of \textit{loc.cit} does not apply to our situation, since the category $A\lMod$ in their case is semisimple (as only weighted modules of $H_\phi$ is considered), and therefore it is easy to produce a split tensor functor. In Section \ref{sec:AModNMod}, we extend their proof in a very simple case of diagonal braiding $(-1)$, for which we construct this split tensor functor. 

The proof we will present in the next section applies to $KL_\rho$. However, for general $\rho$ the sufficiently-unrolled condition of Proposition \ref{Prop:sufunroll} is not satisfied. Indeed, in the example of $C_\rho$ and $M=\bigoplus_i \Pi\C_i$, the support of $\CB (M)$ is precisely the subset of $\N^n$ given by:
\be
\mathrm{Supp}(\CB (M))=\{(a_i)_{1\leq i\leq n}\big\vert a_i=0 \text{ or } 1\}.
\ee
The weight lattice $\Gamma$ of $C_\rho$ is $\Z^{r}_X\times \Z^r_Y\times \Z_2$, and the map from $\N^n\to \Gamma$ is given by:
\be
\btik
\N^n\rar{(-\rho^\trans, 0, \tau )} & \Z^r_X\oplus \Z^r_Y\oplus \Z_2, \qquad \tau (n_i)=\overline{\sum_i n_i}, ~(n_i)\in \N^n.   
\etik
\ee
One then sees that for general $\rho$, this won't map $\mathrm{Supp}(\CB (M))$ injectively into $\Gamma$. This is the reason why in the next section, we will use the above strategy only for $\rho=1$, since this satisfies sufficiently-unrolled condition. We will derive the main statement for general $\rho$ using ungauging. 

\section{Kazhdan-Lusztig correspondence for $\fgl (1|1)$}\label{sec:KL1proof}

With all the technical preparations in place, we are ready to prove Theorem \ref{KLrho} for $\rho=1$. in this case, the VOA $V(\grho)$ is simply $V(\fgl (1|1))$ and the quantum group $U_\rho^E$ is $U_1^E$.  The free field algebra $\CV_Z$ is $H_{X, Y}\otimes \FF$, which determines an algebra object $A$ in $KL_{1}$. The category $A\lMod$  consists of modules of the commutative algebra $C=\C[X,Y, (-1)^F]$. The screening operator $S^i$ is valued in the object $\Pi \CF_{-Y}\otimes \FF$. From previous discussions, the logic of the proof is given by the following steps:
\begin{enumerate}
    \item We have seen in Proposition \ref{Prop:aspKL} that $KL_{1}, A\Mod$ and $A\lMod$ satisfy the assumptions in Assumption \ref{AspCat} except the rigidity of $A\Mod$. 

    \item We have seen that $A\lMod\simeq C\Mod$. Let $\mf N$  be the Nichols algebra of $\Pi\C_{-Y}$. We will show that $A\Mod\simeq \mf N\Mod$ as tensor  categories. In fact we will prove a general statement in Theorem \ref{Thm:AmodNmod}, which applies directly to $V(\fgl (1|1)$ and $\CV_Z$. 

    \item This step will also imply that $A\Mod$ is rigid, and therefore we have a fully-faithful functor:
    \be
    \begin{split}
\CS: KL_{1}\longrightarrow\  \CZ_{A\lMod}(A\Mod)&\simeq  \CZ_{C\Mod}(\mf N\Mod)\simeq \\ &\simeq {}_{\mf N}^{\mf N}\CY\CD(C\Mod)\simeq U_1^E\Mod
    \end{split}
    \ee

    \item Since we already know that $KL_1$ is equivalent to $U_1^E\Mod$ as an abelian category, and that $U_1^E$ is finite-dimensional over $C$, by Corollary 3.13 of  \cite{CLR23}, we see that $\CS$ is an equivalence. 
    
\end{enumerate}

The main difficulty above is, of course, the equivalence $A\Mod\simeq \mf N\Mod$, which requires the construction of a split tensor functor. This is the main focus of the next section. After proving Theorem \ref{KLrho} for $\rho=1$, we comment on how the above steps generalize to the case $\rho=\mathrm{Id}_n$. 

\subsection{Realization of $A\Mod$ as Modules of a Nichols Algebra}\label{sec:AModNMod}

Let $V$ be a vertex operator algebra and $W$ a vertex operator algebra containing V. Assume the following assumptions.

\begin{enumerate}
    \item There is a locally-finite rigid braided ribbon category of $V$ modules $\CU$, such that $V$ is a simple object and $W$ defines an indecomposable algebra object in $V$ which we call $A$. 

    \item The category $A\lMod$ is a rigid braided tensor category such that $A$ is simple. 

    \item There exists a Heisenberg generator $\pd\varphi\in W$ of level $1$ generating a $\C^\times$ action on $W$, with which one can define spectral flow $\sigma_\varphi$ via Li's delta operator \cite{Lidelta}. Assume that all objects in $A\lMod$ can be decomposed into generalized eigenvectors of $\varphi_0$. 

    \item Assume that $V=\mathrm{Ker}(S)$ where $S=\mathrm{Res}_{z=0}\exp \lp\varphi(z)\rp$. Consequently, there is an exact sequence $\sigma_\varphi^n W\to \sigma_\varphi^{n+1}W\to \sigma_\varphi^{n+2}W$. Denote by $\sigma_\varphi^n V$ the corresponding kernel in $\sigma_\varphi^n W$. Assume that this is a set of simple currents with no fixed-points in $\CU$. 

    \item Assume the existence of another Heisenberg field $\pd \psi$ generating a $U(1)$ action on $W$ such that $\pd \psi\pd \varphi\sim \frac{1}{(z-w)^2}$. Moreover, assume that $\psi_0$ acts trivially on $W$.  Also assume that all objects in $A\lMod$ decompose into generalized eigenvalues of $\psi_0$. 

    \item Assume that there exists another object $B$ in $\CU$ that is a nontrivial extension of $V$ by $\sigma_\varphi V$. 
    
\end{enumerate}

With these assumptions, it is clear that as an object in $A\lMod$, the spectral flow $\sigma_\varphi W$ has a diagonal braiding with itself. Namely the braiding:
\be
c: \sigma_\varphi W\otimes_W \sigma_\varphi W\to  \sigma_\varphi W\otimes_W \sigma_\varphi W
\ee
is given by $(-1)$. This follows simply from the definition of Li's delta operator \cite{Lidelta}. Therefore, from the classification of Nichols algebras of diagonal type, we can easily see the following.

\begin{Cor}
    Let $\CB(\sigma_\varphi W)$ be the Nichols algebra of $\sigma_\varphi W$ in $A\lMod$, then it is isomorphic to $W\oplus \sigma_\varphi W$ where the multiplication is trivial on $\sigma_\varphi W$. We denote this algebra by $\mf N$. 
\end{Cor}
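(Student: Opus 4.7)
The plan is to use the definition of the Nichols algebra as the quotient of the tensor algebra by the kernels of the braided (quantum) symmetrizer maps, combined with the just-stated fact that the self-braiding $c_{M,M}$ of $M := \sigma_\varphi W$ inside $A\lMod$ equals $-\mathrm{id}$. The central observation is that the degree-two quantum symmetrizer is
\be
S_2 \;=\; \mathrm{id}_{M \otimes_W M} + c_{M,M},
\ee
which under our hypothesis is identically zero, so $\ker(S_2) = M \otimes_W M$.

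First I would recall that $\CB(M)$ is by definition the quotient $\CT(M)/\mathcal{I}(M)$, where $\mathcal{I}(M)$ is the (two-sided, graded) ideal generated by $\bigoplus_{n\geq 2}\ker(S_n)$. The observation above says that the full degree-two part $M \otimes_W M$ lies in $\mathcal{I}(M)$. Since $\mathcal{I}(M)$ is an ideal, this forces every product of length $\geq 2$ in the quotient to vanish. Hence, as an object of $A\lMod$, the Nichols algebra reduces to the direct sum of its degree-zero and degree-one parts, namely $W \oplus M$, with the induced multiplication $M \otimes_W M \to \CB(M)$ being zero.

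The remaining step is purely bookkeeping: the first summand is the unit $W = \mathbbm{1}_{A\lMod}$ coming from $\CT(M)_0$, the second is $M = \sigma_\varphi W$ in degree one, and the triviality of multiplication on $\sigma_\varphi W$ is precisely the relation we just derived. This matches the description of $\mf N$ in the statement.

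I do not expect a serious obstacle: the result is the standard computation of a rank-one diagonal Nichols algebra at parameter $q = -1$, and all the nontrivial input (the rigidity of $A\lMod$, the fact that $\sigma_\varphi W \in A\lMod$ with self-braiding $-\mathrm{id}$, and the ability to form $\otimes_W$ of spectral-flowed objects) has already been arranged in the assumptions and the preceding observation. The only subtlety worth pointing out is that the Nichols algebra is being formed in the braided category $A\lMod$ rather than in $\CU$, so all tensor products and braidings above must be taken with respect to $\otimes_W$ — but this is precisely the setting in which $c_{M,M} = -\mathrm{id}$ was established.
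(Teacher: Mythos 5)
Your proof is correct and is essentially the paper's argument: the paper simply cites the standard rank-one diagonal Nichols algebra at braiding $-1$ (its Example on simple currents $J$ with $c\vert_{J\otimes J}=-1$), which is exactly the degree-two quantum symmetrizer computation $S_2=\mathrm{id}+c_{M,M}=0$ that you spell out. No gap.
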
 

In this slightly more general setup, we prove the following statement, which is the conjectural relation between screening operators and Nichols algebras. 

\begin{Thm}\label{Thm:AmodNmod}
    There is an equivalence of tensor categories:
    \be
A\Mod \lp \CU\rp\simeq \mf N\Mod \lp A\lMod\rp. 
    \ee
\end{Thm}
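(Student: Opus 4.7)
My plan is to apply Proposition \ref{Prop:Split} together with Proposition \ref{Prop:sufunroll}. Since $\mf N = \mathbbm{1} \oplus \sigma_\varphi W$ is $\N$-graded with support $\{0,1\}$, the sufficiently-unrolled hypothesis holds automatically; the task thus reduces to producing (i) an abelian equivalence $F \colon \mf N\Mod(A\lMod) \xrightarrow{\sim} A\Mod(\CU)$ together with (ii) a split tensor functor $\varepsilon \colon A\Mod(\CU) \to A\lMod$ satisfying the compatibility diagram of Proposition \ref{Prop:Split}. Once these are in place, the bi-algebra structure on $\mf N$ making $F$ a tensor equivalence is forced, by grading constraints, to coincide with its Nichols algebra structure.

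For (ii), I would take $\varepsilon$ to be the associated graded with respect to the canonical filtration from Section \ref{subsubsec:Ind}: set $M^{(0)} := M$ and $M^{(k+1)} := (M^{(k)})_1$, where $(-)_1$ is the canonical submodule whose quotient is local. The heart of the argument here is to show this filtration terminates on every object. Assumption 5 enters decisively: the field $\pd\psi$ furnishes a $\C$-grading on every object of $A\lMod$, and the non-local part of the $W$-action on $M$, which is exactly what prevents $M = M_1$, shifts the $\psi_0$-generalized eigenvalue by a fixed non-zero amount. Combined with the finiteness provided by assumption 6, this bounds the filtration length by the maximal $\psi_0$-Jordan block on $M$. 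Then $\varepsilon(M) := \bigoplus_k M^{(k)}/M^{(k+1)}$ lies in $A\lMod$, and naturality of the canonical submodule construction together with its compatibility with $\otimes_A$ makes $\varepsilon$ a tensor functor that restricts to the identity on $A\lMod$.

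For (i), the graded pieces inherit additional data: using the short exact sequence $\sigma_\varphi V \to \sigma_\varphi W \to \sigma_\varphi^2 V$ from assumption 4, the logarithmic part of the intertwiner $\CY_M$ descends to morphisms $\xi_k \colon \sigma_\varphi W \otimes_W M^{(k)}/M^{(k+1)} \to M^{(k-1)}/M^{(k)}$, and oddness of the screening (equivalently, the $(-1)$-braiding on $\sigma_\varphi W \otimes \sigma_\varphi W$) forces $\xi^2 = 0$. This assembles $\mathrm{gr}(M)$ and $\xi := \sum_k \xi_k$ into an $\mf N$-module, defining a functor $G \colon A\Mod \to \mf N\Mod(A\lMod)$. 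Conversely, given $(N, \xi)$, the logarithmic $W$-action on $F(N)$ is reconstructed by combining the given local action on $N$ with $\xi$ playing the role of the top logarithmic coefficient of $\CY_{F(N)}$; associativity of this intertwiner reduces, via the simple-current structure of $\sigma_\varphi^n V$, to the identity $\xi^2 = 0$.

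The main obstacle will be essential surjectivity of $F$: reconstructing a general logarithmic $A$-module from only its associated-graded residue data. Here assumption 6 is essential. Every object of $A\Mod$ whose underlying $V$-module lies in $\CU^{\leq n}$ is the surjective image of $\Ind_A(P)$ for some $P$ projective in $\CU^{\leq n}$, and on such induced projectives the identity $F \circ G \cong \mathrm{id}$ can be checked by direct computation, since the $W$-action on $\Ind_A(P)$ is transparently controlled by the algebra structure of $A$. The general statement then follows by naturality applied to a two-step projective presentation and a five-lemma argument. Once $F$ is known to be an abelian equivalence, the compatibility diagram in Proposition \ref{Prop:Split} is verified by a direct check on the braiding, using the Yetter-Drinfeld formula \eqref{eq:braidYD} together with the fact that $\varepsilon$ kills the residue part $\xi$ by construction.
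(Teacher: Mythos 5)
Your overall architecture (reduce to Proposition \ref{Prop:Split} plus sufficient unrolledness of $\mf N$, use Assumption 6 and projective presentations with a five-lemma argument for fullness) matches the paper's, but there is a genuine gap in step (ii), and it sits exactly at the point the paper flags as the crux. You define the split functor $\varepsilon$ as the associated graded of the iterated canonical-submodule filtration $M^{(k+1)}:=(M^{(k)})_1$ from Section \ref{subsubsec:Ind}. That this filtration terminates is not the hard part --- it follows already from finite length in $\CU$ (Assumption 1) once one knows $M_1\subsetneq M$, which is the simple-current/no-fixed-point argument. The problem is that the associated graded of this filtration is not an exact functor, and without exactness (and faithfulness) you cannot make $\varepsilon$ into the split \emph{tensor} functor that Proposition \ref{Prop:Split} requires: for a short exact sequence $0\to M'\to M\to M''\to 0$ the canonical submodule $M'_1$ is the image of $A\otimes_\CU M'$, which need not equal $M'\cap M_1$, so $\mathrm{gr}$ fails to preserve exact sequences, and the compatibility $F_k(M\otimes_A N)=\sum_{i+j=k}F_i(M)\otimes_A F_j(N)$ that you need for monoidality has no reason to hold for this filtration. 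The paper's resolution is different: it filters $M$ by the generalized $\psi_0$-eigenvalue subspaces $M_{[\geq\alpha+n]}$, proves via an $\Ext$-vanishing argument (non-local extensions of local modules force the eigenvalues to differ by exactly one, and $\Ext^1$ vanishes when they differ by a non-integer) that these are genuinely $A$-submodules, and then takes the associated graded of \emph{that} filtration. Because every $A$-module map preserves generalized $\psi_0$-eigenvalues, this associated graded is automatically faithful and exact, and the eigenvalue bookkeeping is what pins down the filtration on $M\otimes_A N$ and yields the tensor structure. Your use of $\psi_0$ only to bound the filtration length misses its real role.

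A secondary, smaller divergence: for essential surjectivity you propose to reconstruct the inverse functor $F$ on general objects by gluing the local action with $\xi$ as a "top logarithmic coefficient" and checking $F\circ G\cong\mathrm{id}$ on induced projectives. The paper avoids this reconstruction problem entirely: since $\CE$ is a functor of $A\lMod$-module categories and every object of $\mf N\Mod(A\lMod)$ is a quotient of $\mf N\otimes_A M$ for $M$ local, it suffices to exhibit a single preimage of $\mf N$ itself, which is done explicitly as $A\otimes_\CU\sigma_\varphi A'$ with $A'$ the dual of $A$. Your route is not wrong in principle, but the associativity check for the reconstructed logarithmic intertwiner on arbitrary objects is substantially more delicate than the one-object construction, and as written it is asserted rather than proved.
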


We will then show that $V(\fgl (1|1))\hookrightarrow W$ satisfies these assumptions, which allows us to conclude that $A\Mod\simeq \mf N\Mod (C\Mod)$. We prove Theorem \ref{Thm:AmodNmod} in the following steps:

\begin{enumerate}
    \item We construct a canonical filtration on $A\Mod\lp\CU\rp$ (will be denoted by $A\Mod$ in the following) whose associated graded is in $A\lMod$. We show that this upgrades to a faithful functor $\epsilon$ to $A\lMod$.

    \item We show that the functor $\epsilon$ upgrades to a fully-faithful functor $\CE$ to $\mf N\Mod$ by demonstrating that every object has a canonical action of $\mf N$. 

    \item We show that the functor $\CE$ is surjective  by constructing a preimage of $\mf N$. 

    \item We then conclude that $\CE$ is an abelian equivalence. Now by Proposition \ref{Prop:Split} and Lemma 5.6 of \cite{CLR23}, we conclude that this equivalence must be one of tensor categories. 
\end{enumerate}

\noindent\textbf{Step 1.} In this step, we need to construct a canonical filtration on objects of $A\Mod$. Let $M$ be an object in $A\Mod$, recall the twisted module $M_c$ from Section \ref{subsubsec:Ind}, namely the object whose underlying $V$ module is still $M$ but the action of $A$ becomes:
\be
\btik
A\FU M_c\rar{c^{-2}} & A\FU M_c\rar{a} & M_c
\etik
\ee
Recall we also considered the following morphism of $A$ modules in Section \ref{subsubsec:Ind}:
\be\label{eq:a+ac^{-2}}
\btik
A\otimes_\CU M\rar{a\oplus ac^{-2}} & M\oplus M_c
\etik
\ee
such that the cokernel of this map is always local. Our first step is to see that under the assumptions we have in this section, the cokernel is always non-zero. 

 \begin{Lem}
The image $\mathrm{Im}(a\oplus ac^{-2})$ is strictly smaller than $M\oplus M_c$.
 \end{Lem}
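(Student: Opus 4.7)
The plan is to argue by contradiction: suppose $\mathrm{Im}(a \oplus ac^2) = M \oplus M_c$; by the Corollary just before the Lemma, this is equivalent to $M_1 := a(c^{-2}(\ker a)) = M$. Since $a$ vanishes on $\ker a$, for any $x \in \ker a$ we have $a(c^{-2}x) = a(c^{-2}x) - a(x) = a\bigl((c^{-2} - \mathrm{id})(x)\bigr)$, so $M_1 \subseteq \mathrm{Im}\bigl(a \circ (c^{-2} - \mathrm{id})\bigr) \subseteq M$ as a $V$-submodule. It therefore suffices to exhibit an element $m \in M$ that is not in this latter image.

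The tool I would use is the Heisenberg pair $(\psi, \varphi)$ of assumption~5. First, I would establish that the zero mode $\psi_0$ defines a well-defined endomorphism of any $M \in A\Mod$ and that $M$ decomposes into generalized $\psi_0$-eigenspaces $M = \bigoplus_\lambda M^\lambda$. This uses that $\psi_0$ acts trivially on $W = A$ (so $\psi(z)$ has only integer powers of $z$ in its OPE with $W$-elements, making $\psi(z)$ act as an ordinary Laurent series on any $M \in A\Mod$), together with the eigenspace decomposition assumed on $A\lMod$ and the fact that any $M \in A\Mod$ sits in a short exact sequence whose outer terms are controlled by $A\lMod$-objects via the filtration of assumption~6. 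Since $A$ has trivial $\psi_0$-weight, both the action $a$ and the double braiding $c^2$ preserve this decomposition, so we may work inside a single $M^\lambda$.

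Next, I would invoke the Huang formula for the double braiding in a vertex tensor category, $c^2 = \theta_{A \otimes M}^{-1} \circ (\theta_A \otimes \theta_M)$ with $\theta = e^{2\pi i L_0}$, to analyze $c^{-2} - \mathrm{id}$ via the conformal grading. Because $M$ is grading-restricted and of finite length in $\CU$, the generalized $L_0$-eigenvalues in $M^\lambda$ are bounded below by some $h$ and have bounded log-depth. Picking $m \in M^\lambda$ of minimal $L_0$-generalized-eigenvalue $h$ and maximal log-depth, the Huang formula combined with the structure of logarithmic intertwining operators shows that $(c^{-2} - \mathrm{id})$, applied to any $x \in A \otimes M$, strictly raises $L_0$-eigenvalues or strictly lowers the log-depth at a fixed eigenvalue; hence $a\bigl((c^{-2} - \mathrm{id})(x)\bigr)$ lands in the proper sub-$V$-module of $M$ spanned by vectors of $L_0$-eigenvalue $>h$ or of smaller log-depth at eigenvalue $h$. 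In particular $m \notin M_1$, contradicting $M_1 = M$.

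The hard part will be Step~2 of this plan --- making the log/conformal filtration argument airtight. The precise claim needed is that $c^{-2} - \mathrm{id}$ is ``strictly positive'' in an appropriate filtration on $A \otimes_\CU M$, and that this positivity is preserved after composing with $a$. This depends on carefully combining the Huang formula for $c^2$ in terms of $L_0$ with the control on logarithmic intertwiners provided by the grading-restricted hypothesis and by the filtration $\CU^{\leq n}$ of assumption~6. Once this is set up, the conclusion follows by a direct weight/log-depth comparison on the chosen minimal-weight top vector $m$.
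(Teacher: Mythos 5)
Your reduction of the lemma to showing that $M_1=\mathrm{Im}\,a(1-c^2)$ is a proper subobject of $M$ is fine, but the route you then take has two serious problems. First, it is circular: the generalized $\psi_0$-eigenspace decomposition of an arbitrary non-local $M\in A\Mod$ is \emph{not} available at this point. Since $\pd\psi$ need not lie in $V$ (in the main example $\pd X=N-\pd Z\notin V(\fgl(1|1))$), the field $\CY_M(\psi,z)$ on a logarithmic $A$-module is a priori valued in $M\{z\}[\log z]$, so $\psi_0=\mathrm{Res}\,\psi(z)$ only lands in an algebraic closure of $M$; the paper makes sense of $\psi_0$ on general objects of $A\Mod$ only \emph{after} this lemma, using the filtration with local associated graded that this lemma (together with the preceding corollary) produces. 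Assumption~6 concerns projectives in $\CU^{\leq n}$ and gives no such control. Second, the key claim that $c^{-2}-\mathrm{id}$ ``strictly raises $L_0$-eigenvalues or strictly lowers log-depth'' does not hold as stated: $c^2_{A,M}$ is a morphism in $\CU$, hence commutes with $L_0$ on $A\otimes_\CU M$ and preserves generalized $L_0$-eigenspaces, so it never raises conformal weight; and its semisimple part need not be the identity, so $c^{-2}-\mathrm{id}$ need not be nilpotent at all. Even granting unipotence, the span of vectors of weight $>h$ or of smaller log-depth is not a $V$-submodule, and transporting log-depth through $a$ requires exactly the control you defer. Tellingly, your argument nowhere uses the hypothesis that actually drives the statement, namely that $A/V\cong\sigma_\varphi V$ is a fixed-point-free simple current.

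The paper's proof is an elementary length count: the restriction of $a\oplus ac^2$ to $V\otimes_\CU M\cong M$ is the diagonal, so the map descends to $(A/V)\otimes_\CU M\to(M\oplus M_c)/\Delta(M)\cong M$; by Assumption~4 the source is $\sigma_\varphi M$, which has the same finite Jordan--H\"older length as $M$ (since $\sigma_\varphi V$ is a simple current) but is not isomorphic to it (no fixed points), so this induced map cannot be surjective, hence neither is $a\oplus ac^2$. I recommend abandoning the monodromy/weight-filtration route here and reserving the $\psi_0$-grading arguments for the later steps, where the paper deploys them once the local filtration is in place.
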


\begin{proof}
    Obviously the restriction of $a\oplus ac^{-2}$ to $V\otimes_\CU M\cong M$ is the diagonal morphism. Quotienting by $V\otimes_\CU M$ and its image, we have an induced morphism (in $\CU$) on the quotient:
    \be
    \btik
(A/V)\otimes_\CU M\rar{a\oplus ac^{-2}} & M\oplus M_c/(\Delta (M))\cong M
\etik
    \ee
    \textbf{Assumption 4} shows that $A/V\cong  \sigma_\varphi V$, and therefore $(A/V)\otimes_\CU M\cong  \sigma_\varphi(V)\otimes_\CU M$. We denote this by $\sigma_\varphi M$. \textbf{Assumption 4} also guarantees that $\sigma_\varphi M$ and $M$ are not isomorphic, and since they have equal number of composition factors (as $\sigma_\varphi V$ is a simple current), there can be no surjection from $\sigma_\varphi M$ to $M$. 

\end{proof}

\begin{Rem}
    The module $\sigma_\varphi M\cong \sigma_\varphi (V)\otimes_\CU M$ is an object in $\CU$, but it also has the structure of an $A$ module via the identification:
    \be
\sigma_\varphi (V)\otimes_\CU M\cong (\sigma_\varphi (V)\otimes_\CU A)\otimes_A M\cong \sigma_\varphi (A)\otimes_A M,
    \ee
    where the latter (as an object in $\CU$) is a quotient of $\sigma_\varphi (A)\otimes_\CU M$ by the image of the two $A$-action maps (see Proposition \ref{Prop:FSAMod}). 
\end{Rem}

By this lemma and \textbf{Assumption 1}, which implies that objects in $A\Mod$ have finite length, we see that all objects are equipped with a filtration such that the associated graded is local. In particular, all simple objects are local. This, together with \cite[Theorem 3.14]{creutzig2024commutative}, implies the following result.

\begin{Cor}\label{cor:exact}
The category $A\Mod$ is rigid. Consequently, tensor product is exact in $A\Mod$. 

\end{Cor}

One would like the split tensor functor to be the associated graded of such a filtration. This is not good enough, though, since taking associated graded is usually not an exact functor. We will use the action of $\psi_0$ to define generalized eigenspaces. To do so however, we need to understand the $A$ module $A\otimes M$. 

\begin{Lem}\label{Lem:Kera}
    Let $K=\mathrm{Ker}(a)$ where $a:A\otimes_\CU M\to M$. Then $K\cong \sigma_\varphi M$ as an object in $A\Mod$. 
\end{Lem}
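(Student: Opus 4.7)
The plan is to deduce the isomorphism from the short exact sequence of Assumption 4 together with the exactness of tensor products (Corollary \ref{cor:exact}), first in $\CU$ and then promoting it to $A\Mod$ via the same construction at the level of $A$-bimodules.

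First, I would recall that Assumption 4 provides a short exact sequence in $\CU$:
\be
0\to V\to A\to \sigma_\varphi V\to 0.
\ee
Tensoring with $M$ over $\CU$ (exact by the rigidity of $\CU$, or by Corollary \ref{cor:exact} applied to $V$-modules) yields a short exact sequence in $\CU$:
\be
0\to M\to A\FU M\to \sigma_\varphi M\to 0,
\ee
where I use the identifications $V\FU M\cong M$ and $\sigma_\varphi V\FU M = \sigma_\varphi M$. The key observation is that the composite $M\hookrightarrow A\FU M\xrightarrow{a}M$ equals the identity on $M$: the inclusion is built from the unit $V\to A$, and applying $a$ precomposes the $V$-action on $M$ with the unit, which is the identity. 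Therefore the displayed sequence and the defining sequence $0\to K\to A\FU M\xrightarrow{a}M\to 0$ are complementary, and the inclusion $K\hookrightarrow A\FU M$ followed by the quotient to $\sigma_\varphi M$ gives an isomorphism $K\cong\sigma_\varphi M$ as objects of $\CU$.

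To promote this to an isomorphism in $A\Mod$, I would run the same argument at the level of the algebra itself. Starting from the multiplication map $m\colon A\FU A\to A$, viewed as an $A$-bimodule map (with $A\FU A$ carrying the obvious left-right $A$-bimodule structure and $A$ being commutative so that left and right actions on $A$ agree), the same argument identifies $I:=\ker(m)$ with $\sigma_\varphi A$ as $A$-bimodules via the sequence $0\to A\to A\FU A\to\sigma_\varphi A\to 0$ and the fact that the unit splits $m$ on the $V$-summand. Then applying $-\otimes_A M$ to the short exact sequence $0\to I\to A\FU A\to A\to 0$ in $A\Mod$ and using the exactness of tensor products over $A$ (Corollary \ref{cor:exact}) produces a short exact sequence
\be
0\to \sigma_\varphi A\otimes_A M\to A\FU M\xrightarrow{a} M\to 0
\ee
in $A\Mod$, where I have used $A\FU A\otimes_A M\cong A\FU M$ and $A\otimes_A M\cong M$. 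Identifying $\sigma_\varphi A\otimes_A M$ with $\sigma_\varphi M$ via the Remark following the lemma statement then gives $K\cong \sigma_\varphi M$ as objects of $A\Mod$.

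The main obstacle will be the bimodule identification $I\cong\sigma_\varphi A$ in the second step: while the $\CU$-level isomorphism is immediate, promoting it to a bimodule isomorphism requires checking that both the left and right $A$-actions on $I\subset A\FU A$ (which act on distinct tensor factors) match the single $A$-action on $\sigma_\varphi A$ coming from the spectral flow of the regular representation. Commutativity of $A$ makes this plausible, and the check reduces to the naturality of the projection $A\to\sigma_\varphi V$ against multiplication on both sides, but some care is needed in tracking the $A$-module structure on $\sigma_\varphi A$ induced by Li's delta operator through this identification.
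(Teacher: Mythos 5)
Your overall strategy is the same as the paper's: both arguments reduce to identifying $K_m=\mathrm{Ker}(m)\subset A\otimes_\CU A$ with $\sigma_\varphi A$ as an $A$-bimodule and then tensoring with $M$ over $A$, using exactness of $\otimes_A$ (Corollary \ref{cor:exact}). Your first paragraph, establishing the isomorphism in $\CU$ via the splitting $a\circ(\text{unit}\otimes\mathrm{id})=\mathrm{id}_M$, is correct but is subsumed by the second step, so it carries no independent weight.

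The genuine gap is exactly the step you flag as the ``main obstacle'': showing that the left and right $A$-actions on $K_m$ coincide under the identification $K_m\cong\sigma_\varphi V\otimes_\CU A\cong\sigma_\varphi A$ of right modules. Commutativity of $A$ alone does not close this: it only produces a second, a priori different identification $K_m\cong A\otimes_\CU\sigma_\varphi V$ as left modules, and you still must match the two, i.e.\ rule out that the left action gets twisted by a nontrivial automorphism. The paper closes this with a uniqueness argument rather than naturality: as an object of $\CU$ one has $A\otimes_\CU\sigma_\varphi A\cong\sigma_\varphi A\oplus\sigma_\varphi^2 A$, and since the $\sigma_\varphi^n V$ are simple currents with no fixed points (\textbf{Assumption 4}), $\Hom_\CU(A\otimes_\CU\sigma_\varphi A,\sigma_\varphi A)\cong\Hom_\CU(\sigma_\varphi A,\sigma_\varphi A)=\C$ is one-dimensional; hence there is a unique morphism $A\otimes_\CU\sigma_\varphi A\to\sigma_\varphi A$ restricting to the identity on $V\otimes_\CU\sigma_\varphi A$, which forces the left action to be the standard one and to agree with the right action. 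Some argument of this kind is indispensable; with it supplied, your proof becomes the paper's proof.
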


\begin{proof}
    The object $A\otimes_\CU A$ carries two actions of $A$, one on the left and one on the right. For any object $M$ in $A\Mod$, we have that $A\otimes_\CU M=(A\otimes_\CU A)\otimes_{A_R} M$, using the right action, and the $A$ module structure uses the left action $A_L$. The multiplication morphism $m: A\otimes_\CU A\to A$ is a morphism of both left and right $A$ modules, and its kernel $K_m$ therefore is a module of $A_L\otimes_\CU A_R$. As a module of $A_R$, it is identified with $\sigma_\varphi V\otimes_\CU A\cong \sigma_\varphi A$ since we have the following short exact sequence of $A_R$ modules:
    \be
    \btik
0\rar & V\otimes_\CU A\rar & A\otimes_\CU A\rar & \sigma_\varphi(V)\otimes_\CU A\rar & 0
    \etik
    \ee
     We claim that the left and right action must agree. Indeed, this follows from the fact that as objects in $\CU$ there is an identification $A\otimes_\CU \sigma_\varphi A\cong  \sigma_\varphi A\oplus \sigma_\varphi^2 A$, and therefore the hom space:
    \be
\Hom_{\CU} (A\otimes_\CU \sigma_\varphi A, \sigma_\varphi A)=\Hom_\CU (\sigma_\varphi A, \sigma_\varphi A)
    \ee
is one-dimensional. There is only one morphism whose restriction to $V\otimes  \sigma_\varphi A$ is identity. Clearly, the $A$ module $K$ is isomorphic to $K_m\otimes_{A_R} M$, and since left and right action on $K_m$ agrees, we see that this kernel is $\sigma_\varphi M$ as a module of $A$. This completes the proof.

\end{proof}

\begin{Cor}\label{Cor:Kequalizer}

The embedding $\iota: K\to A\otimes M$ equalizes the left and right actions of $A$ on $A\otimes M$, in the sense that the following diagram commutes
\be
\btik
A\otimes K \rar{1\otimes\iota} \dar{1\otimes \iota} & A\otimes A\otimes M\dar{(1\otimes a)\circ (c_{A, A}\otimes 1)}\\ 
A\otimes A\otimes M\rar{m\otimes 1} & A\otimes M
\etik
\ee

\end{Cor}

\begin{proof}

We have seen that $K=K_m\otimes_A M$ where $K_m$ is the $A-A$-submodule of $A\otimes A$ defined as the kernel of the multiplication map. The result now follows from the proof of Lemma \ref{Lem:Kera}, particularly that $K_m$ equalizes the left and right action of $A$ on $A\otimes A$. 

\end{proof}

Looking at the morphism in equation \eqref{eq:a+ac^{-2}}, we see that the module $M_{1, c}$ is precisely the image of $K\cong \sigma_\varphi M$ under $ac^{-2}$. We would now like to introduce generalized eigenvalues of $\psi_0$. This is dangerous however, since the action of $\psi (z)$ on a logarithmic module $M$ is not a Laurent series. The solution, as in \cite{lentner2021quantum}, is to define $\psi_0:=\mathrm{Res}(\psi(z))$ using the lift of a unit circle to the entire Riemann sphere of logarithmic functions. Therefore, in principle, the image of $\psi_0$ is not $M$, but the algebraic closure of $M$. However, we have seen that every $M$ comes equipped with a filtration such that the associated graded are local, and therefore $\psi_0$ restricted to the local pieces is valued in $M$, and moreover has generalized eigenvalues according to \textbf{Assumption 5}. Consequently, $\psi_0$ has generalized eigenvalues in $M$ since it does in all the local subquotients. 

Let us now decompose $M$ into generalized eigenspaces of $\psi_0$:
\be
M=\bigoplus_{\alpha\in \C} M_{\alpha}.
\ee
If $M$ is a local module, then by \textbf{Assumption 5}, this would be a decomposition of $M$ into $A$ submodules, since a local $A$ module is identified with a module of the mode algebra of $W$, and the OPE guarantees that $\psi_0$ commutes with everything in the mode algebra. In general, we would like to show that this decomposition gives a filtration of $A$ modules. We start with the following statement.

\begin{Lem}\label{Lem:ExtAmod}
    Let $M$ and $N$ be two local $A$ modules, such that the generalized eigenvalue of $\psi_0$ is $\alpha$ and $\beta$ respectively. Let $P$ be an $A$ module fitting into the exact sequence:
    \be
\btik
M\rar & P\rar & N
\etik
    \ee
    Then either $\beta=\alpha-1$ or $P$ is local. 
\end{Lem}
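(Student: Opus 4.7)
I would detect the non-locality of $P$ via the morphism $\delta_P := a_P - a_P \circ c_{A,P}^2 \colon A \FU P \to P$ in $\CU$, which vanishes precisely when $P \in A\lMod$. Since the double braiding with the unit is trivial, $\delta_P$ vanishes on $V \FU P$ and therefore descends along the quotient $A \twoheadrightarrow A/V \cong \sigma_\varphi V$ coming from Assumption 4 to a morphism $\tilde\delta\colon \sigma_\varphi P \to P$ in $\CU$. Because $M$ is a local submodule, the restriction of $\delta_P$ to $A \FU M$ coincides with $\delta_M = 0$, so $\tilde\delta$ vanishes on $\sigma_\varphi M \hookrightarrow \sigma_\varphi P$ and factors through $\sigma_\varphi N \cong \sigma_\varphi P / \sigma_\varphi M$; dually, since $N$ is local, composing $\tilde\delta$ with $P \twoheadrightarrow N$ is induced by $\delta_N = 0$, so the image of $\tilde\delta$ lies inside $M$. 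This produces a morphism $\phi \in \Hom_\CU(\sigma_\varphi N, M)$ with the property that $P$ is local if and only if $\phi = 0$.

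Next I would compare the $\psi_0$-eigenvalues on the source and target of $\phi$. By the Borcherds commutator identity $[\psi_0, \CY_P(a,z)] = \CY_P(\psi_0 \cdot a, z)$ combined with Assumption 5 that $\psi_0$ acts as zero on $A$, the operator $\psi_0$ commutes with the $A$-action on every $A$-module. Moreover the braiding $c_{A,P}$ respects the $\psi_0$-grading by additivity of the Heisenberg zero mode under tensor products, so $a_P \circ c^2$ and hence $\delta_P$ intertwine the $\psi_0$-action, and this property descends to $\phi$. The spectral flow $\sigma_\varphi$ is the Li delta-operator spectral flow attached to the level-one Heisenberg field $\pd\varphi$, and since $\pd\psi(z) \pd\varphi(w) \sim (z-w)^{-2}$, one checks that $\sigma_\varphi$ shifts the zero mode $\psi_0$ by $+1$. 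Thus $\psi_0$ acts on $\sigma_\varphi N$ with generalized eigenvalue $\beta + 1$ while it acts on $M$ with generalized eigenvalue $\alpha$, and the $\psi_0$-equivariance of $\phi$ forces either $\phi = 0$ (so $P$ is local) or $\beta + 1 = \alpha$, i.e.\ $\beta = \alpha - 1$.

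The delicate point is this last step. Although $\psi_0$ commutes with the $A$-action on any $A$-module, $\phi$ is only a morphism in $\CU$, and the $V$-module isomorphism $A/V \cong \sigma_\varphi V$ used to construct $\phi$ does not itself preserve $\psi_0$: the $\psi_0$ inherited on $A/V$ vanishes (since $\psi_0^A = 0$), while the intrinsic $\psi_0$ on $\sigma_\varphi V$ equals $+1$ on the vacuum. This mismatch is exactly where the shift enters, and tracking it carefully through the factorization is what produces $\beta = \alpha - 1$ rather than $\beta = \alpha$. I expect this bookkeeping, together with pinning down the sign convention for the spectral flow shift, to be the main technical obstacle; the rest of the argument is a routine consequence of exactness of the tensor product on the rigid category $\CU$ and of the commutator formulas above.
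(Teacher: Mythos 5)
Your proposal is correct and follows essentially the same route as the paper: both detect non-locality via the induced morphism $\sigma_\varphi P \to P_c$ (your $\tilde\delta$), use locality of $M$ to factor it through $\sigma_\varphi N$, and then compare generalized $\psi_0$-eigenvalues, with $\sigma_\varphi$ shifting the eigenvalue by $+1$. Your additional observation that locality of $N$ forces the image into $M$ is a harmless refinement (the paper instead just notes that a nonzero map $\sigma_\varphi N \to P_c$ already forces $\beta+1=\alpha$ since $\beta+1\neq\beta$), and your care about $\psi_0$-equivariance is, if anything, more explicit than the paper's.
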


\begin{proof}
    Consider the following diagram:
    \be
\begin{tikzcd}
 \sigma_\varphi P\rar &A\otimes_\CU P\rar{a\oplus ac^{-2}} & P\oplus P_{c}\\
\sigma_\varphi M\rar\uar & A\otimes_\CU M\rar{a}\uar & M\uar{\Delta}
\end{tikzcd}
    \ee
    If $P$ is not local, then this induces a non-trivial morphism $\sigma_\varphi P\to P_{c}$. Moreover by definition, this map is trivial on the submodule $\sigma_\varphi M$ since $M$ is local. This means that we must have a morphism from $ \sigma_\varphi N\to P_{c}$. Note that the generalized eigenvalue of $\sigma_\varphi N$ is $\beta+1$, and having a nontrivial map between them would imply that $\beta+1=\alpha$. 
    
\end{proof}

This lemma shows that in order to build a non-local module of $A$ from local ones, the eigenvalues of $\psi_0$ must differ by one from top to bottom. In fact, we have the following corollary.

\begin{Cor}\label{Cor:Extnonint}
    Let $M,N$ be two objects in $A\Mod$ such that the generalized eigenvalues of $\psi_0$ on $M$ and $N$ differ by a non-integer. Then $\Ext^1(M,N)=0$. 
\end{Cor}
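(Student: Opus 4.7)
The plan is to use the zero mode $\psi_0$ to separate the extension into its two pieces. The key structural observation is that $\psi_0$ commutes with the full $A$-action on any $P\in A\Mod$: by Assumption 5 the field $\pd\psi$ has trivial zero mode on $W$, so for any $a\in A$ the standard commutator formula gives $[\psi_0, Y_P(a,z)] = Y_P(\psi_0\cdot a, z) = 0$, and hence $\psi_0$ defines an endomorphism of $P$ in the category $A\Mod$. On a non-local $P$ the operator $\psi_0$ is defined via the residue along a lift of the unit circle to the universal cover of $\C^\times$, exactly as in the discussion preceding Lemma \ref{Lem:ExtAmod}.

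First I would verify that $\psi_0$ acts locally finitely on every $P\in A\Mod$. By the canonical filtration constructed earlier in Step 1, $P$ has a filtration whose subquotients are local, and on locals $\psi_0$ decomposes into generalized eigenspaces by Assumption 5. Since each simple object of $A\Mod$ is local (also shown in Step 1), $\psi_0$ acts by a scalar on every composition factor of $P$; combined with the finite length hypothesis this produces a decomposition
\[
P \;=\; \bigoplus_{\gamma\in\C} P_\gamma
\]
into generalized $\psi_0$-eigenspaces, each of which is an $A$-submodule by the centrality of $\psi_0$ just noted.

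Now given an extension $0\to N\to P\to M\to 0$, let $S_M, S_N\subset\C$ be the sets of generalized $\psi_0$-eigenvalues on $M$ and $N$. The hypothesis that any $\alpha\in S_M$ and $\beta\in S_N$ differ by a non-integer forces in particular $S_M\cap S_N = \emptyset$, and the short exact sequence forces the eigenvalues on $P$ to lie in $S_M\cup S_N$. Setting
\[
P_M := \bigoplus_{\gamma\in S_M} P_\gamma, \qquad P_N := \bigoplus_{\gamma\in S_N} P_\gamma,
\]
we obtain a decomposition $P = P_M\oplus P_N$ in $A\Mod$, with $N\hookrightarrow P$ landing in $P_N$ and the projection $P\to M$ restricting to an isomorphism $P_M\xrightarrow{\sim} M$. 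The inverse of the latter yields a section of $P\to M$, so the extension splits. The only non-trivial verifications are the centrality and local finiteness of $\psi_0$ on logarithmic modules, both of which follow directly from Assumption 5 and the canonical filtration of Step 1; there is no deeper obstacle.
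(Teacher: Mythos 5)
Your argument has a genuine gap at its very first step: the claim that $\psi_0$ commutes with the $A$-action on an arbitrary $P\in A\Mod$ is false when $P$ is not local. The naive commutator formula $[\psi_0,\CY_P(a,z)]=\CY_P(\psi_0\cdot a,z)$ is valid for ordinary intertwining operators, but on a logarithmic module the action of $W$ is a logarithmic intertwining operator and $\psi_0$ is only defined by integrating over a \emph{lift} of the unit circle; the contour-deformation argument then acquires correction terms from the monodromy of $\log z$, and these corrections are exactly the nilpotent operators that shift the generalized $\psi_0$-eigenvalue by $+1$ (in the paper's notation, the induced maps $\sigma_\varphi \overline{M}_\alpha\to\overline{M}_{\alpha+1}$ that become the Nichols-algebra action under $\CE$). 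Concretely, your argument proves too much: it would show $\Ext^1(M,N)=0$ whenever the eigenvalue sets of $M$ and $N$ are merely disjoint. But the object $N=A\otimes_\CU\sigma_\varphi A'$ constructed in Step 3 is a non-split extension of $A$ (eigenvalue $0$) by $\sigma_\varphi A$ (eigenvalue $1$), so its generalized eigenspaces cannot both be $A$-submodules, i.e.\ $\psi_0$ is not an $A\Mod$-endomorphism of it. This is also why Lemma \ref{Lem:ExtAmod} must allow the exceptional case $\beta=\alpha-1$, and why the subsequent Proposition only establishes that the upper sets $M_{[\geq\alpha]}$ --- not the individual eigenspaces $M_\alpha$ --- are submodules.

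The correct route, and the one the paper takes, is to first handle the case where both $M$ and $N$ are local: there Lemma \ref{Lem:ExtAmod} shows that any extension $P$ with eigenvalue difference $\neq 1$ (in particular non-integer) is itself local, and a local module does decompose into $\psi_0$-eigenspaces as $A$-submodules by Assumption~5, so the extension splits. The general case then follows by induction on the lengths of $M$ and $N$, using the long exact sequences of $\Ext$ groups associated to $0\to M_1\to M\to M/M_1\to 0$ (and similarly for $N$), since all subquotients inherit the non-integer eigenvalue separation. If you want to salvage your decomposition idea, you must replace "eigenspaces of $\psi_0$" by "eigenspaces of $\psi_0$ modulo $\Z$", i.e.\ the summands $\bigoplus_{n\in\Z}P_{\gamma+n}$: these \emph{are} $A$-submodules (this is essentially the first paragraph of the proof of the Proposition following this Corollary), and since a non-integer difference puts $M$ and $N$ in different $\Z$-cosets, that coarser decomposition does split the extension.
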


\begin{proof}
    If $M$ and $N$ are local, then this is Lemma \ref{Lem:ExtAmod}. Let us now assume $N$ is local and $M$ fits into a short exact sequence:
    \be
\btik
0\rar & M_1\rar &M\rar & M/M_1\rar & 0
\etik
    \ee
    We apply induction on the length of $M$, and assume that we have proven the statement for $M_1$ and $M/M_1$. We now have a long exact sequence of extension groups:
    \be
\btik
\cdots \rar & \Ext^1(M/M_1, N)\rar & \Ext^1(M,N)\rar & \Ext^1(M_1, N)\rar & \cdots
\etik
    \ee
    Therefore, if the two $\Ext$ groups are trivial for $M/M_1$ and $M_1$ then it is trivial for $M$. We can now induce on the length of $N$ as well and the proof goes in the same way. 
\end{proof}

We now give a partial order on $\C$ by declaring that $\alpha<\alpha+1$. We define $M_{[\geq \alpha]}:=\bigoplus_{n\geq 0}M_{\alpha+n}$. By finite-length property, for each $M$ we can find a finite set of minimal eigenvalues $\alpha_i$ such that:
\be\label{eq:eigdecom}
M=\bigoplus_i\bigoplus_{n\geq 0} M_{\alpha_i+n}.
\ee
Moreover, this is a finite direct sum of vector spaces. If $M$ is local, then this gives a decomposition of $M$ into submodules. Using Lemma \ref{Lem:ExtAmod} and Corollary \ref{Cor:Extnonint}, we can show the following. 

\begin{Prop}
   Each $M_{[\geq \alpha]}$ is a submodule of $M$.
\end{Prop}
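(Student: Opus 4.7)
The plan is to construct, for each $M \in A\Mod$, an $A$-submodule filtration of $M$ whose successive simple quotients have non-decreasing $\psi_0$-eigenvalues; the piece of this filtration whose factors have eigenvalues $\geq \alpha$ will then coincide with $M_{[\geq \alpha]}$.

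First I reduce to the case where all $\psi_0$-eigenvalues of $M$ lie in a single coset $\alpha_0 + \Z$. By Corollary \ref{Cor:Extnonint}, $\Ext^1$ in $A\Mod$ between modules with eigenvalues in different cosets vanishes, so a standard induction on length yields a canonical decomposition $M = \bigoplus_i M^{(i)}$ in which each $M^{(i)}$ is an $A$-submodule supported in a single coset. Since $M_{[\geq \alpha]} = \bigoplus_i (M^{(i)})_{[\geq \alpha]}$, it suffices to treat each $M^{(i)}$ separately, so from now on we assume all eigenvalues of $M$ lie in a single coset.

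Next I reorder a composition series so that the eigenvalues appear in increasing order as one descends into the filtration. All simple objects in $A\Mod$ are local (shown earlier), and by the fifth assumption at the start of this subsection the operator $\psi_0$ commutes with the action of every mode $w_n$ on any local $A$-module (because $\psi_0\!\cdot\! W = 0$ yields $[\psi_0, w_n] = (\psi_0 w)_n = 0$), so every simple has a single $\psi_0$-eigenvalue. Pick any composition series $0 = F^n \subset F^{n-1} \subset \cdots \subset F^0 = M$ and let $\beta_i$ denote the eigenvalue of $F^i/F^{i+1}$. If $\beta_i > \beta_{i+1}$ for some $i$ (meaning $\beta_i - \beta_{i+1} \in \Z_{>0}$), consider the length-two subquotient
\begin{equation*}
0 \to F^{i+1}/F^{i+2} \to F^i/F^{i+2} \to F^i/F^{i+1} \to 0.
\end{equation*}
By Lemma \ref{Lem:ExtAmod} a non-local extension here would force $\beta_{i+1} = \beta_i + 1$, contradicting $\beta_i > \beta_{i+1}$, so $F^i/F^{i+2}$ must be local. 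Being local with two distinct $\psi_0$-eigenvalues, it splits as a direct sum of its generalized eigenspaces, which are $A$-submodules. Lifting the summand of eigenvalue $\beta_i$ to an $A$-submodule $\widetilde F^{i+1}$ with $F^{i+2} \subset \widetilde F^{i+1} \subset F^i$ produces a new composition series in which the two offending factors are swapped. A bubble-sort argument on the inversion count of the sequence $(\beta_0, \dots, \beta_{n-1})$ terminates to produce a composition series with $\beta_0 \leq \beta_1 \leq \cdots \leq \beta_{n-1}$.

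With such an ordered series at hand, let $i_\alpha$ be the smallest index for which $\beta_{i_\alpha} \geq \alpha$. Then $F^{i_\alpha}$ is an $A$-submodule all of whose composition factors have eigenvalue in $\{\alpha + n : n \geq 0\}$, so $F^{i_\alpha} \subseteq M_{[\geq \alpha]}$; conversely $M/F^{i_\alpha}$ has composition factors with eigenvalues $\beta_0, \ldots, \beta_{i_\alpha - 1}$, none of which lies in $\{\alpha + n : n \geq 0\}$, forcing $(M/F^{i_\alpha})_{[\geq \alpha]} = 0$ and hence $M_{[\geq \alpha]} \subseteq F^{i_\alpha}$. The resulting equality $M_{[\geq \alpha]} = F^{i_\alpha}$ exhibits $M_{[\geq \alpha]}$ as an $A$-submodule of $M$. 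I expect the main subtlety to lie in the reordering step: one must check that the splitting of the length-two local subquotient can be lifted to a genuine $A$-submodule $\widetilde F^{i+1}$ of $M$, but this is immediate by taking the preimage of the appropriate summand under the quotient map $F^i \twoheadrightarrow F^i/F^{i+2}$.
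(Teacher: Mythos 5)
Your proof is correct, and it takes a genuinely different route from the paper's. The paper proceeds by induction on length using the canonical filtration whose top quotient is the maximal local quotient: it first isolates the coset pieces via Corollary \ref{Cor:Extnonint}, then, for the integer filtration, lifts the extension class $[M]\in \Ext^1(M/M_1,M_1)$ to $\Ext^1(M/M_1,(M_1)_{[\geq\alpha+k]})$ and realizes $M$ as a quotient of $M_1\oplus N$, from which the submodule property is read off. You instead sort a composition series: after the same coset reduction, you use Lemma \ref{Lem:ExtAmod} to show that any adjacent pair of factors whose $\psi_0$-eigenvalues decrease going down must sit in a \emph{local} length-two subquotient (a non-local one would force the eigenvalue to jump up by exactly $1$), which therefore splits into its generalized eigenspaces and lets you transpose the two factors; a bubble-sort on inversions then produces a series with non-decreasing eigenvalues, whose tail is visibly $M_{[\geq\alpha]}$. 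Both arguments rest on the same two inputs (Lemma \ref{Lem:ExtAmod} and Corollary \ref{Cor:Extnonint}, plus the fact that local modules split into $\psi_0$-eigenspaces as $A$-submodules), but your version is more elementary and yields the slightly stronger conclusion that $M$ admits a composition series refining the eigenvalue filtration, whereas the paper's version exhibits $M$ more explicitly as a pushout of an extension concentrated in high eigenvalues, which is closer in spirit to how the filtration is used afterwards. One small point worth making explicit in your write-up: the transposition step needs the two adjacent factors to have \emph{distinct} eigenvalues so that the generalized eigenspace decomposition of the local subquotient separates them — this holds automatically since you only swap when $\beta_i>\beta_{i+1}$.
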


\begin{proof}
    Let us first show that each $M_{[\geq \alpha_i]}$ is a $W$ submodule. We induce on the length of $M$. It is true when $M$ is local. Assume now that $M$ fits into an exact sequence: 
    \be
\btik
0\rar & M_1\rar & M\rar & M/M_1\rar & 0
\etik
    \ee
such that $M/M_1$ is local, and that we have proven the statement for $M_1$. This obviously leads to an exact sequence of vector spaces:
\be
\btik
0\rar & (M_1)_{[\geq \alpha_i]}\rar & M_{[\geq\alpha_i]}\rar & (M/M_1)_{[\geq \alpha_i]}\rar & 0
\etik
\ee
To show that $M_{[\geq \alpha_i]}$ is a submodule, we only need to show that there is no non-trivial extension between $(M/M_1)_{[\geq \alpha_i]}$ and $(M_1)_{[\geq \alpha_j]}$ for $j\ne i$. This is true thanks to Corollary \ref{Cor:Extnonint}. 

We can now assume the generalized eigenvalue of $\psi_0$ is contained in $[\geq \alpha]$. We show $M_{[\geq\alpha+n]}$ is a submodule for all $n$. Again we use induction on the length of $M$. It is true if $M$ is local. Suppose now that $M_1$ is the minimal submodule such that $M/M_1$ is local, and assume that we have proven the statement for $M_1$, and assume that $M/M_1$ has a decomposition:
\be
M/M_1=\bigoplus_{k\geq 0} (M/M_1)_{\alpha+k}.
\ee
By choosing a pre-image of $(M/M_1)_{\alpha+k}$ we may assume that $M/M_1$ has a single generalized eigenvalue $\alpha+k$. For $n>k$, it is clear that $M_{[\geq \alpha+n]}=(M_1)_{[\geq \alpha+n]}$, and these are submodules of $M$. When $n<k$, note that $M$ defines an extension between $M_1$ and $M/M_1$, and also that there is a short exact sequence:
\be
\btik
0\rar & (M_1)_{[\geq \alpha+k]} \rar & M_1\rar & M_1/(M_1)_{[\geq \alpha+k]}\rar & 0
\etik
\ee
By Corollary \ref{Cor:Extnonint},  $\mathrm{Ext}^1(M/M_1, M_1/(M_1)_{[\geq \alpha+k]})=0$ since the module $M_1/(M_1)_{[\geq \alpha+k]}$ has generalized eigenvalues smaller than $\alpha+k$. From the long exact sequence of extension groups:
\be
\btik
\cdots\rar & \Ext^1(M/M_1, (M_1)_{[\geq \alpha+k]})\rar &\Ext^1(M/M_1, M_1)  \\
 \rar&  \Ext^1(M/M_1, M_1/(M_1)_{[\geq \alpha+k]})=0\rar & \cdots
\etik
\ee
we see that the class of extension $[M]\in \Ext^1(M/M_1, M_1)$ must have a pre-image in $\Ext^1(M/M_1, (M_1)_{[\geq \alpha+k]})$. Therefore the extension $M$ is determined by an extension between $M/M_1$ and $(M_1)_{[\geq \alpha+k]}$, which we denote by $N$. In other words, $M$ must be given by the following quotient diagram:
\be
\btik
0\rar & (M_1)_{\geq \alpha+k}\rar & M_1\oplus N\rar & M\rar & 0
\etik
\ee
Here the first map is the diagonal embedding of $(M_1)_{[\geq \alpha+k]}$ into $M_1$ and $N$. For each $n\leq k$, we have $N_{[\geq \alpha+n]}=N$ and therefore is a submodule of $N$. Consequently, $M_{[\geq \alpha+n]}$ is a quotient of $N_{[\geq\alpha+n]}\oplus (M_1)_{[\geq\alpha+n]}$ and therefore is a submodule of $M$. This completes the proof. 
 
\end{proof}

The above proof shows that any $M\in A\Mod$ comes equipped with a filtration $M=M_0\supseteq M_1\supseteq M_2\supseteq \ldots \supseteq M_n=0$ such that:
\be\label{eq:filtern}
M_n= \bigoplus_i M_{[\geq \alpha_i+n]}.
\ee
Moreover, it is clear from the proof that the associated graded are all local modules. We now use this to construct the tensor functor $\epsilon$.

\begin{Prop}
    There is a faithful and exact tensor functor:
    \be
\epsilon: A\Mod\to A\lMod.
    \ee
\end{Prop}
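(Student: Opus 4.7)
The natural candidate for $\epsilon$ is the associated-graded functor of the canonical filtration $\{M_n\}$ just constructed. Explicitly, I would define
\be
\epsilon(M) \;:=\; \bigoplus_{n \geq 0} M_n / M_{n+1} \;=\; \bigoplus_{i}\bigoplus_{n \geq 0} M_{[\geq \alpha_i + n]}/M_{[\geq \alpha_i + n + 1]},
\ee
where the sum over $i$ runs over representatives of the distinct cosets in $\C/\Z$ that occur in the $\psi_0$-spectrum of $M$. Each summand is a generalized $\psi_0$-eigenspace, and is a local $A$-module by the argument given in the proof of the preceding proposition (it is a local subquotient of $M_{[\geq \alpha_i+n]}$). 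To see that $\epsilon$ is functorial, note that any $A$-linear morphism $f: M \to N$ automatically commutes with $\psi_0$, because $\psi \in W = A$ acts through the module structure; thus $f$ preserves generalized $\psi_0$-eigenspaces, respects the canonical filtration, and descends to the associated graded. Unitality $\epsilon(A) \cong A$ is clear since Assumption 5 places $A$ itself in $\psi_0$-eigenvalue $0$.

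For exactness, given a short exact sequence $0 \to M' \to M \to M'' \to 0$ in $A\Mod$, the canonical filtrations on $M'$ and $M''$ are intrinsic to the $\psi_0$-action, so they coincide with the induced sub- and quotient filtrations from $M$. Passing to $\psi_0$-generalized eigenspaces is exact on each (finite-length) step of the filtration \eqref{eq:filtern}, so the induced sequence $0 \to \epsilon(M') \to \epsilon(M) \to \epsilon(M'') \to 0$ is exact. Faithfulness is then immediate: a nonzero morphism must be nonzero on some generalized $\psi_0$-eigenspace, and hence induces a nonzero map on associated gradeds.

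For the tensor structure, the key claim is that $\psi_0$ acts on every tensor product $M \otimes_A N$ by the derivation $\psi_0 \otimes 1 + 1 \otimes \psi_0$. This follows from the standard commutator formula $[\psi_0, \CY(m, z)] = \CY(\psi_0 m, z)$ for logarithmic intertwining operators $\CY$ realizing $\otimes_A$ (valid because $\pd\psi$ is a conformal-weight-one Heisenberg current), combined with Assumption 5 that $\psi_0$ acts trivially on $A$, which makes the coequalizer relation $am \otimes n \sim m \otimes an$ of Proposition~\ref{Prop:FSAMod} homogeneous under $\psi_0$. Consequently, the generalized $\psi_0$-eigenvalues on $M \otimes_A N$ are sums of eigenvalues from the factors, the canonical filtration on $M \otimes_A N$ is the tensor product of the canonical filtrations on $M$ and $N$, and one obtains a natural isomorphism $\epsilon(M \otimes_A N) \cong \epsilon(M) \otimes_A \epsilon(N)$ in $A\lMod$. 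Coherence with associators and the unit constraint is inherited block-diagonally from the eigenspace decomposition.

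The main obstacle I anticipate is the final tensor-compatibility step: one must verify that the derivation formula for $\psi_0$ descends cleanly from $M \otimes_\CU N$ to the coequalizer $M \otimes_A N$, and that the resulting comparison isomorphism lives in $A\lMod$ (not merely in $\CU$) and is natural in both arguments. Assumption 5 is precisely what makes this possible, but the verification requires careful bookkeeping of how the eigenspace filtration interacts with the two $A$-actions entering the coequalizer. Once this is in hand, all other properties, including faithfulness, exactness, and the unital/associative coherence, follow by formal manipulation on eigenspaces.
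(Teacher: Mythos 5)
Your definition of $\epsilon$ as the associated graded of the canonical $\psi_0$-filtration is exactly the paper's, and your treatments of functoriality, exactness and faithfulness (any $A$-morphism commutes with $\psi_0$, and generalized-eigenspace decomposition is exact) match the paper's one-line argument. Where you diverge is in how the additivity of $\psi_0$-eigenvalues on $M\otimes_A N$ is obtained: you invoke the derivation formula $[\psi_0,\CY(m,z)]=\CY(\psi_0 m,z)$ for the intertwining operator realizing $\otimes_A$, whereas the paper argues purely categorically, defining $F_k(M\otimes_A N)=\sum_{i+j=k}F_i(M)\otimes_A F_j(N)$ and reading off the eigenvalue of $\mathrm{Gr}_k$ from the surjection $\bigoplus_{i+j=k}\mathrm{Gr}_i(M)\otimes_A\mathrm{Gr}_j(N)\to\mathrm{Gr}_k(M\otimes_A N)$ together with the uniqueness of the canonical filtration. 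Both routes are legitimate; yours is closer to the VOA, theirs stays inside the abstract tensor category and is what generalizes in this paper.

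The genuine gap is the step you yourself flag as ``the main obstacle'': you never prove that the comparison map $\epsilon(M)\otimes_A\epsilon(N)\to\epsilon(M\otimes_A N)$ is an isomorphism, and the derivation formula alone does not give this. Knowing that $\psi_0$ acts as $\psi_0\otimes 1+1\otimes\psi_0$ pins down the canonical filtration of $M\otimes_A N$ intrinsically, but it does not identify its associated graded with $\epsilon(M)\otimes_A\epsilon(N)$: for that you need to know that $F_i(M)\otimes_A N\to M\otimes_A N$ is injective and that quotients of the filtration commute with $\otimes_A$, i.e.\ you need exactness of the tensor product on $A\Mod$. This is precisely Corollary \ref{cor:exact} in the paper (itself a consequence of every object being a quotient of a rigid induced module), and the paper then completes the argument by induction on the length of $N$: the map is an isomorphism when $N$ is local, and the general case follows from the snake lemma applied to the commutative ladder of short exact sequences induced by $0\to N_1\to N\to N/N_1\to 0$. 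Without the exactness input and this d\'evissage, the assertion that ``the canonical filtration on $M\otimes_A N$ is the tensor product of the canonical filtrations'' and that the resulting comparison is an isomorphism in $A\lMod$ remains unjustified. The final coherence check (compatibility with associators) is, as you say, formal once the comparison isomorphism is in hand, and the paper verifies it by a routine naturality diagram.
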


\begin{proof}
    We define $\epsilon(M)$ to be the associated graded according to the filtration in equation \eqref{eq:filtern}. It is a faithful and exact functor because any $A$ module homomorphism must preserve generalized eigenvalues of $\psi_0$. We now show that it is a tensor functor. Let $M$ and $N$ be objects in $A\Mod$, assume that $M$ has generalized eigenvalues $[\geq\alpha]$ under $\psi_0$ and $N$ has generalized eigenvalues $[\geq \beta]$ under $\psi_0$. The filtration on $M$ and $N$ gives rise to a filtration on $M\otimes_A N$, given by:
    \be
F_k (M\otimes_A N)=\sum_{i+j=k}F_i(M)\otimes_A F_j(N)
    \ee
thanks to exactness of tensor product in $A\Mod$ (Corollary \ref{cor:exact}). Therefore, there is a surjective morphism:
\be
\bigoplus_{i+j=k}\mathrm{Gr}_i (M)\otimes_A \mathrm{Gr}_j (N)\to \mathrm{Gr}_k (M\otimes_A N).
\ee
Since the left-hand side has generalized eigenvalue $\alpha+\beta+k$, and therefore the right-hand side must also have this generalized eigenvalue. Since there is a unique filtration on $M\otimes_A N$ such that each associated graded piece has different generalized eigenvalue contained in $[\geq (\alpha+\beta)]$, the above must be this filtration. Taking associated graded, we have constructed a functorial map:
\be
\epsilon(M)\otimes_A \epsilon(N)\to \epsilon (M\otimes_A N).
\ee
We now show that this is an isomorphism. It is clearly an isomorphism when $N$ is local, and for general $N$, we have the following commutative diagram of exact sequences:
\be
\btik
0\rar & \epsilon (M)\otimes_A \epsilon(N_1)\rar \dar& \epsilon(M)\otimes_A \epsilon (N)\rar \dar& \epsilon (M)\otimes_A \epsilon(N/N_1)\dar\rar & 0 \\
0\rar & \epsilon (M\otimes_A N_1)\rar & \epsilon (M\otimes_A N)\rar & \epsilon (M\otimes_A (N/N_1))\rar & 0
\etik
\ee
By short five lemma, if the first and third vertical arrows are isomorphisms, then so is the second one. 

Finally, to show that this is a tensor functor, we must show that the ``monoidal structural axiom" (see \cite[Definition 2.4.1]{egno}) for $\epsilon$ and the isomorphism $\epsilon (M\otimes_A N)\cong \epsilon(M)\otimes_A \epsilon(N)$. This will follow from the commutativity of the following diagram:
\be
\btik
\lp F_i(M)\otimes_A F_j(N)\rp\otimes_A F_k(P)\rar{a_{M,N,P}}\dar & F_i(M)\otimes_A (F_j(N)\otimes F_k(P))\dar\\
F_{i+j}(M\otimes_A N)\otimes_A F_k(P)\dar & F_i(M)\otimes_A F_{j+k}(N\otimes_A P)\dar\\
F_{i+j+k}((M\otimes_A N)\otimes_A P)\rar{a_{M,N,P}} & F_{i+j+k}(M\otimes_A (N\otimes_A P))
\etik
\ee
whose commutativity follows from the naturality of $a_{M,N,P}$ and the definition of the filtration and the vertical maps. This completes the proof. 

\end{proof}

\begin{Rem}
    By construction, $\epsilon$ is a split tensor functor of the embedding $\iota:A\lMod\to A\Mod$, since if a module is local then equation \eqref{eq:eigdecom} gives a decomposition of $M$ into objects in $A\lMod$. 
\end{Rem}

\noindent\textbf{Step 2.} Now we upgrade $\epsilon$ to a functor $\CE$ into $\mf{N}\Mod(A\lMod)$. For each $M$, we denote by $\overline{M}_\alpha$ the associated graded piece of $\epsilon(M)$. 
As a consequence of Lemma \ref{Lem:Kera} and Lemma \ref{Lem:ExtAmod}, we see that the morphism of $A$ modules:
\be
\btik
A\otimes_\CU M_{[\geq \alpha]}\rar{a\oplus ac^{-2}} & M_{[\geq \alpha]}\oplus M_{[\geq \alpha], c}
\etik
\ee
induces a morphism of $A$ modules $ \sigma_\varphi M_{[\geq \alpha]}\to M_{[\geq\alpha], c}$. By degree consideration, and taking associated graded of $\epsilon$, this must induce a map:
\be
\sigma_\varphi \overline{M}_{\alpha}\to \overline{M}_{\alpha+1}.
\ee
Namely, there is a canonical morphism in $A\lMod$:
\be
\btik
\sigma_\varphi (A)\otimes_A \overline{M}_{\alpha}\rar{a_{[\alpha]}^{log}} & \overline{M}_{\alpha+1}
\etik 
\ee
The sum $a^{\log}:=\sum_{[\alpha]} a_{[\alpha]}^{log}$ gives us a morphism:
\be
\btik
\sigma_\varphi (A)\otimes_A\epsilon (M)\rar{a^{\log}} & \epsilon (M)
\etik
\ee
which by definition is the same as the structure of a module of $\CT ( \sigma_\varphi A)$, the tensor algebra of $\sigma_\varphi A$. 

\begin{Lem}
    The assignment $M\mapsto \epsilon (M)$ together with the action of $\CT( \sigma_\varphi A)$ is functorial. Namely this gives a functor:
    \be
\wt{\CE}: A\Mod \longrightarrow \CT(\sigma_\varphi A)\Mod \lp A\lMod\rp.
\ee
\end{Lem}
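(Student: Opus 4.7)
The plan is to reduce the lemma to two observations: that any morphism $f: M \to N$ in $A\Mod$ automatically preserves the canonical filtration by generalized $\psi_0$-eigenspaces constructed in Step 1, and that the resulting morphism on associated gradeds intertwines the action maps $a^{\log}_{[\alpha]}$ by naturality of the constructions defining $a^{\log}$.

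First, I would check that $f$ preserves the filtration $F_n M = \bigoplus_i M_{[\geq \alpha_i + n]}$. Since $\psi \in W$ and $W$ corresponds to the algebra object $A$, the action of $\psi_0$ on each $A$-module $M$ is part of the $A$-module structure, defined on the local subquotients of the filtration where no logarithmic terms appear. An $A$-module morphism commutes with $Y_M(w,z)$ for every $w\in W$, hence in particular commutes with $\psi_0$ on each such local subquotient, and therefore preserves all generalized $\psi_0$-eigenspaces. It follows that $f(F_nM) \subseteq F_nN$ and that $\epsilon(f): \epsilon(M) \to \epsilon(N)$ is a well-defined morphism in $A\lMod$; functoriality of this assignment is already contained in the functoriality of $\epsilon$ established earlier.

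Second, I would verify that $\epsilon(f)$ intertwines the $\CT(\sigma_\varphi A)$-actions. Recall that $a^{\log}_{[\alpha]}$ was obtained in three steps: identify $\sigma_\varphi M_{[\geq \alpha]}$ with $\mathrm{Ker}\, a \subset A \otimes_\CU M_{[\geq \alpha]}$ via Lemma~\ref{Lem:Kera}; apply $ac^2$ into $M_{[\geq \alpha], c}$; and take the associated graded in degree $\alpha+1$. Each of these is natural in the $A$-module input: the identification of $\mathrm{Ker}\, a$ with $\sigma_\varphi M$ is functorial because it descends from the natural isomorphism $K_m \otimes_{A_R} M \cong \sigma_\varphi A \otimes_A M$; the morphism $ac^2$ is natural because both $a$ and the braiding $c$ are natural transformations; and associated graded is functorial by the previous paragraph. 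Composing these naturalities yields, for every $\alpha$, a commutative square
\be
\btik
\sigma_\varphi(A)\otimes_A \ol{M}_\alpha \arrow[r, "a^{\log}_{[\alpha]}"]\arrow[d, "1\otimes \epsilon(f)"', swap] & \ol{M}_{\alpha+1}\arrow[d, "\epsilon(f)"]\\
\sigma_\varphi(A)\otimes_A \ol{N}_\alpha \arrow[r, "a^{\log}_{[\alpha]}"] & \ol{N}_{\alpha+1}
\etik
\ee
which is exactly the condition that $\epsilon(f)$ intertwines the action of the generators of $\CT(\sigma_\varphi A)$.

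There is no serious obstacle to this lemma: both steps are formal consequences of naturalities already in place, and functoriality of $\wt\CE$ (preservation of identities and composition) reduces to functoriality of $\epsilon$ together with the commutativity above. The only point requiring care is the first paragraph, where one must be explicit that $\psi_0$ is well-defined and $A$-linear on each local subquotient, but this was already arranged in Step 1 via the decomposition $M = \bigoplus_i\bigoplus_{n\geq 0} M_{\alpha_i + n}$ together with Assumption 5, which guarantees that local $A$-modules decompose into $\psi_0$-eigenspaces as $A$-submodules.
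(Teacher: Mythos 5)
Your proposal is correct and follows essentially the same route as the paper: the paper proves the lemma by noting that $f$ commutes with $a\oplus ac^2$ (since $f$ is an $A$-module map, also for the twisted structures $M_c$, $N_c$), restricting to the kernel of $a$, which is $\sigma_\varphi M_{[\geq\alpha]}$ by Lemma~\ref{Lem:Kera}, and then taking associated graded --- precisely the three naturalities you compose in your second paragraph, with your first paragraph restating the already-established functoriality of $\epsilon$.
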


\begin{proof}
We need to show that if there is a morphism $f: M\to N$ of $A$ modules, then the following diagram commutes:
\be
\btik
\sigma_\varphi (A)\otimes_A \overline{M}_{[\alpha]}\rar\dar{1\otimes f_{[\alpha]}} & \overline{M}_{[\alpha+1]}\dar{f_{[\alpha+1]}}\\
\sigma_{\varphi} (A)\otimes_A \overline{N}_{[\alpha]}\rar & \overline{N}_{[\alpha+1]}
\etik
\ee
Here $f_{[\alpha]}$ is the restriction of $f$ on $M_{[\alpha]}$. This will then follow from the commutativity of the diagram (which is commutative because $f$ is a morphism of $A$ modules):
\be
\btik
A\otimes_\CU M_{[\geq \alpha]}\rar{a\oplus ac^{-2}}\dar{1\otimes f} & M_{[\geq \alpha]}\oplus M_{[\geq \alpha], c}\dar{f}\\
A\otimes_\CU N_{[\geq \alpha]}\rar{a\oplus ac^{-2}} & N_{[\geq \alpha]}\oplus N_{[\geq\alpha], c}
\etik
\ee
Indeed, taking the kernel of $a$ in the horizontal morphisms, we find that the following diagram commutes:
\be
\btik
 \sigma_\varphi M_{[\geq \alpha]}\rar{ac^{-2}}\dar{1\otimes f} & M_{[\geq \alpha+1], c}\dar{f}\\
 \sigma_\varphi N_{[\geq \alpha]}\rar{ac^{-2}} & N_{[\geq \alpha+1], c}
\etik
\ee
Taking associated graded we get the desired result. 

\end{proof}

We now show that the image lies in the subcategory of $\mf{N}$ modules. 

\begin{Lem}
    The morphism:
    \be
    \btik
\sigma_\varphi^2(\epsilon(M))\rar{1\otimes a^{log}} &  \sigma_\varphi(\epsilon(M))\rar{a^{log}} & \epsilon(M)
\etik
    \ee
    is zero. Therefore, the functor $\wt{\CE}$ induces a functor:
    \be
\CE: A\Mod\longrightarrow \mf N\Mod \lp A\lMod\rp. 
    \ee
\end{Lem}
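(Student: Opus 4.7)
The plan is to lift the composition back to the $A$-module level, then use commutativity of $A$ together with the diagonal $(-1)$-braiding of $\sigma_\varphi A$ with itself to force the vanishing. I will first observe that $a^{\log}$ arises as the associated graded of an $A$-module morphism $\tilde r_M \colon \sigma_\varphi A \otimes_A M \to M_c$, obtained by restricting $\alpha_c := ac^2 \colon A\otimes_\CU M \to M_c$ to the sub-object $\sigma_\varphi A \otimes_A M = \ker(a) \hookrightarrow A \otimes_\CU M$ (Lemma \ref{Lem:Kera}); this is forced by the degree argument used to define $a^{\log}_{[\alpha]}$. Iterating, the $A$-module morphism $\tilde r_{M_c} \circ (1 \otimes \tilde r_M) \colon \sigma_\varphi A \otimes_A \sigma_\varphi A \otimes_A M \to M_{c^2}$ will have associated graded $a^{\log} \circ (1 \otimes a^{\log})$ by exactness of $\epsilon$ and the identification $\epsilon(M_c) = \epsilon(M)$. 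So it suffices to show this iterated lift vanishes on the associated graded.

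To this end, I will realise the iterated lift as the restriction of $\alpha_c \circ (1 \otimes \alpha_c) \colon A \otimes_\CU A \otimes_\CU M \to M$ to the direct summand $\sigma_\varphi V \otimes_\CU \sigma_\varphi V \otimes_\CU M \subset A \otimes_\CU A \otimes_\CU M$, where the splitting $A \cong V \oplus \sigma_\varphi V$ in $\CU$ is afforded by the augmentation $a \colon A \to V$ (a $\CU$-retraction of the unit $V \hookrightarrow A$). By associativity of the $A$-action on $M_c$ together with the commutativity $m \circ c_{A,A} = m$, I then derive
\be
\alpha_c \circ (1 \otimes \alpha_c) \;=\; \alpha_c \circ (m \otimes 1_M) \;=\; \alpha_c \circ (m \otimes 1_M) \circ (c_{A,A} \otimes 1_M) \;=\; \alpha_c \circ (1 \otimes \alpha_c) \circ (c_{A,A} \otimes 1_M)
\ee
as maps $A \otimes_\CU A \otimes_\CU M \to M$. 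Restricted to the chosen direct summand, $c_{A,A}$ becomes the $\CU$-braiding $c_{\sigma_\varphi V, \sigma_\varphi V}$, which is the scalar $-\mathrm{id}$ on the simple current $\sigma_\varphi^2 V$ (pinned down by the Corollary preceding the theorem). Hence the restricted map will equal its own negation, forcing it to vanish.

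The hard part will be the clean identification of $\sigma_\varphi A \otimes_A \sigma_\varphi A \otimes_A M \cong \sigma_\varphi^2 V \otimes_\CU M$ as a direct summand of $A \otimes_\CU A \otimes_\CU M$ on which $c_{A,A} \otimes 1_M$ acts by $-\mathrm{id}$. This will require a double application of Lemma \ref{Lem:Kera} together with the splitting $A \cong V \oplus \sigma_\varphi V$ in $\CU$, plus the observation that $c_{A,A}$ is natural with respect to this direct sum decomposition (since the summands are distinguished by their $\sigma_\varphi$-degree, which is preserved by braiding).
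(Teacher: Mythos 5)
Your overall strategy coincides with the paper's: lift $a^{\log}$ to the $A$-module morphism $ac^2|_{\ker a}\colon \sigma_\varphi M\to M_c$, iterate, and use associativity of the $A$-action on $M_c$ to rewrite the double composition as $ac^2\circ(m\otimes 1)$ restricted to $(\sigma_\varphi A\otimes_A\sigma_\varphi A)\otimes_A M$. Up to that point you are reproducing the paper's argument. The divergence, and the gap, lies in how you kill this restriction. You propose to realize $\sigma_\varphi V\otimes_\CU\sigma_\varphi V\otimes_\CU M$ as a \emph{direct summand} of $A\otimes_\CU A\otimes_\CU M$ via a $\CU$-retraction of the unit $V\hookrightarrow A$. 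No such retraction exists: Assumption 1 of this section requires $A$ to be an indecomposable algebra object, and in the motivating example $A=\CV_Z^1$ is the indecomposable induced module $\mathrm{Ind}(V_{-1,0})$ of $V(\fgl(1|1))$, a non-split extension of $\sigma_\varphi V$ by $V$. Hence $\sigma_\varphi A\otimes_A\sigma_\varphi A$ sits inside $A\otimes_\CU A$ only as a sub-object, not as a summand, and the ``restricted map equals its own negation'' step cannot be run as you set it up. A second, independent problem is the scalar you use: the Corollary you cite computes the self-braiding of $\sigma_\varphi W$ inside $A\lMod$ (where it is $-1$); the single braiding $c_{\sigma_\varphi V,\sigma_\varphi V}$ in $\CU$ is a different, twist-dependent scalar on the simple current $\sigma_\varphi^2 V$ that the paper never computes and that need not equal $-1$ (and the relevant supercommutativity is $mc=(-1)^Fm$, so parity signs would also have to be tracked).

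The vanishing you need is much cheaper and is what the paper actually uses: $m$ restricted to $\sigma_\varphi A\otimes_A\sigma_\varphi A$ is a morphism in $\CU$ from an object whose composition factors lie among $\sigma_\varphi^2 V$ and $\sigma_\varphi^3 V$ to $A$, whose composition factors are $V$ and $\sigma_\varphi V$; since by Assumption 4 the $\sigma_\varphi^n V$ are pairwise non-isomorphic simple currents, this Hom-space vanishes. Replacing your braiding argument by this degree count closes the gap and recovers the paper's proof; the rest of your reduction (the lift of $a^{\log}$ and the associativity identity $ac^2\circ(1\otimes ac^2)=ac^2\circ(m\otimes 1)$) is correct and identical to the paper's.
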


\begin{proof}

We identify $\sigma_\varphi M$ with the kernel of $a$, such that the map $\sigma_\varphi M\to M_{c}$ is given by $ac^{-2}$ composed with the embedding $ \sigma_\varphi M\to A\otimes_\CU M$. Consider the following diagram:
\be
\btik
A\otimes_\CU (A\otimes_\CU M)\dar{a_{A,A,M}}\rar{1\otimes ac^{-2}} & A\otimes_\CU M\arrow[dd, "ac^{-2}"]\\
(A\otimes_\CU A)\otimes_\CU M \dar{m\otimes 1}& \\
A\otimes_\CU M\rar{ac^{-2}} & M
\etik
\ee
The commutativity of this diagram follows from the associativity of the action of $A$ on $M_c$. As a morphism in $\CU$, the morphism $\sigma_\varphi^2M\to M$ can be identified with the following morphism:
\be
\btik
\sigma_\varphi^2M\rar & A\otimes_\CU (A\otimes_\CU M)\rar{1\otimes ac^{-2}} & A\otimes M\rar{ac^{-2}} & M
\etik
\ee
Here we view $\sigma_\varphi M=\sigma_\varphi A\otimes_A M$ as a submodule of $A\otimes_\CU M$ (the kernel of $a$), and the module $\sigma_\varphi^2 M=\sigma_\varphi A\otimes_A \sigma_\varphi M$ as a submodule of $A\otimes_\CU (A\otimes_\CU M)$. The commutativity of the above diagram shows that the above composition is equal to the following composition:
\be
\btik
(\sigma_\varphi A\otimes_A \sigma_\varphi A)\otimes_A M\rar & (A\otimes_\CU A)\otimes_\CU M\rar{m\otimes 1} & A\otimes_\CU M\rar{ac^{-2}} & M
\etik
\ee
However, it is clear that $\sigma_\varphi A\otimes_A \sigma_\varphi A$ is in the kernel of $m$, and therefore the composition is zero. This completes the proof. 

\end{proof}

We have thus constructed the functor $\CE$. By construction, it is faithful; we claim that it is full.

\begin{Lem}
    The functor $\CE$ is full. 
\end{Lem}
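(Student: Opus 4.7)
The plan is to establish fullness by lifting morphisms inductively along the canonical filtration from Step 1. Given an $\mf N$-module morphism $g : \epsilon(M) \to \epsilon(N)$, decompose it as $g = \bigoplus_k g_{\alpha+k}$ where $g_{\alpha+k} : \overline M_{\alpha+k} \to \overline N_{\alpha+k}$ is a morphism in $A\lMod$, and where compatibility with the $\mf N$-action amounts to $g_{\alpha+k+1}\circ a^{\log}_M = a^{\log}_N\circ (\mathrm{id}_{\sigma_\varphi A}\otimes g_{\alpha+k})$ for each $k$. The goal is to produce $f : M \to N$ in $A\Mod$ that respects the filtrations $M_j = M_{[\geq \alpha+j]}$, $N_j = N_{[\geq \alpha+j]}$ and whose associated graded recovers $g$. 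I would proceed by induction on the length of $M$.

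For the base case, when $M$ is simple, every simple object of $A\Mod$ is local (by the filtration construction of Step 1, any non-local object has a proper local quotient), so $M = \epsilon(M)$ and $g$ lands in a single graded piece $\overline N_\beta$ of $\epsilon(N)$; the $\mf N$-compatibility forces $g(M)$ to lie in the $a^{\log}_N$-annihilator. The task is to lift $g$ through the projection $N_{[\geq\beta]} \twoheadrightarrow \overline N_\beta$. The key input is the Ext decomposition implied by Corollary \ref{Cor:Extnonint} and Lemma \ref{Lem:ExtAmod}: non-local extensions between local modules with $\psi_0$-eigenvalues differing by exactly one are classified by $\Hom_{A\lMod}(\sigma_\varphi A \otimes_A -, -)$, i.e.\ precisely by the $a^{\log}$ datum, while all other cross-eigenvalue $\Ext^1$'s vanish. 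The choice of $g(M)$ inside the $a^{\log}_N$-annihilator is exactly what forces the obstruction Ext class to vanish, producing the desired lift $f : M \to N$.

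For the inductive step, use the short exact sequence $0 \to M_1 \to M \to \overline M_\alpha \to 0$ where $M_1 = M_{[\geq\alpha+1]}$. The inductive hypothesis produces a lift $f_1 : M_1 \to N$ of $g|_{\epsilon(M_1)}$, and the base case produces a lift $\tilde f_0 : \overline M_\alpha \to N$ of $g_\alpha$. Gluing these to a single morphism $f : M \to N$ is equivalent to the equality $(f_1)_\ast[M] = (\tilde f_0)^\ast[N]$ in $\Ext^1_{A\Mod}(\overline M_\alpha, N)$. Applying the Ext decomposition of the base case degree-by-degree along the filtration of $N$, this Ext group is built from $A\lMod$-extensions (already matched because $g$ is a morphism in $A\lMod$) plus non-local data encoded in $a^{\log}$; the $\mf N$-compatibility of $g$ says exactly that these non-local data match, so the obstruction vanishes and $f$ exists.

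The main obstacle is the careful verification of the Ext decomposition for $\Ext^1_{A\Mod}$ between local modules and the precise identification of non-local extension classes with the $\mf N$-action datum under pushforward and pullback. This will likely require a short digression on computing $\Ext^1_{A\Mod}$ via the induction functor $\Ind_A$ and the projective resolutions provided by Assumption~6 of the setup, together with a diagrammatic check that $(f_1)_\ast$ and $(\tilde f_0)^\ast$ translate into composition with $g_{\alpha+1}$ and $g_\alpha$ on the $\Hom_{A\lMod}(\sigma_\varphi A \otimes_A -, -)$ summand. Once fullness is combined with the already-observed faithfulness and with the essential surjectivity established in Step 3, one obtains an abelian equivalence $A\Mod \simeq \mf N\Mod(A\lMod)$, which Proposition \ref{Prop:Split} and Proposition \ref{Prop:sufunroll} upgrade to an equivalence of tensor categories, completing Theorem \ref{Thm:AmodNmod}.
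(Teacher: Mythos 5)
Your route is genuinely different from the paper's: you induct on the length of $M$ along the $\psi_0$-eigenvalue filtration and try to match extension classes, whereas the paper first settles the case $N$ local (where both $\Hom(M,N)$ and $\Hom(\CE(M),\CE(N))$ are identified with morphisms out of the maximal local quotient of $M$), then proves $\Hom(A\otimes_\CU P, N)\cong\Hom(\CE(A\otimes_\CU P),\CE(N))$ by induction on the length of $N$ for $P$ projective in some $\CU^{\leq n}$ (Assumption 6), and finally resolves a general $M$ by two such induced modules and runs a five-lemma-style chase. The projectivity of $P$ is the load-bearing input there --- it is what makes the top row of the comparison diagram exact --- and in your writeup it appears only as an afterthought.

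The genuine gap in your plan is the ``Ext decomposition'' itself. Lemma 5.22 and Corollary 5.23 (in the numbering of Section \ref{sec:AModNMod}) are vanishing/constraint statements: they say when a non-local extension \emph{cannot} exist, but they do not identify $\Ext^1_{A\Mod}(X,Y)$ with $\Ext^1_{A\lMod}(X,Y)\oplus\Hom_{A\lMod}(\sigma_\varphi X,\cdot)$, nor do they say that an extension whose $a^{\log}$-datum vanishes splits. The statement you actually need --- that an extension of a local module by a filtered module is determined by its local part together with the induced maps $\sigma_\varphi \overline{X}_\gamma\to \overline{Y}_{\gamma+1}$ on associated graded, naturally in both variables under pushforward and pullback --- is essentially full faithfulness of $\CE$ at the level of $\Ext^1$, a claim of the same nature and difficulty as the lemma being proved, and it is established nowhere. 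Even injectivity is not immediate: if the top-degree $a^{\log}$-datum of $0\to N_{[\geq\beta+1]}\to P\to M\to 0$ vanishes, $P$ can still be non-local because $N_{[\geq\beta+1]}$ is, so Lemma 5.22 does not directly let you conclude the extension is local, let alone split. There is also a formulation slip in the gluing step: the matching condition should be $(f_1)_*[M]=(g_\alpha)^*[N]$ in $\Ext^1(\overline{M}_\alpha, N_{[\geq\alpha+1]})$, pulling back along the bottom graded piece $g_\alpha$ of $g$ rather than along a lift $\tilde f_0$; as written the two classes do not live in the same group. None of this is obviously unfixable, but as it stands the proposal defers the entire difficulty to an unproved homological claim, whereas the paper replaces it with the much softer combination of projectivity in $\CU^{\leq n}$ and the snake lemma.
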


\begin{proof}
If $N$ is local, then a morphism in $\Hom_{A\Mod} (M, N)$ factors through the quotient $M/M_1$, and therefore is simply an element in $\Hom_{A\lMod}(M/M_1, N)$. Since $N$ is trivial as a module of $\mf N$,  a morphism in $\Hom_{\mf N\Mod}(\CE(M), N)$ factors through the image of $\sigma_\varphi \CE (M)$. Using the fact that $M_1=\Im (ac^{-2})$ and the definition of the $\mf N$ action on $\CE (M)$, the quotient $\CE (M)/\sigma_\varphi \CE (M)=M/M_1$, and therefore $\Hom_{\mf N\Mod}(\CE(M), N)=\Hom_{A\lMod} (M/M_1, N)=\Hom_{A\Mod} (M, N)$. 

Now choose arbitrary $ M, N\in A\Mod$. Since $\CU$ is a locally-finite abelian subcategory, by  \cite[Section 2]{deligne2007categories}, the full abelian subcategory generated by $M\oplus N$ admits enough projective objects. Let $P$ be a projective cover of $M$. Then $\Ind_A (P)$ maps onto $M$ in $A\Mod$. We first show that $\Hom (\CE(A\otimes_\CU P), \CE (N))=\Hom (A\otimes_\CU P, N)$. This is true if $N$ is local. We now use induction. Suppose $N$ fits into the short exact sequence:
\be
\btik
0\rar & N_1\rar & N \rar &N/N_1\rar & 0
\etik
\ee
then we have the following commutative exact sequence:\\[3mm]
\be\nonumber
\btik[transform canvas={scale=0.85}]
&  \Hom (A\otimes_\CU P, N_1)\rar\dar & \Hom (A\otimes_\CU P, N)\rar\dar  & \Hom (A\otimes_\CU P, N/N_1)\dar\rar & 0 \\
0\rar &  \Hom (\CE(A\otimes_\CU P), \CE(N_1))\rar & \Hom (\CE(A\otimes_\CU P), \CE(N))\rar & \Hom (\CE(A\otimes_\CU P), \CE(N/N_1)) &
\etik
\ee
\\[-2mm]

Projectivity of $P$ and the induction-restriction adjunction ensures that the second horizontal map on the first row is surjective.  This, together with snake lemma, ensures that if the first and third vertical maps are isomorphisms, then so is the second vertical map. Applying the result from \cite[Section 2]{deligne2007categories} again, we can choose a projective cover $P_1$ of the kernel of $A\otimes_\CU P\to M$ in the subcategory generated by $A\otimes_\CU P\oplus N$, which gives us an exact sequence $A\otimes_\CU P_1\to A\otimes P\to M\to 0$. This gives rise to the short exact sequence\\[3mm]
\be
\btik[transform canvas={scale=0.95}]
0\rar & \Hom (M, N)\rar{f}\dar{a} & \Hom (A\otimes_\CU P_1, N)\rar{g}\dar{b} & \Hom (A\otimes_\CU P_2, N)\dar{c}\\
0\rar & \Hom (\CE(M), \CE(N))\rar{f'} & \Hom (\CE(A\otimes_\CU P_1), \CE(N))\rar{g'} & \Hom (\CE(A\otimes_\CU P_2), \CE(N))
\etik
\ee
\\[-2mm]

Note that the second and third vertical arrows ($b$ and $c$)  are isomorphisms. We show that this implies that the first one is surjective. Let $\upsilon$ be an element in $\Hom (\CE(M), \CE(N))$, then $f'\upsilon=b(\wt\upsilon)$ for some $\wt\upsilon\in \Hom (A\otimes_\CU P_1, N)$. Since $g'f'\upsilon=0$, we see that $cg\wt \upsilon=0$, and because $c$ is an isomorphism, $g\wt\upsilon=0$. By exactness of the first row, we see that $\wt \upsilon=f\upsilon'$ for some $\upsilon'\in \Hom (M,N)$. Therefore, $f'\upsilon=bf\upsilon'=f'a\upsilon'$. Since $f'$ is injective, $a\upsilon'=\upsilon$ and we are done.

\end{proof}

\noindent\textbf{Step 3.} Now we show that $\CE$ is essentially surjective. To do so, we first need to show that $\CE$ is a functor of $A\lMod$ module categories. 

\begin{Lem}
    Let $M$ be an object of $A\Mod$ and $N$ an object of $A\lMod$. The following diagram commutes
    \be\label{eq:ac2=ac2}
\btik
A\otimes_\CU (M\otimes_\CU N)\rar{a_Mc^{-2}_{A, M}\otimes 1} \dar{1\otimes \eta_{M, N}}& M\otimes_\CU N\dar{\eta_{M, N}}\\
A\otimes_\CU (M\otimes_A N)\rar{ac^{-2}} & M\otimes_A N
\etik
\ee 
\end{Lem}

\begin{proof}
     This follows from the following sequence of identities
    \be
\begin{aligned}
	\begin{matrix}
		\begin{tikzpicture}[scale=.75, out=up, in=down, line width=0.5pt]
			\node at (0, 6.3) {$\scriptstyle{A}$};
			\node at (1, 6.3) {$\scriptstyle{M}$};
			\node at (2, 6.3) {$\scriptstyle{N}$};
			\node at (1, -0.3) {$\scriptstyle{M \otimes_AN}$};
			\node (m) at (1.5, 3.75) [draw,minimum width=25pt,minimum height=10pt,thick, fill=white] {$\scriptstyle{\eta}$};
			\node (e) at (1 , 1.25) [draw,minimum width=25pt,minimum height=10pt,fill=white] {$\scriptstyle{ac^{-2}}$};
             \draw [teal][line width=1pt] (1 ,4.05) to (1, 6);
             \draw [violet][line width=1pt] (2 ,4.05) to (2, 6);
             \draw [red][line width=1pt] (0.75 , 1.55) to (0, 6);
             \draw [blue][line width=1pt] (1.25 , 1.55) to (1.5, 3.45);
             \draw [blue][line width=1pt] (1 , 0) to (1, 0.95);
		\end{tikzpicture}
	\end{matrix}
	&=	
	\begin{matrix}
		\begin{tikzpicture}[scale=.75, out=up, in=down, line width=0.5pt]
			\node at (0, 6.3) {$\scriptstyle{A}$};
			\node at (1, 6.3) {$\scriptstyle{M}$};
			\node at (2, 6.3) {$\scriptstyle{N}$};
			\node at (1, -0.3) {$\scriptstyle{M \otimes_AN}$};
			\node (m) at (1.5, 3.75) [draw,minimum width=25pt,minimum height=10pt,thick, fill=white] {$\scriptstyle{\eta}$};
            \node (e) at (1 , 1.25) [draw,minimum width=25pt,minimum height=10pt,fill=white] {$\scriptstyle{a}$};
			\draw [red][line width=1pt] (0,3.5) to (0, 6);
            \draw [teal][line width=1pt] (1 ,4.05) to (1, 6);
             \draw [violet][line width=1pt] (2 ,4.05) to (2, 6);
          \draw [blue][line width=1pt] (0.5 , 2.5) to (1.5, 3.5);
            \draw [white, line width=5pt] (1.2, 2.7) to (0.8, 3.1);
             \draw [red][line width=1pt] (1.5, 2.5) to (0, 3.5);
             \draw [red][line width=1pt] (0.5 , 1.5) to (1.5, 2.5);
             \draw [white, line width=5pt] (1.2, 1.8) to (0.8, 2.2);
             \draw [blue][line width=1pt] (1.5, 1.5) to (0.5, 2.5);
             \draw [blue][line width=1pt] (1 , 0) to (1, 1);
		\end{tikzpicture}
	\end{matrix}
	&=	
	\begin{matrix}
		\begin{tikzpicture}[scale=.75, out=up, in=down, line width=0.5pt]
			\node at (0, 6.3) {$\scriptstyle{A}$};
			\node at (1, 6.3) {$\scriptstyle{M}$};
			\node at (2, 6.3) {$\scriptstyle{N}$};
			\node at (1.75, -0.3) {$\scriptstyle{M \otimes_AN}$};
			\node (m) at (0.5, 2.75) [draw,minimum width=25pt,minimum height=10pt,thick, fill=white] {$\scriptstyle{a}$};
            \node (e) at (1.75 , 1.25) [draw,minimum width=25pt,minimum height=10pt,fill=white] {$\scriptstyle{\eta}$};
			\draw [teal][line width=1pt] (1 , 3.8) to (1, 6);
             \draw [violet][line width=1pt] (2 ,4.05) to (2, 6);
            \draw [white, line width=5pt] (1.2, 5.1) to (0.8, 5.5);
            \draw [white, line width=5pt] (2.2, 4.9) to (1.8, 5.3);
            \draw [red][line width=1pt] (0.1,3) to (2.5, 4.5);
            \draw [red][line width=1pt] (2.5,4.5) to (0, 6);
           
            \draw [white, line width=5pt] (1.2, 3.5) to (0.8, 3.9);
            \draw [white, line width=5pt] (2.2, 3.7) to (1.8, 4.1);
           \draw [teal][line width=1pt] (1 , 3) to (1, 3.9);
            \draw [violet][line width=1pt] (2 ,1.5) to (2, 4.05);

             \draw [white, line width=5pt] (1.2, 1.8) to (0.8, 2.2);
             \draw [teal][line width=1pt] (1.5, 1.5) to (0.5, 2.5);
             \draw [blue][line width=1pt] (1.75 , 0) to (1.75, 1);
		\end{tikzpicture}
	\end{matrix}
    &=	
	\begin{matrix}
		\begin{tikzpicture}[scale=.75, out=up, in=down, line width=0.5pt]
			\node at (0, 6.3) {$\scriptstyle{A}$};
			\node at (1, 6.3) {$\scriptstyle{M}$};
			\node at (2, 6.3) {$\scriptstyle{N}$};
			\node at (1.75, -0.3) {$\scriptstyle{M \otimes_AN}$};
			\node (m) at (1.75, 2) [draw,minimum width=25pt,minimum height=10pt,thick, fill=white] {$\scriptstyle{a}$};
            \node (e) at (1.5 , 0.75) [draw,minimum width=25pt,minimum height=10pt,fill=white] {$\scriptstyle{\eta}$};
			\draw [teal][line width=1pt] (1 , 3.8) to (1, 6);
             \draw [violet][line width=1pt] (2 ,4.05) to (2, 6);
            \draw [white, line width=5pt] (1.2, 5.1) to (0.8, 5.5);
            \draw [white, line width=5pt] (2.2, 4.9) to (1.8, 5.3);
            \draw [red][line width=1pt] (0.1,3) to (2.5, 4.5);
            \draw [red][line width=1pt] (2.5,4.5) to (0, 6);
            \draw [red][line width=1pt] (1.5, 2.25)  to (0.1,3) ;
          \draw [white, line width=5pt] (1.2, 3.5) to (0.8, 3.9);
            \draw [white, line width=5pt] (2.2, 3.7) to (1.8, 4.1);
             \draw [white, line width=5pt] (1.1, 2.5) to (0.7, 2.9);
             \draw [teal][line width=1pt] (1 , 1) to (1, 3.9);
            \draw [violet][line width=1pt] (2 ,2.25) to (2, 4.05);
            \draw [violet][line width=1pt] (2 ,1) to (2, 1.75);
             \draw [blue][line width=1pt] (1.5 , 0) to (1.5, 0.5);
		\end{tikzpicture}
	\end{matrix}
    &=	
	\begin{matrix}
		\begin{tikzpicture}[scale=.75, out=up, in=down, line width=0.5pt]
			\node at (0, 6.3) {$\scriptstyle{A}$};
			\node at (1, 6.3) {$\scriptstyle{M}$};
			\node at (2, 6.3) {$\scriptstyle{N}$};
			\node at (1.75, -0.3) {$\scriptstyle{M \otimes_AN}$};
			\node (m) at (1.75, 2.75) [draw,minimum width=25pt,minimum height=10pt,thick, fill=white] {$\scriptstyle{a}$};
            \node (e) at (1.5 , 1.25) [draw,minimum width=25pt,minimum height=10pt,fill=white] {$\scriptstyle{\eta}$};
			\draw [teal][line width=1pt] (1 , 3.8) to (1, 6);
             \draw [violet][line width=1pt] (2 ,4.05) to (2, 6);
            \draw [white, line width=5pt] (1.2, 5.1) to (0.8, 5.5);
            \draw [white, line width=5pt] (2.2, 4.9) to (1.8, 5.3);
            \draw [red][line width=1pt] (1.2,3) to (2.5, 4.5);
            \draw [red][line width=1pt] (2.5,4.5) to (0, 6);
            \draw [white, line width=5pt] (2.1, 3.7) to (1.7, 4.1);
           \draw [teal][line width=1pt] (1 , 1.5) to (1, 3.9);
            \draw [violet][line width=1pt] (2 , 3) to (2, 4.05);
             \draw [violet][line width=1pt] (2 ,1.5) to (2, 2.5);
              \draw [blue][line width=1pt] (1.5 , 0) to (1.5, 1);
		\end{tikzpicture}
	\end{matrix} \\
	&=	
	\begin{matrix}
		\begin{tikzpicture}[scale=.75, out=up, in=down, line width=0.5pt]
			\node at (0, 6.3) {$\scriptstyle{A}$};
			\node at (1, 6.3) {$\scriptstyle{M}$};
			\node at (2, 6.3) {$\scriptstyle{N}$};
			\node at (1.75, -0.3) {$\scriptstyle{M \otimes_AN}$};
			\node (m) at (1.75, 2.75) [draw,minimum width=25pt,minimum height=10pt,thick, fill=white] {$\scriptstyle{a}$};
            \node (e) at (1.5 , 1.25) [draw,minimum width=25pt,minimum height=10pt,fill=white] {$\scriptstyle{\eta}$};
			\draw [teal][line width=1pt] (1 , 1.5) to (1, 6);
             \draw [violet][line width=1pt] (2 ,4.05) to (2, 6);
            \draw [white, line width=5pt] (1.2, 4) to (0.8, 4.4);
            \draw [red][line width=1pt] (1.5,3) to (0, 6);
            \draw [violet][line width=1pt] (2 , 3) to (2, 4.05);
             \draw [violet][line width=1pt] (2 ,1.5) to (2, 2.5);
              \draw [blue][line width=1pt] (1.5 , 0) to (1.5, 1);
		\end{tikzpicture}
	\end{matrix}
    &=	
	\begin{matrix}
		\begin{tikzpicture}[scale=.75, out=up, in=down, line width=0.5pt]
			\node at (0, 6.3) {$\scriptstyle{A}$};
			\node at (1, 6.3) {$\scriptstyle{M}$};
			\node at (2, 6.3) {$\scriptstyle{N}$};
			\node at (1.75, -0.3) {$\scriptstyle{M \otimes_AN}$};
			\node (m) at (1.75, 2.75) [draw,minimum width=25pt,minimum height=10pt,thick, fill=white] {$\scriptstyle{a}$};
            \node (e) at (1.5 , 1.25) [draw,minimum width=25pt,minimum height=10pt,fill=white] {$\scriptstyle{\eta}$};
			\draw [teal][line width=1pt] (1 , 5) to (1, 6);
             \draw [violet][line width=1pt] (2 ,4.05) to (2, 6);
             \draw [white, line width=5pt] (1.2, 5.2) to (0.8, 5.6);
             \draw [red][line width=1pt] (1.5,5) to (0, 6);
             \draw [red][line width=1pt] (0,4) to (1.5,5) ;
            \draw [red][line width=1pt] (1.5,3) to (0, 4);
          \draw [white, line width=5pt] (1.3, 4.3) to (0.9, 4.7);
            \draw [white, line width=5pt] (1.2, 3.2) to (0.8, 3.6);
            \draw [teal][line width=1pt] (1 , 1.5) to (1, 5);
            \draw [violet][line width=1pt] (2 , 3) to (2, 4.05);
             \draw [violet][line width=1pt] (2 ,1.5) to (2, 2.5);
              \draw [blue][line width=1pt] (1.5 , 0) to (1.5, 1);
		\end{tikzpicture}
	\end{matrix}
    &=	
	\begin{matrix}
		\begin{tikzpicture}[scale=.75, out=up, in=down, line width=0.5pt]
			\node at (0, 6.3) {$\scriptstyle{A}$};
			\node at (1, 6.3) {$\scriptstyle{M}$};
			\node at (2, 6.3) {$\scriptstyle{N}$};
			\node at (1.75, -0.3) {$\scriptstyle{M \otimes_AN}$};
			\node (m) at (0.5, 2.75) [draw,minimum width=25pt,minimum height=10pt,thick, fill=white] {$\scriptstyle{a}$};
            \node (e) at (1.5 , 1.25) [draw,minimum width=25pt,minimum height=10pt,fill=white] {$\scriptstyle{\eta}$};
			\draw [teal][line width=1pt] (1 , 5) to (1, 6);
             \draw [violet][line width=1pt] (2 , 1.5) to (2, 6);
             \draw [white, line width=5pt] (1.2, 5.2) to (0.8, 5.6);
             \draw [red][line width=1pt] (1.5,5) to (0, 6);
             \draw [red][line width=1pt] (0,4) to (1.5,5) ;
            \draw [red][line width=1pt] (0,3) to (0, 4);
          \draw [white, line width=5pt] (1.3, 4.3) to (0.9, 4.7);
               \draw [teal][line width=1pt] (1 , 3) to (1, 5);
            \draw [teal][line width=1pt] (1 , 1.5) to (0.5, 2.5);
              \draw [blue][line width=1pt] (1.5 , 0) to (1.5, 1);
		\end{tikzpicture}
	\end{matrix}
    .
\end{aligned}
\ee
Most steps in this sequence of equalities follow from the naturality of braiding. The fifth equality follows from the locality of $N$ and the last one follows from the definition of $\eta$. 
\end{proof}

\begin{Prop}\label{Lem:CEAmod}

There are functorial isomorphisms
\be
\CE^T_{M, N}: \CE(M\otimes_A \iota (N))\cong \CE (M)\otimes_A N, \qquad M\in A\Mod, ~N\in A\lMod,
\ee
as $\mf N$-modules,  where $\iota: A\lMod\to A\Mod$ is the canonical embedding. This makes $\CE$ into a functor of $A\lMod$ module categories. 

\end{Prop}

\begin{proof}

By the construction of $\epsilon$, if $N$ is local, then there is a natural isomorphism
\be
\epsilon (M\otimes_A \iota (N))\cong \epsilon (M)\otimes_A N.
\ee
We only need to show that this isomorphism preserves the action of $\mf N$. Let $\eta_{M, N}: M\otimes N\to M\otimes_A N$ be the natural projection, then there is a commutative diagram
\be
\btik
A\otimes_\CU M\otimes N\rar{a_M\otimes 1} \dar{1\otimes \eta_{M, N}}& M\otimes N\dar{\eta_{M, N}}\\
A\otimes_\CU M\otimes_A N\rar{a_{M\otimes_A N}} & M\otimes_A N
\etik
\ee
which defines the $A$-action on $M\otimes_A N$. Let $K_M$ be the kernel of $a_M: A\otimes M\to M$ and $K_{M\otimes_A N}$ the kernel of $a_{M\otimes_A N}$, then the above induces a natural map $K_M\otimes N\to K_{M\otimes_A N}$. This map must be a surjection, since we have the following commutative diagram of split short exact sequences in $\CU$
\be
\btik
K_{M}\otimes_\CU N\dar{\eta_K}\rar & \arrow[r, shift left, "a_{M}\otimes 1"]A\otimes_\CU (M\otimes_\CU N)\dar{1\otimes \eta}& M\otimes_\CU N\arrow[l, shift left, " \mathrm{Id}\otimes 1"] \dar{\eta}\\
K_{M\otimes_A N}\rar &\arrow[r, shift left, "a_{M\otimes_A N}"]A\otimes_\CU M\otimes_A N& M\otimes_A N\arrow[l, shift left, "\mathrm{Id}\otimes 1"]
\etik
\ee
Here $\mathrm{Id}$ is the identity map $\mathbbm{1}\to A$. Moreover, Corollary \ref{Cor:Kequalizer} states that the embedding $K_M\to A\otimes M$ equalizes the two actions of $A$ on $A\otimes M$. This,  combined with the fact that $\eta: M\otimes N\to M\otimes_A N$ coequalizes the left and right actions of $A$, implies that the induced map $\eta_K: K_M\otimes N\to K_{M\otimes_A N}$ coequalizes the following two compositions
\be
\btik
A\otimes_\CU K_M\otimes_\CU N\arrow[rr, shift left, "a_{K_M}\otimes 1"] \arrow[rr, shift right, swap, "(1\otimes a_N)\circ c_{A, K_M}"]&  & K_M\otimes_\CU N\rar{\eta_K} & K_{M\otimes_A N}. 
\etik
\ee
Therefore, it induces a surjection from $K_M\otimes_A N$ to $K_{M\otimes_A N}$. Now this must be an isomorphism since by Lemma \ref{Lem:Kera}, both objects are identified with $\sigma_\varphi (V)\otimes_{\CU}(M\otimes_A N)$, and $\CU$ is of finite length. 

Let us now consider the action of $ac^{-2}: K_{M\otimes_A N}\to (M\otimes_A N)_{c}$. Equation \eqref{eq:ac2=ac2} and the definition of $\eta_K$ imply that the following diagram commutes
\be\label{eq:acacK}
\btik
K_M\otimes_\CU N \rar{ac^{-2}\otimes 1}\dar{\eta_K}&  M_c\otimes_\CU N\dar{\eta}\\
K_{M\otimes_A N}\rar{ac^{-2}} & (M\otimes_A N)_c
\etik
\ee
Although $\eta$ does not identify $M_c\otimes_A N$ with $(M\otimes_A N)_c$, it is nevertheless a map of $A\Mod$, where we use the left $A$ module structure on $M_c\otimes_\CU N$. Let us now take a filtration of $M$ from equation \eqref{eq:filtern}, which induces the filtration on $M\otimes_A N$. We might as well assume that $M=M_{[\geq \alpha]}$ for some $\alpha$. Now $\eta$ clearly makes the following diagrams commute for all $n\geq 0$: 
\be
\btik
M_{[\geq \alpha+n+1], c}\otimes_\CU N\rar \dar{\eta} & M_{[\geq \alpha+n], c}\otimes_\CU N\dar{\eta}\rar & \overline{M}_{\alpha+n}\otimes_\CU N\dar{\eta}\\
(M_{[\geq \alpha+n+1]}\otimes_A N)_c\rar & (M_{[\geq \alpha+n]}\otimes_A N)_c \rar & \overline{M}_{\alpha+n} \otimes_A N 
\etik
\ee
Note that the $\eta$ in the last column does coequalize the two actions of $A$ on $\overline{M}_{\alpha+n}\otimes_\CU N$, thanks to the fact that $\overline{M}_{\alpha+n}$ is local.  Now if we take associated graded in equation \eqref{eq:acacK}, we find that the following diagram commutes
\be
\btik
K_{M_{[\geq \alpha+n]}}\otimes_\CU N\rar{ac^{-2}\otimes 1} \dar{\eta_K}& \overline{M}_{\alpha+n+1}\otimes_\CU N\dar{\eta}\\
K_{M_{[\geq \alpha+n]}\otimes_A N}\rar{ac^{-2}} & \overline{M}_{\alpha+n+1} \otimes_A N 
\etik
\ee
This implies that in the associated graded, the map
\be
a^{log}_{M\otimes_A N}: \sigma_\varphi \lp \epsilon (M\otimes_A \iota (N))\rp\to \epsilon (M\otimes_A \iota (N))
\ee
is given by $a^{log}_M\otimes 1$, as desired. This completes the proof.

\end{proof}

Since every object in $\mf N\Mod(A\lMod)$ is a quotient of an object of the form $\mf N\otimes_A M$ for some local module $M$, we are done if we can show that $\mf N$ is in the image of $\CE$. 

\begin{Lem}
    There is an object $N$ in $A\Mod$ such that $\CE(N)\cong \mf N $. 
\end{Lem}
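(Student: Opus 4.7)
The target, the regular $\mf N$-module, is $A \oplus \sigma_\varphi A$ in $A\lMod$, on which $\sigma_\varphi A \subset \mf N$ acts by the canonical isomorphism $\sigma_\varphi A \otimes_A A \xrightarrow{\sim} \sigma_\varphi A$ on the first summand and by zero on the second (by the Nichols-algebra relation $\sigma_\varphi W \cdot \sigma_\varphi W = 0$). So the task is to find a non-local $N \in A\Mod$ with $\epsilon(N) = A \oplus \sigma_\varphi A$ whose non-locality map $a^{\log}_{[0]}$ realizes this canonical iso as a non-zero scalar.

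The naive candidate $N = \Ind_A(A) = A \otimes_\CU A$ unfortunately splits in $A\Mod$: the unit $V \hookrightarrow A$ yields a section of the multiplication $m: A \otimes_\CU A \to A$ as left $A$-modules, so $A \otimes_\CU A \cong A \oplus \sigma_\varphi A$, local, with $\CE$ giving only the augmentation module. Instead, I would build $N$ from a non-trivial class in $\Ext^1_{A\Mod}(A, \sigma_\varphi A)$. Since $\Ind_A$ is exact (as $A$ is flat in the rigid category $\CU$), derived adjunction gives
\[ \Ext^1_{A\Mod}(A, \sigma_\varphi A) \;\cong\; \Ext^1_\CU(V, \sigma_\varphi A), \]
and the short exact sequence $0 \to \sigma_\varphi V \to \sigma_\varphi A \to \sigma_\varphi^2 V \to 0$ in $\CU$, combined with the vanishing of $\Hom_\CU$ between distinct simple currents, produces an injection $\Ext^1_\CU(V, \sigma_\varphi V) \hookrightarrow \Ext^1_\CU(V, \sigma_\varphi A)$. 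Using Assumption 6, the projective cover of $V$ in a suitable $\CU^{\leq n}$ truncates to a non-split extension $0 \to \sigma_\varphi V \to X \to V \to 0$ in $\CU$, providing a non-zero class $[X]$. I propose $N := \Ind_A(X)$, which sits in the exact sequence $0 \to \sigma_\varphi A \to N \to A \to 0$ in $A\Mod$.

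Then $\epsilon(N) = A \oplus \sigma_\varphi A$ is automatic from exactness of $\epsilon$ and the distinct $\psi_0$-eigenvalues of the two pieces; the higher structure map $a^{\log}_{[1]}$ vanishes for free since $\overline{N}_2 = 0$; and $a^{\log}_{[0]} : \sigma_\varphi A \to \sigma_\varphi A$ is a scalar by Schur (simplicity of $\sigma_\varphi A$ in $A\lMod$). The main obstacle is proving that this scalar is non-zero, \ie\ that $N$ is non-local: this amounts to showing the double braiding $c^2_{A, X}$ on $\Ind_A(X) = A \otimes_\CU X$ is not the identity, which follows if $[X]$ does not lie in the M\"uger-type subcategory of $\CU$ whose induction lands in $A\lMod$. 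Equivalently, using the full-faithfulness of $\CE$ established in Step 2 to embed $\Ext^1_{A\Mod}(A, \sigma_\varphi A) \hookrightarrow \Ext^1_{\mf N\Mod}(A, \sigma_\varphi A)$, one identifies the non-trivial image $\CE([N])$ with $[\mf N]$, the unique non-augmentation class up to scale, concluding $\CE(N) \cong \mf N$.
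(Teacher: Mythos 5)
Your overall architecture matches the paper's: both proofs produce a non-split extension $0\to\sigma_\varphi A\to N\to A\to 0$ in $A\Mod$ and then argue that $\CE(N)$, being a non-split extension in $\mf N\Mod$ whose underlying object in $A\lMod$ must split (distinct central $\psi_0$-eigenvalues), is forced to carry a non-trivial $\mf N$-action and hence be $\mf N$ itself. Your closing argument via full faithfulness and exactness of $\CE$ is a legitimate way to finish, and your observation that $\Ind_A(A)$ splits is correct. The reduction of non-splitness of $\Ind_A(X)$ to non-splitness of $X$ is also salvageable without the derived-adjunction claim (which is not automatic here, since $A\Mod$ is not known to have enough projectives or injectives): one only needs the degree-zero adjunction plus the long exact sequence for $\Hom_\CU(V,-)$ applied to $0\to\sigma_\varphi A\to N\to A\to 0$, together with the injection $\Ext^1_\CU(V,\sigma_\varphi V)\hookrightarrow\Ext^1_\CU(V,\sigma_\varphi A)$ that you already derived from $\Hom_\CU(V,\sigma_\varphi^2 V)=0$.

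The genuine gap is the existence of the input $X$, i.e.\ the claim that $\Ext^1_\CU(V,\sigma_\varphi V)\neq 0$. Assumption 6 gives projective covers in each $\CU^{\leq n}$, but nothing in the assumptions forces $\sigma_\varphi V$ to occur in the radical of the projective cover of $V$; "truncates to a non-split extension $0\to\sigma_\varphi V\to X\to V\to 0$" is an assertion, not a consequence. The only canonical source of such an extension in the abstract setting is the algebra object itself: $A$ is an indecomposable extension involving $V$ and $\sigma_\varphi V$, so its rigid dual $A'$, after applying the spectral flow equivalence, is a non-split extension $\sigma_\varphi V\to\sigma_\varphi A'\to V$ in $\CU$. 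This is exactly what the paper does: it takes $N=A\otimes_\CU\sigma_\varphi A'=\Ind_A(\sigma_\varphi A')$ — your construction with $X=\sigma_\varphi A'$ — and deduces non-splitness from the fact that $\sigma_\varphi A'$ embeds into $N$ but, by indecomposability of $A$, cannot embed into $A\oplus\sigma_\varphi A$. With that substitution your proof closes; without it, the existence of a non-local object mapping to $\mf N$ is not established.
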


\begin{proof}
   First, let us comment that $\mf N$ is an extension between $A$ and $\sigma_\varphi A$. It is very clear that there is a unique such extension that has a nontrivial action of $\mf N$. Therefore, we only need to find an object $N$ such that $\CE (N)$ is a nontrivial extension between the two. Let $B$ be the extension of $V$ by $\sigma_\varphi V$ in the \textbf{Assumption 6}, and let $N=\Ind_A (B)$. By construction, $N$ fits in the exact sequence:
\be
\btik
\sigma_\varphi A\rar & N\rar & A
\etik
\ee
We claim that this is a non-split extension of $A$-modules. Otherwise, as an object in $\CU$, $N\cong A\oplus \sigma_\varphi A$. On the other hand, there is an embedding $B\to N$, but by indecomposability of $A$,  there is no embedding from $B$ to $\sigma_\varphi A\oplus A$. Since $N$ is indecomposable as a module of $A$, and since $A$ and $\sigma_\varphi A$ have different generalized eigenvalues of $\psi_0$, this module must be non-local. Consequently, the underlying object in $A\lMod$ of $\CE(N)$ is $A\oplus \sigma_\varphi A$, with a nontrivial action of $\mf N$. By uniqueness mentioned above, we have that $\CE(N)=\mf N$.

\end{proof}

\noindent\textbf{Step 4.} Let us pause here and summarize what the previous steps have shown. We constructed an equivalence of abelian categories:
\be
\CE: A\Mod (\CU)\longrightarrow \mf N\Mod (A\lMod)
\ee
such that the composition of $\CE$ with the forgetful functor $\epsilon': \mf N\Mod \to A\lMod$ is the tensor functor $\epsilon$. Moreover, $\epsilon (c')=c^{-1}$ since $c'$ is functorial, $\epsilon$ is defined as the associated graded of a filtration and $c'=c^{-1}$ on local modules. Finally, the equivalence $\CE$ is one of module categories of $A\lMod$. To apply Proposition \ref{Prop:Split}, we only need to show that $\CE$ induces trivial right module structure on $A\lMod$. 

\begin{Lem}
The functor $\CE$ induces a trivial right module structure on $A\lMod$ in the sense that the following diagram commutes:
\be
\btik
\epsilon' \lp \CE(M\otimes_A \iota(N))\rp \rar{\epsilon' (\CE^T_{M,N})}\dar & \epsilon'  \lp \CE(M)\otimes_A N\rp\dar{\epsilon'_{\CE(M), N}}\\
\epsilon(M\otimes_A\iota( N))\dar{\epsilon_{M,N}} & \epsilon' (\CE(M))\otimes_A N\dar{\eta\otimes \mathrm{Id}}\\
\epsilon(M)\otimes_A N\rar{=} & \epsilon(M)\otimes_A N
\etik
\ee
\end{Lem}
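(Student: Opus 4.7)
The plan is to show that the diagram collapses to a tautology once one unpacks how $\CE$ and $\epsilon$ were built. First, note that the forgetful functor $\epsilon' : \mf N\Mod(A\lMod) \to A\lMod$ is strict monoidal for the right $A\lMod$-action, since the underlying $A\lMod$-object of $\CE(M) \otimes_A N$ is literally $\epsilon'(\CE(M)) \otimes_A N$; thus $\epsilon'_{\CE(M), N}$ is the identity. Moreover, by the construction in Step~1 above, $\CE(M)$ and $\epsilon(M)$ have the same underlying $A\lMod$-object, so the natural isomorphism $\eta : \epsilon' \circ \CE \Rightarrow \epsilon$ is also the identity. The diagram therefore reduces to the single assertion $\epsilon_{M,N} = \epsilon'(\CE^T_{M,N})$ as morphisms in $A\lMod$.

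Next I will define $\CE^T_{M,N}$ to be $\epsilon_{M,N}$ itself; what needs checking is that this $A\lMod$-morphism is automatically $\mf N$-equivariant, so that it lifts to a morphism in $\mf N\Mod(A\lMod)$. The $\mf N$-action on $\CE(M \otimes_A \iota(N))$ is obtained as the associated graded of $ac^2 : \sigma_\varphi(M \otimes_A \iota(N)) \to (M \otimes_A \iota(N))_c$. Using exactness of $\otimes_A$ and the canonical isomorphism $\sigma_\varphi A \otimes_A (M \otimes_A N) \cong (\sigma_\varphi A \otimes_A M) \otimes_A N$, the source is identified with $(\sigma_\varphi M) \otimes_A N$. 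By the hexagon axiom applied twice, $c^2_{A, M \FU N} = (c^2_{A,M} \otimes \mathrm{id}_N) \circ (\mathrm{id}_M \otimes c^2_{A,N})$, and since $N$ is local, $c^2_{A,N} = \mathrm{id}$. Consequently $ac^2$ on $M \otimes_A \iota(N)$ agrees with $(ac^2)_M \otimes_A \mathrm{id}_N$, which is precisely the $\mf N$-action on the target $\CE(M) \otimes_A N$; hence $\epsilon_{M,N}$ is $\mf N$-equivariant.

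The only mildly delicate point, and the main potential obstacle, is the braiding computation $c^2_{A, M \FU N} = c^2_{A,M} \otimes c^2_{A,N}$ and its descent to the coequalizer $M \otimes_A N$; this is routine bookkeeping using the hexagon and naturality, but requires care since $M$ need not be local so that the double braiding with $A$ is nontrivial on the first factor. Once the lemma is established, Proposition \ref{Prop:Split} endows $\mf N$ with a bialgebra structure in $A\lMod$ making $\CE$ a tensor equivalence, and Proposition \ref{Prop:sufunroll} together with the diagonal $(-1)$-braiding of $\sigma_\varphi W$ identifies this bialgebra structure with the Nichols algebra structure on $\CB(\sigma_\varphi W)$, completing the proof of Theorem \ref{Thm:AmodNmod}.
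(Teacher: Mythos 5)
Your route is genuinely different from the paper's. The paper's proof is a two-line observation: since $N$ is local, the $\psi_0$-filtration on $\iota(N)$ is a direct sum decomposition, so the filtration on $M\otimes_A\iota(N)$ used to define $\epsilon$ is induced from the filtration on $M$ alone; every map in the diagram is then an instance of ``take associated graded and forget the $\mf N$-action,'' and commutativity is tautological. You instead reduce the diagram (correctly --- $\eta$ and $\epsilon'_{\CE(M),N}$ are indeed identities by construction, since $\epsilon'$ is a strict forgetful functor and $\epsilon'\circ\CE=\epsilon$ on the nose) to the single identity $\epsilon'(\CE^T_{M,N})=\epsilon_{M,N}$, and then propose to \emph{define} $\CE^T_{M,N}:=\epsilon_{M,N}$ and verify its $\mf N$-equivariance. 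That is a legitimate reorganization, and it has the virtue of supplying the construction of $\CE^T_{M,N}$ that Step 3 of the paper leaves implicit behind the word ``clearly.''

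However, the equivariance check contains a false step. You assert that ``since $N$ is local, $c^2_{A,N}=\mathrm{id}$.'' Locality of $N$ is the statement $a_N\circ c_{N,A}\circ c_{A,N}=a_N$; it does not say that the double braiding $c^2_{A,N}$ is the identity morphism of $A\FU N$ in $\CU$ (in general it is not --- it is a monodromy operator that becomes trivial only after composition with the action map). In your hexagon factorization $c^2_{A,M\FU N}=(c_{M,A}\otimes\mathrm{id})\circ(\mathrm{id}\otimes c^2_{A,N})\circ(c_{A,M}\otimes\mathrm{id})$, the middle factor is not composed with $a_N$, so locality cannot be invoked on it directly. To repair the argument you must use the coequalizer relation defining $M\otimes_A N$, which identifies $a_M\otimes\mathrm{id}_N$ with the action routed through $a_N$ via the braiding; only after transporting the offending $c^2_{A,N}$ onto the $N$-factor through that identification can it be absorbed into $a_N$ and killed by locality. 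This is exactly the point you flag as ``routine bookkeeping,'' but it is where the actual content of your argument lives, and as written it is not established. The paper's filtration argument sidesteps the braiding computation entirely, which is why it is the shorter route; if you want to keep your computational route, spell out the descent to the coequalizer explicitly.
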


\begin{proof}
    This is a little tautological as soon as one understands the functors involved. The morphism $\epsilon'(\CE (M\otimes_A \iota (N)))\to \epsilon(M)\otimes_A N$ is given by taking associated graded of $M\otimes_A \iota (N)$ and forgetting the $\mf N$ action. The commutativity simply arises because the filtration on $M\otimes_A \iota (N)$ comes from a filtration on $M$ (since the filtration on $\iota (N)$ is really given by a direct sum decomposition of $N$, due to its locality). 
\end{proof}

Therefore, the equivalence $\CE$ induces a bi-algebra structure on $\mf{N}$ such that $\CE$ is an equivalence of tensor categories. Now, since the full subcategory of $A\lMod$ generated by $\sigma_\varphi^\pm A$ is equivalent to $\mathrm{Vect}_\Z$, in which $\mf N\cong \CB (\sigma_\varphi A)$ is sufficiently unrolled, by Lemma 5.6 of \cite{CLR23} (as recalled in Section \ref{subsubsec:ABtoTen}), the bi-algebra structure on $\mf N$ is unique and is given by the structure of $\mf N$ being a Nichols algebra. We thus obtain Theorem \ref{Thm:AmodNmod}.

\subsection{The equivalence for $V(\fgl (1|1))$}

We show that $V(\fgl (1|1))$ and the free-field algebra $\CV_Z^1$ satisfy the assumptions used in the proof of Theorem \ref{Thm:AmodNmod}. In this application, the element $\pd\varphi=\norm{bc}-\pd Y$ and $\pd \psi=\pd X$, both are elements in $\CV_Z^1$.  The spectral flowed module $\sigma_\varphi A$ is precisely $\Pi\C_{-1,0}$ considered in the main example of the previous section. The object $B$ is given by $\sigma_{\varphi}w^*(\CV_Z^1)$, where $w$ is the automorphism of $V(\fg(1|1))$ given by
\be
w(N(z))=-N(z), ~w(E(z))=-E(z), ~w(\psi^+)=\psi^-, ~w(\psi^-)=-\psi^+.
\ee
This $B$ is an extension of $V(\fgl (1|1))$ by $\sigma_\varphi (V(\fgl (1|1)))$ since $w^*\sigma_\varphi (V)=\sigma_{-\varphi}(V)$. This example clearly satisfies all the assumptions listed at the beginning of Section \ref{sec:AModNMod}. 

Theorem \ref{Thm:AmodNmod}, combined with the Schauenburg’s functor, provides us with a fully faithful braided tensor functor
\be
S: KL_1\longrightarrow U_1^E\Mod.
\ee
We claim now that this is an equivalence. Note that by Proposition \ref{Prop:equivab}, we already know that $KL_1$ is equivalent to $U_1^E\Mod$ as an abelian category. By Corollary 3.13 of \cite{CLR23}, the functor $S$ is realized as pull-back along an algebra homomorphism $f: U_1^E\to U_1^E$. Since $S$ clearly is compatible with the functor of restriction to $A\lMod$ in the sense that the following diagram commutes:
\be
\btik
KL_1\rar{S}\dar{\epsilon\circ\Ind_A} & U_1^E\Mod\dar\\
A\lMod \rar{\simeq} & C\Mod
\etik
\ee
the restriction of $f$ to $C$ is identity. Therefore, since $U$ is a free $C$ module of finite dimensions, $f$ must be an isomorphism and therefore $S$ is an equivalence.  We have finally finished the proof of Theorem \ref{KLrho} for $\rho=1$. 

\subsection{Generalizing to many copies of $\fgl (1|1)$}

Let us now generalize the above theorem to $V(\fgl (1|1))^{\otimes n}$. To do so, define $\CV^k:= (\CV_Z^1)^{\otimes k} \otimes V(\fgl (1|1))^{\otimes n-k}$. Each pair $(\CV^k, \CV^{k+1})$ individually satisfies the assumptions needed to apply Theorem  \ref{Thm:AmodNmod} and the Schauenburg’s functor. Here we treat $\CV^{k+1}$ as a commutative algebra object in $\CV^k\Mod$. Note that $\CV^k\Mod$ is rigid from the same argument as the rigidity of $KL_{1^n}$, by applying \cite[Section 4.1]{mcrae2023deligne}. We find fully-faithful braided tensor functors
\be
S_k: \CV^k\Mod\longrightarrow {}_{\mf N_k}^{\mf N_k}\CY\CD\lp \CV^{k+1}\Mod\rp, 
\ee
where $\mf N_k$ is the Nichols algebra corresponding to $\sigma_{\varphi_{k+1}} \CV^{k+1}$, and $\sigma_{\varphi_{k+1}}$ is the spectral flow associated with the $(k+1)$-factor of $\CV_Z^1$.  Combining all these functors, we obtain a fully-faithful functor
\be
S: KL_{1^n}\longrightarrow {}_{\mf N_1}^{\mf N_1}\CY\CD{}_{\mf N_2}^{\mf N_2}\CY\CD\cdots {}_{\mf N_{n}}^{\mf N_n}\CY\CD\lp (\CV_Z^1)^{\otimes n}\Mod\rp.
\ee
We can now apply \cite[Lemma 6.12]{CLR23}, which shows that the category on the right is the category of Yetter-Drinfeld modules of the Nichols algebra generated by the direct sum of $\sigma_{\varphi_{k}} (\CV_Z^1)^{\otimes n}$. We find that the category on the RHS is precisely the category of modules of $(U_1^E)^{\otimes n}$. The same argument as the previous section shows that $S$ must be an equivalence.

\section{Un-gauging Operation and Proof for General 
$\rho$}\label{sec:ungauge}

\subsection{Review of Un-gauging Operation for $V(\grho)$}

We will use the $\rho=\mathrm{Id}_n$ case to prove Theorem \ref{KLrho} for general $\rho$. To do so, we rely on the ungauging relation in Proposition \ref{Prop:ungaugeVOA}. Recall that this proposition shows that there is a lattice VOA $\CW_\rho$ of a self-dual lattice such that the VOA $V(\grho)\otimes \CW_\rho$ is a simple current extension of $V(\fgl (1|1))^{\otimes n}$. It defines a commutative algebra object:
\be
\CA_\rho:=\bigoplus_{\lambda, \mu\in \Z^{n-r}} \sigma_{\wt{\rho} (\lambda), \tau^\trans (\mu)+\wt{\rho} (\lambda)}V(\fgl(1|1))^{\otimes n}
\ee
in the ind-completion $\Ind (KL_{1^n})$. The work of \cite{creutzig2017tensor, creutzig2020direct} shows that one can relate the category of local modules of $\CA_\rho$ in $\Ind(KL_{1^n})$ with modules of the VOA $V(\fg_\rho)\otimes \CW_\rho$. More specifically, the category:
\be
\CA_\rho\Mod_{\text{loc}} \lp \Ind\lp KL_{1^n}\rp\rp
\ee
has the structure of a braided tensor category. Moreover, let $KL_{1^n}^{\rho, [0]}$ denote the full subcategory of $KL_{1^n}$ whose monodromy with $\CA_\rho$ is trivial; then there exists a lifting functor:
\be
\CL_\rho: KL_{1^n}^{\rho, [0]}\longrightarrow \CA_\rho\Mod_{\text{loc}} \lp \Ind\lp KL_{1^n}\rp\rp, M\mapsto \CA_\rho\otimes M
\ee
which is an exact surjective tensor functor. The following statement is true. 

\begin{Thm}[\cite{BCDN} Theorem 8.25]
There is an equivalence between the image of $KL_{1^n}^{\rho, [0]}$ under $\CL_\rho$ and $KL_\rho$. Consequently, there is an equivalence between $KL_\rho$ and the de-equivariantization of $KL_{1^n}^{\rho, [0]}$ by the lattice of simple modules defining the extension $\CA_\rho$. 

\end{Thm}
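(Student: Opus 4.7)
The plan is to combine Proposition~\ref{Prop:ungaugeVOA}, which realizes $V(\grho) \otimes \CW_\rho$ as the simple-current extension $\CA_\rho$ of $V(\fgl(1|1))^{\otimes n}$, with the general braided-tensor-category theory of commutative algebra objects arising from VOA extensions. The two equivalences in the statement factor through this picture and can be established essentially by unwinding the extension.

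First, I would apply the general correspondence between local modules of a commutative algebra object and modules of the corresponding VOA extension (as in Proposition~\ref{Prop:AlModWMod}) to the embedding $V(\fgl(1|1))^{\otimes n} \hookrightarrow V(\grho) \otimes \CW_\rho$. This yields a braided tensor equivalence between $\CA_\rho\Mod_{\text{loc}}(\Ind(KL_1^{\boxtimes n}))$ and the category of generalized $V(\grho) \otimes \CW_\rho$-modules whose restriction to $V(\fgl(1|1))^{\otimes n}$ lies in $\Ind(KL_1^{\boxtimes n})$. The trivial-monodromy subcategory $KL_1^{\boxtimes n,\rho,[0]}$ is by definition the full subcategory whose objects can be lifted to local $\CA_\rho$-modules, matching the source of $\CL_\rho$.

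Second, since $\CW_\rho$ is the lattice VOA of a self-dual lattice, its representation category is trivial: $\CW_\rho$ itself is the unique simple module. Consequently, any generalized $V(\grho) \otimes \CW_\rho$-module in the category above splits as an external tensor product $M \boxtimes \CW_\rho$ with $M$ a generalized $V(\grho)$-module, and restricting to finite-length, grading-restricted objects identifies the image of $\CL_\rho$ with exactly $KL_\rho$. This establishes the first equivalence.

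Third, for the de-equivariantization statement, one observes that the set of simple currents $\{ \sigma_{\wt\rho(\lambda),\, \tau^\trans(\mu)+\wt\rho(\lambda)}V(\fgl(1|1))^{\otimes n} \}$ comprising $\CA_\rho$ defines an action of $\Z^{n-r} \oplus \Z^{n-r}$ on $KL_1^{\boxtimes n,\rho,[0]}$ by tensor product, and the lifting functor $\CL_\rho(M) = \CA_\rho \otimes M$ canonically identifies objects in the same orbit. By the general theorem for simple-current extensions in braided tensor categories, the induced functor from the de-equivariantization to the image of $\CL_\rho$ is an equivalence of braided tensor categories, and the first part of the theorem then upgrades this to an equivalence with $KL_\rho$.

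The main obstacle will be the careful bookkeeping of finiteness conditions across the comparison. Concretely, one must check that for $M \in KL_1^{\boxtimes n,\rho,[0]}$, the infinite direct sum $\CA_\rho \otimes M = \bigoplus_{\lambda,\mu \in \Z^{n-r}} \sigma_{\wt\rho(\lambda),\, \tau^\trans(\mu)+\wt\rho(\lambda)} M$, after stripping off the $\CW_\rho$-factor, organizes into a finite-length, grading-restricted $V(\grho)$-module; and conversely, that every object in $KL_\rho$ arises this way. Self-duality of the lattice defining $\CW_\rho$ is the mechanism that collapses the redundant summation over $\mu \in \Z^{n-r}$, while the $[0]$-monodromy condition ensures locality. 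The verification of compatibility with conformal gradings requires tracking the shifts coming from the spectral-flow automorphisms $\sigma_{\wt\rho(\lambda),\,\tau^\trans(\mu)+\wt\rho(\lambda)}$, but this is essentially bookkeeping once one uses Corollary~\ref{Cor:monsimple} to control the monodromy.
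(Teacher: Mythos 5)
The paper does not actually prove this theorem---it is imported wholesale from \cite{BCDN}---but your argument follows precisely the route the surrounding text sets up: the correspondence between local $\CA_\rho$-modules (in the direct-limit completion, per \cite{creutzig2017tensor,creutzig2020direct}) and modules of the extension $V(\grho)\otimes\CW_\rho$, triviality of the module category of the self-dual lattice VOA $\CW_\rho$, and the standard identification of the image of induction on the trivial-monodromy subcategory with the de-equivariantization. The only imprecisions are cosmetic: the relevant local-module statement is the infinite-order, ind-completion version rather than Proposition \ref{Prop:AlModWMod} itself, and the $\CW_\rho$ factor absorbs the full $\Z^{2(n-r)}$ summation (both the $\lambda$ and the $\mu$ directions), not just the sum over $\mu$.
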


Using Corollary \ref{Cor:monsimple}, one can show that the category $KL_{1^n}^{\rho, [0]}$ consists of objects in $KL_{1^n}$ where the actions of 
\be
\sum \tau_{\alpha i}E^i_0,\qquad \sum \wt{\rho}_{i\alpha} N^i_0
\ee
are semisimple with integer eigenvalues. The de-equivariantization by the lattice identifies an object $M$ with $\sigma_{\wt{\rho} (\lambda), \tau^\trans(\mu)+\wt{\rho}  (\lambda)}(M)$ for any $\lambda,\mu \in \Z^{n-r}$. We suggestively write this category as $KL_{1^n}^{\rho, [0]}/\Z^{2(n-r)}$. Since we have an equivalence of BTC $KL_{1^n}\simeq (U_1^E)^{\otimes n}\Mod$, one can ask what algebra the de-equivariantization leads to. We will show in this section that this leads naturally to the quantum group $U_\rho^E$ as a subquotient of $(U_1^E)^{\otimes n}$. 

\subsection{Quantum Group for General $
\rho$ and the Physical TQFT}\label{subsec:QGphys}

Note that we have obtained an equivalence:
\be
KL_{1^n}\simeq (U_1^E)^{\otimes n}\Mod\simeq {}_{\mf N}^{\mf N}\CY\CD\lp C\Mod\rp
\ee
where $C$ is the Hopf algebra $\C[x^i,y^i, (-1)^F]$ and $\mf N$ is the Nichols algebra. We must understand the lattice with which we define $\CA_\rho$ in the category on the right hand side. 

\begin{Lem}\label{Lem:qgungauge}
    Let $n, e\in \Z^n$. The object $\CS(\sigma_{e, n+e}\lp V(\fgl (1|1))^{\otimes n}\rp)$ is $\C_{n,e}$ as an object of ${}_{\mf N}^{\mf N}\CY\CD\lp C\Mod\rp$, the 1-dimensional module where $x^i=n^i, y^i=e^i$. 
    
\end{Lem}

\begin{proof}
    It is known that $\sigma_{e, n+e}\lp V(\fgl (1|1))^{\otimes n}\rp$ is the unique sub-module of the module $\CF_{eX+nY}\otimes \FF^{\otimes n}$, restricted from $H_{X^i,Y^i}\otimes \FF^{\otimes n}$. Here recall that $\CF_{eX+nY}$ is the Fock module generated by $\vert eX+nY\rangle$. Under the equivalence:
    \be
A\lMod\simeq C\Mod
    \ee
    this goes to the one-dimensional representation $\C_{n,e}$ where $x^i$ acts as $n^i$ and $y^i$ acts as $e^i$. Note that by the definition of $\CS$, we have the following commutative diagram of functors:
    \be
\btik
A\Mod\dar & \lar{\Ind_A}KL_{1^n}\dar{S}\\
\mf N\Mod  &\lar{\mathrm{Res}} {}_{\mf N}^{\mf N}\CY\CD\lp C\Mod\rp
\etik
    \ee
By Frobenius reciprocity we have $\CF_{eX+nY}\otimes \FF^{\otimes n}\cong \Ind_A (\sigma_{e, n+e}V)$, and $\sigma_{e, n+e}V$ is the unique object in $KL_1$ with this property. Similarly, there is a unique object in ${}_{\mf N}^{\mf N}\CY\CD\lp C\Mod\rp$ whose restriction to $\mf N\Mod$ is $\C_{n,e}$ (because it is one-dimensional, and therefore the action of both $\mf N$ and $\mf N^*$ must be trivial). Therefore, since $\CF_{eX+nY}\otimes \FF$ and $\C_{n,e}$ are identified via the equivalence of $A\lMod\simeq C\Mod$, this shows that $\sigma_{e, n+e}\lp V(\fgl (1|1))^{\otimes n}\rp$ is identified with $\C_{n,e}$ under $\CS$.

\end{proof}

Therefore, the simple direct summands of $\CA_\rho$ can be identified with $\C_{\wt{\rho} (\lambda), \tau^\trans (\mu)}$ for $\lambda,\mu\in \Z^{n-r}$. We have an equivalence:
\be
{}_{\mf N}^{\mf N}\CY\CD\lp C\Mod\rp^{\rho, [0]}/\Z^{2(n-r)}\simeq KL_\rho.
\ee
Since the lattice of simple modules $\Z^{2(n-r)}$ clearly has trivial monodromy with $\mf N$ in $C\Mod$, there is an equivalence
\be
{}_{\mf N}^{\mf N}\CY\CD\lp C\Mod\rp^{\rho, [0]}/\Z^{2(n-r)}\simeq {}_{\mf N}^{\mf N}\CY\CD\lp C\Mod^{\rho, [0]}/\Z^{2(n-r)}\rp
\ee
thanks to \cite[Theorem 10.2]{CLR23}. Here $C\Mod^{[0]}/\Z^{2(n-r)}$ is the de-equivariantization of $C\Mod$ by the lattice defined by $\C_{\wt{\rho} (\lambda), \tau^\trans (\mu)}$. We claim:

\begin{Prop}\label{Prop:ungaugeC}
    There is an equivalence of braided tensor categories:
    \be
C\Mod^{[0]}/\Z^{2(n-r)}\simeq C_\rho\Mod,
    \ee
    where $C_\rho$ is the algebra from Section \ref{subsubsec:Nicholsrho}.
\end{Prop}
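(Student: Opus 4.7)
The plan is to construct the equivalence by an explicit change of basis using the splitting \eqref{eqsplitexactsequence}; this parallels the proof of Proposition \ref{Prop:qgungauge} but is simpler since both $C^{\otimes n}$ and $C_\rho$ are commutative and cocommutative. First I would introduce new generators on $C^{\otimes n} = \C[x_i, y_i]$ ($1 \le i \le n$): set $X_a := \sum_i \rho_{ia} x_i$, $Y_a := \sum_i \wt\tau_{ai} y_i$ for $1 \le a \le r$, and $X_\alpha := \sum_i \wt\rho_{i\alpha} x_i$, $Y_\alpha := \sum_i \tau_{\alpha i} y_i$ for $1 \le \alpha \le n-r$. Using $\wt\tau\rho = \mathrm{Id}_r$, $\tau\wt\rho = \mathrm{Id}_{n-r}$, $\wt\tau\wt\rho = 0$, $\tau\rho = 0$ and $\rho\wt\tau + \wt\rho\tau = \mathrm{Id}_n$, one checks this is a change of basis and that the canonical pairing splits as $\sum_i x_i \otimes y_i = \sum_a X_a \otimes Y_a + \sum_\alpha X_\alpha \otimes Y_\alpha$, so the $R$-matrix factors as $R = e^{2\pi i \sum_a X_a \otimes Y_a}\cdot e^{2\pi i \sum_\alpha X_\alpha \otimes Y_\alpha}$.

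Next, I would identify the subcategory $C^{\otimes n}\Mod^{[0]}$ in the new basis. By Lemma \ref{Lem:qgungauge}, the simple summand $\sigma_{\wt\rho(\lambda), \tau^\trans(\mu)+\wt\rho(\lambda)} V(\fgl(1|1))^{\otimes n}$ of $\CA_\rho$ corresponds to the one-dimensional $C^{\otimes n}$-module on which $x_i$ acts by $\tau^\trans(\mu)_i$ and $y_i$ by $\wt\rho(\lambda)_i$; the identities above then give $X_a = Y_a = 0$ while $X_\alpha, Y_\alpha$ act by the integers $\mu^\alpha, \lambda^\alpha$. Triviality of the monodromy of a module $M$ with every such summand (Corollary \ref{Cor:monsimple}) is therefore equivalent to requiring $X_\alpha, Y_\alpha$ to act semi-simply on $M$ with integer eigenvalues. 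Thus $C^{\otimes n}\Mod^{[0]}$ splits as a $(\Z^{n-r})^2$-graded abelian category whose piece indexed by $(\lambda,\mu)$ consists of $C_\rho$-modules with $X_\alpha = \mu^\alpha$, $Y_\alpha = \lambda^\alpha$. Tensoring with a simple summand of $\CA_\rho$ permutes these pieces freely, so de-equivariantization collapses them to a single copy of $C_\rho\Mod$; explicitly, the equivalence sends $M$ to its $X_\alpha = Y_\alpha = 0$ component, and the inverse inflates a $C_\rho$-module along $C_\rho \hookrightarrow C^{\otimes n}$ and then tensors with $\CA_\rho$.

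Finally, I would verify that the braiding descends correctly. On the $[0]$-subcategory, the factor $e^{2\pi i \sum_\alpha X_\alpha \otimes Y_\alpha}$ evaluates on any tensor product by $e^{2\pi i \sum_\alpha (\mathrm{integer})(\mathrm{integer})} = 1$, so it is trivial; what remains is $e^{2\pi i \sum_a X_a \otimes Y_a}$, which is exactly the $R$-matrix of $C_\rho$. The remaining Hopf-algebra structure descends automatically along the change of basis, which is a Hopf-algebra isomorphism. The only substantive step is the explicit identification of the $\CA_\rho$-summands in the new basis via Lemma \ref{Lem:qgungauge}; this is where any subtle sign or index error would surface, but once in place the rest is bookkeeping with the splitting.
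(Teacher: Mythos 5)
Your proposal is correct and follows essentially the same route as the paper: both use the splitting of \eqref{eqsplitexactsequence} to decompose $C^{\otimes n}$ (after restricting to the trivial-monodromy subcategory, i.e.\ the paper's quotient $\overline{C}$) into $C_\rho\otimes C^\perp$ with a factoring $R$-matrix, identify the $[0]$-condition as integrality and semisimplicity of the complementary generators $X_\alpha, Y_\alpha$, and observe that de-equivariantization kills the $C^\perp$ factor. Your explicit eigenvalue bookkeeping via Lemma \ref{Lem:qgungauge} and the direct check that $e^{2\pi i\sum_\alpha X_\alpha\otimes Y_\alpha}$ is trivial on the $[0]$-subcategory just spell out what the paper compresses into the statement that the $R$-matrix of $\overline{C}$ is a product of the $R$-matrices of $C_\rho$ and $C^\perp$.
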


\begin{proof}
    Recall in Proposition \ref{Prop:qgungauge} we showed that $U_\rho^E$ is a subquotient of $(U_1^E)^{\otimes n}$. The same proof, ignoring the super-generators $\psi^i_\pm$, shows that $C_\rho$ is a subquotient of $C$. Indeed, let $\overline{C}$ be the quotient of $C$ by the relations:
    \be\label{eq:qgmonodromy}
\prod_i K_{N^i}^{\wt{\rho}_{i\alpha}}=1,\qquad \prod_i K_{E^i}^{\tau_{\alpha i}}=1
    \ee
    then $C_\rho$ is a subalgebra of $\overline{C}$ via the mapps:
    \be
N^a\mapsto \sum \rho^{ia} N^a,\qquad E^a\mapsto \sum_i \wt{\tau}_{ai} E^i, \qquad (-1)^F\mapsto (-1)^F.
    \ee
    By using the explicit $R$-matrix of $C$, it is easy to see that the subcategory of $C\Mod$ whose objects have trivial monodromy with $\C_{\wt{\rho} (\lambda), \tau^\trans (\mu)}$ is precisely the category of modules of $\overline{C}$. Let $C^\perp$ be the subalgebra of $\overline{C}$ generated by $N_\alpha=\sum \wt{\rho}_{i\alpha} N^i$ and $E_\alpha=\sum \tau_{\alpha i} E^i$ (without the parity operator), then we have an algebra decomposition:
\be
\overline{C}=C_\rho\otimes C^\perp.
\ee
This is true since $\wt{\rho}$ provides a splitting and therefore the matrix $(\rho, \wt{\rho})$ as well as $(\tau, \wt{\tau})$ are full rank. Here we put the parity operator $(-1)^F$ into the algebra $C_\rho$. Moreover, according to the proof of Proposition \ref{Prop:qgungauge}, this is a decomposition of $\overline{C}$ into Hopf algebras, such that the $R$-matrix on $\overline{C}$ is a product of $R$-matrices. We therefore have an equivalence of braided tensor categories:
\be
\overline{C}\Mod\simeq C_\rho\Mod\boxtimes C^\perp\Mod.\footnote{Note that in this equality, the category $C^\perp\Mod$ should be the category of ordinary finite-dimensional modules of the algebra $C^\perp$; there is no $\Z_2$ parity operator. Otherwise this equality is false.}
\ee
Because of the relations in equation \eqref{eq:qgmonodromy}, the direct summands of $\CA_\rho$ generates the category $C^\perp\Mod$, and we find, after de-equivariantization:
\be
\overline{C}\Mod/\Z^{2(n-r)}\simeq C_\rho\Mod\boxtimes C^\perp\Mod/\Z^{2(n-r)}\simeq C_\rho\Mod.
\ee
 This completes the proof. 
\end{proof}

We therefore have an equivalence:
\be
KL_\rho\simeq {}_{\mf N}^{\mf N}\CY\CD\lp C_\rho\Mod\rp.
\ee
Now Section \ref{subsubsec:Nicholsrho} shows that the RHS is represented by the Hopf algebra $U_\rho^E$. 

We will also quickly comment on how to obtain the quantum group for 3d $\CN=4$ gauge theories. The relevant VOA $\CV_\rho^B$ is a simple current extension of $V(\grho)$. Under the above equivalence, the simple currents $\sigma_{\lambda, \rho^\trans\rho \lambda}V$ is mapped to $\C_{0, \lambda}$ as an object of ${}_{\mf N}^{\mf N}\CY\CD\lp C_\rho\Mod\rp$. We therefore have:
\be
\CC_\rho^B\simeq {}_{\mf N}^{\mf N}\CY\CD\lp C_\rho\Mod^{[0]}/\Z^r\rp.
\ee
By a similar proof as in Proposition \ref{Prop:ungaugeC}, the category $C_\rho\Mod^{[0]}/\Z^r$ can be represented by a subquotient of $C_\rho$, generated by $X_a, e^{2\pi i Y_a}$, such that $e^{2\pi i X_a}=1$. We denote this algebra by $\wt{C}_\rho$. It is then not difficult to show that the category ${}_{\mf N}^{\mf N}\CY\CD\lp \wt{C}_\rho\rp$ is represented by a Hopf algebra. We summarize this in the following. 

\begin{Thm}
    Let $\wt{U}_\rho$ be the subalgebra of $U_\rho^E$ generated by $N^a, \psi_\pm^i, K_{E^a}^{\pm}$ such that $e^{2\pi i N^a}=1$. Then there is an equivalence of braided tensor categories:
    \be
\CC_\rho^B\simeq \wt{U}_\rho\Mod.
    \ee
    The $R$-matrix on the RHS can be conveniently written as:
    \be
R=\prod_a (1\otimes K_{E^a})^{N^a\otimes 1} \prod_i\lp 1-\psi_+^i\otimes \lp \prod_a K_{E^a}^{-\rho^{ia}}\psi_-^i\rp \rp. 
    \ee
\end{Thm}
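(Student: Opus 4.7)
The plan is to follow the strategy of Proposition \ref{Prop:ungaugeC} verbatim, with the roles of $V(\fgl(1|1))^{\otimes n}$ and $V(\grho)$ there replaced by $V(\grho)$ and $\CV_\rho^B$ here. First, I invoke the general simple-current extension machinery of \cite{creutzig2017tensor,creutzig2020direct} (already used for $\CA_\rho$ in the previous subsection) to identify $\CC_\rho^B$ with the de-equivariantization $KL_\rho^{[0]}/\Z^r$, where $KL_\rho^{[0]} \subseteq KL_\rho$ is the full subcategory of objects whose monodromy with each simple current $\sigma_{\lambda, \rho^\trans \rho \lambda}V(\grho)$, $\lambda \in \Z^r$, is trivial, and the quotient is taken with respect to tensoring by these simple currents.

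Next, I use Theorem \ref{KLrho} together with the identification $\sigma_{\lambda, \rho^\trans\rho\lambda}V(\grho) \leftrightarrow \C_{0,\lambda}$ (noted explicitly in the paragraph preceding the theorem, and proved by the same Frobenius-reciprocity argument as in Lemma \ref{Lem:qgungauge}) to translate the trivial-monodromy condition across the equivalence $KL_\rho\simeq {}^{\mf N}_{\mf N}\CY\CD(C_\rho\Mod)$. By Corollary \ref{Cor:monsimple} (or equivalently by direct computation with the $R$-matrix $R_{C_\rho}=\exp(2\pi i\sum_a N_a\otimes E_a)$), the condition is equivalent to requiring $e^{2\pi i N_a}=1$ on the module. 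Hence $KL_\rho^{[0]}$ corresponds to ${}^{\mf N}_{\mf N}\CY\CD(\overline{C}_\rho\Mod)$, where $\overline{C}_\rho := C_\rho/\langle e^{2\pi i N_a}-1\rangle$; this quotient inherits the Hopf structure and, crucially, a well-defined $R$-matrix once one rewrites $\exp(2\pi i\sum_a N_a\otimes E_a)$ as $\prod_a(1\otimes K_{E_a})^{N_a\otimes 1}$, which makes sense precisely on modules satisfying $e^{2\pi i N_a}=1$.

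Third, the $\Z^r$ de-equivariantization corresponds to passing to a further sub-quotient of $\overline{C}_\rho$. Exactly as in Proposition \ref{Prop:ungaugeC}, the full subcategory of $\overline{C}_\rho\Mod$ with trivial monodromy with respect to $\{\C_{0,\lambda}\}_{\lambda\in\Z^r}$ is the module category of the sub-Hopf-algebra $\wt{C}_\rho$ generated by $N_a$ and $K_{E_a}$ (with relation $e^{2\pi i N_a}=1$); de-equivariantizing collapses the direction of the $K_{E_a}$'s dual to the lattice and leaves $\wt{C}_\rho\Mod$. Applying the relative Drinfeld double construction of Lemma \ref{Lem:HopfReal} to the pair $(\wt{C}_\rho,\mf N)$ then yields the quasi-triangular Hopf algebra $\wt{U}_\rho$ of the statement, and the explicit formula for $R$ follows from the recipe in the remark after Lemma \ref{Lem:HopfReal}, the scalar piece $e^{2\pi i\sum_a N_a\otimes E_a}$ being rewritten as $\prod_a(1\otimes K_{E_a})^{N_a\otimes 1}$ since $e^{2\pi i N_a}=1$ makes the exponent well-defined.

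The main obstacle, as throughout the paper, is the passage between the unrolled and non-unrolled Cartan subalgebras. Concretely, one must verify that the $R$-matrix $\prod_a(1\otimes K_{E_a})^{N_a\otimes 1}\prod_i(1-\psi_+^i\otimes \prod_a K_{E_a}^{-\rho_{ia}}\psi_-^i)$ is well-defined on every object of $\wt{U}_\rho\Mod$ (which is immediate from $e^{2\pi i N_a}=1$), and that the Hopf structure of $U_\rho^E$ restricts compatibly to the subalgebra $\wt{U}_\rho$ so that no new relations arise upon passing to the sub-quotient; this is a direct check on generators exactly analogous to the end of the proof of Proposition \ref{Prop:qgungauge}, hence purely bookkeeping once the two steps of monodromy-restriction and de-equivariantization have been carried out on the $C_\rho$-side.
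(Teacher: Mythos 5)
Your argument follows essentially the same route as the paper's (very brief) derivation: identify $\CC_\rho^B$ with the $\Z^r$ de-equivariantization of the monodromy-trivial subcategory of $KL_\rho$, transport the simple currents $\sigma_{\lambda,\rho^\trans\rho\lambda}V(\grho)$ to $\C_{0,\lambda}$, reduce the Cartan to $\wt{C}_\rho$ generated by $N_a$ and $K_{E_a}^{\pm}$ with $e^{2\pi i N_a}=1$, and apply the relative double construction of Lemma \ref{Lem:HopfReal} together with the rewriting $e^{2\pi i \sum_a N_a\otimes E_a}=\prod_a(1\otimes K_{E_a})^{N_a\otimes 1}$. The only slip is cosmetic: once $\overline{C}_\rho$ is formed the trivial-monodromy condition is already imposed, so the passage to $\wt{C}_\rho$ is purely the de-equivariantization (retaining only $K_{E_a}$ rather than $E_a$), not a further monodromy restriction --- this does not affect the conclusion.
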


\newpage

  \bibliographystyle{ytamsalpha}
   
   % Sets bio to AMS style format
   %\bibliographystyle{amsalpha-fi-arxlast}
   
   \bibliography{KLabelian}

\end{document}